\onecolumn \linespread{1.2}
\title{Online Matrix Completion and Online Robust PCA}
\author{
Brian~Lois and Namrata~Vaswani
\thanks{B. Lois is with the Mathematics and ECE departments, and N. Vaswani is with the ECE department at Iowa State University.
Email: \{blois,namrata\}@iastate.edu. An early version of this work will be presented at ICASSP 2015 \cite{rrpcp_icassp15} and another part (with the same title as this paper) is under submission to ISIT 2015. This work was partly supported by NSF grant CCF-1117125.
}
}
\newtheorem{theorem}{Theorem}[section]
\newtheorem{lem}[theorem]{Lemma}
\newtheorem{sigmodel}[theorem]{Model}
\newtheorem{corollary}[theorem]{Corollary}
\newtheorem{definition}[theorem]{Definition}
\newtheorem{remark}[theorem]{Remark}
\newtheorem{fact}[theorem]{Fact}
\newcommand{\ds}{\displaystyle}
\newcommand{\R}{\mathbb{R}}
\renewcommand{\Pr}{\mathbb{P}}
\newcommand{\bi}{\begin{itemize}}
\newcommand{\ei}{\end{itemize}}
\newcommand{\ben}{\begin{enumerate}}
\newcommand{\een}{\end{enumerate}}
\newcommand{\bean}{\begin{eqnarray*} }
\newcommand{\eean}{\end{eqnarray*} }
\newcommand{\bea}{\begin{eqnarray} }
\newcommand{\eea}{\end{eqnarray} }
\newcommand{\nn}{\nonumber}
\newcommand{\ba}{\begin{array} }
\newcommand{\ea}{\end{array} }
\newcommand{\beq}{\begin{equation}}
\newcommand{\eeq}{\end{equation}}
\newcommand{\vect}[2]{\left[\begin{array}{cccccc}
     #1 \\
     #2
   \end{array}
  \right]
  }
\renewcommand\thetheorem{\arabic{section}.\arabic{theorem}}
\newcommand{\mt}{\bm{m}_t}
\newcommand{\xt}{\bm{x}_t}
\newcommand{\xhatt}{\hat{\bm{x}}_t}
\newcommand{\lt}{\bm{\ell}_t}
\renewcommand{\l}{\bm{\ell}}
\newcommand{\e}{\bm{e}}
\newcommand{\lhatt}{\hat{\bm{\ell}}_t}
\newcommand{\et}{\bm{e}_t}
\newcommand{\Pt}{\bm{P}_t}
\newcommand{\Pjs}{\bm{P}_{(j),*}}
\newcommand{\Pjnew}{\bm{P}_{(j),\mathrm{new}}}
\newcommand{\at}{\bm{a}_t}
\newcommand{\ats}{\bm{a}_{t,*}}
\newcommand{\atnew}{\bm{a}_{t,\mathrm{new}}}
\newcommand{\I}{\bm{I}}
\newcommand{\Lamt}{\bm{\Lambda}_t}
\newcommand{\Lamtnew}{\bm{\Lambda}_{t,\mathrm{new}}}
\newcommand{\T}{\mathcal{T}}
\newcommand{\J}{\mathcal{J}}
\newcommand{\D}{\bm{D}}
\newcommand{\ttrain}{t_{\mathrm{train}}}
\newcommand{\rmnew}{\mathrm{new}}
\newcommand{\new}{\mathrm{new}}
\newcommand{\cs}{\text{cs}}
\newcommand{\bigo}{\mathcal{O}}
\newcommand{\Lhat}{\hat{\bm{\ell}}}
\newcommand{\lhat}{\hat{\bm{\ell}}}
\newcommand{\Ltil}{\tilde{\bm{\ell}}}
\renewcommand{\P}{\bm{P}}
\newcommand{\Phat}{\hat{\bm{P}}}
\newcommand{\Span}{\operatorname{range}}
\newcommand{\del}{\mathrm{del}}
\newcommand{\add}{\mathrm{add}}
\newcommand{\rank}{\operatorname{rank}}
\newcommand{\E}{\mathbb{E}}
\newcommand{\calc}{\mathcal{C}}
\newcommand{\full}{\mathrm{full}}
\newcommand{\cov}{\operatorname{Cov}}
\newcommand{\train}{\mathrm{train}}
\newcommand{\thresh}{\mathrm{thresh}}  %{\hat{\lambda}^-}
\newcommand{\That}{\hat{\mathcal{T}}}
\newcommand{\SE}{\mathrm{SE}}
\newcommand{\that}{{\hat{t}}}
\newcommand{\rmend}{\mathrm{end}}
\newcommand{\M}{\bm{\mathcal{M}}}
\newcommand{\llceil}{\left\lceil}
\newcommand{\rrceil}{\right\rceil}
\newcommand{\llfloor}{\left\lfloor}
\newcommand{\rrfloor}{\right\rfloor}
\newcommand{\uhat}{{\hat{u}}}
\newcommand{\lammin}{{\hat{\lambda}_{\train}^-}}
\newcommand{\jhat}{\hat{\jmath}}
\newcommand{\rrho}{\rho}
\begin{document}

\maketitle

% major edits done by NV to:
%1. rename \lambda_\thresh as  \lammin (defined above in \newcommand)
%2. replace lambda_new^- by \lammin  in subspace change model (and corresponding changes) and in the choice of \epsilon
%3. Appendix B and h+ lemma and general support model
%4. proof of zeta_k^+ lemma in Appendix D
%5. bound on \mathcal{H}_k
%6. minor changes also to other things and to discussion

%questions for Brian
%0. x Move preliminaries to just before the proof section -- keep proof in appendix only??
%1. x the random variable $X$ etc: you do not use bold -- should you use bold??
%2. should we remove subscript j also from $\Phat_*$, $P_*$ etc ??

% further to-do
% check everything
% simulations and add back figures - easy
% discussion and related work section, contributions section, abstract - is more needed ?
% Extensions section: add back the support change at every frame model and maybe also the random support change model. do we need to explain how object moving down is a special case of sbyrho model?

%
%{\LARGE
%To Do:
%\begin{enumerate}
%\item Check new support change model
%\item Figures
%\item Simulations
%\item Proofreading
%\end{enumerate}
%}

\begin{abstract}
 This work studies two interrelated problems - online robust PCA (RPCA) and online low-rank matrix completion (MC).  In recent work by Cand\`{e}s et al., RPCA has been defined as a problem of separating a low-rank matrix (true data), $L:=[\ell_1, \ell_2, \dots \ell_{t}, \dots , \ell_{t_{\max}}]$ and a sparse matrix (outliers), $S:=[x_1, x_2, \dots x_{t}, \dots, x_{t_{\max}}]$ from their sum, $M:=L+S$. Our work uses this definition of RPCA. An important application where both these problems occur is in video analytics in trying to separate sparse foregrounds (e.g., moving objects) and slowly changing backgrounds.

While there has been a large amount of recent work on both developing and analyzing batch RPCA and batch MC algorithms, the online problem is largely open. In this work, we develop a practical modification of our recently proposed algorithm to solve both the online RPCA and online MC problems.  The main contribution of this work is that we obtain correctness results for the proposed algorithms under mild assumptions.
The assumptions that we need are: (a) a good estimate of the initial subspace is available (easy to obtain using a short sequence of background-only frames in video surveillance); (b) the $\ell_t$'s obey a `slow subspace change' assumption; (c) the basis vectors for the subspace from which $\ell_t$ is generated are dense (non-sparse); (d) the support of $x_t$ changes by at least a certain amount at least every so often; and (e) algorithm parameters are appropriately set.
%The assumption on support change that we require is weaker than that required by the batch approaches; but this is because we exploit extra assumptions such as initial subspace knowledge and slow subspace change.
\end{abstract}

\section{Introduction}
Principal Components Analysis (PCA) is a tool that is frequently used for dimension reduction.  Given a matrix of data $\D$, PCA
computes a small number of orthogonal directions, called principal components, that contain most of the variability of the data.
For relatively noise-free data that lies close to a low-dimensional subspace, PCA is easily accomplished via singular value decomposition (SVD). %of $\D$. % and retaining the singular vectors corresponding to the largest singular values.
The problem of PCA in the presence of outliers is referred to as robust PCA (RPCA).
%A limitation of SVD is that it is highly sensitive to outliers in the data set.
%Recently there has been much work done to develop and analyze algorithms for PCA that are robust with respect to outliers.
%A common way to model outliers is as sparse vectors \cite{error_correction_PCP_l1}.
In recent work, Cand\`{e}s et al. \cite{rpca} posed RPCA as a problem of separating a low-rank matrix, $\bm{L}$, and a sparse matrix, $\bm{S}$, from their sum, $\bm{M}:=\bm{L}+\bm{S}$. They proposed a convex program called principal components' pursuit (PCP) that provided a provably correct batch solution to this problem under mild assumptions.
PCP solves
\[
\min_{\tilde{\bm{L}},\tilde{\bm{S}}} \|\tilde{\bm{L}}\|_* + \lambda\|\tilde{\bm{S}}\|_{\operatorname{sum}} \quad \text{subject to} \quad
\tilde{\bm{L}} + \tilde{\bm{S}} = \bm{M},
\]
where $\|\cdot\|_*$ is the nuclear norm (sum of singular values), $\|\cdot\|_{\operatorname{sum}}$ is the sum of the absolute values of the entries, and $\lambda$ is an appropriately chosen scalar.
The same program was analyzed in parallel by Chandrasekharan et al. \cite{rpca2} and later by Hsu et al. \cite{rpca_zhang}. Since these works, there has been a large amount of work on batch approaches for RPCA and their performance guarantees.

%In seminal papers Cand\`{e}s et. al. and Chandrasekaran et. al. introduced the Principal Components Pursuit (PCP) convex program and proved its robustness to sparse outliers \cite{rpca}, \cite{rpca2}. Principal Components Pursuit poses the robust PCA problem as identifying a low rank matrix and a sparse matrix from their sum. The program is to minimize a weighted sum of the nuclear norm of the low rank matrix and the vector $\ell_1$ norm of the sparse matrix subject to their sum being equal to the observed data matrix. The results in \cite{chen2011low} improve upon those in \cite{rpca2}.
%Other methods such as \cite{outlier_pursuit} model the entire column vector as being either correct or an outlier.  Some other works on the performance guarantees for batch robust PCA include \cite{rpca_tropp}, \cite{linear_inverse_prob}, and \cite{noisy_undersampled_yuan}. All of these methods require waiting until all of the data has been acquired before performing the optimization.

When RPCA needs to be solved in a recursive fashion for sequentially arriving data vectors it is referred to as online (or recursive) RPCA. %By online or recursive we mean obtain estimates of the sparse matrix columns and of the low-rank matrix columns and its subspace using the previous estimates, rather than re-solving the entire problem at each time $t$.
Online RPCA assumes that a short sequence of outlier-free (sparse component free) data vectors is available. %which can be used to obtain an accurate estimate of the initial subspace of the true data vectors (which lie in a low-dimensional subspace). % An application where this type of problem is useful is in video analysis \cite{Torre03aframework}.
An example application where this problem occurs is the problem of separating a video sequence into foreground and background layers (video layering) on-the-fly \cite{rpca}. Video layering is a key first step for automatic video surveillance and many other streaming video analytics tasks. In videos, the foreground usually consists of one or more moving persons or objects and hence is a sparse image. The background images usually change only gradually over time \cite{rpca}, e.g., moving lake waters or moving trees in a forest, and hence are well modeled as lying in a low-dimensional subspace that is fixed or slowly changing. Also, the changes are global (dense) \cite{rpca}. In most video applications, it is valid to assume that an initial short sequence of background-only frames is available and this can be used to estimate the initial subspace via SVD.

Often in video applications the sparse foreground $\xt$ is actually the signal of interest, and the background $\lt$ is the noise.  In this case, the problem can be interpreted as one of recursive sparse recovery in (potentially) large but structured noise.  Our result allows for $\lt$ to be large in magnitude as long as it is structured.  The structure we impose is that the $\lt$'s lie in a low dimensional subspace that changes {\em slowly} over time.

%In case of a static camera video,
%Video layering is a key first step to simplifying many video analytics and computer vision tasks, e.g., video surveillance (to track moving foreground objects), background video recovery and subspace tracking in the presence of frequent foreground occlusions or low-bandwidth mobile video chats or video conferencing (can transmit only the foreground layer).

In some other applications, instead of there being outliers, parts of a data vector may be missing entirely. When the (unknown) complete data vector is a column of a low-rank matrix, the problem of recovering it is referred to as matrix completion (MC). For example, recovering video sequences and tracking their subspace changes in the presence of easily detectable foreground occlusions. If the occluding object's intensity is known and is significantly different from that of the background, its support can be obtained by simple thresholding. The background video recovery problem then becomes an MC problem. A nuclear norm minimization (NNM) based solution for MC was introduced in \cite{matcomp_first} and studied in \cite{Candes_Recht}.  The convex program here is to minimize the nuclear norm of $\tilde{\bm{M}}$ subject to $\tilde{\bm{M}}$ and $\bm{M}$ agreeing on all observed entries.
Since then there has been a large amount of work on batch methods for MC and their correctness results.

\subsection{Problem Definition}
Consider the {\em online MC problem.}
Let $\T_t$ denote the set of missing entries at time $t$. We observe a vector $\mt \in \R^n$ that satisfies
\bea
\mt = \I_{\overline{\T}_t}{\I_{\overline{\T}_t}}' \lt \qquad \text{ for } t = t_{\train}+1, t_{\train}+2, \dots, t_{\max}.
\label{omc_eq}
\eea
with the possibility that $t_{\max}$ can be infinity too. Here  $\lt$ is such that, for $t$ large enough (quantified in Model \ref{exp_model}), the matrix $\bm{L}_t:=[\l_1, \l_2, \dots, \lt]$ is a low-rank matrix.
%from a slowly changing low-dimensional subspace. An important special case of this is random vectors $\lt$ that have a covariance matrix that is low-rank at each time and slowly changing
%The model ensures that the resulting matrix $\L_t:=[\l_1, \l_2, \dots \lt]$ is low-rank for any $t$. We also assume that the missing sets $\T_t$ have {\em some} changes over time. We provide one model for this in Model \ref{sbyrho}.
Notice that by defining $\mt$ as above, we are setting to zero the entries that are missed (see the notation section on page \pageref{notation}).

Consider the {\em online RPCA} problem.  At time $t$ we observe a vector $\mt \in \R^n$ that satisfies %that is the sum of a vector from a slowly changing low-dimensional subspace $\lt$ and a sparse vector $\xt$. So
\bea
\mt = \lt + \xt \qquad \text{ for } t = t_{\train}+1, t_{\train}+2, \dots, t_{\max}.
\label{orpca_eq}
\eea
%with the possibility that $t_{\max} = \infty$. We model the low-dimensional $\lt$'s as above. We use $\T_t$ to denote the support set of $\xt$.
Here $\lt$ is as defined above and $\xt$ is the sparse (outlier) vector. We use $\T_t$ to denote the support set of $\xt$.

For both problems above, for $t=1,2,\dots, t_\train$, we are given complete outlier-free measurements $\mt = \lt$ so that it is possible to estimate the initial subspace. For the video surveillance application, this would correspond to having a short initial sequence of background only images, which can often be obtained.
For $t > t_\train$, the goal is to estimate $\lt$ (or $\lt$ and $\xt$ in case of RPCA) as soon as $\mt$ arrives and to periodically update the estimate of $\Span(\bm{L}_t)$.

In the rest of the paper, we refer to $\T_t$ as the {\em missing/corrupted entries' set}.

\subsection{Related Work}
Some other work that also studies the online MC problem (defined differently from above) includes \cite{sequentialSVD,grouse,petrels,local_conv_grouse}. We discuss the connection with the idea from \cite{sequentialSVD} in Section \ref{algosubsec}. The algorithm from \cite{grouse}, GROUSE, is a first order stochastic gradient method; a result for its convergence to the local minimum of the cost function it optimizes is obtained in \cite{local_conv_grouse}. The algorithm of \cite{petrels}, PETRELS, is a second order stochastic gradient method. It is shown in \cite{petrels} that PETRELS converges to the stationary point of  the cost function it optimizes. The advantage of PETRELS and GROUSE is that they do not need initial subspace knowledge. Another somewhat related work is \cite{aarti_adaptive_mc}.
% Our algorithms could also benefit from using their approach for initialization. %For our approach proposed above, when the initial subspace knowledge is not available or initial complete data is not available, we can also use the PETRELS or GROUSE ideas for initialization.
%
%Grassmanian Rank-One Update Subspace Estimation (GROUSE) \cite{grouse} and  Parallel Subspace Estimation and Tracking by Recursive Least Squares From Partial Observations (PETRELS) \cite{petrels}.

Partial results have been provided for ReProCS for online RPCA in our older work \cite{ReProCS_IT}. In other more recent work \cite{xu_nips2013_1} another partial result is obtained for online RPCA defined differently from above. Neither of these is a correctness result. Both require an assumption that depends on intermediate algorithm estimates. Another somewhat related work is \cite{OnlinePCA_ContaminatedData} on online PCA with contaminated data. This does not model the outlier as a sparse vector but defines anything that is far from the data subspace as an outlier.

%and follow up papers \cite{zhan_correlated, rrpcp_globalsip};  however, none of these results is a correctness result. All require an assumption that depends on intermediate algorithm estimates. In other more recent work, Feng et al. provide either a partial result \cite{xu_nips2013_1} or provide a complete but asymptotic result for online robust PCA defined differently \cite{xu_nips2013_2}.
%Recent work of Feng et. al. \cite{OnlinePCA_ContaminatedData,rpca_stochatistic_optimization} provides partial results for online RPCA. One of these papers, \cite{OnlinePCA_ContaminatedData}, does not model the outlier as a sparse vector;  \cite{rpca_stochatistic_optimization} does, but it again contains a partial result. Moreover the theorems in both papers solve a very different formulation of online robust PCA.
Some other works only provide an algorithm without proving any performance results, e.g., \cite{grass_undersampled}.

We discuss the most related works in detail in Sec \ref{discussion_online}.

%Let $\Phat_0$ be a matrix whose columns form an orthonormal basis for this estimate.
%rrpcp_perf,rrpcp_icassp15
\subsection{Contributions}
In this work we develop and study a practical modification of the Recursive Projected Compressive Sensing (ReProCS) algorithm introduced and studied in our earlier work \cite{ReProCS_IT} for online RPCA. We also develop a special case of it that solves the online MC problem. The main contribution of this work is that we obtain a {\em complete correctness result} for ReProCS-based algorithms for both online MC and online RPCA (or more generally, online sparse plus low-rank matrix recovery). Online algorithms are useful because they are causal (needed for applications like video surveillance) and, in most cases, are faster and need less storage compared to most batch techniques (we should mention here that there is some recent work on faster batch techniques as well, e.g., \cite{lowrank_altmin}).
To the best of our knowledge, this work and an earlier conference version of this \cite{rrpcp_icassp15} may be among the first correctness results for online RPCA. The algorithm studied in \cite{rrpcp_icassp15} is more restrictive. %because it assumes knowledge of subspace change times and numbers of new directions added.
%To the best of our knowledge, this is among the first works that provides a correctness result for online RPCA, online MC and more generally online sparse plus low-rank matrix recovery.  %As shown in \cite{han_tsp}, with practical heuristics used to set its parameters, ReProCS has significantly improved recovery performance compared to many other recursive (\cite{GRASTA,mateos2012robust, mardani2013dynamic}) and batch methods (\cite{rpca, Torre03aframework, mateos2012robust}) for robust PCA. %many simulated and real video datasets.
%

Moreover, as we will see, by exploiting temporal dependencies, such as slow subspace change, and initial subspace knowledge, our result is able to allow for a more correlated set of missing/corrupted entries than do the various results for  PCP \cite{rpca,rpca2,rpca_zhang} or NNM \cite{Candes_Recht} (see Sec. \ref{discussion}).
%For experimental comparisons and time comparisons of ReProCS with batch methods and with , see \cite{han_tsp}.

Our result uses the overall proof approach introduced in our earlier work \cite{ReProCS_IT} that provided a partial result for online RPCA. The most important new insight needed to get a complete result is described in Section \ref{insight}. Also see Sec. \ref{discussion_online}. New proof techniques are needed for this line of work because almost all existing works only analyze batch algorithms that solve a different problem. Also, as explained in Section \ref{algosubsec}, the standard PCA procedure cannot be used in the subspace update step and hence results for it are not applicable.

As shown in \cite{han_tsp}, because it exploits initial subspace knowledge and slow subspace change, ReProCS has significantly improved recovery performance compared with  batch RPCA algorithms - PCP \cite{rpca} and \cite{Torre03aframework} -  as well as with the online algorithm of \cite{grass_undersampled} for foreground and background extraction in many simulated and real video sequences; it is also faster than the batch methods but slower than \cite{grass_undersampled}.

%We show that as long as algorithm parameters are set appropriately, a good-enough estimate of the initial subspace is available, a slow subspace change assumption holds, the subspaces are dense enough for a given maximum support size and a given maximum rank, and there is a certain amount of support change at least every so often, then the support can be exactly recovered with high probability. Also the sparse and low-rank matrix columns can be recovered with bounded and small error.
\subsection{Notation}\label{notation}
We use $'$ to denote transpose.
The  2-norm of a vector and the induced 2-norm of a matrix are denoted by $\| \cdot \|_2$.
For a set $\mathcal{T}$ of integers, $|\mathcal{T}|$ denotes its cardinality and $\overline{\mathcal{T}}$ denotes its complement set. We use $\emptyset$ to denote the empty set.
For a vector $\bm{x}$, $\bm{x}_{\mathcal{T}}$ is a smaller vector containing the entries of $\bm{x}$ indexed by $\mathcal{T}$.
Define $\I_{\mathcal{T}}$ to be an $n \times |\mathcal{T}|$ matrix of those columns of the identity matrix indexed by $\mathcal{T}$. For a matrix $\bm{A}$, define $\bm{A}_{\mathcal{T}} := \bm{AI}_{\mathcal{T}}$.
For matrices $\bm{P}$ and $\bm{Q}$ where the columns of $\bm{Q}$ are a subset of the columns of $\bm{P}$, $\bm{P} \setminus \bm{Q}$ refers to the matrix of columns in $\bm{P}$ and not in $\bm{Q}$.

For an $n\times n$ Hermitian matrix $\bm{H}$, $\bm{H}\overset{\mathrm{EVD}}= \bm{U\Lambda U}'$ denotes an eigenvalue decomposition.  That is, $\bm{U}$ has orthonormal columns, and $\bm{\Lambda}$ is a diagonal matrix of size at least $\rank(\bm{H})\times\rank(\bm{H})$. (If $\bm{H}$ is rank deficient, then $\bm{\Lambda}$ can have any size between $\rank(\bm{H})$ and $n$.)
For Hermitian matrices $\bm{A}$ and $\bm{B}$, the notation $\bm{A}\preceq\bm{B}$ means that $\bm{B}-\bm{A}$ is positive semi-definite.
We order the eigenvalues of an Hermitian matrix in decreasing order.  So $\lambda_1 \geq \lambda_2 \geq \dots \geq \lambda_n$.

For integers $a$ and $b$, we use the interval notation $[a, b]$ to mean all of the integers between $a$ and $b$, inclusive, and similarly for $[a,b)$ etc.

\begin{definition}
For a matrix $\bm{A}$, the restricted isometry constant (RIC) $\delta_s(\bm{A})$ is the smallest real number $\delta_s$ such that
\[
(1-\delta_s)\|\bm{x}\|_2^2 \leq \|\bm{Ax}\|_2^2 \leq (1+\delta_s) \|\bm{x}\|_2^2
\]
for all $s$-sparse vectors $\bm{x}$ \cite{candes_rip}.  A vector $\bm{x}$ is $s$-sparse if it has $s$ or fewer non-zero entries.
\end{definition}

\begin{definition}
We refer to a matrix with orthonormal columns as a {\em basis matrix}. Notice that if $\bm{P}$ is a basis matrix, then ${\bm{P}}'\bm{P} = \bm{I}$.
\end{definition}
\begin{definition}
For basis matrices $\Phat$ and $\bm{P}$, define $\mathrm{dif}(\Phat,\bm{P}):= \|(\I - \Phat \Phat') \bm{P} \|_2$. This quantifies the difference between their range spaces. If $\Phat$ and $\bm{P}$ have the same number of columns, then $\mathrm{dif}(\Phat,\bm{P}) = \mathrm{dif}(\bm{P},\Phat)$, otherwise the function is not necessarily symmetric.
\end{definition}

%%\begin{definition}
%For ease of notation, define the function
%$
%\digamma(\alpha,\epsilon,b_1) := \exp\left( \frac{-\alpha\epsilon^2}{8(b_1)^2} \right)
%$

\subsection{Organization}
The remainder of the paper is organized as follows.  In Section \ref{Problem Definition and Assumptions} we give the model and main result for both online MC and online RPCA.
%Section \ref{Problem Definition and Assumptions_rPCA} is short, and explains how the matrix completion result extends to the robust PCA problem by adding a step to the algorithm and an additional assumption.
Next we discuss our main results in Section \ref{discussion}. The algorithms for solving both problems are given and discussed in Section \ref{algosubsec}. The discussion also explains why the proof of our main result should go through. Section \ref{insight} within this section describes the key insight needed by the proof and Section \ref{outline} gives the proof outline.
The most general form of our model on the missing entries set, $\T_t$, is described in Section \ref{gen_supch_sec}.  A key new lemma for proving our main results is also given in this section.  The proof of our main results can be found in Section \ref{pf_thm}. Proofs of three long lemmas needed for proving the lemmas leading to the main theorem are postponed until Section \ref{3_pfs}. Section \ref{sims} shows numerical experiments backing up our claims. We discuss some extensions in Section \ref{extensions} and give conclusions in Section \ref{conclusions}
%of the main theorems (Theorem \ref{thm1} and Theorem \ref{thm1_mc})

%\newpage

\section{Online Matrix Completion: Assumptions and Main Result} \label{Problem Definition and Assumptions}

Before we give our model on $\lt$, we need the following definition.
\begin{definition} \label{lammin_def}
Recall that $\mt = \lt$ for $t = 1,\dots,t_{\train}$ is the training data.
Let $\lammin$ be the minimum non-zero eigenvalue of $\frac{1}{t_{\train}} \sum_{t=1}^{t_{\train}} \mt{\mt}' $.  That is
\[
\lammin := \min_{\lambda_i > 0} \lambda_i \left( \frac{1}{t_{\train}}\sum_{t=1}^{t_{\train}} \mt{\mt}'  \right)
\]
Define $\Phat_{t_\train}$ to be the matrix containing the eigenvectors of $\frac{1}{t_{\train}} \sum_{t=1}^{t_{\train}} \mt{\mt}'$, with eigenvalues larger than or equal to $\lammin$, as its columns.
\end{definition}
We will use $\Phat_{t_\train}$ as the initial subspace knowledge in the algorithms.
We will use $\lammin$ in our algorithms to set the eigenvalue threshold to both detect subspace change and estimate the number of newly added directions. We also use $\lammin$ to state the slow subspace change assumption below {\color{blue} We use this to state the most general version of the slow subspace change assumption in Model \ref{exp_model}. However, as explained in the footnote in the line below (\ref{slow_lam_new}), we can get a slightly more restrictive model without using $\lammin$}.

\subsection{Model on \texorpdfstring{$\lt$}{l_t}} \label{ltmodel}
We assume that $\lt$ is a vector from a fixed or slowly changing low-dimensional subspace that changes in such a way that the matrix $\bm{L}_t:=[\l_1, \l_2, \dots \l_t]$ is low rank for $t$ large enough. This can be modeled in various ways.
{\color{blue}
The simplest and most commonly used model for data that lies in a low-dimensional subspace is to assume that at all times, it is is independent and identically distributed (iid) with zero mean and a fixed covariance matrix $\bm\Sigma$ that is low rank. However this is often impractical since, in most applications, data statistics change with time, albeit ``slowly". To model this perfectly, one would need to assume that $\lt$ is zero mean with covariance $\bm{\Sigma}_t$ that changes at each time. Let $\bm{\Sigma}_t = \P_t \Lamt \P_t'$ denote its diagonalization (with $\P_t$ tall); then this means that both $\P_t$ and $\Lamt$ can change at each time $t$. This is the most general case but it but it has an identifiability problem for estimating the subspace of $\lt$. The subspace spanned by the columns of $\P_t$ cannot be estimated with one data point. If $\P_t$ has $r_t$ columns, one needs $r_t$ or more data points for its accurate estimation.
So, if $\P_t$ changes at each time, it is not clear how all the subspaces can be accurately estimated. Moreover, in general (without specific assumptions), this will not ensure that the matrix $\bm{L}_t$ is low rank.
To resolve this issue, a general enough but tractable option is to assume that $\P_t$ is constant for a certain period of time and then changes and $\Lamt$ can change at each time. To ensure that $\bm{\Sigma}_t$ changes ``slowly", we assume that, when $\P_t$ changes, the eigenvalues along the newly added subspace directions are small for some time ($d$ frames) after the change. One precise model for this is specified next. We also assumed boundedness of $\lt$. This is more practically valid rather than the usual Gaussian assumption (often made for simplicity) since most sensor data or noise is bounded. %In our case, as we will see boundedness is necessary to analyze the projected sparse recovery step.
%One possible model is given below. It assumes that $\lt$'s are zero mean, bounded and mutually independent random variables with a covariance matrix that is low-rank at each time and that changes ``slowly" in the following fashion: (a) its column subspace remains constant for a long enough time and then changes; (b) when it changes, the number of newly added directions is small and the eigenvalues along the newly added directions are small for $d$ frames after the change.
}

\begin{figure}
\includegraphics[width=\textwidth]{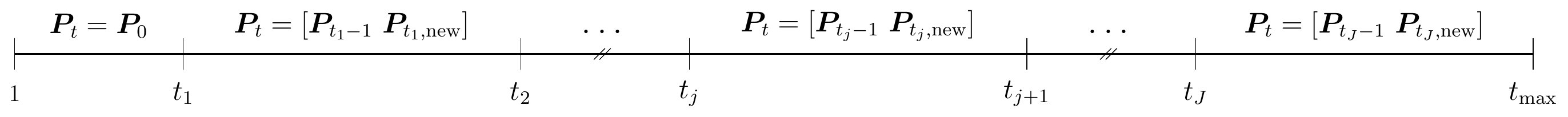}
\caption{A diagram of Model \ref{exp_model} \label{modelfig}}
\end{figure}

\begin{sigmodel}[Model on $\lt$]\label{exp_model}
Assume that the $\lt$ are zero mean and bounded random vectors in $\R^n$ that are mutually independent over time. Also assume that their covariance matrix $\bm{\Sigma}_t$ has an eigenvalue decomposition
\[
\E[\lt{\lt}'] = \bm{\Sigma}_t \overset{\mathrm{EVD}}{=} \Pt \Lamt {\Pt}'
\]
% that have not appeared before
where $\Pt$ changes as
\begin{align}\label{Pt_def}
\bm{P}_t =
\begin{cases}
 [\bm{P}_{t-1}  \ \bm{P}_{t,\new}]  & \text{if} \ t = t_1 \text{ or } t_2 \text{ or } \dots \ t_J \\
\bm{P}_{t-1}  &   \text{otherwise.}
\end{cases}
\end{align}
%(here $t_1 < t_2 \dots < t_J$ are the subspace change times)
and $\Lamt$ changes as follows. For $t\in[t_{j},t_{j+1})$, define  $\bm{\Lambda}_{t,\new} := {\bm{P}_{t_j,\new}}'\bm{\Sigma}_{t}\bm{P}_{t_j,\new}$ and assume that
\begin{equation}\label{slow_lam_new}
(\bm{\Lambda}_{t,\new})_{i,i} = (v_i)^{t-t_j} q_i  \lammin  \ \text{ for } \ i = 1,\dots,r_{j,\new}
\end{equation}
where $q_i \ge 1$ and $v_i>1$ but not too large
\footnote{Our result would still hold if the $v_i$ were different for each change time (i.e. $v_{j,i}$).  We let them be the same to reduce notation.
{\color{blue} If we do not want to use $\lammin$ here in the model on $\lt$, we can replace (\ref{slow_lam_new}) by $(\bm{\Lambda}_{t,\new})_{i,i} = (v_i)^{t-t_j} q_i \lambda_{\mathrm{bnd}}$ (for a positive constant $\lambda_{\mathrm{bnd}}$) instead and assume in the theorem that $\lammin$ is close to $\lambda_{\mathrm{bnd}}$, e.g. $0.9 \lambda_{\mathrm{bnd}} \le \lammin \le 1.1 \lambda_{\mathrm{bnd}}$ will suffice.}
}.
We assume that (a) $t_{j+1} - t_{j} \geq d$ for a $d\geq(K+2)\alpha$; and (b) for all $i$, $q_i (v_i)^d \leq 3$.
Here $K$ and $\alpha$ are algorithm parameters that are set in Theorem \ref{thm1}.

Other minor assumptions are as follows. (i) Define $t_0:= 1$ and assume that $t_\train \in [t_0, t_1)$.
(ii) For $j=0,1,2, \dots, J$, define
\[
r_j: = \rank(\bm{P}_{t_j}) \ \text{ and } \ r_{j,\new}:=\rank(\bm{P}_{t_j,\new}).
\]
and assume that $r_J < \min(n, t_{j+1}-t_j)$. This ensures that, for all $t > r_J$, the matrix $\bm{L}_t$ is low-rank.
(iii) Define
\[\lambda^+ := \sup_t \lambda_{\max}(\bm{\Lambda}_t)\]
as the maximum eigenvalue at any time and assume that $\lambda^+ < \infty$.

Observe from the above that $\P_t$ is a basis matrix and $\Lamt$ is diagonal. We refer to the $t_j$'s as the subspace change times.
\end{sigmodel}
A visual depiction of the above model can be found in Figure \ref{modelfig}.

Define the largest and smallest eigenvalues along the new directions for the first $d$ frames after a subspace change as
\[
\lambda_{\new}^+ := \max_j \max_{t\in [t_j,t_j + d]} \lambda_{\max}\left( \bm{\Lambda}_{t,\new} \right) \ \text{and} \  \lambda_{\new}^- := \min_j \min_{t\in [t_j,t_j + d]} \lambda_{\min}\left( \bm{\Lambda}_{t,\new} \right)
\]
The slow change model on $\bm{\Lambda}_{t,\new}$ is one way to ensure that
\begin{eqnarray} \label{anew_small}
\lammin \leq \lambda_{\new}^- \leq \lambda_{\new}^+  \leq 3 \lammin
\end{eqnarray}
i.e. the maximum variance of the projection of $\lt$ along the new directions is small enough for the first $d$ frames after a change. Also the minimum variance is larger than a constant greater than zero (and hence detectable). The proof of our main result only relies on (\ref{anew_small}) and does not use the actual slow increase model in any other way. The above inequality along with $t_{j+1} - t_{j} \geq d \geq(K+2)\alpha$ quantifies {\em ``slow subspace change".}

Notice that the above model does not put any assumption on the eigenvalues along the existing directions. In particular, they do not need to be greater than zero and hence the model automatically allows existing directions (columns of $\bm{P}_{t_j-1}$ for $t \in [t_j ,t_{j+1})$) to drop out of the current subspace. It could be the case that for some time period, $(\bm{\Lambda}_{t})_{i,i} = 0$ (for an $i$ corresponding to a column of $\bm{P}_{t_j-1}$), so that the $i^{\text{th}}$ column of $\bm{P}_{t_j-1}$ is not contributing anything to $\lt$ at that time. For the same index $i$, $(\bm{\Lambda}_{t})_{i,i}$ could also later increase again to a nonzero value.
Therefore $r_0 + \sum_{i=1}^{j} r_{i,\new}$ is only a bound on the rank of $\bm{\Sigma}_t$ for $t\in[t_j,t_{j+1})$, and not necessarily the rank itself.
A more explicit model for deletion of directions is to let $\bm{P}_t$ change as
\bea
\label{del_model}
\bm{P}_t =
\begin{cases}
 [(\bm{P}_{t-1} \setminus  \bm{P}_{t,\del})  \  \ \bm{P}_{t,\new}]  & \text{if} \ t = t_1 \text{ or } t_2 \text{ or } \dots \ t_J \\
\bm{P}_{t-1}  &   \text{otherwise.}
\end{cases}
\eea
where $\bm{P}_{t,\del}$ contains the columns of $\bm{P}_{t-1}$ for which the variance is zero. If we add the assumption that $[\bm{P}_{t_1-1} \ \bm{P}_{t_1,\new} \ \bm{P}_{t_2,\new} \ \dots \ \bm{P}_{t_J,\new}]$ be a basis matrix (i.e. deleted directions cannot be part of a later $\bm{P}_{t_j,\new}$), then this is a special case of Model \ref{exp_model} above.  We say special case because this only allows deletions at times $t_j$, whereas Model \ref{exp_model} allows deletion of old directions at any time.

{\color{blue}
The above model assumes that $\lt$'s are zero mean and mutually independent over time. In the video analytics application, zero mean is easy to ensure by letting $\lt$ be the background image at time $t$ with an empirical `mean background image' (computed using the training data) subtracted out. The independence assumption then models independent background variations around a common mean. As we explain in Section \ref{extensions}, this can be easily relaxed and we can get a result very similar to the current one under a first order autoregressive model on the $\lt$'s.
}

For $t \in [t_j,t_{j+1})$, let $\bm{P}_{t,*} := \bm{P}_{t_j-1}$ and $\bm{\Lambda}_{t,*} := {\bm{P}_{t,*}}' \bm{\Sigma}_t \bm{P}_{t,*}$.
Observe that Model \ref{exp_model} does not have any constraint on $\bm{\Lambda}_{t,*}$. Thus if we assume that its entries are such that their changes from $t$ to $t+1$ are smaller than or equal to $\|\bm{\Lambda}_{t,\new} - \bm{\Lambda}_{t+1,\new}\|_2$, then clearly, $\frac{\|\bm{\Sigma}_{t+1}-\bm{\Sigma}_t  \|_2}{\|\bm{\Sigma}_t  \|_2} \le (3^{1/d}-1)$ for all $t \in [t_j,t_j+d]$ and all $j$ \footnote{This follows because $\|\bm{\Sigma}_t \|_2 \ge \|\Lamtnew\|_2 = \max_i (v_i)^{t-t_j} q_i \lammin$ and $\|\bm{\Sigma}_{t+1} - \bm{\Sigma}_{t}\|_2 \le \|\bm{\Lambda}_{t+1,\new} - \bm{\Lambda}_{t,\new}\|_2 \le \max_i  (v_i)^{t-t_j} q_i \lammin (v_i - 1) \le \max_i  (v_i)^{t-t_j} q_i  \lammin \max_i (v_i-1)$. Thus the ratio is bounded by $\max_i (v_i-1) \le (3/q_i)^{1/d}-1  < (3^{1/d} - 1)$ since $q_i \ge 1$.}. Since $d$ is large, the upper bound is a small quantity, i.e. the covariance matrix changes slowly. For later time instants, we do not have any requirement (and so in particular $\bm{\Sigma}_t  $ could still change slowly). Hence the above model includes ``slow changing" and low-rank $\bm{\Sigma}_t$ as a special case.

\begin{figure}[t]
\centering
\begin{subfigure}{.5\linewidth}
\centering
\includegraphics[height=3in]{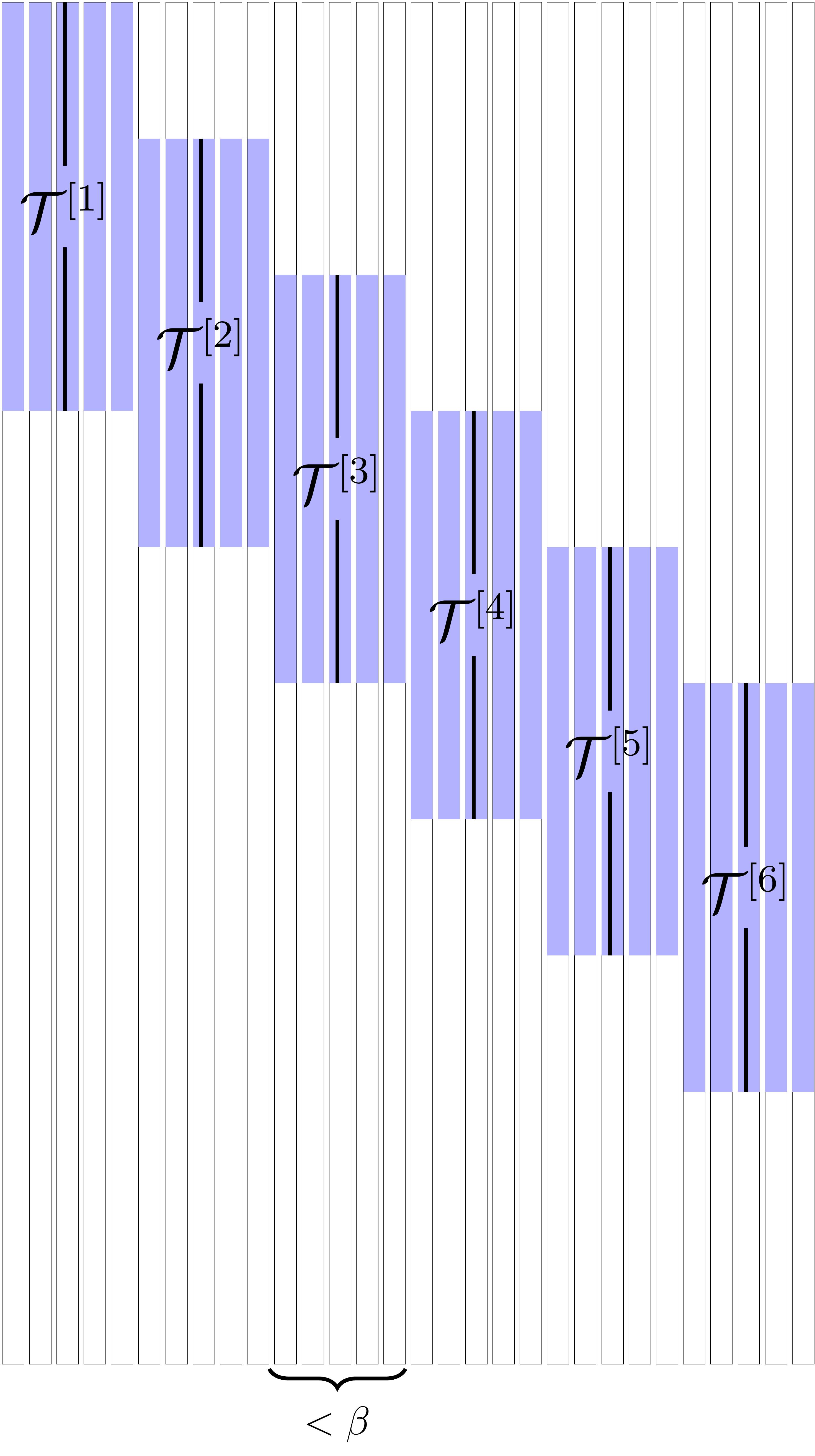}
\caption{$\varrho=3$ and $\beta=5$ case}
\end{subfigure}
\begin{subfigure}{.3\linewidth}
\centering
\includegraphics[height=3in]{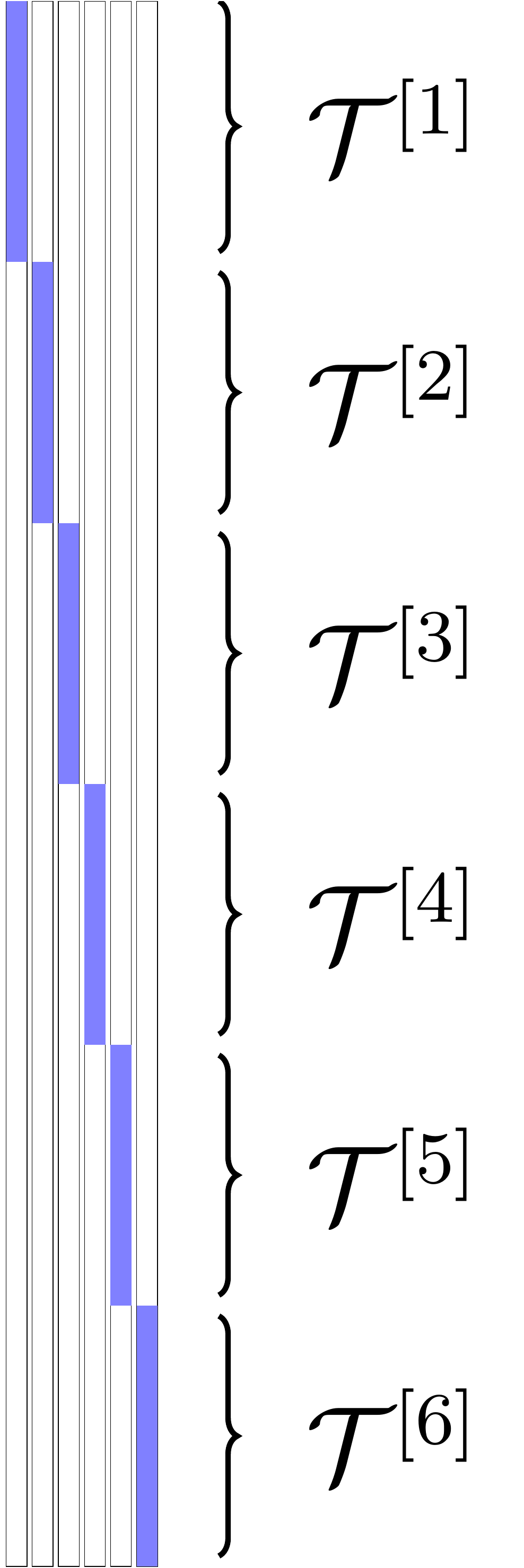}
\caption{$\varrho=1$ and $\beta=1$ case}
\end{subfigure}
\caption{\label{fig1} Examples of Model \ref{sbyrho}. (a) shows a 1D video object of length $s$ that moves by at least $s/3$ pixels once every $5$ frames. (b) shows the object moving by $s$ at every frame. (b) is an example of the best case for our result - the case with smallest $\rho,\beta$ ($\T_t$'s  mutually disjoint)}
\label{supportfig}
\end{figure}

\subsection{Model on the set of missing entries or the outlier support set, $\T_t$} \label{xtmodel}
Our result requires that the set of missing entries (or the outlier support sets), $\T_t$, have {\em some} changes over time. We give one simple model for it below. One example that satisfies this model is a video application consisting of a foreground with one object of length $s$ or less that can remain static for at most $\beta$ frames at a time. When it moves, it moves {\em downwards} (or upwards, but always in one direction) by at least $s/\rrho$ pixels, and at most $s/\rho_2$ pixels. Once it reaches the bottom of the scene, it disappears. The maximum motion is such that, if the object were to move at each frame, it still does not go from the top to the bottom of the scene in a time interval of length $\alpha$, i.e. $\frac{s}{\rho_2}\alpha \le n$. Anytime after it has disappeared another object could appear. We show this example in Fig. \ref{supportfig}.
\begin{sigmodel}[model on $\T_t$]\label{sbyrho}
Let $t^k$, with $t^k < t^{k+1}$, denote the times at which $\T_t$ changes and let $ \T^{[k]}$ denote the distinct sets.
For an integer $\alpha$ (we set its value in Theorem \ref{thm1}), assume the following.
\ben
\item Assume that $\T_t = \T^{[k]}$ for all times $t \in [t^k, t^{k+1})$ with $(t^{k+1} - t^k) < \beta$ and $|\T^{[k]}| \le s$.
\item Let $\rrho$ be a positive integer so that for any $k$,
\[
\T^{[k]} \cap \T^{[k+\rrho ]} = \emptyset;
\]
assume that
\[
{\rrho}^2 \beta \leq  0.01 \alpha.
\]

\item For any $k$,
\[
\sum_{i=k+1}^{k+\alpha} \left|\T^{[i]} \setminus \T^{[i+1]}\right| \le n
\]
and for any $k < i \le k+\alpha$,
\[
(\T^{[k]} \setminus \T^{[k+1]}) \cap (\T^{[i]} \setminus \T^{[i+1]}) = \emptyset.
\]
(One way to ensure $\sum_{i=k+1}^{k+\alpha}|\T^{[i]} \setminus \T^{[i+1]}| \le n$ is to require that  for all $i$, $|\T^{[i]} \setminus \T^{[i+1]}| \le \frac{s}{\rho_2}$ with $\frac{s}{\rho_2}\alpha \le n$.)

%\item For any $k$ and $i$ such that $k < i \le k+\alpha$, $$(\T^{[k]} \setminus \T^{[k+1]}) \cap (\T^{[i]} \setminus \T^{[i+1]}) = \emptyset$$
%(an implicit extra requirement for the above to hold is $\sum_{i=k+1}^{k+\alpha}|\T^{[i]} \setminus \T^{[i+1]}| \le n$; one way to ensure this holds is
%$$|\T^{[i]} \setminus \T^{[i+1]}| \le \frac{s}{\rho_2} \text{ and } \frac{s}{\rho_2}\alpha \le n.)$$ %$\sum_{i=k+1}^{k+\alpha}|\T^{[i]} \setminus \T^{[i+1]}| \le n$
\een
In this model, $k$ takes values $1,2, \dots $; the largest value it can take is $t_{\max}$ (this will happen if $\T_t$ changes at every time).
\end{sigmodel}
Clearly the video moving object example satisfies the above model as long as $\rho^2 \beta \le 0.01 \alpha$. \footnote{Let $\T_t$ be the support set of the object (set of pixels containing the object). The first condition holds since there is at most one object of size $s$ or less and the object cannot remain static for more than $\beta$ frames. Since it moves in one direction by at least $s/\rrho$ each time it moves, this means that definitely after it moves $\rho$ times, the supports will be disjoint (second condition). The third condition holds because it moves in one direction and by at most $s/\rho_2$ with $\frac{s}{\rho_2}\alpha \le n$ (so even if it were to move at each $t$, i.e. if $t_{k+1}=t_k+1$ for all $k$, the third condition will hold). Also see Fig. \ref{supportfig}.} This becomes clearer from  Fig. \ref{supportfig}. %it will not come back to where it started at the beginning of the $\alpha$ frame interval

\subsection{Denseness}
In order to recover the $\lt$'s from missing data or to separate them from the sparse outliers, the basis vectors for the subspace from which they are generated cannot be sparse.  We quantify this using the incoherence condition from \cite{rpca}.  Let $\mu$ be the smallest real number such that
\begin{equation}\label{candes_dense}
\max_{i} \|{\bm{P}_{t_0}}' \I_i\|_2^2 \leq \frac{\mu r_0}{n}
\quad \text{ and } \quad
\max_{i} \|{\bm{P}_{t_j,\new}}' \I_i\|_2^2 \leq \frac{\mu r_{j,\new}}{n} \text{ for all } j
\end{equation}
Recall from the notation section that $\I_i$ is the $i^{\text{th}}$ column of the identity matrix (or $i^{\text{th}}$ standard basis vector). We bound $\mu r_0$ and $\mu r_{j,\new}$ in the theorem.

\subsection{Main Result for Online Matrix Completion}\label{results}
%We state below our main result for the ReProCS algorithm (Algorithm \ref{reprocsdet}) explained in Section \ref{algosubsec}.

%Define  $r_{\new} := \max_j \rank(\bm{P}_{t_j,\new})$ and
\begin{definition}
Recall that $r_{j,\new}:= \rank(\P_{t_j,\new})$ and $r_{j}:= \rank(\P_{t_j})$. Define $r_{\new} := \max_j r_{j,\new}$,   and $r = r_0 + Jr_{\new}$.

Also define $\at := {\bm{P}_t}' \lt$, and for $t \in [t_j, t_{j+1})$, $\atnew := {\bm{P}_{t_j,\new}}'\lt$.
Let
\[
\gamma := \max_t \|\bm{a}_t\|_{\infty} \quad\text{ and }\quad \gamma_{\new} := \max_j \max_{t\in[t_j,t_j+d]} \|\bm{a}_{t,\new}\|_{\infty}
\]
Notice that $\rank(\bm{L}) = \rank(\bm{P}_{t_{\max}}) \leq r$. Also, $\|\at\|_2 \leq \sqrt{r}\gamma$ and for $t\in[t_j,t_j+d]$, $\|\bm{a}_{t,\new}\|_2 \le \sqrt{r_\new} \gamma_\new$.
\end{definition}

The following theorem gives a correctness result for Algorithm \ref{reprocsdet} given and explained in Section \ref{algosubsec}. The algorithm has two parameters - $\alpha$ and $K$. The parameter $\alpha$ is the number of consecutive time instants that are used to obtain {\em an} estimate of the new subspace, and $K$ is the total number of times the new subspace is estimated before we get an accurate enough estimate of it. The algorithm uses $\lammin$ and $\Phat_{t_\train}$ defined in Definition \ref{lammin_def} and $\mt$ as inputs.
%In the theorem below, we set $\alpha$ so that the probability of everything being good (all $J$ new subspaces getting accurately recovered within $d$ frames of a subspace change) is at least $1-n^{-10}$; and we set $K$ so that the $K$-th estimate of the new subspace is accurate enough (its error is below $r_\new \zeta$ for a small enough $\zeta$).

%NV
\begin{theorem}\label{thm1_mc}
Consider Algorithm \ref{reprocsdet}. Assume that $\mt$ satisfies \eqref{omc_eq}. Pick a $\zeta$ that satisfies
\[
\zeta \leq \min \left\{ \frac{10^{-4}}{r^2} , \frac{0.03\lammin}{r^2 \lambda^+} , \frac{1}{r^3\gamma^2}, \frac{\lammin}{r^3\gamma^2} \right\}.
\]
Suppose that the following hold.
\begin{enumerate}

%\item $\ds\|(\I - \Phat_{\ttrain} \Phat_{\ttrain}{}') \bm{P}_{\ttrain}\|_2 \le r_0 \zeta$ (notice from Model \ref{exp_model} that $\bm{P}_{\ttrain} = \bm{P}_{t_0} = \bm{P}_1$  );

\item $\mathrm{dif}( \Phat_{\ttrain}, \bm{P}_{\ttrain} ) \le r_0 \zeta$ (notice from Model \ref{exp_model} that $\bm{P}_{\ttrain} = \bm{P}_{t_0} = \bm{P}_1$);

\item The algorithm parameters are set as: \\
%$\thresh=\frac{\lammin}{2}$;
 $K = \left\lceil \frac{\log(0.16r_{\new}\zeta)}{\log(0.83)}\right\rceil$; and $\alpha = C (\log(6(K+1)J) + 11\log(n))$ for a constant %$C \geq C_{\add}$ with
\begin{equation}\label{alpha}
C \ge C_{\add} := {32\cdot100^2}\frac{\max\{16 , 1.2(\sqrt{\zeta} + \sqrt{r_\new}\gamma_\new)^4\}}{ \left(r_{\new} \zeta \lammin \right)^2};
\end{equation}

\item (Subspace change) Model \ref{exp_model} on $\bm{\ell}_t$ holds; % with $\lambda_{\new}^- \geq \lammin$;

\item \label{supchass} (Changes in the missing/corrupted sets $\T_t$)
Model \ref{sbyrho} on $\T_t$ holds or its generalization, Model \ref{general_model} (given in Section \ref{gen_supch_sec}), holds;

\item (Denseness and bound on $s$, $r_0$, $r_\new$) the bounds in \eqref{candes_dense} hold with $2s(r_0+Jr_{\new})\mu\leq{0.09n}$ and  $2sr_{\new}\mu\leq{0.0004n}$;
%The low dimensional subspace is dense enough and the rank and support size are small enough ?? so that

\end{enumerate}

Then, with probability at least $1 - n^{-10}$, at all times $t$,
\ben
\item $\|\lhat_t - \lt \|_2 \leq 1.2 \left(\sqrt{\zeta} +  \sqrt{r_{\new}}\gamma_{\rmnew}  \right)$

\item the subspace error $\SE_t := \|( \I - \hat{\bm{P}}_t \hat{\bm{P}}_t{}' ) \Pt \|_2$ is bounded above by $10^{-2} \sqrt{\zeta}$ for $t\in[t_{j}+d,t_{j+1})$.

\een
\end{theorem}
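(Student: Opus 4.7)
I would attack Theorem \ref{thm1_mc} by a double induction: an outer induction on the subspace--change index $j=0,1,\dots,J$ and, within each interval $[t_j,t_{j+1})$, an inner induction on the projection--PCA stage index $k=0,1,\dots,K$. The inductive invariant is that (i) the current subspace estimate $\hat{\bm{P}}_{(j),k}$ decomposes as ``old'' plus ``new'' components with $\mathrm{dif}(\hat{\bm{P}}_{(j),*},\bm{P}_{t_j,*})$ tiny (of order $r\zeta$) and $\mathrm{dif}(\hat{\bm{P}}_{(j),\new,k},\bm{P}_{t_j,\new})$ decaying geometrically as $(0.83)^k$ plus a $\lesssim r_{\new}\zeta$ tail, and (ii) the per--frame bound $\|\lhatt-\lt\|_2\le 1.2(\sqrt{\zeta}+\sqrt{r_{\new}}\gamma_{\new})$ already holds for all $t$ processed so far. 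The base case $j=0$ is precisely the initial--subspace hypothesis $\mathrm{dif}(\hat{\bm{P}}_{t_\train},\bm{P}_{t_\train})\le r_0\zeta$.

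\textbf{Projected least--squares recovery of $\lt$.} Let $\hat{\bm{P}}:=\hat{\bm{P}}_{(j),k-1}$ and $\bm{\Phi}:=\bm{I}-\hat{\bm{P}}\hat{\bm{P}}'$. Because $\mt=\lt-\bm{I}_{\T_t}\bm{I}_{\T_t}'\lt$, we have $\bm{\Phi}\mt=\bm{\Phi}\lt-\bm{\Phi}\bm{I}_{\T_t}\bm{I}_{\T_t}'\lt$, so solving the least--squares problem for $\bm{I}_{\T_t}'\lt$ and adding the correction back to $\mt$ produces $\lhatt$. The key ingredient here is an RIC--type bound $\|(\bm{I}_{\T_t}'\bm{\Phi}\bm{I}_{\T_t})^{-1}\|_2\le 2$, which I would obtain from the denseness condition \eqref{candes_dense} together with $2s(r_0+Jr_{\new})\mu\le 0.09n$. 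Using this bound, the slow--subspace--change consequence \eqref{anew_small} (which controls the component of $\lt$ along the as--yet--unestimated new directions), and the inductive bound on $\SE_t$, a short calculation gives item~1 of the theorem.

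\textbf{Projection--PCA update and geometric decay.} Inside the $k$--th window $\mathcal{I}_{j,k}$ of length $\alpha$, I would form $\bm{M}_k:=\tfrac{1}{\alpha}\sum_{t\in\mathcal{I}_{j,k}}(\bm{I}-\hat{\bm{P}}_{(j),*}\hat{\bm{P}}_{(j),*}')\,\lhatt\lhatt'\,(\bm{I}-\hat{\bm{P}}_{(j),*}\hat{\bm{P}}_{(j),*}')$ and take its top $\hat r_{j,\new}$ eigenvectors as $\hat{\bm{P}}_{(j),\new,k}$. The Davis--Kahan (sin--$\theta$) theorem then bounds the subspace error improvement by $\|\bm{M}_k-\E\bm{M}_k\|_2$ plus the ``signal perturbation'' $\|\E\bm{M}_k-\bm{P}_{t_j,\new}\bm{\Lambda}_{t,\new}\bm{P}_{t_j,\new}'\|_2$, divided by an eigengap of order $\lambda_{\new}^-\gtrsim\lammin$. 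The stochastic part I would control by a matrix Azuma / Matrix--Bernstein inequality; the deterministic perturbation part I would bound using item~1 together with \eqref{anew_small}. The $\alpha$ in \eqref{alpha} is exactly tuned so that each stage succeeds with probability $\ge 1-n^{-12}$ and contracts the ``new direction'' error by the factor $0.83$; after $K$ stages the claimed bound $\SE_t\le 10^{-2}\sqrt{\zeta}$ follows, and a union bound over the at most $(K+1)J$ stages yields the $1-n^{-10}$ probability.

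\textbf{Main obstacle.} The hardest step is the matrix--concentration / perturbation analysis of $\bm{M}_k$ under the \emph{correlated} missing--entries model (Model \ref{sbyrho} and its generalization). The summands in $\bm{M}_k$ are not mutually independent, because $\T_t$ can repeat for up to $\beta$ consecutive frames and because the recovery error $\lhatt-\lt$ depends on $\T_t$ through $\bm{\Phi}\bm{I}_{\T_t}$. The conditions $\T^{[k]}\cap\T^{[k+\rrho]}=\emptyset$ and $\rrho^2\beta\le 0.01\alpha$ are precisely what is needed to show that only $O(\rrho\beta)$ of the cross terms $\bm{\Phi}\bm{I}_{\T_t}\bm{I}_{\T_t}'\lt\lt'\bm{I}_{\T_s}\bm{I}_{\T_s}'\bm{\Phi}$ contribute nontrivially; combined with the disjointness / telescoping condition $\sum_{i}|\T^{[i]}\setminus\T^{[i+1]}|\le n$, this is what bounds the off--diagonal bias below $r_{\new}\zeta\lammin$. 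I expect the key new lemma (flagged in Section \ref{gen_supch_sec}) to be a deterministic combinatorial bound of exactly this form on the off--diagonal contribution; once this lemma is in hand, the rest of the argument proceeds by careful bookkeeping modeled on \cite{ReProCS_IT}, with the major benefit that no assumption on intermediate algorithm estimates is ever required.
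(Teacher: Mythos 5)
Your overall strategy matches the paper's: a nested induction (the paper phrases it as a chain of conditional-probability events $\Gamma_{j,k}$), a projected least-squares recovery step controlled by an RIC bound derived from denseness, a projection-PCA update analyzed via the $\sin\theta$ theorem plus conditional matrix concentration (the paper uses matrix Hoeffding rather than Azuma/Bernstein, but that is immaterial), geometric decay of the new-direction error, and — correctly identified as the crux — a deterministic combinatorial lemma showing that sums of the form $\sum_t \bm{I}_{\T_t}\bm{A}_t\bm{I}_{\T_t}'$ are block-banded under Model \ref{general_model}, so their norm is $\rho^2 h^+\alpha\,\sigma^+$ rather than $\alpha\,\sigma^+$ (Lemma \ref{blockdiag1}). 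One small correction: the summands in $\bm{M}_k$ \emph{are} conditionally independent given the past $\at$'s, since the $\T_t$'s are deterministic; the real difficulty is not dependence but that $\E[\lt\et{}']\neq\bm{0}$ (the error is correlated with $\lt$), so the cross term contributes a bias that must be kept below the eigengap — which is exactly what the block-banded structure achieves, as you note.

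The genuine gap is that your proposal silently assumes the p-PCA windows $\mathcal{I}_{j,k}$ and the rank $\hat r_{j,\new}$ are known. Theorem \ref{thm1_mc} is about Algorithm \ref{reprocsdet}, which must \emph{automatically} detect each change time and estimate the number of new directions by thresholding eigenvalues of $\frac{1}{\alpha}\bm{\mathcal{D}}_u\bm{\mathcal{D}}_u{}'$ at $\lammin/2$. A complete proof therefore needs three additional claims, each a separate lemma in the paper: (i) no false detection while no new directions are present, i.e. $\lambda_{\max}(\bm{\mathcal{M}}_u)<\thresh$ whenever $\Span(\P_t)$ has been accurately estimated (Lemma \ref{falsedet}); (ii) detection within $2\alpha$ frames of $t_j$, i.e. $\lambda_{\max}(\bm{\mathcal{M}}_{u_j+1})\ge\thresh$ w.h.p., so $t_j\le\that_j\le t_j+2\alpha$ and hence $d\ge(K+2)\alpha$ suffices to finish all $K$ iterations before $t_{j+1}$ (Lemma \ref{det}); and (iii) correct rank estimation at every iteration, i.e. $\lambda_{r_{j,\new}}(\bm{\mathcal{M}}_u)>\thresh>\lambda_{r_{j,\new}+1}(\bm{\mathcal{M}}_u)$, which requires the two-sided eigenvalue bounds $b_{\bm{A}}-b_{\bm{\mathcal{H}},k}\ge\thresh$ and $b_{\bm{A},\perp}+b_{\bm{\mathcal{H}},k}<\thresh$ via Weyl's inequality (Lemma \ref{pPCA}). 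Without (i)--(iii) the induction cannot even be set up, because the event on which you condition at stage $(j,k)$ must include correct detection and correct rank at all earlier stages; this is also where the event algebra handling the two possible detection times $\uhat_j\in\{u_j,u_j+1\}$ enters the final probability bound.
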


\begin{proof}
The proof is given in Sections \ref{pf_thm} and \ref{3_pfs}. As shown in Lemma \ref{spc_case}, Model \ref{sbyrho} is a special case of Model \ref{general_model} (Model \ref{general_model} is more general) on $\T_t$. Hence we prove the result only using Model \ref{general_model}.
\end{proof}
Theorem \ref{thm1_mc} says that if an accurate estimate of the initial subspace is available; the two algorithm parameters are set appropriately; the $\lt$'s are mutually independent over time and the low-dimensional subspace from which $\lt$ is generated changes ``slowly" enough, i.e. (a) the delay between change times is large enough ($d \ge (K+2)\alpha$) and (b) the eigenvalues along the newly added directions are small enough for $d$ frames after a subspace change (so that \eqref{anew_small} holds); the set of missing entries at time $t$, $\T_t$, has enough changes; and the basis vectors that span the low-dimensional subspaces are dense enough; then, with high probability (w.h.p.), the error in estimating $\lt$ will be small at all times $t$.  Also, the error in estimating the low-dimensional subspace will be initially large when new directions are added, but will decay to a small constant times $\sqrt{\zeta}$ within a finite delay.

 %$t_\train$ is large enough (if $\lt$'s are continuous random vectors, then $t_\train \ge r_0$ will suffice almost surely),
Consider the accurate initial subspace assumption. If the training data truly satisfies $\mt=\lt$ (without any noise or modeling error) and if we have at least $r_0$ linearly independent $\lt$'s (if $\lt$'s are continuous random vectors, this corresponds to needing $t_\train \ge r_0$ almost surely),  then the estimate of $\Span(\bm{P}_{t_\train})$ obtained from training data will actually be exact, i.e. we will have $\mathrm{dif}( \Phat_{\ttrain}, \bm{P}_{\ttrain} ) = 0$.
The theorem assumption that $\mathrm{dif}( \Phat_{\ttrain}, \bm{P}_{\ttrain} ) \le r_0 \zeta$ allows for the initial training data to be noisy or not exactly satisfying the model. If the training data is noisy, we need to know $r_0$ (in practice this is computed by thresholding to retain a certain percentage of largest eigenvalues). In this case we can let $\lammin$ be the $r_0$-th eigenvalue of $\frac{1}{\alpha} \sum_{t=1}^{t_\train} \mt {\mt}'$ and  $\Phat_{t_\train}$ be the $r_0$ top eigenvectors.
%we can let $\lammin$ be the smallest value so that the sum of all eigenvalues of $\frac{1}{\alpha} \sum_{t=1}^{t_\train} \mt {\mt}'$ that are larger than or equal to $\lammin$ is at least 99\% of the sum of all eigenvalues of this matrix and we can let $\Phat_{t_\train}$ be the corresponding eigenvectors.

The following corollary is also proved when we prove the above result.

\begin{corollary} \label{thm1_mc_cor}
The following conclusions also hold under the assumptions of Theorem \ref{thm1} with probability at least $1 - n^{-10}$
\begin{enumerate}
\item The estimates of the subspace change times given by Algorithm \ref{reprocsdet} satisfy $t_j \leq \hat{t}_j \leq t_j + 2\alpha$, for $j = 1,\dots, J$;
\item The estimates of the number of new directions are correct, i.e. $\hat{r}_{j,\new,k} = r_{j,\new}$ for $j=1,\dots,J$ and $k = 1,\dots,K$;
\item The recovery error satisfies:
\begin{align*}
\|\lhat_t - \lt \|_2 &\leq
\begin{cases}
1.2 \left(\sqrt{\zeta} +  \sqrt{r_{\new}}\gamma_{\rmnew}  \right) & t \in \left[t_j,\hat{t}_j\right] \\
1.2 \left(1.84\sqrt{\zeta} +  (0.83)^{k-1}\sqrt{r_{\new}}\gamma_{\rmnew}  \right) &   t \in \left[\hat{t}_j + (k-1)\alpha, \hat{t}_j + k\alpha -1\right], \ k=1,2, \dots, K \\
2.4\sqrt{\zeta} & t \in \left[\hat{t}_j + K \alpha, t_{j+1} - 1\right];
\end{cases}
\end{align*}
\item The subspace error satisfies,
\begin{align*}
\SE_t \leq
\begin{cases}
1  & t \in \left[t_j,\hat{t}_j\right] \\
 10^{-2} \sqrt{\zeta} +  0.83^{k-1} & t \in \left[\hat{t}_j + (k-1)\alpha, \hat{t}_j + k\alpha -1\right], \ k=1,2, \dots, K  \\
10^{-2} \sqrt{\zeta}   &  t \in \left[\hat{t}_j + K \alpha, t_{j+1} - 1\right].
\end{cases}
\end{align*}
\end{enumerate}
\end{corollary}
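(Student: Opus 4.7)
The plan is to obtain Corollary \ref{thm1_mc_cor} as a direct byproduct of the inductive argument used to establish Theorem \ref{thm1_mc}: the theorem only records the ``worst case'' bound valid at all $t$, whereas the corollary records the finer intermediate bounds that the same induction necessarily produces at each stage. I would set up the induction over the subspace change epochs $j=1,\dots,J$ and, inside each epoch, over the update iterations $k=1,\dots,K$, splitting the timeline into exactly the three phases $[t_j,\hat t_j]$, $[\hat t_j+(k-1)\alpha,\hat t_j+k\alpha-1]$, and $[\hat t_j+K\alpha,t_{j+1}-1]$ that appear in the corollary. The induction hypothesis at the start of epoch $j$ is $\SE_t\le 10^{-2}\sqrt{\zeta}$, provided either by the training assumption $\mathrm{dif}(\Phat_{t_\train},\bm{P}_{t_\train})\le r_0\zeta$ for $j=1$ or by the post-update conclusion of epoch $j-1$.

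I would first prove Claim (1) (correct detection) by a two-sided argument on the detection statistic built from $\lammin$. The ``no false alarm'' direction uses the hypothesis $\SE_t\le 10^{-2}\sqrt\zeta$ before $t_j$ together with the denseness bounds in \eqref{candes_dense}: on the pre-change window, the residual projected onto $(I-\hat P_{t-1}\hat P_{t-1}')$ has covariance whose top eigenvalue stays below the threshold. The ``guaranteed detection'' direction uses \eqref{anew_small} in Model \ref{exp_model}: once $\alpha$ samples past $t_j$ have been processed, a matrix Hoeffding / Bernstein concentration inequality with the window length $\alpha$ chosen as in \eqref{alpha} forces at least one empirical eigenvalue above the threshold, so detection fires at some $\hat t_j\in[t_j,t_j+2\alpha]$. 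Claim (2) (correct rank estimation $\hat r_{j,\new,k}=r_{j,\new}$) then follows by the same thresholding argument applied to the residual covariance at each of the $K$ update steps; here the denseness bound and the bound $2sr_\new\mu\le 0.0004n$ ensure that the sparse-recovery error does not perturb the spectrum enough to close the eigengap between the $r_{j,\new}$ ``new'' eigenvalues and the rest, so exactly $r_{j,\new}$ directions cross the threshold.

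Claims (3) and (4) are proved jointly by induction. Within each phase, the ReProCS projected compressive sensing step writes the observation residual $\mt-\hat P_{t-1}\hat P_{t-1}'\mt$ as an approximately sparse vector corrupted by a small term controlled by the current subspace error; the RIP of $\bm{I}_{\overline{\T}_t}{}'(\bm I-\hat P_{t-1}\hat P_{t-1}')$ supplied by the denseness bound $2s(r_0+Jr_\new)\mu\le 0.09n$ then converts the current $\SE_t$ bound into the stated $\|\lhat_t-\lt\|_2$ bound, producing the constants $1.2,1.84,2.4$ after propagating $\zeta$, $\sqrt\zeta$ and $\sqrt{r_\new}\gamma_\new$. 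The subspace bound in Claim (4) is the core of the argument: during the $k$-th update interval a $\sin\theta$ (Davis--Kahan / Wedin) perturbation bound is applied to the length-$\alpha$ empirical covariance of $(\bm I-\hat P_{t_j,*}\hat P_{t_j,*}')\lhat_t$; combined with matrix concentration (whose deviation is forced to $O(r_\new\zeta\lammin)$ by the choice of $C_{\add}$ in \eqref{alpha}) this yields a recursion of the form $\zeta_k\le 0.83\,\zeta_{k-1}+O(\sqrt\zeta)$ on the new-direction subspace error. The choice $K=\lceil\log(0.16 r_\new\zeta)/\log(0.83)\rceil$ is tuned precisely so that after $K$ iterations this recursion produces $\SE_t\le 10^{-2}\sqrt\zeta$; during intermediate iterations it yields the $10^{-2}\sqrt\zeta+0.83^{k-1}$ bound written in the corollary.

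The main obstacle is Claim (4), the per-iteration geometric decay. Two things have to be controlled simultaneously: a uniform matrix concentration inequality over all $J(K+1)$ windows that survives a union bound at probability $1-n^{-10}$ (this is what dictates the logarithmic factors in the definition of $\alpha$), and a subtle perturbation step in which the spectrum of the empirical residual covariance must be separated into ``new'' and ``other'' parts even though Model \ref{exp_model} places no lower bound on the eigenvalues along existing directions. This second point is where the generalized support-change model (Model \ref{general_model}, of which Model \ref{sbyrho} is a special case by Lemma \ref{spc_case}) is genuinely used: the bounds on $\rho$, $\beta$ and the disjointness properties are exactly what prevents the sparse-recovery error from piling up coherently across the window and corrupting the eigengap that drives the $0.83$ contraction. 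Once this obstacle is handled, all four numbered claims are read off directly from the inductive bookkeeping, and the corollary follows.
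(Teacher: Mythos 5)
Your proposal is correct and follows essentially the same route as the paper: the corollary is obtained as a byproduct of the inductive/event-based proof of the main theorem, with the three phases handled respectively by the detection lemmas (no false detection plus detection within $2\alpha$), the correct-rank/eigengap argument in the p-PCA lemma, the sparse-recovery lemma converting the current subspace-error bound into the stated $\|\lhat_t-\lt\|_2$ bounds, and the $\sin\theta$-plus-matrix-Hoeffding recursion giving the $0.83^{k-1}$ decay, with $K$ chosen to terminate at $10^{-2}\sqrt{\zeta}$. You also correctly identify that the support-change model (via the block-banded-matrix lemma) is what keeps the correlated sparse-recovery errors from destroying the eigengap, which is exactly the paper's key new ingredient.
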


\subsection{Main Result for Online Robust PCA}
Recall that in this case we assume that the observations $\mt$ satisfy
$
\mt = \lt + \xt
$
with the support of $\xt$, denoted $\T_t$, not known. We have the following result for Algorithm \ref{reprocsdet_orpca} given and explained in Section \ref{algosubsec}. This requires two extra assumptions beyond what the previous result needed. For the matrix completion problem, the set of missing entries is known, while in the robust PCA setting, the support set, $\T_t$, of the sparse outliers, $\xt$, must be determined.  We recover this using an ell-1 minimization step followed by thresholding. To do this correctly, we need a lower bound on the absolute values of the nonzero entries of $\xt$. Moreover, Algorithm \ref{reprocsdet_orpca} has two extra parameters - $\xi$, which is the bound on the two norm of the noise seen by the ell-1 minimization step, and $\omega$, which is the threshold used to recover the support of $\xt$. These need to be set appropriately.

\begin{theorem} \label{thm1}
Consider Algorithm \ref{reprocsdet_orpca}. Assume that $\mt$ satisfies \eqref{orpca_eq} and assume everything else in Theorem \ref{thm1_mc}. Also assume
\ben
\item The two extra algorithm parameters are set as: $\xi = \sqrt{r_{\new}}\gamma_{\new} + (\sqrt{r} + \sqrt{r_{\new}})\sqrt{\zeta}$ and $\omega = 7\xi$
\item We have $x_{\min}:= \min_t \min_{i: (\xt)_i \neq 0} |(\xt)_i| > 14 \xi$
\een
Then with probability at least $1-n^{-10}$,
\ben
\item all conclusions of Theorem \ref{thm1_mc} and Corollary \ref{thm1_mc_cor} hold;
\item the support set $\T_t$ is exactly recovered, i.e. $\hat{\T}_t = \T_t$ for all $t$;
\item $\|\xt - \xhatt\|_2 = \|\lt - \lhatt\|_2$ and $\|\lt - \lhatt\|_2$ satisfies the bounds given in Theorem \ref{thm1_mc} and Corollary \ref{thm1_mc_cor}.
\een
\end{theorem}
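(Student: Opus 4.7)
The plan is to reduce the online RPCA setting to essentially the same error structure as the online MC setting and then piggy-back on the proof of Theorem \ref{thm1_mc}. The new ingredient is showing \emph{exact support recovery} $\hat{\T}_t = \T_t$ at every $t$, which I would fold into the same outer induction over time (and over the $k = 1, \dots, K$ projection-PCA windows) that drives the MC proof. Concretely, assume by induction that the subspace error bound of Corollary \ref{thm1_mc_cor} holds for all times up through $t-1$, so that $\hat{\bm{P}}_{t-1}$ satisfies $\mathrm{dif}(\hat{\bm{P}}_{t-1}, \bm{P}_{t-1}) \le 10^{-2}\sqrt{\zeta} + 0.83^{k-1}$ in the $k$-th window (and the tighter bound $10^{-2}\sqrt{\zeta}$ after $\hat{t}_j + K\alpha$). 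Define $\bm{\Phi}_t := \bm{I} - \hat{\bm{P}}_{t-1}\hat{\bm{P}}_{t-1}'$ and $\bm{y}_t := \bm{\Phi}_t \mt = \bm{\Phi}_t \xt + \bm{b}_t$, where $\bm{b}_t := \bm{\Phi}_t \lt$.

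First I would show the ``noise'' $\bm{b}_t$ is small: writing $\lt = \bm{P}_{t,*} \ats + \bm{P}_{t_j,\new}\atnew$ and using the $\SE_t$ bound together with $\|\atnew\|_2 \le \sqrt{r_\new}\gamma_\new$ and the slow-subspace-change consequence $\lambda_\new^+ \le 3\lammin$, a short calculation yields $\|\bm{b}_t\|_2 \le (\sqrt{r}+\sqrt{r_\new})\sqrt{\zeta} + \sqrt{r_\new}\gamma_\new = \xi$. Second I would establish $\delta_{2s}(\bm{\Phi}_t) < \sqrt{2}-1$ by combining the denseness bound $2sr\mu \le 0.09n$ with the subspace error bound (this is the standard ReProCS ``restricted-isometry via incoherence plus small $\SE_t$'' lemma that is also used in the MC proof to certify invertibility of $\bm{\Phi}_{t,\T_t}$). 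The Cand\`es noisy-$\ell_1$ guarantee then yields $\|\hat{\bm{x}}_{t,\cs} - \xt\|_2 \le C' \xi$ for a constant $C'$, which by choice of constants can be shown to be at most $7\xi$.

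Third, with $\|\hat{\bm{x}}_{t,\cs} - \xt\|_\infty \le \|\hat{\bm{x}}_{t,\cs} - \xt\|_2 \le 7\xi = \omega$, the thresholding step recovers the support exactly: for $i \in \T_t$, $|(\hat{\bm{x}}_{t,\cs})_i| \ge x_{\min} - 7\xi > 7\xi$; for $i \notin \T_t$, $|(\hat{\bm{x}}_{t,\cs})_i| \le 7\xi$. Thus $\hat{\T}_t = \T_t$. The LS refinement on the (correct) support then gives
\[
\hat{\xt} - \xt \;=\; \bm{I}_{\T_t} \bigl( (\bm{\Phi}_t)_{\T_t} \bigr)^\dagger \bm{\Phi}_t \lt,
\]
so $\hat{\lt} - \lt = -(\hat{\xt} - \xt)$ is exactly the expression that the proof of Theorem \ref{thm1_mc} analyzes for the MC case (where the ``support'' $\T_t$ is instead known a priori). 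Hence the projection-PCA-based subspace update step satisfies the same concentration and perturbation bounds, and every conclusion of Theorem \ref{thm1_mc} and Corollary \ref{thm1_mc_cor} transfers verbatim, yielding the claimed bounds on $\SE_t$, on $\|\hat{\lt}-\lt\|_2$, and the trivial identity $\|\hat{\xt}-\xt\|_2 = \|\hat{\lt}-\lt\|_2$.

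The main obstacle is the interleaving of the induction. Exact support recovery at $t$ requires the subspace error bound at $t-1$, while the subspace error bound at the end of each projection-PCA window requires exact support recovery (and hence the MC-style error expression for $\hat{\lt} - \lt$) throughout that window. I would therefore structure the argument as a single inductive statement whose $t$-th step simultaneously asserts (i) $\hat{\T}_t = \T_t$, (ii) the $\|\hat{\lt}-\lt\|_2$ bound of Corollary \ref{thm1_mc_cor}, and (iii) the $\SE_t$ bound at the end of each window; step (i) is proved first (using the $\SE_{t-1}$ bound from the previous step), and then (ii) and (iii) follow exactly as in the MC proof because the error expression is identical. The high-probability event is the same one constructed for Theorem \ref{thm1_mc}; no additional probabilistic analysis is needed, so the $1 - n^{-10}$ probability carries over unchanged.
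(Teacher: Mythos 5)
Your proposal is correct and matches the paper's proof in substance: the paper's Sparse Recovery Lemma (Lemma \ref{cslem}) carries out exactly your steps --- bounding $\|\bm{b}_t\|_2 \le \xi$ via the conditional subspace-error bounds, bounding $\delta_{2s}(\bm{\Phi}_t) < \sqrt{2}-1$ via denseness plus the subspace error, invoking the Cand\`es noisy-CS guarantee to get the $7\xi$ error, and thresholding at $\omega=7\xi$ with $x_{\min}>14\xi$ to conclude $\hat{\T}_t=\T_t$ and recover the same LS error expression as in the MC case --- and this is interleaved with the projection-PCA induction through the conditional events $\Gamma_{j,k}^{\hat{u}_j}$, exactly as you insist it must be. The only (cosmetic) difference is the direction of the reduction: the paper proves the RPCA theorem directly and obtains the MC theorem as the corollary in which the support-recovery step is simply unnecessary, whereas you present MC as primary; once the induction is interleaved as you describe, the two arguments coincide.
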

The second assumption above can be interpreted as either a lower bound on $x_{\min}$, or as an upper bound on $\sqrt{r_\new} \gamma_\new$ in terms of $x_{\min}$. This latter interpretation is another ``slow subspace change" condition. For the $\xt$'s, this result shows that their support is exactly recovered w.h.p. and its nonzero entries are accurately recovered.

\subsection{Simple Generalizations}

{\em Model on $\ell_t$. }
Consider the subspace change model, Model \ref{exp_model}. For simplicity we put a slow increase model on the eigenvalues along the new directions for the entire period $[t_j, t_{j+1})$. However, as explained below the model, the proof of our result does not actually use this slow increase model. It only uses \eqref{anew_small}, i.e. $\lammin \leq \lambda_{\new}^- \leq \lambda_{\new}^+  \leq 3 \lammin$. Recall that $\lambda_{\new}^-$ and $\lambda_{\new}^+$ are the minimum and maximum eigenvalues along the new directions for the first $d$ frames after a subspace change. Thus, in the interval $[t_j+d+1, t_{j+1})$ our proof actually does not need any constraint on $\bm{\Lambda}_{t,\new}$. %In particular, its entries could become as large as $\lambda^+$ after $t_j+d$ and surely after $t_{j+1}$.

With a minor modification to our proof, we can prove our result with an even weaker condition. We need \eqref{anew_small} to hold with $\lambda_{\new}^-$ being the minimum of the minimum eigenvalues of any $\alpha$-frame {\em average} covariance matrix along the new directions over the period $[t_j,t_j+d]$, i.e. with $\lambda_{\new}^- = \min_j \min_{\tau \in [t_j, t_j+d-\alpha]} \lambda_{\min}( \frac{1}{\alpha} \sum_{t=\tau}^{\tau+\alpha-1} \Lamtnew)$. For video analytics, this translates to requiring that, after a subspace change, {\em enough (but not necessarily all)} background frames have `detectable' energy along the new directions, so that the minimum eigenvalue of the average covariance is above a threshold. % (so that the minimum eigenvalue of the average covariance matrix.

Secondly, we should point out that there is a trade off between the bound on $q_i {v_i}^d$, and consequently on $\lambda_{\new}^+$, in Model \ref{exp_model} and the bound on ${\rrho}^2 \beta$ assumed in Model \ref{sbyrho}. Allowing a larger value of $q_i {v_i}^d$ (faster subspace change) will require a tighter bound on $\rho^2\beta$ which corresponds to requiring more changes to $\T_t$. We chose the bounds $q_i (v_i)^d\leq3$ and $\rho^2 \beta \leq .01 \alpha$ for simplicity of computations. There are many other pairs that would also work. The above trade-off can be seen from the proof of Lemma \ref{zetadecay}. The proof uses Model \ref{general_model} of which Model \ref{sbyrho} is a special case.
For video analytics, this means that if the background subspace changes are faster, then we also need the foreground objects to be moving more so we can `see' enough of the background behind them.

%NV
Thirdly, in Model \ref{exp_model} we let $\P_t \Lamt {\P_t}'$ be an EVD of $\bm{\Sigma}_t$. This automatically implies that $\Lamt$ is diagonal. But our proof only uses the fact that $\Lamt$ is block diagonal with blocks $\bm{\Lambda}_{t,*}$ and $\bm{\Lambda}_{t,\new}$. If we relax this and we let $\P_t \Lamt {\P_t}'$ be {\em a} decomposition of $\bm{\Sigma}_t$ where $\Lamt$ is block diagonal as above, then our model allows the variance along {\em any} direction from $\Span(\P_{t_j-1})$ to become zero for any period of time and/or become nonzero again later.
Thus, in the special case of \eqref{del_model} we can actually allow $\bm{P}_t  =[(\bm{P}_{t-1} \bm{R}_t \setminus  \bm{P}_{t,\del})  \  \ \bm{P}_{t,\new}]$, where $\bm{R}_t$ is an $r_{j-1}\times r_{j-1}$ rotation matrix and $\bm{P}_{t,\del}$ contains the columns of $\bm{P}_{t-1}\bm{R}_t$ for which the variance is zero. This will be a special case of this generalization if $[\bm{P}_{t_1-1} \ \bm{P}_{t_1,\new} \ \bm{P}_{t_2,\new} \ \dots \ \bm{P}_{t_J,\new}]$ is a basis matrix.
%A more explicit model for deletion of directions is to let $\bm{P}_t$ change as
%\[
%\bm{P}_t =
%\begin{cases}
% [(\bm{P}_{t-1}\bm{R}_t \setminus  \bm{P}_{t,\del})  \  \ \bm{P}_{t,\new}]  & \text{if} \ t = t_1 \text{ or } t_2 \text{ or } \dots t_J \\
%\bm{P}_{t-1}  &   \text{otherwise,}
%\end{cases}
%\]
%where $\bm{R}_t$ is an $r_{j-1}\times r_{j-1}$ rotation matrix and $\bm{P}_{t,\del}$ contains the columns of $\bm{P}_{t-1}\bm{R}_t$ for which the variance is zero.
%If we add the assumption that the matrix $[\bm{P}_{t_1-1} \ \bm{P}_{t_1,\new} \ \bm{P}_{t_2,\new} \ \dots \ \bm{P}_{t_J,\new}]$ is a basis matrix (i.e. $\P_{t_j,\new}$ is orthogonal to the initial subspace as well as to all previously added new directions), then this would be a special case of the above generalization of Model \ref{exp_model}. %We say special case because this only allows deletions at times $t_j$, whereas Model \ref{exp_model} allows deletion of old directions at any time.

{\em Initialization condition. }
The first condition of the theorem requires that we have accurate initial subspace knowledge. As explained below the theorem, this means that we can allow noisy training data. Moreover, notice that if we let $t_1 = t_\train + 1$, then new background directions can enter the subspace at the same time as the first foreground object. Said another way, all we need is an accurate enough estimate of all but $r_\new$ directions of the initial subspace, and an assumption of small eigenvalues for sometime ($d$ frames) along the directions for which we do not have an accurate enough estimate (or do not have an estimate).%
%??
%Another thing to point out is this. If $t_\train$ is large enough and the data truly is noise-free (if $\lt$'s are continuous random vectors, then $t_\train \ge r_0$ will suffice almost surely), then the estimate of $\Span(P_{t_\train}) = \Span(P_{t_0})$ obtained from training data will actually be exact, i.e. we will have $\mathrm{dif}( \Phat_{\ttrain}, \bm{P}_{\ttrain} ) = 0$. The theorem assumption that this error be below $r_0 \zeta$ allows for the initial training data to in fact be noisy.

%??? delete since give this in theorem itself
%Consider Model \ref{sbyrho} on $\T_t$. This is one example of the set of sufficient conditions on changes in $\T_t$. Our result is proved using Model \ref{general_model} on $\T_t$ which is more general. %As shown in Lemma \ref{spc_case}, Model \ref{sbyrho} is one special case of it. Some other special cases are described in Section \ref{extensions}.

{\em Denseness assumption. }
Consider the denseness assumption.
Define the (un)denseness coefficient as follows.
\begin{definition}
For a basis matrix $\bm{P}$, define $\ds\kappa_s(\bm{P}) := \max_{|\mathcal{T}| \leq s} \| {\I_{\mathcal{T}}}' \bm{P} \|_2$.
\end{definition}
Notice that left hand side in \eqref{candes_dense} is $\left[\kappa_1(\bm{P})\right]^2$.  Using the triangle inequality, it is easy to show that $\kappa_s(\bm{P}) \leq \sqrt{s}\kappa_1(\bm{P})$ \cite{ReProCS_IT}. Therefore, using the fact that for a basis matrix $[\P_1 \ \P_2]$, $(\kappa_s([\P_1 \ \P_2]))^2 \le (\kappa_s(\P_1))^2 + (\kappa_s(\P_2))^2$ (see proof of the first statement of Lemma \ref{RIC_bnd} in Appendix \ref{pf_cslem}), the denseness assumptions of Theorem \ref{thm1} imply that
\beq \label{kappa_dense}
\kappa_{s,*} :=\kappa_{2s}(\bm{P}_{t_J}) \leq 0.3 \quad \text{ and } \quad \kappa_{s,\rmnew}:= \max_j \kappa_{2s}(\bm{P}_{t_j,\new}) \leq 0.02.
\eeq
The proof of Theorem \ref{thm1} only uses \eqref{kappa_dense} for the denseness assumption.

The reason for defining $\kappa_{s}$ as above is the following lemma from \cite{ReProCS_IT}.
\begin{lem}[\cite{ReProCS_IT}]\label{kappadelta}
For a basis matrix $\bm{P}$, $\delta_s(\I - \bm{P}\bm{P}') = \left(\kappa_s(\bm{P})\right)^2$.
\end{lem}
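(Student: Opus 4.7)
The plan is to leverage the fact that $\bm{A} := \bm{I} - \bm{P}\bm{P}'$ is an orthogonal projector (onto the orthogonal complement of $\range(\bm{P})$), so it is symmetric and idempotent. The key identity is $\|\bm{A}\bm{x}\|_2^2 = \bm{x}'\bm{A}^2\bm{x} = \bm{x}'\bm{A}\bm{x} = \|\bm{x}\|_2^2 - \|\bm{P}'\bm{x}\|_2^2$, which turns the RIC question into one about how much mass an $s$-sparse vector can place inside $\range(\bm{P})$.

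First I would reduce to sparse supports: for any $s$-sparse $\bm{x}$ with support $\mathcal{T}$ of size at most $s$, write $\bm{x} = \I_{\mathcal{T}} \bm{x}_{\mathcal{T}}$, so that $\|\bm{P}'\bm{x}\|_2 = \|\bm{P}'\I_{\mathcal{T}} \bm{x}_{\mathcal{T}}\|_2 \le \|\I_{\mathcal{T}}'\bm{P}\|_2 \|\bm{x}_{\mathcal{T}}\|_2 \le \kappa_s(\bm{P}) \|\bm{x}\|_2$. Substituting into the identity above gives the lower RIP bound
\[
\|\bm{A}\bm{x}\|_2^2 \ge \left(1 - (\kappa_s(\bm{P}))^2\right)\|\bm{x}\|_2^2,
\]
while the upper bound is trivial: $\|\bm{A}\bm{x}\|_2^2 \le \|\bm{x}\|_2^2 \le (1 + (\kappa_s(\bm{P}))^2)\|\bm{x}\|_2^2$ since $\bm{A}$ is a projector and $(\kappa_s(\bm{P}))^2 \ge 0$. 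Together these show that $\delta_s(\bm{A}) \le (\kappa_s(\bm{P}))^2$.

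The main work is showing equality, i.e.\ that this bound is \emph{tight}, which is what turns $\le$ into $=$. For this I would exhibit an $s$-sparse vector that saturates the lower bound. Let $\mathcal{T}^*$ with $|\mathcal{T}^*| \le s$ achieve $\|\I_{\mathcal{T}^*}'\bm{P}\|_2 = \kappa_s(\bm{P})$, and let $\bm{v}$ be a right singular vector with $\|\bm{v}\|_2 = 1$ corresponding to the top singular value of $\I_{\mathcal{T}^*}'\bm{P}$; equivalently, $\bm{v}$ is a unit vector in $\range(\bm{P})$ maximizing $\|\I_{\mathcal{T}^*}'\bm{v}\|_2$ (viewing $\bm{v} = \bm{P}\bm{w}$). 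Then set $\bm{x}^\star := \I_{\mathcal{T}^*}(\I_{\mathcal{T}^*}'\bm{P})\bm{v}/\|(\I_{\mathcal{T}^*}'\bm{P})\bm{v}\|_2$, or more simply just pick any $\bm{x}^\star$ supported on $\mathcal{T}^*$ with $\bm{x}^\star_{\mathcal{T}^*}$ equal to a right singular vector of $\bm{P}'\I_{\mathcal{T}^*}$ corresponding to its top singular value. By construction $\bm{x}^\star$ is $s$-sparse, $\|\bm{x}^\star\|_2 = 1$, and $\|\bm{P}'\bm{x}^\star\|_2 = \kappa_s(\bm{P})$, so $\|\bm{A}\bm{x}^\star\|_2^2 = 1 - (\kappa_s(\bm{P}))^2$, ruling out any smaller value of $\delta_s$ in the lower inequality.

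The only subtle point is handling the case $\kappa_s(\bm{P}) = 0$: then $\bm{P}'\I_{\mathcal{T}} = 0$ for every $\mathcal{T}$ of size $\le s$, so $\|\bm{A}\bm{x}\|_2^2 = \|\bm{x}\|_2^2$ for every $s$-sparse $\bm{x}$ and $\delta_s(\bm{A}) = 0 = (\kappa_s(\bm{P}))^2$, which is consistent. I do not anticipate any real obstacle beyond being careful about this degenerate case and correctly constructing the extremal $\bm{x}^\star$ in the tightness step; once the projector identity is in hand, everything else is a short SVD/duality computation.
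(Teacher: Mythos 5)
Your proof is correct: the projector identity $\|(\I-\bm{P}\bm{P}')\bm{x}\|_2^2=\|\bm{x}\|_2^2-\|\bm{P}'\bm{x}\|_2^2$ together with the extremal $s$-sparse vector built from the top singular pair of $\I_{\mathcal{T}^*}{}'\bm{P}$ gives both the bound $\delta_s\le(\kappa_s(\bm{P}))^2$ and its tightness. The paper does not reprove this lemma (it is imported from \cite{ReProCS_IT}), and your argument is essentially the same standard one used there, just phrased via the projector identity rather than via the eigenvalues of $\I_{\mathcal{T}}{}'(\I-\bm{P}\bm{P}')\I_{\mathcal{T}}$.
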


{\color{blue}
{\em Lower bound on minimum nonzero entry of $x_t$ in the online RPCA result (Corollary \ref{thm1}). }
For online RPCA, notice that our result needs a lower bound on the minimum magnitude nonzero entry, $x_{\min}$, of the outlier vector $x_t$. This may seem counter-intuitive, since it means that outlier magnitudes need to be large enough for the proposed algorithm to work whereas one would expect that smaller corruptions are easier to deal with. This is actually true in our case as well, and the lower bound on minimum nonzero entry of $x_t$ is an artifact of trying to use a simpler model and a simpler proof approach. As we explain next, what we really need is that the corruptions either be small enough (to not affect subspace recovery too much) or be large enough (to be detectable).

\begin{corollary}[No lower bound on outlier magnitude]
Consider Algorithm \ref{reprocsdet_orpca}. Assume that $\mt$ satisfies \eqref{orpca_eq}. Assume that the following hold:
\ben
%\item  For a $q>0$, let $\T_{t,\mathrm{small}}(q)$ be the set of indices of the entries of $\xt$ whose magnitude is smaller than or equal to $q$ and let $\T_{t,\mathrm{large}}(q)$ be the set of entries of $\xt$ whose magnitude is strictly larger than $q$.
%Define $\epsilon_w(q): = \max_t \| (\xt)_{\T_{t,\mathrm{small}}(q)}\|_2$. Suppose that $\xt$ and $\lt$ are mutually independent; and there exists a $q>0$ so that
%\ben
%\item  $\min_{i \in \T_{t,\mathrm{large}}(q)}|(\xt)_i| > 14 \epsilon_w (q) + 14 ( \sqrt{r_\new}\gamma_{\new} + (\sqrt{r} + \sqrt{r_\new})\sqrt{\zeta} )$;  and
%\item $\epsilon_w(q)^2 \le 0.03\zeta  \lammin$.
%\een
%Let $q_0$ denote one such value of $q$ and let $\T_t \equiv \T_{t,\mathrm{large}}(q_0)$, i.e. let $\T_t$ be the set of indices of $\xt$ with entries whose magnitude is strictly larger than $q_0$.

\item Suppose that $\xt$ and $\lt$ are mutually independent; and there exists a partition of $\T_t$ into $\T_{t,\mathrm{large}},\T_{t,\mathrm{small}}$ so that
\bi
\item  $\min_{t} \min_{i \in \T_{t,\mathrm{large}}} |(\xt)_i| - 14 \max_t \|(\xt)_{\T_{t,\mathrm{small}}}\|_2    > 14(\sqrt{r_\new}\gamma_\new(d) + (\sqrt{r} + \sqrt{r_\new})\sqrt{\zeta}) $
\item and $\max_t \|(\xt)_{\T_{t,\mathrm{small}}}\|_2^2 \le 0.03 \zeta \lammin$
\ei
Let $\epsilon_w:=\max_t \|(\xt)_{\T_{t,\mathrm{small}}}\|_2$.

\item Algorithm parameters are set as %$\xi, \omega, \alpha$
$\xi = \epsilon_w + \sqrt{r_\new}\gamma_{\new} + (\sqrt{r} + \sqrt{r_\new})\sqrt{\zeta}$; $\omega = 7\xi$; $K = \left\lceil \frac{\log(0.16r_\new\zeta)}{\log(0.83)}\right\rceil$; and
$\alpha \ge {32\cdot100^2}\frac{\max\{16 , 1.2(\sqrt{\zeta} + \sqrt{r_\new}\gamma_\new + \epsilon_w)^4\}}{ \left(r_\new \zeta \lammin \right)^2} (\log(6(K+1)J) + 11\log n )$;

\item Everything else in Theorem \ref{thm1_mc} holds with $\T_t$ replaced by $\T_{t,\mathrm{large}}$.
\een
Then, with probability at least $1-n^{-10}$, the support set of the large entries of $x_t$, $\T_{t,\mathrm{large}}$, is exactly recovered at all times, $\|\xt - \xhatt\|_2 = \|\lt - \lhatt\|_2 \le  1.2 \left(\sqrt{\zeta} +  \sqrt{r_\new}\gamma_{\rmnew}  + \epsilon_w \right)$ and and all other conclusions of Theorem \ref{thm1_mc} hold.
%In the $\mt = \lt + \xt$ case, if the above replaces Model \ref{wt_model} and the condition ``$14 \xi \le \min_{t} \min_{i \in \T_t} |(x_t)_i|$" (in condition \ref{xmin_cond}) in Theorem \ref{thm1}; and if all its other conditions hold with $\epsilon_w$ replaced by $\epsilon_w (x_{\min})$ and with $\T_t$ replaced by the set of indices of $\xt$ with entries equal to or larger than $x_{\min}$; then all its conclusions hold.
\end{corollary}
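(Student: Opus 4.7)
The plan is to reduce this corollary to the proof of Theorem \ref{thm1} by treating the ``small'' part of $\xt$ as an additional bounded perturbation absorbed into the noise budget. Write $\xt = \xt^{L} + \xt^{S}$, where $\xt^{L}$ is supported on $\T_{t,\mathrm{large}}$ and $\xt^{S}$ on $\T_{t,\mathrm{small}}$, so that $\|\xt^{S}\|_2 \le \epsilon_w$. Then $\mt = \lt + \xt^{L} + \xt^{S}$. Our goal is to run Algorithm~\ref{reprocsdet_orpca} with the enlarged parameters $\xi = \epsilon_w + \sqrt{r_\new}\gamma_\new + (\sqrt{r}+\sqrt{r_\new})\sqrt{\zeta}$ and $\omega = 7\xi$, then re-run the induction of Theorem~\ref{thm1}, with $\T_t$ everywhere replaced by $\T_{t,\mathrm{large}}$ (which is what Assumption~3 of the corollary already allows) and with $\xt^{S}$ entering as an extra additive perturbation in every step.

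\textbf{Projected CS and support recovery.} Let $\Phi_t = \I - \Phat_{t-1}\Phat_{t-1}{}'$. Then $\Phi_t \mt = \Phi_t \lt + \Phi_t \xt^{L} + \Phi_t \xt^{S}$, and under the inductive hypothesis $\SE_t$-bounds from Theorem~\ref{thm1_mc}, $\|\Phi_t \lt\|_2 \le \sqrt{r_\new}\gamma_\new + (\sqrt{r}+\sqrt{r_\new})\sqrt{\zeta}$; adding the $\ell_2$ bound $\epsilon_w$ on $\Phi_t\xt^{S}$ gives a noise term of norm $\le \xi$. The existing RIP-based CS error analysis (Lemma~\ref{RIC_bnd} and the associated support-recovery argument) then yields $\|\hat{\xt}^{cs} - \xt^{L}\|_\infty \le \|\hat{\xt}^{cs} - \xt^{L}\|_2 \le 7\xi/2$ as in the original proof, since the RIP condition on $\Phi_t \I_{\T_{t,\mathrm{large}}}$ is inherited from the denseness assumption applied to the (smaller) set $\T_{t,\mathrm{large}} \subseteq \T_t$. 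Because $x_{\min}^{L} > 14\xi > 2\omega$ while non-$\T_{t,\mathrm{large}}$ entries of $\xt^{L}$ vanish, the thresholding step exactly recovers $\hat\T_t = \T_{t,\mathrm{large}}$, and the subsequent LS step gives $\xhatt - \xt^{L} = -(\Phi_t\I_{\hat\T_t})^\dagger \Phi_t(\lt + \xt^{S})$ on $\hat\T_t$ and $0$ off it. Thus $\|\lhatt - \lt\|_2 = \|\xhatt - \xt^{L} - \xt^{S}\|_2 \le 1.2(\sqrt{\zeta} + \sqrt{r_\new}\gamma_\new + \epsilon_w)$.

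\textbf{Subspace update.} The only nontrivial modification is the subspace update analysis. In Theorem~\ref{thm1}, the subspace update works with $\lhatt = \lt + \et$, where $\et = \xt^{L} - \xhatt$ is supported on $\hat\T_t$ and has $\ell_2$-norm controlled by $\|\Phi_t\lt\|_2$; the projection $(\Phi_t\I_{\hat\T_t})^\dagger \Phi_t \lt$ is used as the key ``noise'' term in the concentration arguments for the sample covariance. Now $\et$ picks up an extra contribution $(\Phi_t\I_{\hat\T_t})^\dagger \Phi_t \xt^{S} + \xt^{S}$, which is bounded in $\ell_2$ by a constant multiple of $\epsilon_w$ and, by the $\xt \perp\!\!\!\perp \lt$ assumption, is independent of $\lt$ conditioned on the history. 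The proof of Lemmas~\ref{zetadecay} and the Azuma/Matrix-Hoeffding concentration invocations therefore go through almost verbatim: the $\alpha$-sample cross-terms of the form $\frac{1}{\alpha}\sum_t \lt \xt^{S}{}'$ concentrate around zero, and the diagonal term $\frac{1}{\alpha}\sum_t \xt^{S}\xt^{S}{}'$ has operator norm $\le \epsilon_w^2 \le 0.03\,\zeta\lammin$ by hypothesis. This is exactly the slack the original proof had baked into the $\zeta\lammin$ thresholds for the eigenvalue-based detection/deletion steps and for the bound on the ``perturbation'' of the signal covariance in the Davis--Kahan / $\mathrm{sin}\,\Theta$ arguments. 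Consequently, the inductive subspace-error and detection-delay bounds of Theorem~\ref{thm1_mc} and Corollary~\ref{thm1_mc_cor} are preserved with only constants changed.

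\textbf{Main obstacle.} The subtle point is the subspace update: the effective noise $\xt^{S}$ is neither sparse-recoverable nor zero, so it persists as a bias in $\lhatt$. The bound $\epsilon_w^2 \le 0.03\,\zeta\lammin$ is precisely calibrated so that this bias is dominated by the $\zeta\lammin$ slack in the eigenvalue thresholds used by Algorithm~\ref{reprocsdet_orpca} to detect new directions and to verify convergence; and the independence of $\xt$ and $\lt$ is what kills the potentially problematic cross-terms in the sample covariance. Verifying that the choice of $\alpha$ in the corollary (which differs from Theorem~\ref{thm1} only by replacing $\gamma_\new\sqrt{r_\new}$ with $\gamma_\new\sqrt{r_\new}+\epsilon_w$ inside the $\max\{16,\cdot\}$ term) still makes the matrix-Hoeffding tail probabilities sum to at most $n^{-10}$ is the only bookkeeping that needs explicit checking; it follows by redoing the computation at the end of the proof of Lemma~\ref{zetadecay} with the enlarged noise bound.
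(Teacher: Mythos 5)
Your proposal is correct and follows essentially the same route as the paper's own (very brief) proof: treat $(\xt)_{\T_{t,\mathrm{small}}}$ as an extra bounded noise term, enlarge $\xi$ accordingly so that support recovery of $\T_{t,\mathrm{large}}$ still goes through, and use the three facts $\|(\xt)_{\T_{t,\mathrm{small}}}\|_2\le\epsilon_w$, $\E[\lt(\xt)_{\T_{t,\mathrm{small}}}{}']=\bm{0}$ (from independence and zero mean), and $\|\E[(\xt)_{\T_{t,\mathrm{small}}}(\xt)_{\T_{t,\mathrm{small}}}{}']\|_2\le\epsilon_w^2\le 0.03\zeta\lammin$ to absorb the perturbation into the existing concentration and $\sin\theta$ arguments. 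The only cosmetic slips (the CS bound is $7\xi$ rather than $7\xi/2$, and the concentration lives in Lemmas \ref{Ak}--\ref{calHk} rather than Lemma \ref{zetadecay}) do not affect the argument.
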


\begin{proof}
The proof will follow in exactly the same fashion as the proof of the original theorem. We will just need to treat $(x_t)_{\T_{t,\mathrm{small}}}$ as an extra ``noise" term and use one of the following three facts at various places. Let $\E[.]$ denote expectation conditioned on accurate recovery so far and on $\T_t$ (this is formally defined in the proofs). We will use (a) $\|(\xt)_{\T_{t,\mathrm{small}}}\|_2 \le \epsilon_w \le \sqrt{0.03 \zeta \lammin}$; (b) $\E[\lt (\xt)_{\T_{t,\mathrm{small}}}{}']=0$ (this follows because $\lt$ is zero mean and $\lt$ and $\xt$ are independent (and hence $\lt$ and $\{\T_t,(\xt)_{\T_{t,\mathrm{small}}}\}$ are independent)); and (c) $\|\E[(\xt)_{\T_{t,\mathrm{small}}}(\xt)_{\T_{t,\mathrm{small}}}{}']\|_2 \le \max_t \|(x_t)_{\T_{t,\mathrm{small}}}(\xt)_{\T_{t,\mathrm{small}}}{}'\|_2 \le \epsilon_w^2 \le 0.03 \zeta \lammin$.
\end{proof}
}

\section{Discussion}\label{discussion}

\subsection{Discussion of the assumptions used}
In the previous section, we provide two related results, one for online matrix completion (MC) and the second for online robust PCA (RPCA). The result for online RPCA can also be interpreted as a result for online sparse matrix recovery in (potentially) large but structured noise $\lt$. Notice that our result does not require an upper bound on $\lambda^+$ (the maximum eigenvalue of $\cov(\lt)$ at any time) or on $\gamma$ (the bound on the maximum magnitude of any entry of $\P_t'\lt$ for any time $t$). Both these parameters are only used to select $\zeta$, which in turn governs the value of $K$ and $\alpha$ and hence governs the required delay between subspace change times.

Our results require accurate initial subspace knowledge. As explained earlier, for video analytics, this corresponds to requiring an initial short sequence of background-only video frames whose subspace can be estimated via SVD (followed by using a singular value threshold to retain a certain number of top left singular vectors). Alternatively if an initial short sequence of the video data satisfies the assumptions required by a batch method such as PCP (for RPCA) and NNM (for MC), that can be used to estimate the low-rank part, followed by SVD to get the column subspace. For online MC, another alternative is to use the initialization techniques of GROUSE \cite{grouse} or PETRELS \cite{petrels} or to use the adaptive MC idea of \cite{aarti_adaptive_mc}.%; this is needed since the matrix formed by the background images is only approximately low-rank

In Model \ref{exp_model}, we are placing a slow increase assumption on the eigenvalues along the new directions, $\P_{t_j,\new}$, for the interval $[t_j, t_{j+1})$. Thus after $t_{j+1}$, the eigenvalues along $\bm{P}_{t_j,\new}$ can increase gradually or suddenly to any large value up to $\lambda^+$. In fact as explained above, our proof needs the slow increase to hold only for the first $d$ time instants after $t_j$, so, in fact, at any time after $t_j+d$, the eigenvalues along $\P_{t_j,\new}$ could increase to a large value.

%In Model \ref{exp_model}, we are placing a slow increase assumption on the eigenvalues along the new directions only for the interval $[t_j, t_{j}+d)$. %In fact we only need slow increase for the first $d$ time instants after the change. % can modify our requirement to say: we need slow increase of covariance matrix only for d time instants after a principal subspace change. in particular the case of all
%Thus after $t_{j}+d$, and surely after $t_{j+1}$, the eigenvalues along $\bm{P}_{t_j,\new}$ can increase gradually or suddenly to any large value up to $\lambda^+$.

Model \ref{sbyrho} on $\T_t$ is a practical model for moving foreground objects in video. We should point out that this model is one special case of the general set of conditions we need (Model \ref{general_model}). Some other special cases of it are discussed in Section \ref{extensions}.

% (model on $\T_t$ and denseness) and the denseness constant   (total number of outliers) (for a given $\mu$)

The model on $\T_t$ (Model \ref{sbyrho}) and the denseness condition of the theorem constrain $s$ and $s,r_0,r_\new, J$ respectively. Model \ref{sbyrho} requires $s \le \rho_2 n / \alpha$ for a constant $\rho_2$.
Using the expression for $\alpha$, it is easy to see that as long as $J \in \bigo(n)$, we have $\alpha\in\bigo(\log n)$ and so Model \ref{sbyrho} needs $s \in \bigo(\frac{n}{\log n})$. With $s \in \bigo(\frac{n}{\log n})$, the denseness condition will hold if $r_0 \in \bigo(\log n)$, $J \in \bigo(\log n)$ and $r_\new$ is a constant. This is one set of sufficient conditions that we allow on the rank-sparsity product.

\subsection{Comparison with the results for PCP and NNM}
Let $\bm{L}:=[\bm{\ell}_1, \bm{\ell}_2, \dots, \bm{\ell}_{t_{\max}}]$ and $\bm{S}:=[\bm{x}_1, \bm{x}_2, \dots, \bm{x}_{t_{\max}}]$. Let $r_{\text{mat}}:=\rank(\bm{L})$. Clearly $r_{\text{mat}} \le r_0 + J r_\new$ and the bound is tight. Let $s_{\text{mat}}:= t_{\max} s$ be a bound on the total number of missing entries of $\bm{L}$ or on the support size of the outliers' matrix $\bm{S}$. In terms of $r_{\text{mat}}$ and $s_{\text{mat}}$, what we need is $r_{\text{mat}} \in \bigo(\log n)$ and $s_{\text{mat}} \in \bigo(\frac{n t_{\max}}{\log n})$.  This is stronger than what the PCP result from \cite{rpca} or the result for NNM from \cite{Candes_Recht} need (e.g., the PCP result from \cite{rpca} allows $r_{\text{mat}} \in \bigo \left(\frac{n}{(\log n)^2}\right)$ while allowing $s_{\text{mat}} \in \bigo(n t_{\max})$), but is similar to what the PCP results from \cite{rpca2,rpca_zhang} need.

Other disadvantages of our result are as follows. (1) Our result needs accurate initial subspace knowledge and slow subspace change of $\lt$. As explained earlier and in \cite[Fig. 6]{ReProCS_IT}, both of these are often practically valid for video analytics applications. Moreover, we also need the $\lt$'s to be zero mean and mutually independent over time. Zero mean is achieved by letting $\lt$ be the background image at time $t$ with an empirical `mean background image', computed using the training data, subtracted out. The independence assumption then models independent background variations around a common mean. As we explain in Section \ref{extensions}, this can be easily relaxed and we can get a result very similar to the current one under a first order autoregressive model on the $\lt$'s.
(2) Moreover, Algorithms \ref{reprocsdet} and \ref{reprocsdet_orpca} need multiple algorithm parameters to be appropriately set. The PCP or NNM results need this for none \cite{rpca,Candes_Recht} or at most one \cite{rpca2,rpca_zhang} algorithm parameter. (3) Thirdly, our result for online RPCA also needs a lower bound on $x_{\min}$ while the PCP results do not need this. (4) Moreover, even with this, we can only guarantee accurate recovery of $\lt$, while PCP or NNM guarantee exact recovery.

%compared to the PCP results \cite{rpca,rpca2,rpca_zhang} or the result from \cite{matcomp}  (or $\lt$ and $\xt$ in case or O-RPCA)

(1) The advantage of our work is that we analyze an online algorithm (ReProCS) that is faster and needs less storage compared with PCP or NNM.
It needs to store only a few $n \times \alpha$ or $n \times r_{\text{mat}}$ matrices, thus the storage complexity is $\bigo(n \log n)$ while that for PCP or NNM is $\bigo(n t_{\max})$. In general $t_{\max}$ can be much larger than $\log n$.
(2) Moreover, we do not need any assumption on the right singular vectors of $\bm{L}$ while all results for PCP or NNM do.
(3) Most importantly, our results allow highly correlated changes of the set of missing entries (or outliers). From the assumption on $\T_t$, it is easy to see that we allow the number of missing entries (or outliers) per row of $\bm{L}$ to be $\bigo(t_{\max})$ as long as the sets follow Model \ref{sbyrho}\footnote{In a period of length $\alpha$, the set $\T_t$ can occupy index $i$ for at most $\rrho\beta$ time instants, and this pattern is allowed to repeat every $\alpha$ time instants. So an index can be in the support for a total of $\rrho\beta\frac{t_{\max}}{\alpha}$ time instants and the model assumes $\rrho \beta \le \frac{0.01 \alpha}{\rrho}$ for a constant $\rrho$.}. %\footnote{if we split the total time into intervals of length $\alpha$, then the bound on $s$ ensures that an index is not re-visited; the bound on $\beta$ ensures that we need the support to change at least once every C alpha frames; also because object moves by s/rho and never comes back, it means a given index remains in the support for at most rho C alpha frames; this pattern can repeat again and so the maximum number of times for which an index could be in the support is C tmax.  }  as long as they follow the model
The PCP results from \cite{rpca2,rpca_zhang} need this number to be $\bigo(\frac{t_{\max}}{r_{\text{mat}}})$ which is stronger. %than what we need.
The PCP result from \cite{rpca} or the NNM result \cite{Candes_Recht} need an even stronger condition - they need the set $( \cup_{t=1}^{t_{\max}} \T_t )$ to be generated uniformly at random.%these sets to be independent

%need a denseness assumption on the right singular vectors or on $\max_{i,j}(UV')_{i,j}$ where $U$ and $V$ are the matrices of the left and right singular vectors

\subsection{Other results for online RPCA and online MC} \label{discussion_online}
%Now consider works that also use initial subspace knowledge.

Our online RPCA result improves upon the online RPCA results from our earlier work \cite{ReProCS_IT} for two reasons. First, the result of \cite{ReProCS_IT} was a {\em partial result} because it required a denseness assumption on $(\bm{I} - \bm{P}_{t_j,\rmnew}{\bm{P}_{t_j,\rmnew}}')\hat{\bm{P}}_{t}$ and $(\bm{I} - \hat{\bm{P}}_{t,*}\hat{\bm{P}}_{t,*}{}' -\hat{\bm{P}}_{t,\rmnew}\hat{\bm{P}}_{t,\rmnew}{}')\bm{P}_{t_j,\rmnew}$. Here $\Phat_{t,*}$ and $\Phat_{t,\new}$ are estimates computed by Algorithm \ref{reprocsdet_orpca}. Thus, the result depended on intermediate algorithm estimates satisfying certain properties. In this work, we remove this requirement and instead provide a {\em complete correctness result}. The extra assumption that we need is Model \ref{sbyrho} on $\T_t$ (or its generalization given in Model \ref{general_model} later). Secondly, we provide a correctness result for a ReProCS-based algorithm that detects subspace change automatically and also estimates the rank of the new subspace automatically. The algorithm studied in \cite{ReProCS_IT} required knowing $t_j$ and $r_{j,\new}$ exactly for each $j$.
Algorithms \ref{reprocsdet} and \ref{reprocsdet_orpca} in this work only require upper bounds on $r_\new$, $\gamma_\new$ and $J$ (these are needed to set the algorithm parameters - $\alpha$ and $K$ for Algorithm \ref{reprocsdet}, and also $\xi$ and $\omega$ for Algorithm \ref{reprocsdet_orpca}) and a small enough $\zeta$ (need bounds on $r$, $\lambda^+$ and $\gamma$ to set this). % and on $r$ and $\gamma$
A third minor advantage is that we also provide an algorithm and a result for online MC.

The proof of our results adapts the overall framework developed in \cite{ReProCS_IT}. The two important additions are: (a)  Model \ref{general_model} and Lemma \ref{blockdiag1} for it, and the way it is used in the proof of Lemma \ref{calHk}; and (b) the detection lemma (Lemma \ref{det}), the no false detection lemma (Lemma \ref{falsedet}) and the p-PCA lemma (Lemma \ref{pPCA}) and the lemmas used to prove these. (a) allows us to get a complete correctness result; (b) allows us to analyze an algorithm that does not use knowledge of $t_j$ or $r_{j,\new}$.% (Lemma \ref{Ak}, \ref{Akperp}, \ref{calHk}) the automatic estimation of $r_{j,\new}$ from

In \cite{rpca_stochatistic_optimization}, Feng et. al. propose a method for online RPCA and prove a partial result for their algorithm.  The approach is to reformulate the PCP program and use this reformulation to develop a recursive algorithm that converges asymptotically to the solution of PCP as long as the basis estimate $\hat{\bm{P}}_t$ is full rank at each time $t$. Since this result assumes something about the algorithm estimates, it is also only a {\em partial} result.
%Another somewhat related work is that of Feng et. al. \cite{OnlinePCA_ContaminatedData} on online PCA with contaminated data. This does not model the outlier as a sparse vector but defines anything that is far from the data subspace as an outlier. %The theorems in both papers only talk about asymptotically converging to the solution of the batch problem. %, whereas in the current work, we exploit slow subspace change to actually relax a key assumption needed by the batch methods (that of uniformly distributed random supports or of very frequent support change).  (so the above comparisons still apply)

Another recent work that uses knowledge of the initial subspace estimate and performs recovery in a piecewise batch fashion is modified-PCP \cite{zhan_pcp}. However, like PCP, the result for modified PCP also needs uniformly randomly generated support sets. Its advantage is that its assumption on the rank-sparsity product is weaker than that of PCP, and hence weaker than that needed by this work. A detailed simulation comparison between modified-PCP, ReProCS and PCP demonstrating both these things is available in \cite[Fig. 6]{zhan_pcp}.

Some other recent works that also study the online MC problem (defined differently from how we define it) include \cite{sequentialSVD}, Grassmanian Rank-One Update Subspace Estimation (GROUSE) \cite{grouse} and  Parallel Subspace Estimation and Tracking by Recursive Least Squares From Partial Observations (PETRELS) \cite{petrels}. We discuss the connection with \cite{sequentialSVD} in Section \ref{algosubsec}.
GROUSE is a first order stochastic gradient method. It uses rank-one updates to track the underlying subspace on the Grassmannian manifold. A result for its convergence to the local minimum of the cost function it optimizes is obtained in \cite{local_conv_grouse}. PETRELS is a second order stochastic gradient method. As explained in \cite{petrels}, in PETRELS, the low-dimensional subspace is tracked by minimizing a geometrically discounted sum of projection residuals on the observed entries at each time index. If missing entries are required then they can be reconstructed via least squares estimation. The subspace is updated recursively so that it is not necessary to retain historical data indefinitely. If the underlying subspace is fixed and the data stream is fully observed, then it is shown that the PETRELS estimate converges to the true subspace. In general, it always converges to the stationary point of  the cost function it optimizes \cite{petrels}.
The advantage of PETRELS and GROUSE is that they do not need initial subspace knowledge. For our algorithms, when the initial subspace knowledge is not available or initial complete and outlier-free data is not available, we can also use the PETRELS or GROUSE ideas for initialization.

\section{Automatic ReProCS Algorithms for Online MC and Online RPCA and Why They Work}\label{algosubsec}
In this section, we first introduce the automatic ReProCS based algorithm for online MC and explain why it works (this also provides the key idea why the proof of our main result would go through). Next, we do the same thing for the online RPCA algorithm. In the last two subsections (Sections \ref{insight} and \ref{outline}), we explain the key insight used by our proof and give the proof outline. %The actual proof is given in Sections \ref{pf_thm} and \ref{3_pfs}.

\subsection{Automatic ReProCS for Online MC (Algorithm \ref{reprocsdet})}
The model on $\mt$ from (\ref{omc_eq}) is a special case of that from (\ref{orpca_eq}) with $\xt = - \I_{{\T}_t}\I_{{\T}_t}{}' \lt$ and with the support of $\xt$, $\T_t$ known \cite{rpca}. Thus, we can use a simplification of the ReProCS idea for online RPCA \cite{ReProCS_IT} to also solve the online MC problem.

%The ReProCS algorithm that we study in this work is a significant improvement over the one studied in \cite{ReProCS_IT}. It does not assume knowledge of subspace change times, $t_j$, or of number of new directions, $r_{j,\new}$. Both are estimated automatically.
%
%The reprocs algorithm involves two parts (a) projected sparse recovery followed by recovering $\lt$ and (b) projection-PCA.
Algorithm \ref{reprocsdet} proceeds as follows.
Let $\Phat_{t-1}$ denote the basis matrix for the estimate of the subspace where $\l_{t-1}$ lies. If it is an accurate estimate,  because of ``slow subspace change", projecting the measurement $\mt = \xt + \lt$ onto its orthogonal complement will nullify most of $\lt$. Specifically, we compute $\bm{y}_t:= \bm{\Phi}_t \mt$ where $\bm{\Phi}_t:= \I - \Phat_{t-1}\Phat_{t-1}{}'$.  Thus, $\bm{y}_t$ can be rewritten as
\[
\bm{y}_t = \bm{\Phi}_t \xt + \bm{b}_t \ \text{where} \ \bm{b}_t:= \bm{\Phi}_t \lt
\]
and it can be argued that $\|\bm{b}_t\|_2$ is small.
Since the support of $\xt$, $\T_t$, is known, we can simply recover its nonzero entries by least squares (LS) estimation, i.e. we get $\xhatt = \I_{\T_t} (\bm{\Phi}_t)_{\T_t}{}^\dag \bm{y}_t$ and then get an estimate of $\lt$ as $\lhat_t = \mt - \xhatt$. The above approach of recovering $\lt$ is equivalent to that used by Brand in \cite{sequentialSVD}, there they recover $\lt$ as an LS estimate of $\Phat \Phat' \lt \approx \lt$.

Let $\et: = \lt - \lhat_t$. With the above, it is easy to see that% $\et$ satisfies
$$\et = \I_{\T_t} (\bm{\Phi}_t)_{\T_t}{}^\dag \bm{b}_t = \I_{\T_t} [(\bm{\Phi}_t)_{\T_t}{}'(\bm{\Phi}_t)_{\T_t}]^{-1} \I_{\T_t}{}' \bm{\Phi} \lt.$$
Using the denseness assumption, it can be argued that the RIC of $\bm{\Phi}_t$ will be small (see Lemma \ref{kappadelta}). Under the theorem's assumptions, and conditioned on accurate recovery so far, we can bound it by 0.14. Thus, $\|(\bm{\Phi}_t)_{\T_t}{}'(\bm{\Phi}_t)_{\T_t}^{-1}\|_2 \le 1/(1-0.14) < 1.2$ and so $\|\et\|_2 \le 1.2 \|\bm{b}_t\|_2$, i.e. it is small too (see Lemma \ref{cslem}).

{\em Projection-PCA (p-PCA). }
The next step is to use a modification of standard PCA called projection-PCA (p-PCA), to update the subspace estimate.  The reason we need p-PCA is this. Let $\sum_t$ denote a sum over an $\alpha$ length time interval. %$t \in [\that_j+(k-1)\alpha+1, \that_j+k\alpha]$.
In our problem, the error, $\et$, in the observation/estimate of $\lt$, $\lhat_t$, is correlated with $\lt$. Because of this, the dominant terms in the perturbation seen by standard PCA,  $\frac{1}{\alpha} \sum_t \lhatt \lhatt{}' - \frac{1}{\alpha} \sum_t \lt {\lt}'$,
are $\frac{1}{\alpha} \sum_t \lt {\et}'$ and its transpose\footnote{When $\lt$ and $\et$ are uncorrelated and one of them is zero mean, it can be argued by law of large numbers that, whp, these two terms will be close to zero and $\frac{1}{\alpha} \sum_t \et {\et}'$ will be the dominant term. %Thus, if $\cov(\et)$ is small, then $\frac{1}{\alpha} \sum_t \et \et'$, and hence the perturbation, will be small too.
}.
Thus, when the condition number of $\cov(\lt)$ is large, it becomes difficult to argue that the perturbation will be small compared to the smallest eigenvalue of $\cov(\lt)$. With a large perturbation, either the $\sin \theta$ theorem \cite{davis_kahan} (that bounds the subspace error between the eigenvectors of the true and estimated sample covariance matrices) cannot be applied or it gives a useless bound.

Our proposed approach, projection-PCA (p-PCA) addresses the above issue as follows. At $t=t_j$, let $\bm{P}_* := \bm{P}_{t_{j-1}}$, $\P_{\new}:=\P_{t_j,\new}$, and suppose that the subspace $\Span(\bm{P}_{*})$ has been accurately recovered, i.e. we have $\Phat_*$ so that $\mathrm{dif}(\Phat_*, \bm{P}_*) \ll 1$. Then at a time at or after $t_j+\alpha$ if we project the $\alpha$ previous $\lhat_t$'s perpendicular to $\Phat_*$, we will considerably reduce the perturbation seen by the PCA step. We detect subspace change by checking if the maximum singular value of the matrix formed by these projected $\lhat_t$'s is above a threshold. Denote the time at which change is detected by $\that_j$. After $\that_j$ we use SVD on $K$ different sets of $\alpha$ frames of the projected $\lhat_t$'s to get improved estimates of the new subspace $\Span(\bm{P}_{\new})$ in each iteration. To be precise, we get the $k$-th estimate, $\Phat_{\new,k}$, as the left singular vectors of $(\I - \Phat_* \Phat_*{}')[\lhat_{\that_j+(k-1)\alpha+1}, \dots , \lhat_{\that_j+ k\alpha}]$ with singular values above a threshold. After each p-PCA step, we update $\Phat_t$ as $\Phat_t = [\Phat_* \ \textit{} \Phat_{\new,k}]$. Finally at time $t = \that_j+K\alpha$, we update $\Phat_*$ as $[\Phat_{*} \ \Phat_{\new,K}]$.

In the subspace update step, Algorithm \ref{reprocsdet} toggles between the ``detect" phase and the ``ppca" phase. It starts in the ``detect" phase. When a subspace change is detected, i.e. at $t= \that_j$ it enters the ``ppca" phase. After $K$ iterations of p-PCA, i.e. at $t=\that_j+K\alpha+1$, the new subspace has been accurately estimated and this time it enters the ``detect" phase again.

%noise has a 2-norm that only depends on the projection of $\lt$ along the new directions (which is small for this interval by assumption)
{\em Why p-PCA works. }
The reason p-PCA works is as follows. Before the first p-PCA step, i.e. for $t \in [t_j, \that_j+\alpha)$, $\Phat_t = \Phat_*$ and thus the noise seen by the projected sparse recovery step, $\bm{b}_t = \bm{\Phi} \lt = (\I - \Phat_* \Phat_*{}') \lt$, will be the largest. Hence the error $\et$ will also be the largest for the $\lhatt$'s used for the first p-PCA step.
 However because of the projection perpendicular to $\Phat_*$ and slow subspace change, even this error is not too large. Because of this and because $\et$ is sparse and supported on $\T_t$ and $\T_t$ follows Model \ref{sbyrho}, we can argue that $\Phat_{\new,1}$ is a good estimate, i.e. $\mathrm{dif}([\Phat_* \ \Phat_{\new,1}], \P_\new) \le 0.2 < 1$.
After the first p-PCA step, $\Phat_t = [\Phat_* \ \Phat_{\new,1}]$ and this will reduce $\bm{b}_t$ and hence $\et$ for the $\lhatt$'s in the next $\alpha$ frames. This and the sparseness of $\et$, in turn, will mean that the perturbation seen by the second p-PCA step will be smaller and so $\Phat_{\new,2}$ will be a more accurate estimate of $\Span(\bm{P}_\new)$ than $\Phat_{\new,1}$.  This is done $K$ times with $K$ chosen so that $\mathrm{dif}([\Phat_* \ \Phat_{\new,K}],\bm{P}_{\new}) \leq r_{\new}\zeta$.
By the theorem assumptions, and because we can show $t_j \le \that_j < t_j+2\alpha$ (we explain this below), it is clear that $t_{j+1} > \that_j + K \alpha$. Thus, the new subspace added at $t_j$ is accurately estimated before the next change time $t_{j+1}$.

{\em Why $\that_j$ are correctly detected. }  As explained above,  we detect subspace changes by comparing the eigenvalues of $(\I-\hat{\bm{P}}_*\hat{\bm{P}}_*{}')\frac{1}{\alpha}\sum_{t}\lhat_t\lhat_t{}'(\I-\hat{\bm{P}}_*\hat{\bm{P}}_*{}')$ to a chosen threshold at every $t=u\alpha$ for $u=1,2,\dots,\llfloor\frac{t_{\max}}{\alpha}\rrfloor$ when the algorithm is in the ``detect" phase. In order to correctly detect $\that_j$, the algorithm first must not falsely detect new directions when none are present and it must detect subspace change within a short delay after it has occurred.
The former will occur because conditioned on accurate recovery of the current subspace, $(\I-\hat{\bm{P}}_*\hat{\bm{P}}_*{}')\frac{1}{\alpha}\sum_{t}\lhat_t\lhat_t{}'(\I-\hat{\bm{P}}_*\hat{\bm{P}}_*{}')$ will have very small eigenvalues when no new directions are present. If the recovery were exact and no new directions present, this matrix would be zero. In our case, the recovery is only accurate and so we show that all eigenvalues of this matrix will be below the chosen threshold (see Lemma \ref{falsedet}).
%
%When detecting subspace changes, the algorithm checks for new directions every $\alpha$ time instants. At $t=u\alpha$, it uses $(\I - \Phat_{*}\Phat_{*}{}')[\lhat_{(u-1)\alpha+1},\dots,\lhat_{u\alpha}]$ and computes its largest singular value.
%Consider the $j$-th subspace change time, i.e. $t=t_j$.
Next consider detection of the subspace change after it has occurred. When $u = u_j:= \llceil \frac{t_j}{\alpha} \rrceil$, i.e. when $t_j$ is in the interval $\big((u-1)\alpha+1,u\alpha\big]$, not all of the $\lt$'s in this interval will contain new directions.  Thus, depending on where in the interval $t_j$ lies, the algorithm may or may not detect the subspace change. However, in the {\em next} interval, $[u_j\alpha+1,(u_j+1)\alpha]$, all of the $\lt$'s will contain new directions, and we can prove that the subspace change will be detected w.h.p. (see Lemma \ref{det}). Thus, w.h.p., either $\hat{t}_j = u_j \alpha$, or $\hat{t}_j = (u_j+1)\alpha$.
Thus, we will be able to show that $t_j \le \that_j \le t_j+2\alpha$ w.h.p..

A visual description of Algorithm \ref{reprocsdet} is shown in Fig. \ref{algofig}. This figure uses Definition \ref{def_Phat_starnew}.

%
%{\em Subspace change detection (why $t_j \le \that_j < t_j+2\alpha$). }
%When detecting subspace changes, the algorithm checks for new directions every $\alpha$ time instants.
%At $t=u\alpha$, it uses $(\I - \Phat_{*}\Phat_{*}{}')[\lhat_{(u-1)\alpha+1},\dots,\lhat_{u\alpha}]$ and computes its largest singular value.  When $u = u_j:= \lceil \frac{t_j}{\alpha} \rceil$, i.e. when $t_j$ is in the interval $\big((u-1)\alpha+1,u\alpha\big]$, not all of the $\lt$'s in this interval will contain new directions.  Thus, depending on where in the interval $t_j$ lies, the algorithm may or may not detect the subspace change. However, in the {\em next} interval, $[u_j\alpha+1,(u_j+1)\alpha]$, all of the $\lt$'s will contain new directions, and we can prove that the subspace change will be detected whp. Thus, whp, either $\hat{t}_j = u_j \alpha$, or $\hat{t}_j = (u_j+1)\alpha$.
%
%%{\em Estimating number of new directions. } At each p-PCA step the algorithm detects one new direction for every singular value above the threshold $\sqrt\frac{\lambda^-}{2}$.  We can show that this number will be the same for all $K$ projection PCA steps and will in fact be equal to the true value: $r_{j,\new}$. %The reason is that the first $r_{j,\new}$ singular values of the projected data matrix

%$\lammin$ is defined Definition \ref{lammin_def}).
\begin{algorithm}[t]
\caption{ReProCS for Online MC}\label{reprocsdet}
{\em Parameters}: $\alpha$, $K$, {\em Inputs}: $\Phat_{t_\train}$, $\lammin$, $\bm{m}_t$ for each $t$,  {\em Output}: $\Lhat_t$, $\Phat_{t}$, $\hat{t}_{\jhat}$, $\hat{r}_{\jhat,\new,k}$

Let $\thresh=\frac{\lammin}{2}$ (this is the eigenvalue threshold that will be used to detect subspace change).

%Set $\Phat_{t,*} \leftarrow \Phat_{0}$,
Set $\Phat_{t,*} \leftarrow \Phat_{t_\train}$,
$\Phat_{t,\new} \leftarrow [.]$,
$\jhat \leftarrow 0$,  $\mathrm{phase} \leftarrow \mathrm{detect}$.

For every $t > t_\train$, do the following:
\bi
\item \label{othoproj} Compute $\bm{y}_t \leftarrow \bm{\Phi}_{t} \bm{m}_t$ where $\bm{\Phi}_{t} \leftarrow \bm{I} - \Phat_{t-1} \Phat_{t-1}{}'$
\item Estimate $\bm{\ell}_t$:
$\hat{\bm{\ell}}_t \leftarrow \bm{m}_t - \I_{\T_t} ((\bm{\Phi}_t)_{{\mathcal{T}}_t})^{\dag} \bm{y}_t $ %where $\bm{\Phi}_t:= I - \Phat_{t-1} \Phat_{t-1}{}'$

\item If $t \mod \alpha \neq 0$ then $\Phat_{t,*} \leftarrow \Phat_{t-1,*}$, $\Phat_{t,\new} \leftarrow \Phat_{t-1,\new}$, $\Phat_{t} \leftarrow [\Phat_{t,*} \ \Phat_{t,\new}]$

\item If $t \mod \alpha = 0$  then {\em detection or projection PCA}
\\
If  $\mathrm{phase} = \mathrm{detect}$ then
\begin{enumerate}

\item \label{detect} Set $ u = \frac{t}{\alpha}$ and compute
$\bm{\mathcal{D}}_u = (\I - \Phat_{u\alpha-1,*} \Phat_{u\alpha-1,*}{}') [\lhat_{(u-1)\alpha+1}, \dots \lhat_{u\alpha}]$
%$\bm{\mathcal{D}}_u = (\I - \Phat_{u\alpha-1,*} \Phat_{u\alpha-1,*}{}') \left(\frac{1}{\alpha}\sum_{\tau=(u-1)\alpha+1}^{u\alpha} \lhat_\tau \lhat_\tau{}'\right) (\I - \Phat_{u\alpha-1,*} \Phat_{u\alpha-1,*}{}')$

\item $\Phat_{t,*} \leftarrow \Phat_{t-1,*}$, $\Phat_{t,\new} \leftarrow \Phat_{t-1,\new}$, $\Phat_{t} \leftarrow [\Phat_{t,*} \ \Phat_{t,\new}]$ %$\Phat_{t} \leftarrow \Phat_{t-1}$

\item \label{change} If $\lambda_{\max} (\frac{1}{\alpha} \bm{\mathcal{D}}_u \bm{\mathcal{D}}_u{}'  ) \ge \thresh$  then \\
 $\mathrm{phase} \leftarrow \mathrm{ppca}$, $\jhat \leftarrow \jhat+1$, $k \leftarrow 0$, $\that_{\jhat} = t$

\end{enumerate}
Else  ($\mathrm{phase} = \mathrm{ppca}$) then
\begin{enumerate}
\item Set $ u = \frac{t}{\alpha}$ and compute
$\bm{\mathcal{D}}_u = (\I - \Phat_{u\alpha-1,*} \Phat_{u\alpha-1,*}{}') [\lhat_{(u-1)\alpha+1}, \dots \lhat_{u\alpha}]$
%$\bm{\mathcal{D}}_u = (\I - \Phat_{u\alpha-1,*} \Phat_{u\alpha-1,*}{}') \left(\frac{1}{\alpha}\sum_{\tau=(u-1)\alpha+1}^{u\alpha} \lhat_\tau \lhat_\tau{}'\right) (\I - \Phat_{u\alpha-1,*} \Phat_{u\alpha-1,*}{}')$

\item\label{PCA} $\Phat_{t,\new} \leftarrow \text{eigenvectors}\left(\frac{1}{\alpha} \bm{\mathcal{D}}_u \bm{\mathcal{D}}_u{}',\thresh\right)$,  $\Phat_{t,*} \leftarrow \Phat_{t-1,*}$, $\Phat_{t} \leftarrow [\Phat_{t,*} \ \Phat_{t,\new}]$

\item  $k \leftarrow k+1$, set $\hat{r}_{j,\new,k} = \rank(\Phat_{t,\new})$

\item If $k = K$, then \\
$\mathrm{phase} \leftarrow \mathrm{detect}$, $\Phat_{t,*} \leftarrow \Phat_t$, $\Phat_{t,\new} \leftarrow [.]$

\end{enumerate}
\ei
%The function $\text{left-singular-vectors}(\M_u,\thresh)$ returns a basis matrix for the span of all left singular vectors of $\M_u$ with singular values above $\thresh$.
$\mathrm{eigenvectors}(\bm{\mathcal{M}},\thresh)$ returns a basis matrix for the span of all eigenvectors whose eigenvalue is above $\thresh$.
\end{algorithm}

\subsection{Automatic ReProCS for online RPCA (Algorithm \ref{reprocsdet_orpca})}\label{Problem Definition and Assumptions_rPCA}
For online RPCA the only difference is that the support for $\xt$, $\T_t$, is not known. Hence we first recover $\xt$ by ell-1 minimization (or any other sparse recovery method) and then estimate its support by thresholding. %We use the estimated support to compute an LS estimate of $\xt$.
The rest of the steps remain the same as above.

%The algorithm for online robust PCA contains two extra steps that are not needed in the algorithm for online matrix completion.  This stems from the fact that in the matrix completion problem, the locations of the missing entries are {\em known}, while in robust PCA, the locations of the corrupted data points are {\em unknown}.
%
%To recover the location of the corrupted data points (the support of $\xt$) the algorithm performs a compressed sensing step followed by thresholding.  This starts by projecting the measurement $\mt$ perpendicular to the current estimate of the subspace where the $\lt$'s lie.  If the subspace estimate is accurate, this will nullify most of $\lt$.  Because the projection matrix $\bm{\Phi}_{t-1}$ is rank deficient and $\xt$ is sparse, the problem of recovering $\xt$ is one of sparse recovery in small noise.  Here the noise is any part of $\lt$ that was not annihilated by $\bm{\Phi}_{t-1}$.  With parameters properly set, thresholding on the recovered $\hat{\bm{x}}_{t,cs}$ will accurately recover the support of $\xt$ if 1) the norm of the noise is small enough and 2) the minimum absolute value of a non-zero entry of $\xt$ is large enough.

\subsection{Key Insight for the Proof} \label{insight}
% $\lambda_\new^-$ is the smallest eigenvalue along the new directions for the first $d$ frames and
The argument given while explaining why p-PCA works in Section \ref{algosubsec} can be formalized to show that, w.h.p., a subspace change is detected only after a change has occurred and within $2\alpha$ frames of the change; and that the subspace recovery error, $\SE_t$, will decay roughly exponentially with each p-PCA iteration and become small enough after $K$ iterations. To do this we will use the $\sin \theta$ theorem \cite{davis_kahan} (Lemma \ref{zetakbnd}) followed by the matrix Hoeffding inequality \cite{tail_bound} (Lemmas \ref{hoeffding_nonzero}, \ref{hoeffding_rec})) to get high probability bounds on each of the terms in the subspace error bound obtained by the $\sin \theta$ theorem.

While applying the matrix Hoeffding inequality, we need to use the following key insight about the structure of $\E[\frac{1}{\alpha}\sum_t (\I - \Phat_* \Phat_*{}') \lt {\et}']$. This matrix is the dominant term in the perturbation seen by the $k$-th p-PCA step.
Here $\E[.]$ denotes expectation conditioned on accurate subspace recovery so far and $\sum_t$ denotes the sum over $t \in [\that_j+(k-1)\alpha+1, \that_j+k\alpha]$.
The model on $\T_t$ and the fact that $\et$ is supported on $\T_t$ can be used to show that this matrix can be written as the product of a full matrix and a block-banded matrix: for example when $\rho=1$, the block-banded matrix will be block-diagonal, when $\rho=2$, it will be block-tridiagonal, and so on. Also, $\E[\frac{1}{\alpha} \sum_t \et {\et}']$ will be a block banded matrix. The 2-norm of a block banded matrix is bounded by the maximum norm of any block times the number of bands in it and hence is much smaller than that of a general full matrix.

The lemma that exploits the structure of a block-banded matrix generated due to the model on $\T_t$ is Lemma \ref{blockdiag1} given in Sec \ref{gen_supch_sec}. This lemma is used to bound $\E[\frac{1}{\alpha}\sum_t (\I - \Phat_* \Phat_*{}') \lt {\et}']$ and $\E[\frac{1}{\alpha} \sum_t \et {\et}']$ in the proof of Lemma \ref{calHk} in Section \ref{3_pfs}.

\begin{algorithm}[t]
\caption{ReProCS for Online RPCA}\label{reprocsdet_orpca}
{\em Parameters}: $\alpha$, $K$, $\xi$, $\omega$, {\em Inputs}: $\Phat_{t_\train}$, $\lammin$, $\bm{m}_t$ for each $t$,  {\em Output}: $\Lhat_t$, $\Phat_{t}$, $\hat{t}_{\jhat}$

Let $\thresh=\frac{\lammin}{2}$.
Set $\Phat_{t,*} \leftarrow \Phat_{t_\train}$, %  $\Phat_{t,*} \leftarrow \Phat_{0}$,
$\Phat_{t,\new} \leftarrow [.]$,
$\jhat \leftarrow 0$,  $\mathrm{phase} \leftarrow \mathrm{detect}$.

For every $t > t_\train$, do the following:
\bi
\item Estimate $\T_t$ (the support of the outlier vector $\xt$) and $\xt$.
\ben
\item \label{othoproj} compute $\bm{y}_t \leftarrow \bm{\Phi}_{t} \bm{m}_t$ where $\bm{\Phi}_{t} \leftarrow \bm{I} - \Phat_{t-1} \Phat_{t-1}{}'$

\item \label{Shatcs} solve $\min_{\bm{x}} \|\bm{x}\|_1 \ s.t. \ \|\bm{y}_t - \bm{\Phi}_{t} \bm{x}\|_2 \leq \xi$ and let $\hat{\bm{x}}_{t,\cs}$ denote its solution
\item \label{That} compute $\hat{\mathcal{T}}_t = \{i: \ |(\hat{\bm{x}}_{t,\cs})_i| > \omega\}$
\item \label{LS} LS estimate of $\bm{x}_t$: compute $\hat{\bm{x}}_t= \I_{\hat{\mathcal{T}}_t} ((\bm{\Phi}_t)_{\hat{\mathcal{T}}_t})^{\dag} \bm{y}_t$
\een

\item Use all steps of Algorithm \ref{reprocsdet} with $\T_t \leftarrow \hat{\T}_t$.
\ei
\end{algorithm}

\subsection{Proof Outline} \label{outline}
We will only prove Theorem \ref{thm1}. Theorem \ref{thm1_mc} follows as a corollary of Theorem \ref{thm1} because of the following reasons. (1) Algorithm \ref{reprocsdet} does not compute $\xhatt$ or its support $\That_t$. For the matrix completion problem, $\T_t$ is given. Thus it does not use the parameters $\xi$ (which is the noise bound in the ell-1 minimization step) and $\omega$ (which is the support estimation threshold). (2) The bound on $x_{\min}$ and the values of the parameters $\xi$ and $\omega$ are only used in the proof of Lemma \ref{cslem} to show exact support recovery, i.e $\That_t=\T_t$. Since for matrix completion $\T_t$ is given, Theorem \ref{thm1_mc} does not need need the lower bound on $x_{\min}$.

The proof of Theorem \ref{thm1} is given in Sections \ref{pf_thm} and \ref{3_pfs}. Before this, in the next section (Section \ref{gen_supch_sec}) we give the most general model on changes in the missing/outlier entries' set $\T_t$, Model \ref{general_model}, and we show that Model \ref{sbyrho} is a special case of this model. Next, we give a key lemma for sums of sparse matrices supported on rows and columns indexed by $\T_t$ satisfying this model (Lemma \ref{blockdiag1}). %There are various other practically relevant special cases of Model \ref{general_model}; some of these are described in Section \ref{extensions}.

Section \ref{pf_thm} begins with defining various quantities needed for the proof. Next, we state the main lemmas used to prove the theorem, followed by the theorem's proof. There is a main lemma associated with each of the three main tasks of the algorithm: 1) accurately recovering $\xt$ and hence $\lt$ at each time $t$ (Lemma \ref{cslem}), 2) detecting (subspace change) when and only when the subspace has changed, i.e. new directions have been added to the subspace (Lemmas \ref{det} and \ref{falsedet}), and 3) successfully estimating the dimension of the new subspace and updating its estimate by p-PCA (Lemma \ref{pPCA}). To maintain the flow of the argument, we defer the proofs of these lemmas to the end of the section or to the appendix.

The proofs of Lemmas \ref{Ak}, \ref{Akperp}, and \ref{calHk} that are used together to prove Lemmas \ref{det}, \ref{falsedet} and \ref{pPCA} are rather long and are given in section \ref{3_pfs}. The proof of Lemma \ref{calHk} uses Lemma \ref{blockdiag1} from Section \ref{gen_supch_sec}. %This lemma is where we use the model on how $\T_t$ changes. This lemma uses Model \ref{general_model}. As shown in Lemma \ref{spc_case}, Model \ref{sbyrho}) is a special case of this more general model.

\section{Most General Model on Changes in $\T_t$ and a Key Lemma}\label{gen_supch_sec}

\subsection{Most General Model on Changes in $\T_t$}
Here we give our most general model on how $\T_t$ (the set of missing entries or the support set of $\xt$) can change. What we need to prevent is $\T_t$ occupying the same indices for too many time instants in a given interval. If $\T_t$ does not change `enough' in a time interval of length $\alpha$, we will be unable to see enough entries of a given index of $\lt$ in order to be able to accurately fill in the missing ones.
 %the matrix $[\bm{x}_t \ \dots \ \bm{x}_{t+\alpha}]$  can become significantly rank deficient, and this causes an identifiability problem \cite{rpca}. Of course since we start with initial subspace knowledge and use slow subspace change, we need a weaker condition on the motion.
The following model quantifies `enough' for our purposes.  The number of time instants for which an index is part of $\T_t$ is determined both by how often this set changes, and by how much it moves when it changes. The latter is parameterized by $\rho$ which controls how much the set moves when it changes.  For example $\rho=1$ would require that distinct sets be disjoint, and $\rho=2$ would mean that at least half of the set is displaced whenever it changes.  The parameter $h^+ \in (0,1)$ represents the maximum fraction of time for which the set remains in a given area in a time interval of length $\alpha$.  The smaller $h^+$, the more frequently the set will need to change in order to satisfy the model. Our result requires a bound on the product $\rho^2 h^+$.  %The square in $\rho$ is a result of how we apply the model to bound the norm of certain matrices in the proof (see the proof of Lemma \ref{blockdiag1}).

\begin{sigmodel}\label{general_model}
Let $\rho$ be a positive integer. Split $[1, t_{\max}]$ into intervals of length $\alpha$. Use $\J_u:=[(u-1)\alpha+1,u\alpha]$ to denote the $u$-th interval.
For a given interval, $\J_u$, let $\mathcal{T}_{(i),u}$ for $i = 1,\dots, l_u$ be mutually disjoint subsets of $\{ 1, \dots, n\}$ and let
 $\J_{(i),u}, i=1,2, \dots, l_u$ be {\em a} partition\footnote{i.e. the $\J_{(i),u}$'s are mutually disjoint intervals and their union equals $\J_u$} of the interval $\J_u$ so that
\begin{equation}\label{union}
\text{  for all $t \in \J_{(i),u}$,   }
\T_t \subseteq \mathcal{T}_{(i),u} \cup \mathcal{T}_{(i+1),u} \cup \dots \cup \mathcal{T}_{(i+\rho-1),u}
\end{equation}
Define
\begin{align}\label{alphabyp}  %u=1,\dots,\lceil\frac{t_{\max}}{\alpha}\rceil
h_u\left(\alpha;\{\mathcal{T}_{(i),u}\}_{\substack{\\i=1,\dots,l_u}}, \{\J_{(i),u} \}_{\substack{\\i=1,\dots,l_u}} \right)
&:= \max_{i=1,2,\dots l_u}\big| \J_{(i),u} \big|
\end{align}
and define $h_u^*(\alpha)$ which takes the minimum over all choices of $\mathcal{T}_{(i),u}$ and over all choices of the partition $\J_{(i),u}$.
%
%\max_{u = 1,\dots, \lceil\frac{t_{\max}}{\alpha}\rceil}
% which are subsets of $\{1,2,\dots n\}$
\begin{align}\label{hstar}
h_u^*(\alpha) &:=
\min_{\substack{\text{\em all choices of mutually disjoint $\mathcal{T}_{(i),u}, i=1,2, \dots l_u$}  \\  \text{\em and all choices of mutually disjoint $\J_{(i),u},i=1,2,\dots l_u$} \\ \text{\em so that $\cup_{i=1}^{l_u} \J_{(i),u} = \J_u$ and \eqref{union} holds}  } }
h_u\left(\alpha;\{\mathcal{T}_{(i),u}\}_{\substack{\\i=1,\dots,l_u}}, \{\J_{(i),u} \}_{\substack{\\i=1,\dots,l_u}} \right)
\end{align}
Assume that $|\T_t|\le s$ and that for all $u = 1,\dots,\llceil\frac{t_{\max}}{\alpha}\rrceil$,
\[
h_u^*(\alpha)\leq h^+\alpha \ \text{ with }  h^+ \le \frac{0.01}{\rho^2}.
\]
\end{sigmodel}
In the above model, $h_u^*(\alpha)$ provides a bound on how long $\T_t$ remains in a given ``area", $\mathcal{T}_{(i),u} \cup \mathcal{T}_{(i+1),u} \cup \dots \cup \mathcal{T}_{(i+\rho-1),u}$ during the interval $\J_u$, for the best allocation of $\T_t$'s to a given ``area" and the best choice of the ``areas."

Notice that \eqref{union} can always be trivially satisfied by choosing $l_u =1$, $\mathcal{T}_{(1),u} = \{  1, \dots, n \}$ and $\J_{(1),u}=\J_u$, but this will give $h_u(\alpha;.) = \alpha$ and hence is not a good choice. This is why we take a minimum over all choices.
%This assumption ensures that the support of $\xt$ changes at least a certain amount during every interval $\J_u$.  The smaller $h^+$, the more the support is required to change.

\begin{lem}\label{spc_case}
Model \ref{sbyrho} is a special case of Model \ref{general_model} above with $h^+= \frac{\beta}{\alpha}$.% $\rho=\rrho$ and
\end{lem}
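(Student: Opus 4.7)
The plan is to fix an arbitrary interval $\J_u=[(u-1)\alpha+1, u\alpha]$ and explicitly construct the $\T_{(i),u}$'s and $\J_{(i),u}$'s required by Model \ref{general_model} with $h^+=\beta/\alpha$. Enumerate the distinct sets whose time-intervals intersect $\J_u$ as $\T^{[k_1]},\ldots,\T^{[k_1+m-1]}$, set $l_u = m + \rho - 1$, and take
\[
\J_{(i),u}:=[t^{k_1+i-1},t^{k_1+i})\cap\J_u \text{ for } i=1,\ldots,m,\quad \J_{(i),u}:=\emptyset \text{ for } m < i \le l_u,
\]
together with the departure sets $\T_{(i),u}:=\T^{[k_1+i-1]}\setminus\T^{[k_1+i]}$ for $i=1,\ldots,l_u$. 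The $\J_{(i),u}$'s then manifestly partition $\J_u$.

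The length bound is immediate: each nonempty $\J_{(i),u}$ has length at most $t^{k_1+i}-t^{k_1+i-1}<\beta$ by item~1 of Model \ref{sbyrho}, so $h_u^*(\alpha)<\beta=(\beta/\alpha)\alpha$, which realises $h^+=\beta/\alpha$. The compatibility requirement $\rho^2 h^+\le 0.01$ demanded by Model \ref{general_model} then collapses to the assumption $\rho^2\beta\le 0.01\alpha$ of item~2 of Model \ref{sbyrho}. For the coverage condition, fix $t\in\J_{(i),u}$ and $e\in\T_t=\T^{[k_1+i-1]}$: the distance-$\rho$ condition $\T^{[k]}\cap\T^{[k+\rho]}=\emptyset$ applied at $k=k_1+i-1$ forces $e\notin\T^{[k_1+i-1+\rho]}$, so the binary membership of $e$ across the $\rho$ consecutive sets $\T^{[k_1+i-1]},\T^{[k_1+i]},\ldots,\T^{[k_1+i+\rho-1]}$ must drop from present to absent at some intermediate step. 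That transition step identifies a unique $j\in\{i,\ldots,i+\rho-1\}$ with $e\in\T_{(j),u}$, giving $\T_t\subseteq\bigcup_{j=i}^{i+\rho-1}\T_{(j),u}$; the extension of $l_u$ beyond $m$ is precisely what guarantees the union is well-defined even when $i$ is near $m$.

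Mutual disjointness of the $\T_{(i),u}$'s is the main technical step and is where I expect the principal obstacle to lie. For any pair $i<j$ in $\{1,\ldots,l_u\}$ with $j-i\le\alpha$, the leaving-set disjointness in item~3 of Model \ref{sbyrho} applied with $k=k_1+i-1$, $i'=k_1+j-1$ directly yields $(\T^{[k_1+i-1]}\setminus\T^{[k_1+i]})\cap(\T^{[k_1+j-1]}\setminus\T^{[k_1+j]})=\emptyset$, i.e., $\T_{(i),u}\cap\T_{(j),u}=\emptyset$. Since $m\le\alpha$, all pair differences inside the interior range $\{1,\ldots,m\}$ are bounded by $\alpha-1$, so interior disjointness is automatic. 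The delicate case is pairs involving the boundary indices $m+1,\ldots,m+\rho-1$, whose difference from the small indices can marginally exceed $\alpha$ by up to $\rho-1$; verifying disjointness there will require a careful argument combining the distance-$\rho$ condition with the leaving-set disjointness, using the ample slack $\rho^2\beta\le 0.01\alpha$ (hence $\rho\ll\alpha$) to rule out any element simultaneously belonging to two departure sets whose indices lie just beyond $\alpha$ apart.
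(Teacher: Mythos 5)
Your construction is essentially the paper's — departure sets $\T^{[k]}\setminus\T^{[k+1]}$ indexed by the change times intersecting $\J_u$, the partition of $\J_u$ by those change times, the length bound $|\J_{(i),u}|<\beta$ giving $h^+=\beta/\alpha$, and the element-wise ``first drop'' argument (equivalently the paper's telescoping) for the coverage condition \eqref{union}. The one place you deviate is exactly the place you flag as delicate, and there the argument does not close. You pad the family to $l_u=m+\rho-1$ by appending $\rho-1$ extra departure sets $\T_{(m+j),u}=\T^{[k_1+m+j-1]}\setminus\T^{[k_1+m+j]}$ so that the union in \eqref{union} is defined for $i$ near $m$. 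But mutual disjointness of all the $\T_{(i),u}$'s is then needed for pairs whose index gap is up to $m+\rho-2$, and when the support changes at every frame ($m=\alpha$) and $\rho\ge 3$ this exceeds $\alpha$. Item~3 of Model \ref{sbyrho} only asserts disjointness of departure sets whose indices differ by at most $\alpha$; it imposes \emph{no} constraint beyond that window, so an index may perfectly well leave at step $k_1+1$ and leave again at step $k_1+\alpha+2$. The ``ample slack'' $\rho^2\beta\le 0.01\alpha$ does not help here — it bounds $\rho$ and $\beta$, but the offending pairs exist precisely because you reach outside the $\alpha$-window, and no amount of slack inside the model's hypotheses rules them out. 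So for $\rho\ge 3$ the padded family need not be mutually disjoint, and the verification you defer cannot be supplied from the stated assumptions.

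The paper avoids this entirely by a different boundary treatment: it keeps $l_u=m$ (no padding) and defines the \emph{last} set to be the full support $\T_{(l_u),u}=\T^{[k_u+l_u-1]}$ rather than a departure set. The telescoping $\T^{[k]}\subseteq\T_{(i),u}\cup\dots\cup\T_{(i+\rho-1),u}\cup[\T^{[k]}\cap\dots\cap\T^{[k+\rho]}]$ then terminates within the index range, because whenever the chain would run past $m$ the residual intersection is already contained in $\T^{[k_u+l_u-1]}=\T_{(l_u),u}$; and all pairwise index gaps among the $\T_{(i),u}$'s are at most $m-1\le\alpha-1$, so item~3 applies directly. Replacing your empty-interval padding with this truncation (last set $=$ full support) repairs the proof with no other changes to your argument.
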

The proof is in Appendix \ref{pf_supch}.

Some other special cases of the above model are discussed in Section \ref{extensions}.

\subsection{A Key Lemma that uses Model \ref{general_model}}
\begin{lem}\label{blockdiag1}
Let $s_t = |\mathcal{T}_t|$. Consider a sequence of $s_t \times s_t$ symmetric positive-semidefinite matrices $\bm{A}_t$ such that
$\| \bm{A}_t\|_2 \leq \sigma^+$ for all $t$.  Assume that the $\mathcal{T}_t$ obey Model \ref{general_model}. %and the assumptions of Theorem \ref{general_cor}. ?? do you need any Theorem assumptions for this lemma ??
Let $\ds\bm{M} = \sum_{t\in\J_u} \bm{I}_{\mathcal{T}_t} \bm{A}_t {\bm{I}_{\mathcal{T}_t}}'$ be an $n \times n$ matrix ($\I$ is an $n\times n$ identity matrix).
Then
\begin{align*}
\|\bm{M}\|_2 &\leq    \rho^2h^+ \alpha\sigma^+  \leq 0.01\sigma^+  \alpha
\end{align*}
%?? do you need any Theorem assumptions for this lemma, I could not find any used - commented out that part ??
\end{lem}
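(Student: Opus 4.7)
The plan is to exploit the partition structure guaranteed by Model~\ref{general_model} via a quadratic form argument.  Fix the interval $\J_u$, and invoke the definition of $h_u^*(\alpha)$ to select mutually disjoint subsets $\mathcal{T}_{(i),u}\subseteq\{1,\dots,n\}$, $i=1,\dots,l_u$, together with a partition $\{\J_{(i),u}\}_{i=1}^{l_u}$ of $\J_u$, such that (i)~$\T_t \subseteq \bigcup_{j=i}^{i+\rho-1} \mathcal{T}_{(j),u}$ for every $t\in\J_{(i),u}$, and (ii)~$\max_i |\J_{(i),u}|\le h_u^*(\alpha)\le h^+\alpha$.  This is the only place the model is used; the remaining steps are straightforward bounds on a sum of sparse PSD matrices.

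The key steps are as follows.  Pick an arbitrary unit vector $\bm{y}\in\R^n$ and expand
\[
\bm{y}'\bm{M}\bm{y} \;=\; \sum_{t\in\J_u} \bm{y}_{\T_t}' \bm{A}_t\, \bm{y}_{\T_t} \;\le\; \sigma^+ \sum_{t\in\J_u} \|\bm{y}_{\T_t}\|_2^2,
\]
using $\|\bm{A}_t\|_2\le\sigma^+$ and PSDness.  Now split the sum according to the sub-intervals $\J_{(i),u}$, and use the containment $\T_t\subseteq\bigcup_{j=i}^{i+\rho-1}\mathcal{T}_{(j),u}$ together with the mutual disjointness of the $\mathcal{T}_{(j),u}$'s to bound
\[
\|\bm{y}_{\T_t}\|_2^2 \;\le\; \sum_{j=i}^{i+\rho-1} \|\bm{y}_{\mathcal{T}_{(j),u}}\|_2^2 \qquad\text{for } t\in\J_{(i),u}.
\]
Multiplying by $|\J_{(i),u}|\le h^+\alpha$ and summing over $i$ gives
\[
\sum_{t\in\J_u}\|\bm{y}_{\T_t}\|_2^2 \;\le\; h^+\alpha \sum_{i=1}^{l_u}\sum_{j=i}^{i+\rho-1} \|\bm{y}_{\mathcal{T}_{(j),u}}\|_2^2.
\]
Finally, swap the order of summation: each $\|\bm{y}_{\mathcal{T}_{(j),u}}\|_2^2$ appears in at most $\rho$ of the inner sums (those with $i\in\{j-\rho+1,\dots,j\}$), so the double sum is at most $\rho\sum_j\|\bm{y}_{\mathcal{T}_{(j),u}}\|_2^2\le \rho\|\bm{y}\|_2^2=\rho$ by disjointness of the $\mathcal{T}_{(j),u}$.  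Taking the supremum over unit $\bm{y}$ yields $\|\bm{M}\|_2\le \rho\, h^+\alpha\,\sigma^+$, which is actually tighter than the claimed $\rho^2 h^+\alpha\sigma^+$ (the extra factor of $\rho$ loses nothing because $\rho^2 h^+\le 0.01$ is what is used downstream).  An alternative route producing the $\rho^2$ bound directly is to write $\bm{M}=\sum_i \tilde{\bm{A}}_i$ with $\tilde{\bm{A}}_i:=\sum_{t\in\J_{(i),u}}\bm{I}_{\T_t}\bm{A}_t\bm{I}_{\T_t}'$, observe $\|\tilde{\bm{A}}_i\|_2\le h^+\alpha\sigma^+$ by the triangle inequality, and partition $\{1,\dots,l_u\}$ into $\rho$ residue classes mod $\rho$; within each class the supports are mutually disjoint (because indices differ by at least $\rho$ and the $\mathcal{T}_{(j),u}$ are disjoint), so one can sum the $\rho$ block-diagonal pieces by the triangle inequality.

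The only mild technical point is handling the ``boundary'' indices $j>l_u$ that formally appear in $\bigcup_{j=i}^{i+\rho-1}\mathcal{T}_{(j),u}$ when $i>l_u-\rho+1$: these are simply interpreted as empty sets (or one augments the partition by appending $\rho-1$ empty blocks), which does not change any of the bounds.  No probabilistic machinery is needed, and the argument is purely deterministic once the partition is selected.
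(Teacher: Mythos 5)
Your proof is correct, and it takes a genuinely different route from the paper's. The paper's argument is structural: it pads each $\bm{A}_t$ to a matrix $\bm{A}_{t,\full}$ supported on $\mathcal{T}_{(i),u}\cup\dots\cup\mathcal{T}_{(i+\rho-1),u}$, observes that $\bm{M}$ is then block-banded with $2\rho-1$ bands (block-tridiagonal when $\rho=2$), splits $\bm{M}$ into its bands, bounds each band's norm by the maximum block norm, and counts $\rho+2\sum_{i=1}^{\rho-1}i=\rho^2$ block-sums each of size at most $\sigma^+h^+\alpha$. You instead bound the quadratic form $\bm{y}'\bm{M}\bm{y}$ directly, which is legitimate because $\bm{M}$ is a sum of PSD matrices, so $\|\bm{M}\|_2=\sup_{\|\bm{y}\|_2=1}\bm{y}'\bm{M}\bm{y}$; the containment-plus-disjointness step and the swap of summation are both sound, and your handling of the out-of-range indices $j>l_u$ as empty sets matches the paper's convention $\mathcal{T}_{(l+1),u}=\emptyset$. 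Your route is shorter (no embedding bookkeeping) and yields the strictly better constant $\rho h^+\alpha\sigma^+$, which would let the condition $h^+\le 0.01/\rho^2$ in Model \ref{general_model} be relaxed to $h^+\le 0.01/\rho$, i.e., a milder requirement on how fast $\T_t$ must move; the paper gains nothing from its looser constant since only $\rho^2h^+\le 0.01$ is used downstream (e.g., in Lemma \ref{calHk}). The one thing you give up is generality: the variational argument leans on positive semi-definiteness of the $\bm{A}_t$, whereas the band decomposition needs only symmetry and $\|\bm{A}_t\|_2\le\sigma^+$. Since the lemma assumes PSD and every application of it in the paper is to PSD summands, this costs nothing here.
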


\begin{proof}
We will first prove the lemma for the special case when $\rho=2$.  After this, we will show how to generalize the proof when $\rho>2$. For a given $u$, let $\T_{(i),u}$, $i=1,2,\dots l_u$, and correspondingly $\J_{(i),u}$ denote the best choices, i.e. the choices that attain the minimum values in the definition of $h_u^*(\alpha)$.

In the rest of the proof, we remove the subscript $u$ from $l_u$ and from $\T_{(i),u}$'s for ease of notation.  For simplicity of notation, we will let $\T_{(l+1),u}=\emptyset$.

For times $t \in \J_{(i),u}$, define $\bm{A}_{t,\full}$ to be $\bm{A}_t$ with rows and columns of zeros appropriately inserted so that
\begin{equation}\label{full}
\bm{I}_{\mathcal{T}_t} \bm{A}_t {\bm{I}_{\mathcal{T}_t}}' = \bm{I}_{\mathcal{T}_{(i)} \cup \mathcal{T}_{(i+1)}} \bm{A}_{t,\full} {\bm{I}_{\mathcal{T}_{(i)}\cup \mathcal{T}_{(i+1)}}}'.
\end{equation}
Such an $\bm{A}_{t,\full}$ exists because $\mathcal{T}_t \subseteq \mathcal{T}_{(i)} \cup \mathcal{T}_{(i+1)} $ for any $t \in \J_{(i),u}$.
Notice that
\begin{equation}\label{normequal}
\|\bm{A}_{t,\full}\|_2 = \|\bm{A}_{t}\|_2,
\end{equation}
because $\bm{A}_{t,\full}$ is permutation similar to
\[
\left[
\begin{array}{cc}
\bm{A}_t & 0 \\
0    & 0
\end{array}
\right].
\]

Since $\mathcal{T}_{(i)}$ and $\mathcal{T}_{(i+1)}$ are disjoint, we can, after permutation similarity, correspondingly partition $\bm{A}_{t,\full}$ as
\[
%\bm{A}_{t,\full} =
\left[
\begin{array}{ccc}
\bm{A}_{t,\full}^{(0,0)} & \bm{A}_{t,\full}^{(0,1)} \\
\bm{A}_{t,\full}^{(1,0)} & \bm{A}_{t,\full}^{(1,1)}
\end{array}
\right].
\]
for all $t \in \J_{(i),u}$.
Notice that because $\bm{A}_t$ is symmetric, $\bm{A}_{t,\full}^{(1,0)} = \big( \bm{A}_{t,\full}^{(0,1)} \big)'$.
Then,
\begin{align*}
\bm{M} &= \sum_{t\in\J_u} \bm{I}_{\mathcal{T}_t} \bm{A}_t {\bm{I}_{\mathcal{T}_t}}' \\
&= \sum_{i=1}^{l} \sum_{ t \in \J_{(i),u} }  \bm{I}_{\mathcal{T}_{(i)} \cup \mathcal{T}_{(i+1)}} \bm{A}_{t,\full} {\bm{I}_{\mathcal{T}_{(i)}\cup \mathcal{T}_{(i+1)}}}' \qquad \text{by \eqref{full} } \\
&= \sum_{i=1}^{l}\sum_{ t \in \J_{(i),u} }  [ \bm{I}_{\mathcal{T}_{(i)}} \ \bm{I}_{\mathcal{T}_{(i+1)}}] \left[
\begin{array}{ccc}
\bm{A}_{t,\full}^{(0,0)} & \bm{A}_{t,\full}^{(0,1)} \\
\bm{A}_{t,\full}^{(1,0)} & \bm{A}_{t,\full}^{(1,1)}
\end{array}
\right] \left[\begin{array}{c} {\bm{I}_{\mathcal{T}_{(i)}}}' \\ {\bm{I}_{\mathcal{T}_{(i+1)}}}' \end{array}\right] \\
&= \sum_{i=1}^{l} \sum_{ t \in \J_{(i),u} } \left[ \bm{I}_{\mathcal{T}_{(i)}}\bm{A}_{t,\full}^{(0,0)}{\bm{I}_{\mathcal{T}_{(i)}}}' + \bm{I}_{\mathcal{T}_{(i)}}\bm{A}_{t,\full}^{(0,1)}{\bm{I}_{\mathcal{T}_{(i+1)}}}' + \bm{I}_{\mathcal{T}_{(i+1)}}\bm{A}_{t,\full}^{(1,0)}{\bm{I}_{\mathcal{T}_{(i)}}}' + \bm{I}_{\mathcal{T}_{(i+1)}} \bm{A}_{t,\full}^{(1,1)} {\bm{I}_{\mathcal{T}_{(i+1)}}}' \right]  \\
&= \bm{I}_{\T_{(1)}} \left(\sum_{t \in \J_{(1),u}} \bm{A}_{t,\full}^{(0,0)} \right) {\bm{I}_{\T_{(1)}}}'  +
\sum_{i=2}^{l} \left[ \bm{I}_{\mathcal{T}_{(i)}}\left( \sum_{t \in \J_{(i-1),u}} \bm{A}_{t,\full}^{(1,1)} + \sum_{t \in \J_{(i),u}}  \bm{A}_{t,\full}^{(0,0)} \right) {\bm{I}_{\mathcal{T}_{(i)}}}' \right] +
\bm{I}_{\T_{(l)}} \left(\sum_{t \in \J_{(l),u}} \bm{A}_{t,\full}^{(1,1)} \right) {\bm{I}_{\T_{(l)}}}'\\
&\hspace{.5in}
+ \sum_{i=1}^{l-1} \left[ \bm{I}_{\mathcal{T}_{(i)}} \left(\sum_{t \in \J_{(i),u}} \bm{A}_{t,\full}^{(0,1)}\right) {\bm{I}_{\mathcal{T}_{(i+1)}}}'  +  \bm{I}_{\mathcal{T}_{(i+1)}} \left(\sum_{t \in \J_{(i),u}} \bm{A}_{t,\full}^{(1,0)}\right) {\bm{I}_{\mathcal{T}_{(i)}}}' \right]
\end{align*}

Because $\mathcal{T}_{(i)}$ and $\T_{(k)}$ are disjoint for $i\neq k$, $\bm{M}$ has a block tridiagonal structure (by a permutation similarity if necessary):
\begin{equation}\label{blockstruct}
%\bm{M} =
\left[
\begin{array}{cccc}
\bm{B}_{(1)} & \bm{C}_{(1)} & 0 & 0\\
{\bm{C}_{(1)}}' & \bm{B}_{(2)} & \ddots & 0 \\
0          & \ddots &     \ddots & \bm{C}_{(l-1)} \\
0 & 0 & {\bm{C}_{(l-1)}}' & \bm{B}_{(l)}
\end{array}
\right]
\end{equation}
where $\bm{B}_{(1)} = \sum_{t \in \J_{(1),u}} \bm{A}_{t,\full}^{(0,0)}$, $\bm{B}_{(l)} = \sum_{t \in \J_{(l),u}} \bm{A}_{t,\full}^{(1,1)}$,
\begin{equation}\label{doubledose}
\bm{B}_{(i)} = \sum_{t \in \J_{(i-1),u}} \bm{A}_{t,\full}^{(1,1)} + \sum_{t \in \J_{(i),u}} \bm{A}_{t,\full}^{(0,0)} \ \ \text{ for } i=2,3, \dots, l
\end{equation}
and
\begin{equation}\label{B(i)}
\bm{C}_{(i)} = \sum_{t \in \J_{(i),u}} \bm{A}_{t,\full}^{(0,1)}  \ \ \text{ for } i=1,2, \dots, l-1.
\end{equation}

Now we proceed to bound $\|\bm{M}\|_2$.
\begin{align*}
\|\bm{M}\|_2 & =
\left\| \begin{array}{cccc}
\bm{B}_{(1)} & \bm{C}_{(1)}  & 0 &0 \\
{\bm{C}_{(1)}}' & \ddots & \ddots & 0 \\
0     &  \ddots & \ddots & \bm{C}_{(l-1)} \\
0 & 0 & {\bm{C}_{(l-1)}}' & \bm{B}_{(l)}
\end{array}\right\|_2 \\
&\leq
\left\| \begin{array}{cccc}
\bm{B}_{(1)} & 0  & 0 &0 \\
0 & \ddots & 0 & 0 \\
0     &  0 & \ddots &0 \\
0 & 0 & 0 & \bm{B}_{(l)}
\end{array}\right\|_2
+
\left\| \begin{array}{cccc}
0 & \bm{C}_{(1)}  & 0 &0 \\
0 & 0 & \ddots & 0 \\
0     &  0 & 0 & \bm{C}_{(l-1)} \\
0 & 0 & 0 & 0
\end{array}\right\|_2
+
\left\| \begin{array}{cccc}
0 &0  & 0 &0 \\
{\bm{C}_{(1)}}' & 0 & 0 & 0 \\
0     &  \ddots & 0 &0 \\
0 & 0 & {\bm{C}_{(l-1)}}' & 0
\end{array}\right\|_2. \nn
\end{align*}

Call the middle matrix $\bm{C}$, and observe that $\bm{CC}'$ is block diagonal with blocks $\bm{C}_{(i)}{\bm{C}_{(i)}}'$.  So $\|\bm{C}\|_2 = \max_{i}\|\bm{C}_{(i)}\|_2$.
Therefore,
\begin{align*}
\|\bm{M}\|_2 &\leq \max_{i}\|\bm{B}_{(i)}\|_2 + 2\max_{i}\|\bm{C}_{(i)}\|_2 \\
& \le \max_{i} \bigg\| \sum_{t \in \J_{(i-1),u}} \bm{A}_{t,\full}^{(1,1)} + \sum_{t \in \J_{(i),u}} \bm{A}_{t,\full}^{(0,0)}\bigg\|_2 + 2 \max_{i} \bigg\| \sum_{t \in \J_{(i),u}} \bm{A}_{t,\full}^{(0,1)} \bigg\|_2  \qquad \text{ by \eqref{doubledose} and \eqref{B(i)}}\nn\\
%
%&\leq \max_{i} \left( \sum_{t \in \J_{(i-1),u}} \big\|  \bm{A}_{t,\full}^{(1,1)}\big\|_2 + \sum_{t \in \J_{(i),u}}\big\|  \bm{A}_{t,\full}^{(0,0)}\big\|_2 \right) + 2 \max_{i} \sum_{t \in \J_{(i),u}}\big\|  \bm{A}_{t,\full}^{(0,1)} \big\|_2 \nn \\
%
&\leq \max_{i} \left( \sum_{t \in \J_{(i-1),u}}\big\|  \bm{A}_{t}\big\|_2 + \sum_{t \in \J_{(i),u}}\big\|  \bm{A}_{t}\big\|_2 \right) + 2 \max_{i} \sum_{t \in \J_{(i),u}}\big\|  \bm{A}_{t} \big\|_2 \nn \qquad \text{ by \eqref{normequal}}\\
%
%&\leq \max_{i} \left( \sum_{t \in \J_{(i-1),u}}\sigma^{+} + \sum_{t \in \J_{(i),u}}\sigma^{+} \right) + 2 \max_{i} \sum_{t \in \J_{(i),u}} \sigma^{+} \nn \\
%
&\leq  (\sigma^{+}h_u^*(\alpha) + \sigma^{+}h_u^*(\alpha))  + 2 \sigma^{+}h_u^*(\alpha)  \leq {4}\sigma^+h^+\alpha
%\qquad \text{ by the definition of $h_u^*(\alpha)$ in \eqref{hstar} }\nn \\
%
\end{align*}
The third row used the fact that $\|\bm{A}_{t,\full}^{(\cdot,\cdot)}\|_2 \le  \|\bm{A}_{t,\full}\|_2 = \|\bm{A}_{t}\|_2$ for any sub-matrix of $\bm{A}_{t,\full}$.

This finishes the proof for the $\rho=2$ case. For this case, notice that there are 3 bands in \eqref{blockstruct} - the diagonal band and one band on each side of the diagonal one. When $\rho=3$, everything will follow analogously to the above; instead of 3 bands, there will be 5 bands in the definition of $\bm{M}$ and we will be able to bound its norm by
\begin{align*}
\|\bm{M}\|_2  \le  &
\max_{i} \left(\sum_{t \in \J_{(i-2),u}}\big\|  \bm{A}_{t}\big\|_2  + \sum_{t \in \J_{(i-1),u}}\big\|  \bm{A}_{t}\big\|_2 + \sum_{t \in \J_{(i),u}}\big\|  \bm{A}_{t}\big\|_2 \right) +
2 \max_{i} \left( \sum_{t \in \J_{(i-1),u}}\big\|  \bm{A}_{t} \big\|_2 +  \sum_{t \in \J_{(i),u}}\big\|  \bm{A}_{t} \big\|_2 \right)
\\
& + 2 \max_{i} \sum_{t \in \J_{(i),u}} \big\|  \bm{A}_{t} \big\|_2 \\
\le &  3 \sigma^{+} h_u^*(\alpha) + 2( 2 \sigma^{+}h_u^*(\alpha) + \sigma^{+}h_u^*(\alpha)) \le 9 \sigma^+ h^+ \alpha
\end{align*}
Proceeding this way, for a general $\rho$, there will be $1+ 2(\rho-1) = 2\rho-1$ bands. Any term in the central band will contain a summation of $\big\|  \bm{A}_{t}\big\|_2$ over $\rho$ sub-intervals $\J_{(i),u}$; any term in the first band away from the diagonal will contain this summation over $(\rho-1)$ sub-intervals; any term in the second band away from the diagonal will contain this summation over $(\rho-2)$ sub-intervals; and so on. %Since each sub-interval $\J_{(i),u}$ is of length at most $h_u(\alpha) \le h^+ \alpha$, we will be summing $\|A_t\|_2$ $(\rho + 2\sum_{i=1}^{\rho-1} i) = \rho^2$ times.
Thus, we will be summing the quantity $\sigma^+h^+\alpha$  a total of $(\rho + 2\sum_{i=1}^{\rho-1} i) = \rho^2$ times and so we will get $\|\bm{M}\|_2 \leq \rho^2\sigma^+h^+\alpha$.
%%For the $\rho=2$ case, each term in the diagonal band contains summations over $\rho=2$ sub-matrices of $\bm{A}_{t,\full}$. The first and only band away from the diagonal contains a summation over $\rho-1=1$ sub-matrix of $\bm{A}_{t,\full}$. Thus each of the $2\rho-1 =3$ bands contains summations over at most $\rho=2$ sub
%A tighter bound can be obtained by observing that, the first band away from the diagonal will contain part of matrices $\bm{A}_{t,\full}$ from $\rho-1$ different sets, and so on until the band $\rho-1$ away from the diagonal which will only contain part of $\bm{A}_{t,\full}$ from one set.  Stated another way, the matrices $\bm{B}_{(i)}$ above can be decomposed into $\rho$ sums, and the matrices $\bm{C}_{(i)}$ can be decomposed into $\rho-1$ sums (in the above $\rho=2$).  For each of the off diagonal bands, there are two bands (one above the diagonal and one below).  Thus there are $\rho + 2\sum_{i=1}^{\rho-1} i = \rho^2$ terms in the last line of the proof.  Thus $\|\bm{M}\|_2 \leq \rho^2\sigma^+h^+\alpha$.
\end{proof}

\section{Proof of Theorem \ref{thm1} and Theorem \ref{thm1_mc}} \label{pf_thm}
As explained in Section \ref{outline}, we will only prove  Theorem \ref{thm1}. Theorem \ref{thm1_mc} follows as an easy corollary.

\subsection{Definitions}

\begin{definition}
Define $\bm{e}_t$ to be the error made in estimating $\xt$ and $\lt$.  That is
\[
\et := \xhatt - \xt = \lt - \lhatt
\]
\end{definition}

\begin{definition}
Define the interval
\[
\mathcal{J}_u := [(u-1)\alpha+1,u\alpha].
\]
Also define $u_j$ to be the $u$ such that $t_j \in \mathcal{J}_u$.  That is
$$u_j := \llceil \frac{t_j}{\alpha}\rrceil.$$
For the purposes of describing events before the first subspace change, let $u_0 := 0$.
Also define $$\hat{u}_j := \frac{\hat{t}_j}{\alpha}.$$ Notice from the algorithm that this will be an integer, because detection only occurs when $t \mod{\alpha} = 0 $.
\end{definition}

We will show that, under appropriate conditioning, w.h.p., $\hat{u}_j = u_j$ or $\hat{u}_j = u_j + 1$.

\begin{definition} \label{def_P_starnew}
Define
\begin{align*}
\bm{P}_{(j)} &:= \bm{P}_{t_{j}} \text{ for } j = 0,1,\dots,J \\ %\ \text{and} \ \bm{P}_{(j)}:=\bm{P}_{t_1-1}  \\
\bm{P}_{(j),*} &:= \bm{P}_{(j-1)} = \bm{P}_{t_{j-1}}  \text{ and }  \bm{P}_{(j),\new}:= \bm{P}_{t_j,\new} \text{ for } j = 1,\dots,J \\
\bm{a}_{t,*} &:= {\Pjs}'\lt   \text{ and }  \bm{a}_{t,\rmnew} := {\Pjnew}'\lt  \text{ for }  t \in [t_j, t_{j+1})
\end{align*}
\end{definition}
Thus, for $t\in[t_{j}, t_{j+1})$, $\lt$ can be written as
\[
\lt = [ \Pjs \ \Pjnew ]\vect{\bm{a}_{t,*}}{\atnew} = \Pjs\bm{a}_{t,*} + \Pjnew\atnew
\]
 and $\cov(\lt)= \bm{\Sigma}_t$ can be rewritten as
\[
\bm{\Sigma}_t =
\left[\bm{P}_{(j),*} \ \bm{P}_{(j),\new}\right]
\left[\begin{array}{cc}\bm{\Lambda}_{t,*} & \bm{0} \\ \bm{0} & \bm{\Lambda}_{t,\rmnew}\end{array}\right]
\left[\begin{array}{c} {\bm{P}_{(j),*}}' \\ {\bm{P}_{(j),\new}}' \end{array} \right]
\]

\begin{definition}\label{def_Phat_starnew}
For $j=1,2,\dots,J$ and $k=1,2,\dots,K$ define
\begin{enumerate}
\item $\ds \Phat_{(1),*}:=\Phat_{t_\train}$ (the initial estimate) and
$\ds\Phat_{(j),*} := \Phat_{\that_{j-1}+K\alpha}$.  If all subspace changes are correctly detected, this is the final estimate of $\bm{P}_{(j),*} = \bm{P}_{(j-1)}$.
\item $\ds\Phat_{(j),\new,0} := [.]$ and $\ds\Phat_{(j),\new,k} := \Phat_{\that_j + k \alpha,\new}$. This is the $k^{\text{th}}$ estimate of $\bm{P}_{(j),\new}$ (again, conditioned on correct change time detection).
\end{enumerate}
\end{definition}

Notice from the algorithm that %if all subspace changes are accurately detected,
\begin{enumerate}
\item  $\Phat_{t,*}  = \Phat_{(j),*}$ for all $t \in [\that_{j-1}+K \alpha, \that_{j}+K \alpha-1]$
\item $\Phat_{t,\new} =  \Phat_{(j),\new,k}$ for all $t \in \J_{\uhat_j+(k+1)}$
\item At all times $\Phat_t = [\Phat_{t,*} \  \Phat_{t,\new}]$. Thus $\Phat_t$ and $\Phat_{t,\new}$ update at every $t = \that_j+ k \alpha$, $k=1,2,\dots,K$, $j=1,2,\dots,J$ while $\Phat_{t,*}$ updates at every $t=\that_{j-1}+K \alpha$, $j=2,\dots,J$.
\item $\Phat_{t-1,*}\perp\Phat_{t,\new}$ at $t=\that_j+k \alpha$ and so $\Phat_{(j),*} \perp \Phat_{(j),\new,k}$
\item $\bm{\Phi}_t = (\I - \Phat_{(j),*}\Phat_{(j),*}{}' - \Phat_{(j),\new,k}\Phat_{(j),\new,k}{}' ) $ when $t\in \J_{\uhat_j+(k+1)}$, for $k=1,2,\dots K-1$.
%[\uhat_j+k\alpha+1,\uhat_j+(k+1)\alpha]$
\item $\bm{\Phi}_t = (\I - \Phat_{(j),*}\Phat_{(j),*}{}')$ when $t \in [t_j, \that_j+\alpha]$ (recall that $\that_j = \uhat_j \alpha$).
\item $\bm{\Phi}_t = (\I - \Phat_{(j+1),*}\Phat_{(j+1),*}{}')$ when $t \in [\that_j + K \alpha +1, t_{j+1}-1]$.
\end{enumerate}

Using the notation from the above definition, Figure \ref{algofig} summarizes Algorithm \ref{reprocsdet}.

\begin{figure}
\includegraphics[width=\textwidth]{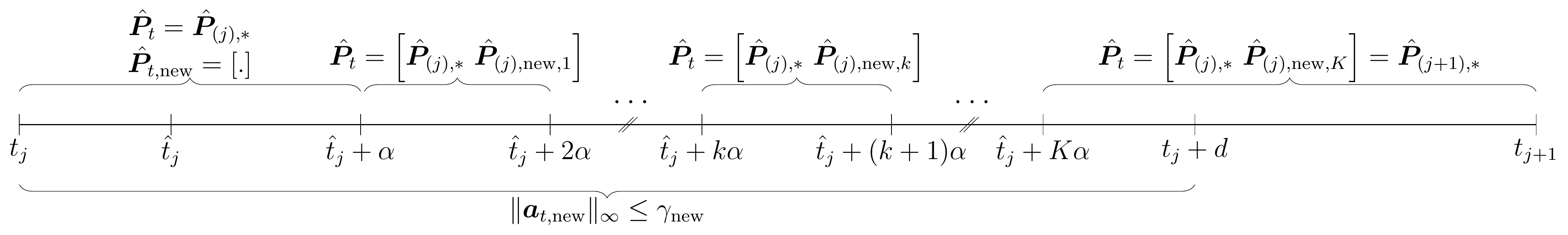}
\caption{A diagram to visualize Algorithm \ref{reprocsdet} and Definition \ref{def_Phat_starnew}. The $k$-th projection-PCA step (at $t = \that_j + k \alpha$) computes the top left singular vectors of $(\I - \Phat_{(j),*}\Phat_{(j),*}{}')[\lhat_{\that_j+(k-1)\alpha+1},\lhat_{\that_j+(k-1)\alpha+2}, \dots \lhat_{\that_j+k\alpha}]$. \label{algofig} }
\end{figure}

\begin{definition}\label{difzeta}
Recall that for basis matrices $\bm{P}$ and $\bm{Q}$, $\mathrm{dif}(\bm{P},\bm{Q}) := \|(\I - \bm{P}\bm{P}')\bm{Q}\|_2$.
Define
\begin{enumerate}
\item $\ds \zeta_{j,*} := \mathrm{dif}(\hat{\bm{P}}_{(j),*},\bm{P}_{(j),*}) $
\item $\ds \zeta_{j,\new,k} := \mathrm{dif}([\Phat_{(j),*} \ \Phat_{(j),\new,k}],\bm{P}_{(j),\new}) $
\end{enumerate}
Recall $\mathrm{SE}_t = \mathrm{dif}(\Phat_t,\bm{P}_t)$.
Notice that if subspace change times are correctly detected, for $t\in\J_{\uhat_j+k}$, $\mathrm{SE}_t \leq \zeta_{j,*}+\zeta_{j,\new,k-1}$ for $k=1,2, \dots K$; for $t \in [t_j, \that_j+\alpha]$, $\SE_t \le 1$; and for $t \in [\that_j + K \alpha +1, t_{j+1}-1]$, $\SE_t = \zeta_{j+1,*}$.
\end{definition}

\begin{definition}
Define
\begin{enumerate}
\item $\ds \zeta_{j,*}^+ := \big( r_0 + (j-1)r_{\new} \big)\zeta $
\item $\ds \zeta_{j,\new,0}^+ :=1$,  $\ds \zeta_{j,\new,k}^+ := \frac{b_{\bm{\mathcal{H}},k}}{b_{\bm{A}} - b_{\bm{A},\perp} - b_{\bm{\mathcal{H}},k}} $ for $k=1,2, \dots, K$ where $b_{\bm{A}}$, $b_{\bm{A},\perp}$, and $b_{\bm{\mathcal{H}},k}$ are defined in Lemmas \ref{Ak}, \ref{Akperp}, and \ref{calHk}. Their expressions use $\epsilon$ given by (\ref{def_eps}).
\end{enumerate}
We will show that these are high probability upper bounds on $\zeta_{j,*}$ and $\zeta_{j,\new,k}$ under appropriate conditioning.
\end{definition}
As we will see later, $b_{\bm{A}} \approx \lambda_\new^-$, $b_{\bm{A},\perp} \approx \zeta_{j,*}^+{}^2 \lambda^+$ and
   $b_{\bm{\mathcal{H}},k} \approx 2\sqrt{\rho^2 h^+} \phi^+ (\zeta_{j,*}^+{}^2 \lambda^+ + \zeta_{j,\new,k-1}^+{}  \lambda_\new^+)$. Here $\approx$ means we are giving only the most dominant term for each expression. Thus,
   \[
   \zeta_{j,\new,k}^+ \approx \frac{2 \sqrt{\rho^2 h^+} \phi^+ (\zeta_{j,\new,k-1}^+{}  \lambda_\new^+ + \zeta_{j,*}^+{}^2 \lambda^+ )}{\lambda_\new^- - \zeta_{j,*}^+{}^2 \lambda^+ -  2 \sqrt{\rho^2 h^+} \phi^+  (\zeta_{j,\new,k-1}^+{}  \lambda_\new^+ + \zeta_{j,*}^+{}^2 \lambda^+ ) }.
   \]
By using (\ref{anew_small}), the bounds on $\zeta$ from the theorem, and the bound on $\rho^2 h^+$, one can show that this decays roughly exponentially with $k$ (see Lemma \ref{zetadecay}).
  % \[
%   \zeta_{j,\new,k}^+ \approx \frac{\sqrt{\rho^2 h^+} \phi^+ ( 0.00015 r_\new \zeta +  3 \zeta_{j,\new,k-1}^+{})}{1 - 0.00015 r_\new \zeta -  \sqrt{\rho^2 h^+} \phi^+ ( 0.00015 r_\new \zeta  + 3 \zeta_{j,\new,k-1}^+{}) }.
%   \]

\begin{definition}
Define the random variable
\[
X_u := \{\bm{a}_1,\dots,\bm{a}_{u\alpha}\}
\]
%If assuming Model \ref{randsup} (random support change), then both $\{\theta_t\}$ and $\{\mu_t\}$ for $t=1,\dots,t_{\max}$ are also included in the definition of $X_u$ for all $u$.  Thus whenever conditioning on an $X_u$, they can be treated as deterministic.
\end{definition}

\begin{definition}
Recall the definition of ${\bm{\mathcal{D}}_u}$ from Algorithm \ref{reprocsdet}.
For $j = 1,\dots,J$, $k = 1,\dots,K$, and for $a = u_j$ or $a = u_j + 1$, define the following events
\begin{itemize}
\item $\ds \mathrm{DET}_j^a := \left\{ \hat{u}_j = a \right\} $
\item $\ds \mathrm{PPCA}_{j,k}^a := \left\{ \hat{u}_j = a \ \text{\em and}\ \rank(\Phat_{(j),\new,k}) = r_{j,\new} \ \text{\em and}\ \zeta_{j,\new,k}\leq \zeta_{j,\new,k}^+  \right\}$
\item $ \mathrm{NODETS}_j^a := \left\{ \hat{u}_j = a \ \text{\em and}\ \lambda_{\max}\left(\frac{1}{\alpha}\bm{\mathcal{D}}_u{\bm{\mathcal{D}}_u}'\right)<\thresh \ \text{\em for all}\ u \in [\hat{u}_j +K+1, u_{j+1}-1] \right\} $

\item $\Gamma_{0,\rmend} := \left\{ \zeta_{1,*}\leq r_0\zeta  \right\} \cap \left\{ \lambda_{\max}\left(\frac{1}{\alpha}\bm{\mathcal{D}}_u{\bm{\mathcal{D}}_u}'\right)<\thresh \ \text{\em for all}\ u \in [1, u_1-1] \right\} $
\item $\ds \Gamma_{j,0}^a :=  \Gamma_{j-1,\rmend}\cap \mathrm{DET}_j^a$
\item $\ds \Gamma_{j,k}^a := \Gamma_{j,k-1}^a \cap \mathrm{PPCA}_{j,k}^a$
\item $\ds \Gamma_{j,\rmend} := \Big( \Gamma_{j,K}^{u_j} \cap \mathrm{NODETS}_j^{u_j} \Big) \cup \left(  \Gamma_{j,K}^{u_j +1} \cap \mathrm{NODETS}_j^{u_j +1} \right)$
\end{itemize}

We misuse notation as follows.
Suppose that a set $\Gamma$ is a subset of all possible values that a r.v. $X$ can take. For two r.v.s' $\{X,Y\}$, when we need to say ``$X \in \Gamma$ and $Y$ can be anything" we will sometimes misuse notation and just say ``$\{X,Y\} \in \Gamma$."  For example, we sometimes say $X_{u_j}\in\Gamma_{j,\rmend}$.  This means $X_{u_j-1}\in\Gamma_{j,\rmend}$ and $\bm{a}_{t}$ for $t\in\J_{u_j}$ are unconstrained.
%For example $X_{\hat{u}_{j+1}}\in\Gamma_{j,\rmend}$ is technically incorrect, but we will use it to mean $\left\{\bm{\nu}_1,\dots,\bm{\nu}_{\hat{u}_j+K\alpha}\right\}\in\Gamma_{j,\rmend}$ and $\left\{ \bm{\nu}_{\hat{u}_j + K\alpha+1}, \dots ,\bm{\nu}_{\hat{u}_{j+1}} \right\}$ are unconstrained.
\end{definition}

%\begin{definition}
%Define $\M_{u} := \frac{1}{\alpha}\bm{\mathcal{D}}_u {\bm{\mathcal{D}}_u}'$, where $\bm{\mathcal{D}}_u$ is defined in Algorithm \ref{reprocsdet}.
%Conditioned on $\Gamma_{j-1,\rmend} \cap \bar{\mathrm{DET}_j^{u_j}}$,  for $u = u_j +1$, $\Phat_{t-1,*} = \Phat_{(j),*}$ and thus, for this value of $u$, $\M_u$ satisfies (\ref{Mudef}) below.
%\begin{equation} \label{Mudef}
%\M_{u} = (\I - \hat{\bm{P}}_{(j),*}\hat{\bm{P}}_{(j),*}{}')  \left(\frac{1}{\alpha} \sum_{t\in\J_u}\lhat_t\lhat_t{}'\right)(\I - \hat{\bm{P}}_{(j),*}\hat{\bm{P}}_{(j),*}{}').
%\end{equation}
%Conditioned on $\Gamma_{j,0}^{\uhat_j}$, for  $u = \uhat_j + k$, $k=1,2, \dots K$, and $\uhat_j=u_j$ or $\uhat_j=u_j+1$, $\Phat_{t-1,*} = \Phat_{(j),*}$ and thus, for these values of $u$, $\M_u$ satisfies (\ref{Mudef}).
%In this case, $\M_{u}$ is also the matrix whose eigenvectors with eigenvalue above $\mathrm{thresh}$ form $\Phat_{(j),\new,k}$ (see step \ref{PCA} of Algorithm \ref{reprocsdet}).  In this case, $\M_{u}$ has eigendecomposition
%\begin{align*}
%\M_u \overset{\mathrm{EVD}}{=} \left[ \begin{array}{cc} \Phat_{(j),\new,k} & \Phat_{(j),k,\new,\perp} \\ \end{array} \right]
%\left[\begin{array}{cc}\hat{\bm{\Lambda}}_u \ & \bm{0} \ \\ \bm{0} \ & \ \hat{\bm{\Lambda}}_{u,\perp} \\ \end{array} \right]
%\left[ \begin{array}{c}\Phat_{(j),\new,k}{}' \\ \Phat_{(j),k,\new,\perp}{}' \\ \end{array} \right].
%\end{align*}
%\end{definition}

\begin{definition}\label{defHk}
Define
\begin{enumerate}
\item Let $\bm{D}_{j,\new}:= (\bm{I} - \Phat_{(j),*} \Phat_{(j),*}{}')\bm{P}_{(j),\new} \overset{QR}{=} \bm{E}_{j,\rmnew} \bm{R}_{j,\rmnew}$ denote its reduced QR decomposition, i.e. let $\bm{E}_{j,\rmnew}$ be a basis matrix for $\Span\left(\bm{D}_{j,\new}\right)$ and let $\bm{R}_{j,\rmnew} = {\bm{E}_{j,\rmnew}}'\bm{D}_{j,\new}$.

\item Let $\bm{E}_{j,\rmnew,\perp}$ be a basis matrix for the orthogonal complement of $\Span(\bm{E}_{j,\rmnew})$. To be precise, $\bm{E}_{j,\rmnew,\perp}$ is an $n\times(n-r_j)$ basis matrix so that $[\bm{E}_{j,\rmnew} \ {\bm{E}_{j,\rmnew,\perp}}]$ is unitary.

\item
For $u = u_{j}+1$ and $u = \hat{u}_{j}+k$ for $k = 1,\dots,K$, define $\bm{A}_{u}$, $\bm{A}_{u,\perp}$, $\bm{\mathcal{A}}_{u}$ as
\begin{align*}
\bm{A}_{u} &:= \frac{1}{\alpha} \sum_{t \in \mathcal{J}_{u}} {\bm{E}_{j,\rmnew}}' (\I - \hat{\bm{P}}_{(j),*}\hat{\bm{P}}_{(j),*}{}') \bm{\ell}_t {\bm{\ell}_t}' (\I - \hat{\bm{P}}_{(j),*}\hat{\bm{P}}_{(j),*}{}')\bm{E}_{j,\rmnew} \\
\bm{A}_{u,\perp} &:= \frac{1}{\alpha} \sum_{t \in \mathcal{J}_{u}} {\bm{E}_{j,\rmnew,\perp}}' (\I - \hat{\bm{P}}_{(j),*}\hat{\bm{P}}_{(j),*}{}') \bm{\ell}_t {\bm{\ell}_t}' (\I - \hat{\bm{P}}_{(j),*}\hat{\bm{P}}_{(j),*}{}') \bm{E}_{j,\rmnew,\perp}
\end{align*}
and let
\[
\bm{\mathcal{A}}_{u} := \left[ \begin{array}{cc} \bm{E}_{j,\rmnew} & \bm{E}_{j,\rmnew,\perp} \\ \end{array} \right]
\left[\begin{array}{cc} \bm{A}_{u} \ & \bm{0} \ \\ \bm{0} \ & \bm{A}_{u,\perp}  \\ \end{array} \right]
\left[ \begin{array}{c} {\bm{E}_{j,\rmnew}}' \\ {\bm{E}_{j,\rmnew,\perp}}' \\ \end{array} \right]
\]

\item For $u = u_{j}+1$ and $u = \hat{u}_{j}+k$ for $k = 1,\dots,K$, define $\bm{\mathcal{M}}_{u}$ and $\bm{\mathcal{H}}_{u}$ as
\[
\M_{u} = (\I - \hat{\bm{P}}_{(j),*}\hat{\bm{P}}_{(j),*}{}')  \left(\frac{1}{\alpha} \sum_{t\in\J_u}\lhat_t\lhat_t{}'\right)(\I - \hat{\bm{P}}_{(j),*}\hat{\bm{P}}_{(j),*}{}')
\]
and
\[
\bm{\mathcal{H}}_{u} := \bm{\mathcal{M}}_u - \bm{\mathcal{A}}_u
\]
\end{enumerate}
\end{definition}

% \cap \overline{\mathrm{DET}_j^{u_j}}
\begin{remark}
Recall the definition of ${\bm{\mathcal{D}}_u}$ from Algorithm \ref{reprocsdet}.

Conditioned on $\Gamma_{j-1,\rmend}$, for $u=u_j+1$, $\Phat_{u\alpha-1,*} = \Phat_{(j),*}$ (in other words all $j-1$ previous subspace changes were detected) and thus, for this value of $u$,
$$\frac{1}{\alpha}\bm{\mathcal{D}}_u {\bm{\mathcal{D}}_u}' = \M_u.$$ In this case, $\M_u$ is the matrix whose maximum eigenvalue is checked to detect subspace change.

%and $\uhat_j=u_j$ or $\uhat_j=u_j+1$,
Conditioned on $\Gamma_{j,0}^{\uhat_j}$, for  $u = \uhat_j + k$, $k=1,2, \dots, K$, $\Phat_{u\alpha-1,*} = \Phat_{(j),*}$ and thus, for these values of $u$ also,
\[
\frac{1}{\alpha}\bm{\mathcal{D}}_u {\bm{\mathcal{D}}_u}' = \M_u.
\]
In this case, $\M_{u}$ is the matrix whose eigenvectors with eigenvalue above $\mathrm{thresh}$ form $\Phat_{(j),\new,k}$ (see step \ref{PCA} of Algorithm \ref{reprocsdet}). In other words, $\M_{u}$ has eigendecomposition
\begin{align*}
\M_u \overset{\mathrm{EVD}}{=} \left[ \begin{array}{cc} \Phat_{(j),\new,k} & \Phat_{(j),\new,k,\perp} \\ \end{array} \right]
\left[\begin{array}{cc}\hat{\bm{\Lambda}}_u \ & \bm{0} \ \\ \bm{0} \ & \ \hat{\bm{\Lambda}}_{u,\perp} \\ \end{array} \right]
\left[ \begin{array}{c}\Phat_{(j),\new,k}{}' \\ \Phat_{(j),\new,k,\perp}{}' \\ \end{array} \right].
\end{align*}
\end{remark}

\begin{definition}
Define
\ben
\item $\kappa_{s,*} := \kappa_s(\bm{P}_{(J)})$ and $\kappa_{s,\rmnew} := \max_{j} \kappa_s(\bm{P}_{(j),\rmnew})$.
\item $\kappa_s^+:=0.0215$. As we will show later in Lemma \ref{Dnew0_lem}, this upper bounds $\| {\I_{\T_t}}'\bm{D}_{j,\new}\|_2$ under appropriate conditioning.
\item $\phi^+:=1.2$. As we will show later in Lemma \ref{cslem}, this upper bounds $\phi_t := \| [ ({\bm{\Phi}_{t})_{\mathcal{T}_t}}'(\bm{\Phi}_{t})_{\mathcal{T}_t}]^{-1} \|_2$ under appropriate conditioning.% when conditioned on appropriate events.
\een
\end{definition}

\begin{remark}
The entire proof uses Model \ref{general_model} on $\T_t$. By Lemma \ref{spc_case}, Model \ref{sbyrho} is a special case of it. In particular, this means that (a) Model \ref{sbyrho} also implies $\rho^2 h^+ \le 0.01$ and (b) Model \ref{sbyrho} also allows us to use Lemma \ref{blockdiag1}. This lemma is used in the proof of Lemma \ref{calHk} in Section \ref{3_pfs}.
\end{remark}

\subsection{Five Main Lemmas for Proving Theorem \ref{thm1}}

\begin{fact} \label{d_large}
Observe that $\Gamma_{j,0}^{a}$ both for $a = u_j$ and $a=u_j+1$ implies that $u_j\leq\hat{u}_j \leq u_j + 1$. Thus, in both cases, $t_j\leq \hat{t}_j \leq t_j + 2\alpha$.  So with the model assumption that $d\geq (K+2)\alpha$, we have that $\J_{\hat{u}_j+k} \subseteq [t_j, t_j+d]$ for $k=1,2,\dots, K$.  This fact is needed so that we can use the ``slow subspace change" inequality, (\ref{anew_small}), to bound the eigenvalues along the new directions, and so that we can bound $\|\atnew\|_\infty$ by $\gamma_\new$.
%the tighter bounds on $\bm{a}_{t,\new}$ hold. %$\|\bm{a}_{t,\new}\|_{\infty} \leq \gamma_{\new}$ and $\lambda_{\max}(\bm{\Lambda}_{a,t,\new})\leq\lambda_{\new}^+$ hold.
\end{fact}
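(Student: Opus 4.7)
The plan is to simply unfold the event definitions and then carry out elementary interval arithmetic; no real obstacle is expected, since the content of the fact is essentially a bookkeeping consequence of the detection delay being at most one $\alpha$-block and of the model assumption $d \geq (K+2)\alpha$.

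First I would read off from the definitions that $\Gamma_{j,0}^{a} = \Gamma_{j-1,\mathrm{end}} \cap \mathrm{DET}_j^{a}$ with $\mathrm{DET}_j^{a} = \{\hat{u}_j = a\}$. Consequently, conditioning on $\Gamma_{j,0}^{u_j}$ forces $\hat{u}_j = u_j$, while conditioning on $\Gamma_{j,0}^{u_j+1}$ forces $\hat{u}_j = u_j+1$; in either case $u_j \leq \hat{u}_j \leq u_j + 1$, which is the first assertion.

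Next, I would translate this bound back to the continuous time scale via $\hat{t}_j = \hat{u}_j \alpha$ together with $u_j = \lceil t_j/\alpha \rceil$. The ceiling identity gives $(u_j-1)\alpha < t_j \leq u_j \alpha$, equivalently $t_j \leq u_j \alpha < t_j + \alpha$. Multiplying the bound on $\hat{u}_j$ by $\alpha$ and combining yields $t_j \leq u_j \alpha \leq \hat{t}_j \leq (u_j+1)\alpha < t_j + 2\alpha$, which is the second assertion.

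Finally, for the interval containment $\mathcal{J}_{\hat{u}_j+k} \subseteq [t_j, t_j+d]$, I would use $\mathcal{J}_u = [(u-1)\alpha+1,\, u\alpha]$. The lower endpoint satisfies $(\hat{u}_j+k-1)\alpha + 1 \geq u_j \alpha + 1 \geq t_j + 1 > t_j$ whenever $k \geq 1$, using $\hat{u}_j \geq u_j$ and $u_j\alpha \geq t_j$. The upper endpoint satisfies $(\hat{u}_j+k)\alpha \leq (u_j+1+k)\alpha < t_j + (k+2)\alpha \leq t_j + (K+2)\alpha \leq t_j + d$, where the first inequality uses $\hat{u}_j \leq u_j+1$, the strict inequality uses $u_j\alpha < t_j+\alpha$, and the last uses the hypothesis $d \geq (K+2)\alpha$. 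Hence $\mathcal{J}_{\hat{u}_j+k} \subseteq [t_j, t_j+d]$ for each $k = 1, 2, \ldots, K$, completing the fact.
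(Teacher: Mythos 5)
Your proposal is correct and fills in exactly the elementary bookkeeping that the paper leaves implicit (the paper states this as an observation without a written proof): unfolding $\Gamma_{j,0}^{a} = \Gamma_{j-1,\mathrm{end}} \cap \mathrm{DET}_j^{a}$, using $u_j = \lceil t_j/\alpha\rceil$ and $\hat{t}_j = \hat{u}_j\alpha$, and then the interval arithmetic with $d \geq (K+2)\alpha$. All steps check out, and your endpoint estimates even give the slightly stronger strict bound $\hat{t}_j < t_j + 2\alpha$.
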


\begin{lem}\label{zetadecay}[{Exponential decay of the bound on $\zeta_{j,\new,k}$ (similar to \cite[Lemma 6.1]{ReProCS_IT})}]
Under the conditions of Theorem \ref{thm1},
\[
\zeta_{j,\new,k}^+ \leq 0.83^k + 0.84 r_{\new}\zeta
\]
\end{lem}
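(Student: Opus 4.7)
I would argue by induction on $k$. The base case $k=0$ is immediate since $\zeta_{j,\new,0}^+ = 1 \leq 0.83^0 + 0.84 r_\new \zeta$. For the inductive step, assuming $\zeta_{j,\new,k-1}^+ \leq 0.83^{k-1} + 0.84 r_\new \zeta$, I need to plug this into the recursive definition
\[
\zeta_{j,\new,k}^+ \;=\; \frac{b_{\bm{\mathcal{H}},k}}{b_{\bm{A}} - b_{\bm{A},\perp} - b_{\bm{\mathcal{H}},k}}
\]
and show the bound $0.83^k + 0.84 r_\new \zeta$ follows. Since the problem statement already cites \cite[Lemma~6.1]{ReProCS_IT} as the template, I expect the structure of the argument to mirror that lemma and reduce to checking numerical constants.

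\textbf{Key quantitative inputs.} The plan is to first simplify each of $b_{\bm{A}}$, $b_{\bm{A},\perp}$, $b_{\bm{\mathcal{H}},k}$ using the following ingredients, all of which are available under the theorem's hypotheses: (i) from Model~\ref{exp_model} via \eqref{anew_small}, $\lammin \leq \lambda_\new^- \leq \lambda_\new^+ \leq 3\lammin$, so $\lambda_\new^+/\lambda_\new^- \leq 3$; (ii) from Model~\ref{general_model} (which Model~\ref{sbyrho} specializes to), $\rho^2 h^+ \leq 0.01$, hence $\sqrt{\rho^2 h^+} \leq 0.1$; (iii) $\phi^+ = 1.2$ and $\kappa_s^+ = 0.0215$; (iv) $\zeta_{j,*}^+ \leq r\zeta$ together with the choice $\zeta \leq 0.03\lammin/(r^2\lambda^+)$, which yields $\zeta_{j,*}^+{}^2 \lambda^+ \leq 0.03 \zeta \lammin \leq 0.03 \zeta \lambda_\new^-$; and (v) the bound $\zeta \leq 10^{-4}/r^2$, which makes all $O(\zeta)$ remainder terms negligible relative to the leading constants.

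\textbf{Reducing to an affine recursion.} Substituting these bounds into the expressions for $b_{\bm{A}}, b_{\bm{A},\perp}, b_{\bm{\mathcal{H}},k}$ (which will be computed explicitly once Lemmas \ref{Ak}, \ref{Akperp}, \ref{calHk} are in hand), the numerator becomes at most a small multiple of $\lambda_\new^-\bigl(\zeta_{j,\new,k-1}^+ + O(\zeta)\bigr)$ and the denominator is bounded below by $\lambda_\new^-\bigl(1 - O(\zeta) - \text{(small)}\cdot\zeta_{j,\new,k-1}^+\bigr)$. After dividing out $\lambda_\new^-$, this collapses to an affine recursion of the form
\[
\zeta_{j,\new,k}^+ \;\leq\; c_1\,\zeta_{j,\new,k-1}^+ \;+\; c_2\, r_\new \zeta,
\]
and the plan is to verify numerically that one can take $c_1 \leq 0.83$ and $c_2 \leq 0.84(1-c_1)$, i.e., $c_1 \cdot 0.84 + c_2 \leq 0.84$. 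Once these two numerical inequalities are checked, the induction closes: applying the inductive hypothesis gives
\[
\zeta_{j,\new,k}^+ \;\leq\; c_1\bigl(0.83^{k-1} + 0.84 r_\new \zeta\bigr) + c_2 r_\new \zeta \;\leq\; 0.83^k + 0.84 r_\new \zeta.
\]

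\textbf{Main obstacle.} The proof itself is a routine induction; the real work is carefully assembling the numerical constants so that $c_1 \leq 0.83$ comes out exactly. This depends on plugging in the specific coefficients appearing in $b_{\bm{A}}, b_{\bm{A},\perp}, b_{\bm{\mathcal{H}},k}$ (including factors like $\sqrt{r_\new}\gamma_\new$ via $\epsilon$) and keeping track of the lower-order $\zeta$-terms to confirm they fit into the $0.84 r_\new \zeta$ slack. The bounds $\rho^2 h^+ \leq 0.01$ and the particular choice $\zeta \leq 0.03\lammin/(r^2\lambda^+)$ were clearly picked precisely so that the contraction rate falls below $0.83$; verifying this is the bookkeeping-heavy part, but it is parallel to \cite[Lemma~6.1]{ReProCS_IT} and no new ideas are required beyond the substitution-and-estimate scheme above.
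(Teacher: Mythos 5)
Your overall scheme---reduce the definition $\zeta_{j,\new,k}^+ = b_{\bm{\mathcal{H}},k}/(b_{\bm{A}} - b_{\bm{A},\perp} - b_{\bm{\mathcal{H}},k})$ to an affine recursion $\zeta_{j,\new,k}^+ \le c_1 \zeta_{j,\new,k-1}^+ + c_2 r_\new\zeta$ with $c_1 \le 0.83$ and then iterate---is the same as the paper's (the paper unrolls the recursion with a geometric series instead of inducting on the claimed bound directly, but that is cosmetic). The gap is in how you propose to get $c_1 \le 0.83$. The coefficients of the recursion are not constants: the denominator contains $-b_{\bm{\mathcal{H}},k}$, whose dominant term is roughly $2\sqrt{\rho^2 h^+}\,\phi^+ \lambda_\new^+\, \zeta_{j,\new,k-1}^+ \approx 0.72\,\lammin\,\zeta_{j,\new,k-1}^+$. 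Your induction hypothesis only gives $\zeta_{j,\new,k-1}^+ \le 0.83^{k-1} + 0.84\, r_\new\zeta$, which for $k=2$ is about $0.83$; with that value the denominator drops to about $(1 - 0.83\cdot 0.73)\lammin \approx 0.37\,\lammin$ and the ratio $b_{\bm{\mathcal{H}},k}/(b_{\bm{A}} - b_{\bm{A},\perp} - b_{\bm{\mathcal{H}},k})$ comes out near $2$, not $0.83$. So the induction as stated does not close, and no amount of bookkeeping of the $O(\zeta)$ terms fixes it.

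What the paper does, and what your plan is missing, is a preliminary boundedness step. First it shows $\zeta_{j,\new,1}^+ \le 0.18$: this is possible only because $b_{\bm{\mathcal{H}},1}$ is defined differently from $b_{\bm{\mathcal{H}},k}$ for $k\ge 2$ --- the factor $\zeta_{j,\new,0}^+ = 1$ is replaced by the denseness constant $\kappa_s^+ = 0.0215$ (via Lemma \ref{Dnew0_lem}), so the dominant numerator coefficient is $2\kappa_s^+\phi^+\lambda_\new^+ \approx 0.155\,\lammin$ rather than $0.72\,\lammin$. It then observes that $\zeta_{j,\new,k}^+$ is an increasing function of $\zeta_{j,\new,k-1}^+$, and since $\zeta_{j,\new,1}^+ \le \zeta_{j,\new,0}^+$, the sequence is non-increasing; hence $\zeta_{j,\new,k}^+ \le 0.18$ for all $k\ge 1$. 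Only with this uniform bound substituted into $b_{\bm{\mathcal{H}},k}$ does the denominator stay above roughly $0.87\,\lammin$ and the recursion become $\zeta_{j,\new,k}^+ \le 0.83\,\zeta_{j,\new,k-1}^+ + 0.14\, r_\new\zeta$, whose geometric sum yields the stated $0.83^k + 0.84\, r_\new\zeta$. You should insert this two-stage argument (special treatment of $k=1$ via $\kappa_s^+$, then monotonicity) before running your induction.
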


This lemma follows by applying simple algebra on the definition and using the bounds assumed on $\zeta$, $\lambda_{\new}^+$ and $\rho^2 h^+$ in Theorem \ref{thm1}. A detailed proof of this lemma is given in Appendix \ref{zeta_k_section}.

\begin{lem}[{Sparse Recovery Lemma (similar to \cite[Lemma 6.4]{ReProCS_IT})}]\label{cslem}
Assume that all of the conditions of Theorem \ref{thm1} hold.
Recall that $\mathrm{SE}_t = \operatorname{dif}(\Phat_t,\bm{P}_t)$.
\begin{enumerate}
\item  Conditioned on $\Gamma_{j-1,\rmend}$, for $t \in [t_j, (\uhat_j+1)\alpha]$
\begin{enumerate}
\item $\phi_t := \| [ ({\bm{\Phi}_{t})_{\mathcal{T}_t}}'(\bm{\Phi}_{t})_{\mathcal{T}_t}]^{-1} \|_2 \leq \phi^+ := 1.2$.
\item  the support of $\xt$ is recovered exactly i.e. $\hat{\mathcal{T}}_t = \mathcal{T}_t$ and  $\et$ satisfies:
\begin{align}\label{etdef0}
\et := \hat{\bm{x}}_t - \xt  = \lt - \lhatt = \bm{I}_{\mathcal{T}_t} [ ({\bm{\Phi}_{t})_{\mathcal{T}_t}}'(\bm{\Phi}_{t})_{\mathcal{T}_t}]^{-1}  {\bm{I}_{\mathcal{T}_t}}' \bm{\Phi}_{t} \bm{\ell}_t.
\end{align} %= \bm{I}_{\mathcal{T}_t} {(\bm{\Phi}_{t})_{\mathcal{T}_t}}^{\dag} \bm{{b}}_t
\item Furthermore,
\begin{align*}
\mathrm{SE}_t &\leq 1 \ \text{, and} \\
\|\bm{e}_t\|_2 &\leq \phi^+ (\zeta_{j,*}^+ \sqrt{r}\gamma +  \sqrt{r_{\new}}\gamma_{\rmnew}) \leq
1.2 \left(\sqrt{\zeta} +  \sqrt{r_{\new}}\gamma_{\rmnew}  \right)
\end{align*}
\end{enumerate}

\item For $k=2,3, \dots, K$ and $\hat{u}_j = u_j$ or $\hat{u}_j = u_j +1$, conditioned on $\Gamma_{j,k-1}^{\hat{u}_j}$, for $t \in \J_{\hat{u}_j+k} = \left[(\uhat_j+ k-1)\alpha+1, (\uhat_j+ k)\alpha \right]$, the first two conclusions above hold. That is, $\phi_t \leq\phi^+$ and $\et$ satisfies (\ref{etdef0}).  Furthermore,
\begin{align*}
\mathrm{SE}_t &\leq \zeta_{j,*}^+ + \zeta_{j,\new,k-1}^+  \ \text{, and} \\
\|\bm{e}_t\|_2 &\leq \phi^+ (\zeta_{j,*}^+ \sqrt{r}\gamma +  \zeta_{j,\new,k-1}^+ \sqrt{r_{\new}}\gamma_{\rmnew})
\leq 1.2 \left(1.84\sqrt{\zeta} +  (0.83)^{k-1}\sqrt{r_{\new}}\gamma_{\rmnew}  \right)
\end{align*}

\item For $\hat{u}_j = u_j$ or $\hat{u}_j = u_j +1$, conditioned on $\Gamma_{j,K}^{\uhat_j}$, for $t \in \left[(\uhat_j+ K)\alpha+1, t_{j+1}-1\right]$, the first two conclusions above hold ($\phi_t\leq\phi^+$ and $\et$ satisfies \eqref{etdef0}).
Furthermore,
\begin{align*}
\SE_t &\leq \zeta_{j+1,*}^+  \ \text{, and} \\
\|\bm{e}_t\|_2 &\leq  \phi^+ \zeta_{j+1,*}^+ \sqrt{r}\gamma \leq 1.2 \sqrt{\zeta}
\end{align*}
\end{enumerate}

\end{lem}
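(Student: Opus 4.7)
\textbf{Proof plan for Lemma \ref{cslem}.} All three parts of the lemma share the same structure, differing only in what $\bm{\Phi}_t$ looks like and how well it approximates the true subspace; so the plan is to treat them uniformly and then specialize at the end. My first step in each case will be to read off $\bm{\Phi}_t$ from the algorithm given the conditioning event. In Part 1, on $\Gamma_{j-1,\rmend}$ the algorithm is still in the \emph{detect} phase at time $t \in [t_j,(\uhat_j+1)\alpha]$, so $\Phat_{t}=\Phat_{(j),*}$ and $\bm{\Phi}_t = \I - \Phat_{(j),*}\Phat_{(j),*}{}'$; moreover $\zeta_{j,*}\le \zeta_{j,*}^+$. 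In Part 2, on $\Gamma_{j,k-1}^{\uhat_j}$ for $t\in\mathcal J_{\uhat_j+k}$, we have $\Phat_{t}=[\Phat_{(j),*}\ \Phat_{(j),\new,k-1}]$, and both $\zeta_{j,*}\le\zeta_{j,*}^+$ and $\zeta_{j,\new,k-1}\le\zeta_{j,\new,k-1}^+$. In Part 3, on $\Gamma_{j,K}^{\uhat_j}$ the algorithm has merged the new directions into $\Phat_{t,*}=\Phat_{(j+1),*}$, so $\bm{\Phi}_t=\I-\Phat_{(j+1),*}\Phat_{(j+1),*}{}'$, and $\zeta_{j+1,*}\le\zeta_{j+1,*}^+$.

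Next I would bound $\delta_{2s}(\bm{\Phi}_t)$, and hence $\phi_t$. By Lemma \ref{kappadelta}, $\delta_{2s}(\bm{\Phi}_t)=\kappa_{2s}(\Phat_t)^2$. Writing $\Phat_t = \bm{P}\bm{P}'\Phat_t + (\I-\bm{P}\bm{P}')\Phat_t$ with $\bm{P}$ the relevant true basis (either $\bm{P}_{(j),*}$ alone, or $[\bm{P}_{(j),*}\ \bm{P}_{(j),\new}]$, or $\bm{P}_{(j+1),*}$) and using that $\Phat_t$ and $\bm{P}$ have the same number of columns so that $\mathrm{dif}$ is symmetric, the triangle inequality gives $\kappa_{2s}(\Phat_t)\le\kappa_{2s}(\bm{P})+\|(\I-\bm{P}\bm{P}')\Phat_t\|_2$, which by the denseness bound \eqref{kappa_dense} and the appropriate $\zeta^+$-bound sits well below $0.3+0.02+$(tiny), in particular below $\sqrt{0.14}$. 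Then $\phi_t = 1/\lambda_{\min}\bigl((\bm{\Phi}_t)_{\T_t}{}'(\bm{\Phi}_t)_{\T_t}\bigr) \le 1/(1-\delta_{2s}(\bm{\Phi}_t))\le 1/0.86 < 1.2 = \phi^+$.

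For exact support recovery (needed for the RPCA case; the MC case receives $\T_t$ directly), I would use the noisy $\ell_1$ guarantee with bounded noise: writing $\bm{y}_t=\bm{\Phi}_t\xt+\bm{\Phi}_t\lt$ and decomposing $\lt=\Pjs\ats+\Pjnew\atnew$, I bound the ``noise'' as $\|\bm{\Phi}_t\lt\|_2\le\|\bm{\Phi}_t\Pjs\|_2\|\ats\|_2+\|\bm{\Phi}_t\Pjnew\|_2\|\atnew\|_2$. Using the $\zeta^+$ bounds and $\|\atnew\|_2\le\sqrt{r_\new}\gamma_\new$ (which requires Fact \ref{d_large} to place $t$ in the first $d$ frames after $t_j$), this is at most $\xi$ with $\xi$ as in the theorem. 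A standard RIP-based CS bound (say Candes' theorem, applicable since $\delta_{2s}(\bm\Phi_t)$ is well below $\sqrt{2}-1$) yields $\|\hat{\bm{x}}_{t,\cs}-\xt\|_\infty\le\|\hat{\bm{x}}_{t,\cs}-\xt\|_2\le 7\xi=\omega$; combining with the assumption $x_{\min}>14\xi$ then shows that thresholding at $\omega$ recovers the support exactly. Once $\hat{\T}_t=\T_t$, the LS step expands to \eqref{etdef0} by the one-line calculation $\hat{\bm{x}}_t = \I_{\T_t}[(\bm\Phi_t)_{\T_t}{}'(\bm\Phi_t)_{\T_t}]^{-1}(\bm\Phi_t)_{\T_t}{}'(\bm\Phi_t\xt+\bm\Phi_t\lt)$ and noting $(\bm\Phi_t)_{\T_t}{}'\bm\Phi_t\xt = (\bm\Phi_t)_{\T_t}{}'(\bm\Phi_t)_{\T_t}\I_{\T_t}{}'\xt$.

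Finally I would bound $\|\et\|_2\le\phi^+\|\I_{\T_t}{}'\bm\Phi_t\lt\|_2$ by the same decomposition as above, and then specialize: in Part 1 this gives $\phi^+(\zeta_{j,*}^+\sqrt{r}\gamma+\sqrt{r_\new}\gamma_\new)$; in Part 2, since the new directions are partially captured, the second term shrinks to $\zeta_{j,\new,k-1}^+\sqrt{r_\new}\gamma_\new$; in Part 3, $\atnew$ has been absorbed and only the $\zeta_{j+1,*}^+\sqrt{r}\gamma$ term remains. The explicit numerical upper bounds $1.2(\sqrt\zeta+\sqrt{r_\new}\gamma_\new)$ etc.\ follow by plugging in $\zeta\le 1/(r^3\gamma^2)$ from the theorem and Lemma \ref{zetadecay}. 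The subspace-error bounds $\SE_t\le1$, $\SE_t\le\zeta_{j,*}^++\zeta_{j,\new,k-1}^+$, and $\SE_t\le\zeta_{j+1,*}^+$ are immediate from the definition of $\SE_t$ and $\mathrm{dif}$ given the relevant conditioning. The only real obstacle is the RIC bound for the concatenated estimate $[\Phat_{(j),*}\ \Phat_{(j),\new,k-1}]$ in Part 2, since one needs to verify that $\kappa_{2s}$ of a concatenation splits (via the $\P_1,\P_2$ basis-matrix inequality mentioned after Definition of $\kappa_s$) and to be careful that all relevant $\zeta^+$'s are genuinely small under the conditioning; everything else is standard algebra.
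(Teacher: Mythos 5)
Your plan follows the paper's proof essentially step for step: identify $\bm{\Phi}_t$ from the conditioning event, bound $\delta_{2s}(\bm{\Phi}_t)$ via $\kappa_{2s}(\Phat_t)$ and Lemma \ref{kappadelta}, bound the noise $\|\bm{b}_t\|_2=\|\bm{\Phi}_t\lt\|_2\le\xi$ using the decomposition $\lt=\Pjs\ats+\Pjnew\atnew$ and Fact \ref{d_large}, invoke the RIP-based CS guarantee plus thresholding with $x_{\min}>14\xi$ for exact support recovery, and expand the LS step to get \eqref{etdef0} and the final bounds. All of that is correct and is exactly what the paper does (via Lemma \ref{RIC_bnd}, Corollary \ref{RICnumbnd} and Fact \ref{constants}).

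The one step that does not go through as written is the RIC bound in Part 2. Your proposed estimate $\kappa_{2s}(\Phat_t)\le\kappa_{2s}(\bm{P})+\|(\I-\bm{P}\bm{P}')\Phat_t\|_2$ with $\bm{P}=[\bm{P}_{(j),*}\ \bm{P}_{(j),\new}]$ picks up the additive term $\zeta_{j,*}^++\zeta_{j,\new,k-1}^+$, and $\zeta_{j,\new,k-1}^+$ is \emph{not} tiny for small $k$ (e.g.\ $\zeta_{j,\new,1}^+\approx 0.18$). This gives $\kappa_{2s}(\Phat_t)\lesssim 0.3+0.2=0.5$, hence $\delta_{2s}\lesssim 0.25$ and $\phi_t\le 1/(1-0.25)\approx 1.33$ — still below $\sqrt2-1$ so the CS step survives, but the claimed constant $\phi^+=1.2$ fails, and that constant is used downstream (in $b_{\bm{\mathcal H},k}$ and Lemma \ref{zetadecay}), so it is not cosmetic. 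The paper avoids this by splitting at the level of the \emph{estimated} basis, $\kappa_{2s}^2([\Phat_{(j),*}\ \Phat_{(j),\new,k-1}])\le\kappa_{2s}^2(\Phat_{(j),*})+\kappa_{2s}^2(\Phat_{(j),\new,k-1})$, and then bounding $\kappa_{2s}(\Phat_{(j),\new,k-1})\le\kappa_{2s,\new}+\zeta_{j,\new,k-1}+\zeta_{j,*}$; the non-small $\zeta_{j,\new,k-1}^+$ then enters only inside $(\,0.02+0.19+\cdots)^2\approx 0.04$ rather than as a cross term against $\kappa_{2s,*}=0.3$, which keeps $\delta_{2s}<0.1479$ and $\phi^+=1.2$. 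You flag this concatenation as "the only real obstacle" and point to the right splitting inequality, but the resolution is not that the $\zeta^+$'s are all small — it is the quadratic structure of the bound that absorbs the $O(0.2)$ term. With that substitution your argument matches the paper's.
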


Notice that cases 1) and 3) of the above lemma occur when the algorithm is in the detection phase, while during the intervals for case 2) the algorithm is performing projection-PCA.  In case 1) new directions have been added but not estimated, so the error is larger.  In case 2), the error is decaying exponentially with each estimation step.  Finally, case 3) occurs after the new directions have been successfully estimated and contains the tightest error bounds.

The proof is given in Appendix \ref{pf_cslem}.

\begin{lem}[No false detection of subspace changes] \label{falsedet}
\
\begin{enumerate}
\item The event $\Gamma_{j,K}^{\hat{u}_j}$ and so also the event $\Gamma_{j,\rmend}$ imply that $\zeta_{j+1,*}\leq\zeta_{j+1,*}^+$.
\item $\ds  \Pr\left(\mathrm{NODETS}_j^a \ | \ \Gamma_{j,K}^a \right) = 1$ for $a = u_j$ or $a = u_j + 1$.

\end{enumerate}
\end{lem}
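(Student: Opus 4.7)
The plan is to prove both parts deterministically on the stated conditioning events; no fresh concentration argument is required, which is precisely why part (2) may assert probability exactly one. Part (1) reduces to an algebraic identity about $\Phat_{(j+1),*}$ together with the contractive bound from Lemma \ref{zetadecay}; part (2) reduces to the deterministic output bounds furnished by case (3) of Lemma \ref{cslem}.

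\emph{Part (1).} On $\Gamma_{j,K}^{\hat u_j}$ the algorithm has set $\Phat_{(j+1),*} = [\Phat_{(j),*}\ \Phat_{(j),\new,K}]$, which is an orthonormal basis matrix because $\Phat_{(j),*}\perp\Phat_{(j),\new,K}$ (item~4 of the remarks following Definition \ref{def_Phat_starnew}). Writing $\bm{P}_{(j)} = [\Pjs\ \Pjnew]$ (an orthonormal basis by Model \ref{exp_model}) and using the column-block inequality
\[
\|(\I-UU')[A\ B]\|_2^2 \;\le\; \|(\I-UU')A\|_2^2 + \|(\I-UU')B\|_2^2,
\]
which follows in one line from the triangle inequality applied to $(\I-UU')(Ay+Bz)$ and Cauchy--Schwarz on $(\|y\|,\|z\|)$, the first block contributes at most $\zeta_{j,*}^2$ since $\Phat_{(j),*}$ is a sub-basis of $\Phat_{(j+1),*}$, and the second block contributes exactly $\zeta_{j,\new,K}^2$ by Definition \ref{difzeta}. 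On the conditioning event $\zeta_{j,*}\le \zeta_{j,*}^+ = (r_0+(j-1)r_\new)\zeta$ (from $\Gamma_{j-1,\rmend}$) and $\zeta_{j,\new,K}\le \zeta_{j,\new,K}^+$. The choice of $K$ in Theorem \ref{thm1} gives $0.83^K \le 0.16\, r_\new\zeta$, so Lemma \ref{zetadecay} yields $\zeta_{j,\new,K}^+ \le r_\new\zeta$. The elementary identity $(r_0+jr_\new)^2 \ge (r_0+(j-1)r_\new)^2 + r_\new^2$ then gives $\zeta_{j+1,*}^2 \le (\zeta_{j+1,*}^+)^2$, as required.

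\emph{Part (2).} Fix $u\in[\hat u_j+K+1,\, u_{j+1}-1]$. On $\Gamma_{j,K}^a$ the algorithm is in the detect phase throughout this interval and $\Phat_{u\alpha-1,*}=\Phat_{(j+1),*}$, so $\tfrac{1}{\alpha}\bm{\mathcal{D}}_u\bm{\mathcal{D}}_u{}' = \bm{\mathcal{M}}_u$ with projector $\bm{\Phi}:= \I-\Phat_{(j+1),*}\Phat_{(j+1),*}{}'$. Case (3) of Lemma \ref{cslem} furnishes, for each $t$ in the relevant range, the deterministic decomposition $\lhatt=\lt-\et$ with $\SE_t\le\zeta_{j+1,*}^+$ and $\|\et\|_2\le \phi^+\zeta_{j+1,*}^+\sqrt r\,\gamma$, hence
\[
\|\bm{\Phi}\lhatt\|_2 \;\le\; \zeta_{j+1,*}^+\sqrt r\,\gamma + \|\et\|_2 \;\le\; (1+\phi^+)\,\zeta_{j+1,*}^+\sqrt r\,\gamma.
\]
The sharp form of the hypothesis $\zeta\le \lammin/(r^3\gamma^2)$ (squared: $r^3\gamma^2\zeta^2\le \lammin\zeta$) bounds the right-hand side by $2.2\sqrt{\lammin\zeta}$. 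Weyl's inequality applied to the sum of rank-one PSD matrices $\bm\Phi\lhatt\lhatt{}'\bm\Phi$ then yields
\[
\lambda_{\max}\!\left(\tfrac1\alpha \bm{\mathcal{D}}_u\bm{\mathcal{D}}_u{}'\right) \;\le\; \max_{t\in\J_u}\|\bm\Phi\lhatt\|_2^2 \;\le\; 4.84\,\lammin\zeta \;<\; \tfrac{\lammin}{2} = \thresh,
\]
where the last strict inequality uses $\zeta\le 10^{-4}/r^2\le 10^{-4}$. This bound is deterministic on the conditioning event and holds uniformly over every admissible $u$, hence $\Pr(\mathrm{NODETS}_j^a\mid \Gamma_{j,K}^a) = 1$.

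\emph{Main obstacle.} The only non-routine step is extracting the $\sqrt{\lammin\zeta}$ bound in place of the cruder $\sqrt{\zeta}$ form displayed in Lemma \ref{cslem}: without the extra $\sqrt{\lammin}$ factor one cannot beat the threshold $\lammin/2$ deterministically. This is exactly what the hypothesis $\zeta\le \lammin/(r^3\gamma^2)$ in Theorem \ref{thm1} (as opposed to the companion bound $\zeta\le 1/(r^3\gamma^2)$) is designed to buy. Once this sharper form is invoked, the remainder is just Weyl's inequality, the column-block inequality, and the elementary algebraic identity in part~(1).
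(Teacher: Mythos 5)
Your proof is correct and follows essentially the same route as the paper: both parts are handled deterministically on the conditioning event, with part (2) being the identical computation (Lemma \ref{cslem} case 3 plus the bound $\zeta\le\lammin/(r^3\gamma^2)$ to beat the threshold $\lammin/2$). The only cosmetic difference is in part (1), where you use a Pythagorean block decomposition $\zeta_{j+1,*}^2\le\zeta_{j,*}^2+\zeta_{j,\new,K}^2$ together with $(r_0+jr_\new)^2\ge(r_0+(j-1)r_\new)^2+r_\new^2$, whereas the paper uses the cruder triangle inequality $\zeta_{j+1,*}\le\zeta_{j,*}+\zeta_{j,\new,K}$; both are valid.
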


\begin{lem}[Subspace change detected within $2\alpha$ frames] \label{det}
For $j = 1,\dots,J$,
\[
\Pr\left(\mathrm{DET}^{u_j+1} \ | \ \Gamma_{j-1,\rmend}, \overline{\mathrm{DET}^{u_j}} \right) \geq p_{\det,1} := 1 - p_{\bm{A}} - p_{\bm{\mathcal{H}}}.
\]
The definitions of $p_{\bm{A}}$ and $p_{\bm{\mathcal{H}}}$ can be found in Lemmas \ref{Ak} and \ref{calHk} respectively.
\end{lem}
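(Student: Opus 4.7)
The plan is to reduce the detection event at $u = u_j+1$ to a lower bound on $\lambda_{\max}(\bm{\mathcal{M}}_{u_j+1})$ and then to bound this eigenvalue using the decomposition $\bm{\mathcal{M}}_u = \bm{\mathcal{A}}_u + \bm{\mathcal{H}}_u$ from Definition \ref{defHk}, combined with Lemmas \ref{Ak} and \ref{calHk}. First, I would observe that under $\Gamma_{j-1,\rmend} \cap \overline{\mathrm{DET}^{u_j}}$ the algorithm is still in the $\mathrm{detect}$ phase at the start of the interval $\J_{u_j+1}$ and the last update of $\Phat_{t,*}$ occurred at $\that_{j-1} + K\alpha$, so $\Phat_{u\alpha-1,*} = \Phat_{(j),*}$ for $u = u_j+1$. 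By the remark after Definition \ref{defHk}, this yields $\tfrac{1}{\alpha}\bm{\mathcal{D}}_{u_j+1}\bm{\mathcal{D}}_{u_j+1}' = \bm{\mathcal{M}}_{u_j+1}$. Hence detection at $u_j+1$ is equivalent to $\lambda_{\max}(\bm{\mathcal{M}}_{u_j+1}) \ge \thresh = \lammin/2$.

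Next, I would exploit the block structure of $\bm{\mathcal{A}}_{u_j+1}$: since $[\bm{E}_{j,\new} \ \bm{E}_{j,\new,\perp}]$ is unitary, $\lambda_{\max}(\bm{\mathcal{A}}_{u_j+1}) \ge \lambda_{\min}(\bm{A}_{u_j+1})$. The interval $\J_{u_j+1}$ sits inside $[t_j, t_j+2\alpha]$ and therefore inside $[t_j, t_j+d]$ by the slow subspace change condition $d \ge (K+2)\alpha$ (Fact \ref{d_large}); thus for every $t \in \J_{u_j+1}$ the new directions $\bm{P}_{(j),\new}$ are already present in $\lt$ and their eigenvalues satisfy $\lambda_{\min}(\Lamtnew) \ge \lambda_{\new}^- \ge \lammin$ by \eqref{anew_small}. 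Lemma \ref{Ak} then implies $\lambda_{\min}(\bm{A}_{u_j+1}) \ge b_{\bm{A}}$ with probability at least $1 - p_{\bm{A}}$, where (as observed below the definition of $\zeta_{j,\new,k}^+$) $b_{\bm{A}}$ is only a small correction away from $\lambda_{\new}^-$.

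By Weyl's inequality, $\lambda_{\max}(\bm{\mathcal{M}}_{u_j+1}) \ge \lambda_{\max}(\bm{\mathcal{A}}_{u_j+1}) - \|\bm{\mathcal{H}}_{u_j+1}\|_2$. Applying Lemma \ref{calHk} gives $\|\bm{\mathcal{H}}_{u_j+1}\|_2 \le b_{\bm{\mathcal{H}}}$ with probability at least $1 - p_{\bm{\mathcal{H}}}$. Combining these two high-probability bounds via a union bound and the conditioning yields
\[
\lambda_{\max}(\bm{\mathcal{M}}_{u_j+1}) \;\ge\; b_{\bm{A}} - b_{\bm{\mathcal{H}}}
\]
with probability at least $1 - p_{\bm{A}} - p_{\bm{\mathcal{H}}}$ conditioned on $\Gamma_{j-1,\rmend} \cap \overline{\mathrm{DET}^{u_j}}$.

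The final and main obstacle is the numerical verification that $b_{\bm{A}} - b_{\bm{\mathcal{H}}} \ge \lammin/2$. Using $b_{\bm{A}} \approx \lambda_{\new}^- \ge \lammin$ (up to a $\bigO(\zeta_{j,*}^+{}^2 \lambda^+)$ correction that is controlled by the choice $\zeta \le 0.03\lammin/(r^2\lambda^+)$ from the theorem hypothesis) and the fact that $b_{\bm{\mathcal{H}}}$ carries the factor $\sqrt{\rho^2 h^+} \le 0.1$ from Model \ref{general_model} and can be further shrunk by the choice of $\alpha$ (which controls the Hoeffding deviation term inside Lemma \ref{calHk}), one obtains $b_{\bm{A}} - b_{\bm{\mathcal{H}}} \ge \lammin/2$. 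The bookkeeping is the same as in the analogous estimates used to define $\zeta_{j,\new,k}^+$ and parallels the computation outlined under that definition, so I would handle it by plugging in the assumed bounds on $\zeta$, $\lambda_\new^+ \le 3\lammin$, and $\rho^2 h^+ \le 0.01$, and collecting constants. This delivers the claimed lower bound $1 - p_{\bm{A}} - p_{\bm{\mathcal{H}}}$.
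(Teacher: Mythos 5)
Your proposal is correct and follows essentially the same route as the paper: reduce detection to $\lambda_{\max}(\bm{\mathcal{M}}_{u_j+1}) \ge \thresh$, apply Weyl's inequality with the decomposition $\bm{\mathcal{M}}_u = \bm{\mathcal{A}}_u + \bm{\mathcal{H}}_u$, invoke Lemmas \ref{Ak} and \ref{calHk} with a union bound, and check $b_{\bm{A}} - b_{\bm{\mathcal{H}},1} \ge \thresh$ numerically (the paper packages this check as Fact \ref{bounds_b_A}). The only cosmetic difference is that the paper first proves the bound for every $X_{u_j} \in \Gamma_{j-1,\rmend}$ and then passes to the conditioning on $\Gamma_{j-1,\rmend} \cap \overline{\mathrm{DET}^{u_j}}$ via Fact \ref{event}, a step you implicitly absorb into "the conditioning."
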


\begin{lem}[$k$-th iteration of pPCA works well] \label{pPCA}
\[
\Pr\left(\Gamma_{j,k}^a \ | \ \Gamma_{j,k-1}^a \right) = \Pr\left(\mathrm{PPCA}_{j,k}^a \ | \ \Gamma_{j,k-1}^{a} \right)  \geq p_{\mathrm{ppca}} := 1 - p_{\bm{A}} - p_{\bm{A},{\perp}} - p_{\bm{\mathcal{H}}}
\]
for $a = u_j$ or $a = u_{j}+1$.
The definitions of $p_{\bm{A}}$, $p_{\bm{A},{\perp}}$, and $p_{\bm{\mathcal{H}}}$ can be found in Lemmas \ref{Ak}, \ref{Akperp}, and \ref{calHk} respectively.
\end{lem}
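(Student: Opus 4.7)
\textbf{Proof proposal for Lemma \ref{pPCA}.}

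The plan is to reduce the event $\mathrm{PPCA}_{j,k}^a$ to deterministic conditions on the eigenvalues of $\bm{\mathcal{A}}_u$, $\bm{\mathcal{A}}_{u,\perp}$, and on $\|\bm{\mathcal{H}}_u\|_2$ for $u = a+k$, and then invoke the three auxiliary Lemmas \ref{Ak}, \ref{Akperp}, and \ref{calHk} to get the claimed probability. Conditioning on $\Gamma_{j,k-1}^a$ fixes $\hat{u}_j = a$, so $\bm{\Phi}_t = \bm{I} - \hat{\bm{P}}_{(j),*}\hat{\bm{P}}_{(j),*}{}' - \hat{\bm{P}}_{(j),\new,k-1}\hat{\bm{P}}_{(j),\new,k-1}{}'$ on $\J_{a+k}$, and, via Lemma \ref{cslem} (case 2), controls $\|\et\|_2$ for $t \in \J_{a+k}$ by the decaying bound $1.2(1.84\sqrt{\zeta} + 0.83^{k-1}\sqrt{r_\new}\gamma_\new)$. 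In particular, on this event $\frac{1}{\alpha}\bm{\mathcal{D}}_u\bm{\mathcal{D}}_u{}' = \bm{\mathcal{M}}_u = \bm{\mathcal{A}}_u + \bm{\mathcal{H}}_u$, and step \ref{PCA} of Algorithm \ref{reprocsdet} returns the eigenvectors of $\bm{\mathcal{M}}_u$ with eigenvalue above $\mathrm{thresh} = \lammin/2$.

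First I would invoke the three lemmas. Lemma \ref{Ak} guarantees $\lambda_{\min}(\bm{A}_u) \ge b_{\bm{A}}$ with probability at least $1 - p_{\bm{A}}$; Lemma \ref{Akperp} gives $\lambda_{\max}(\bm{A}_{u,\perp}) \le b_{\bm{A},\perp}$ with probability at least $1 - p_{\bm{A},\perp}$; and Lemma \ref{calHk} gives $\|\bm{\mathcal{H}}_u\|_2 \le b_{\bm{\mathcal{H}},k}$ with probability at least $1 - p_{\bm{\mathcal{H}}}$. Conditioned on $\Gamma_{j,k-1}^a$ and on the intersection of these three high-probability events, Weyl's inequality applied to $\bm{\mathcal{M}}_u = \bm{\mathcal{A}}_u + \bm{\mathcal{H}}_u$ yields
\begin{align*}
\lambda_{r_{j,\new}}(\bm{\mathcal{M}}_u) &\ge \lambda_{\min}(\bm{A}_u) - \|\bm{\mathcal{H}}_u\|_2 \ge b_{\bm{A}} - b_{\bm{\mathcal{H}},k}, \\
\lambda_{r_{j,\new}+1}(\bm{\mathcal{M}}_u) &\le \lambda_{\max}(\bm{A}_{u,\perp}) + \|\bm{\mathcal{H}}_u\|_2 \le b_{\bm{A},\perp} + b_{\bm{\mathcal{H}},k}.
\end{align*}
Using the quantitative bounds recalled just after the definition of $\zeta_{j,\new,k}^+$ together with the hypotheses on $\zeta$, $\lambda_\new^-$, and $\rho^2 h^+$ in Theorem \ref{thm1} (and Fact \ref{d_large}, which ensures $\J_{a+k}\subseteq[t_j,t_j+d]$ so that (\ref{anew_small}) applies), one checks that $b_{\bm{A}} - b_{\bm{\mathcal{H}},k} > \mathrm{thresh} > b_{\bm{A},\perp} + b_{\bm{\mathcal{H}},k}$. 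Hence exactly $r_{j,\new}$ eigenvalues of $\bm{\mathcal{M}}_u$ exceed $\mathrm{thresh}$, so $\rank(\hat{\bm{P}}_{(j),\new,k}) = r_{j,\new}$.

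Next I would derive the subspace error bound via the Davis--Kahan $\sin\theta$ theorem applied with the partition $(\bm{E}_{j,\new},\bm{E}_{j,\new,\perp})$ against the estimated pair $(\hat{\bm{P}}_{(j),\new,k}, \hat{\bm{P}}_{(j),\new,k,\perp})$. Because $\hat{\bm{P}}_{(j),*} \perp \hat{\bm{P}}_{(j),\new,k}$ and $\range(\bm{E}_{j,\new}) \subseteq \range(\bm{I}-\hat{\bm{P}}_{(j),*}\hat{\bm{P}}_{(j),*}{}')$, the quantity $\mathrm{dif}([\hat{\bm{P}}_{(j),*}\ \hat{\bm{P}}_{(j),\new,k}],\bm{P}_{(j),\new})$ reduces (up to the already controlled factor $\zeta_{j,*}$) to $\|(\bm{I}-\hat{\bm{P}}_{(j),\new,k}\hat{\bm{P}}_{(j),\new,k}{}')\bm{E}_{j,\new}\|_2$. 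The $\sin\theta$ bound gives
$$
\bigl\|(\bm{I}-\hat{\bm{P}}_{(j),\new,k}\hat{\bm{P}}_{(j),\new,k}{}')\bm{E}_{j,\new}\bigr\|_2 \le \frac{\|\bm{\mathcal{H}}_u\|_2}{\lambda_{r_{j,\new}}(\bm{\mathcal{M}}_u)-\lambda_{\max}(\bm{A}_{u,\perp})} \le \frac{b_{\bm{\mathcal{H}},k}}{b_{\bm{A}} - b_{\bm{A},\perp} - b_{\bm{\mathcal{H}},k}} = \zeta_{j,\new,k}^+,
$$
which is exactly the desired bound $\zeta_{j,\new,k}\le\zeta_{j,\new,k}^+$. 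A final union bound over the three failure events yields $\Pr(\mathrm{PPCA}_{j,k}^a\mid\Gamma_{j,k-1}^a) \ge 1-p_{\bm{A}}-p_{\bm{A},\perp}-p_{\bm{\mathcal{H}}}$.

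The main obstacle is not the combinatorial argument above but verifying the numerical inequality $b_{\bm{A}}-b_{\bm{A},\perp}-b_{\bm{\mathcal{H}},k} > 0$ (and in particular strictly above $\mathrm{thresh} = \lammin/2$) uniformly in $k$; this is what pins down all the constants in Theorem \ref{thm1}, and it is where the slow subspace change bound (\ref{anew_small}), the denseness bounds (\ref{kappa_dense}), and the support-change bound $\rho^2 h^+ \le 0.01$ are all used simultaneously. Everything else---tracking that $\hat{\bm{P}}_{(j),*}$ is fixed on $\J_{a+k}$, invoking Lemma \ref{cslem}, and applying Weyl and Davis--Kahan---is essentially routine once the three core concentration lemmas are in hand.
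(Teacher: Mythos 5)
Your proposal is correct and follows essentially the same route as the paper's proof: invoke Lemmas \ref{Ak}, \ref{Akperp} and \ref{calHk}, use Weyl's theorem together with the numerical bounds of Fact \ref{bounds_b_A} to show that exactly $r_{j,\new}$ eigenvalues of $\bm{\mathcal{M}}_u$ exceed $\thresh$ (hence $\hat{r}_{j,\new,k}=r_{j,\new}$), and then obtain $\zeta_{j,\new,k}\le\zeta_{j,\new,k}^+$ from the $\sin\theta$ bound (the paper packages this as Lemma \ref{zetakbnd}) with a final union bound. The only cosmetic difference is that the paper explicitly notes $b_{\bm{A}}>b_{\bm{A},\perp}$ first, so that $\lambda_{r_{j,\new}}(\bm{\mathcal{A}}_u)=\lambda_{\min}(\bm{A}_u)$ and $\lambda_{r_{j,\new}+1}(\bm{\mathcal{A}}_u)=\lambda_{\max}(\bm{A}_{u,\perp})$ before applying Weyl, a step you use implicitly.
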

The above lemma says that, conditioned on $k-1$ previous successful p-PCA steps and on accurate recovery of $P_{(j-1),*}$, the probability of correctly estimating $r_{j,\new}$ and of a successful $k^{\text{th}}$ projection PCA step is lower bounded by $p_{\mathrm{ppca}}$. This is true whether the new directions are detected at $u_j$ or at $u_j+1$.

\subsection{Proof of Theorem \ref{thm1}}
\begin{corollary}\label{GammaCor}
Let
\[
p_{\det,0} := \Pr\left( \mathrm{DET}^{u_j} \ | \  \Gamma_{j-1,\rmend} \right).
\]
%and therefore, $1 - p_{\det,0} = \Pr\left( \overline{\mathrm{DET}^{u_j}} \ | \ \Gamma_{j-1,\rmend} \right)$.
From the above lemmas, we get that
\begin{align*}
\Pr\left( \Gamma_{j,\rmend} \ | \ \Gamma_{j-1,\rmend} \right)
&= \Pr\Big( \big(\mathrm{DET}^{u_j} \cap\mathrm{PPCA}_{j,1}^{u_j} \cap\dots\cap \mathrm{PPCA}_{j,K}^{u_j}\cap\mathrm{NODETS}_j^{u_j}\big) \cup \\
&\hspace{-.36in} \big(\overline{\mathrm{DET}^{u_j}}\cap\mathrm{DET}^{u_j + 1} \cap \mathrm{PPCA}_{j,k}^{u_j+1}\cap\dots\cap \mathrm{PPCA}_{j,K}^{u_j+1}\cap\mathrm{NODETS}_j^{u_j+1}\big)   \ | \ \Gamma_{j-1,\rmend}\Big) \\
&= \Pr\left( \mathrm{DET}^{u_j} \cap\mathrm{PPCA}_{j,1}^{u_j} \cap\dots\cap \mathrm{PPCA}_{j,K}^{u_j} \ | \ \Gamma_{j-1,\rmend} \right) \\
&\quad + \Pr\left(\overline{\mathrm{DET}^{u_j}}\cap\mathrm{DET}^{u_j + 1} \cap \mathrm{PPCA}_{j,k}^{u_j+1}\cap\dots\cap \mathrm{PPCA}_{j,K}^{u_j+1} \ | \ \Gamma_{j-1,\rmend}\right)\\
&\geq p_{\det,0}\cdot (p_{\mathrm{ppca}})^K + (1-p_{\det,0})\cdot p_{\det,1}\cdot (p_{\mathrm{ppca}})^K \\
&\geq p_{\det,0}\cdot p_{\det,1}\cdot (p_{\mathrm{ppca}})^K + (1-p_{\det,0})\cdot p_{\det,1}\cdot (p_{\mathrm{ppca}})^K \\
&= p_{\det,1}\cdot (p_{\mathrm{ppca}})^K.
\end{align*}
\end{corollary}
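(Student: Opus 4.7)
The plan is to decompose the event $\Gamma_{j,\rmend}$ into the two disjoint cases encoded in its definition---$\hat{u}_j = u_j$ (subspace change detected already at frame $u_j$) and $\hat{u}_j = u_j+1$ (missed at $u_j$ but caught at $u_j+1$)---then compute the conditional probability of each by the chain rule, invoking the main lemmas for the individual factors.

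For the first case, I would unfold $\Gamma_{j,K}^{u_j} = \mathrm{DET}^{u_j} \cap \mathrm{PPCA}_{j,1}^{u_j} \cap \cdots \cap \mathrm{PPCA}_{j,K}^{u_j}$ via the recursive definition $\Gamma_{j,k}^a = \Gamma_{j,k-1}^a \cap \mathrm{PPCA}_{j,k}^a$, then apply the chain rule one factor at a time. The first factor equals $p_{\det,0}$ by definition; each of the $K$ successive p-PCA factors is at least $p_{\mathrm{ppca}}$ by Lemma~\ref{pPCA}; and the final NODETS factor is exactly $1$ by Lemma~\ref{falsedet}(2). This yields
\[
\Pr\!\left(\Gamma_{j,K}^{u_j} \cap \mathrm{NODETS}_j^{u_j} \,\big|\, \Gamma_{j-1,\rmend}\right) \;\geq\; p_{\det,0} \cdot (p_{\mathrm{ppca}})^K.
\]
For the second case, disjointness with the first forces $\overline{\mathrm{DET}^{u_j}}$ at the outset, contributing a factor $1 - p_{\det,0}$; Lemma~\ref{det} then provides $\Pr(\mathrm{DET}^{u_j+1} \,|\, \overline{\mathrm{DET}^{u_j}}, \Gamma_{j-1,\rmend}) \geq p_{\det,1}$; and the same p-PCA plus no-false-detection reasoning gives the lower bound $(1 - p_{\det,0}) \cdot p_{\det,1} \cdot (p_{\mathrm{ppca}})^K$.

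Summing the two disjoint contributions and using the elementary inequality $p_{\det,0} \geq p_{\det,0} \cdot p_{\det,1}$ (valid since $p_{\det,1} \in [0,1]$) to absorb the first case into the common factor, the total is bounded below by $p_{\det,0} \cdot p_{\det,1} \cdot (p_{\mathrm{ppca}})^K + (1 - p_{\det,0}) \cdot p_{\det,1} \cdot (p_{\mathrm{ppca}})^K = p_{\det,1} \cdot (p_{\mathrm{ppca}})^K$, as claimed.

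The main obstacle here is purely bookkeeping: ensuring that the conditioning event $\Gamma_{j,k-1}^{a}$ carried along at each chain-rule step matches precisely the hypothesis of Lemma~\ref{pPCA} for the given $a \in \{u_j, u_j+1\}$, and that the two branches in the definition of $\Gamma_{j,\rmend}$ are genuinely disjoint (which they are, since they pin different values of $\hat{u}_j$) so that probabilities may be added rather than union-bounded. A secondary subtlety, which motivates the whole decomposition, is that no lemma bounds $p_{\det,0}$ away from zero---detection might fail at $u_j$ simply because $t_j$ lies very late within $\mathcal{J}_{u_j}$, giving too few ``new-direction'' frames in that window---which is precisely why the second branch with its guaranteed detection at $u_j+1$ is needed to obtain the clean final bound.
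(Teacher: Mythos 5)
Your proposal is correct and follows essentially the same route as the paper: split $\Gamma_{j,\rmend}$ into the two disjoint branches $\hat{u}_j=u_j$ and $\hat{u}_j=u_j+1$, add their conditional probabilities, bound each by the chain rule using Lemmas \ref{pPCA}, \ref{det}, and \ref{falsedet}(2), and absorb the first branch via $p_{\det,0}\ge p_{\det,0}\,p_{\det,1}$. Your closing observation---that $p_{\det,0}$ is never bounded below, which is exactly why the second branch is needed---is the right reading of why the decomposition is structured this way.
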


\begin{proof}[{\bf \em Proof of Theorem \ref{thm1}}]
Theorem \ref{thm1} follows from Corollary \ref{GammaCor} and the assumed lower bound on $\alpha$.
Notice that by Lemma \ref{zetadecay}, the choice of $K$, and Lemma \ref{cslem}, the event $\Gamma_{J,\rmend}$ will imply all conclusions of the theorem.

By the first assumption (accurate initial subspace knowledge) and the argument used to prove Lemma \ref{falsedet}, we get that $\Pr(\Gamma_{0,\rmend})=1$.
By the chain rule,
$
\Pr(\Gamma_{J,\rmend}) = \prod_{j=1}^{J} \Pr(\Gamma_{j,\rmend} \ | \ \Gamma_{j-1,\rmend},\Gamma_{j-2,\rmend},\dots,\Gamma_{1,\rmend},\Gamma_{0,\rmend}).
$
Because $\Gamma_{j-1,\rmend}\subseteq\Gamma_{j-2,\rmend}\subseteq\dots\subseteq \Gamma_{1,\rmend}\subseteq\Gamma_{0,\rmend}$, we get
\begin{align*}
\Pr(\Gamma_{J,\rmend}) &= \prod_{j=1}^{J} \Pr(\Gamma_{j,\rmend} \ | \ \Gamma_{j-1,\rmend})\\
&\geq \prod_{j=1}^{J} p_{\det,1}\cdot (p_{\mathrm{ppca}})^K = (p_{\det,1}\cdot (p_{\mathrm{ppca}})^K)^{J}\\
&\geq 1 - n^{-10}
\end{align*}
The last line is by the lower bound on $\alpha$ assumed in the theorem and the fact that $p_{\det,1}\geq p_{\mathrm{ppca}}$.
\end{proof}

\subsection{Key Lemmas for Proving of Lemmas \ref{falsedet}, \ref{det}, and \ref{pPCA}}

Before proving the lemmas from the preceding subsection, we introduce several lemmas which will be used in the proofs.% Their statement requires the following definition.

The following lemma follows from the $\sin\theta$ theorem \cite{davis_kahan} and Weyl's theorem.  It is taken from \cite{ReProCS_IT}.
\begin{lem}[\cite{ReProCS_IT}, Lemma 6.9]\label{zetakbnd}
At $u = \hat{u}_j + k$,
if $\rank(\Phat_{(j),\new,k}) = r_{j,\new}$, and if $\lambda_{\min}(\bm{A}_u) - \|\bm{A}_{u,\perp}\|_2 - \|\bm{\mathcal{H}}_u\|_2 >0$, then
\begin{align} \label{zetakbound}
\zeta_{j,\new,k} \leq  \frac{\|\bm{\mathcal{H}}_u\|_2}{\lambda_{\min} (\bm{A}_u) - \|\bm{A}_{u,\perp}\|_2 - \|\bm{\mathcal{H}}_u\|_2}
\end{align}
\end{lem}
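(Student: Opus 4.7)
The plan is to recognize that $\bm{\mathcal{M}}_u = \bm{\mathcal{A}}_u + \bm{\mathcal{H}}_u$ is a symmetric perturbation of the block-diagonal matrix $\bm{\mathcal{A}}_u$, whose two diagonal blocks (in the orthonormal basis $[\bm{E}_{j,\new}\ \bm{E}_{j,\new,\perp}]$) are $\bm{A}_u$ and $\bm{A}_{u,\perp}$. Since the rank hypothesis forces the selected eigenspace $\Span(\hat{\bm{P}}_{(j),\new,k})$ to be precisely the top $r_{j,\new}$-dimensional invariant subspace of $\bm{\mathcal{M}}_u$, I can apply the Davis--Kahan $\sin\theta$ theorem to compare it against the top $r_{j,\new}$-dimensional invariant subspace of $\bm{\mathcal{A}}_u$, which is exactly $\Span(\bm{E}_{j,\new})$.

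The first step is to establish the eigengap needed by $\sin\theta$ using Weyl's inequality. Since $\bm{\mathcal{A}}_u$ is block-diagonal, its spectrum is $\sigma(\bm{A}_u)\cup\sigma(\bm{A}_{u,\perp})$, and the standing assumption $\lambda_{\min}(\bm{A}_u) - \|\bm{A}_{u,\perp}\|_2 - \|\bm{\mathcal{H}}_u\|_2 > 0$ guarantees that these two sets are separated, so the top $r_{j,\new}$ eigenvalues of $\bm{\mathcal{A}}_u$ coincide with the eigenvalues of $\bm{A}_u$. Weyl's theorem then gives that the bottom $n - r_{j,\new}$ eigenvalues of $\bm{\mathcal{M}}_u$ lie in $(-\infty,\, \|\bm{A}_{u,\perp}\|_2 + \|\bm{\mathcal{H}}_u\|_2]$ while the eigenvalues of the top block $\bm{A}_u$ of $\bm{\mathcal{A}}_u$ lie in $[\lambda_{\min}(\bm{A}_u),\infty)$. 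This yields a separation of at least $\delta := \lambda_{\min}(\bm{A}_u) - \|\bm{A}_{u,\perp}\|_2 - \|\bm{\mathcal{H}}_u\|_2$ and so the $\sin\theta$ theorem gives
\[
\|(\I - \hat{\bm{P}}_{(j),\new,k}\hat{\bm{P}}_{(j),\new,k}{}')\bm{E}_{j,\new}\|_2 \leq \frac{\|\bm{\mathcal{H}}_u\|_2}{\delta}.
\]

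The second step is to convert this bound on the angle between $\Span(\hat{\bm{P}}_{(j),\new,k})$ and $\Span(\bm{E}_{j,\new})$ into the desired bound on $\zeta_{j,\new,k} = \|(\I - \hat{\bm{P}}_{(j),*}\hat{\bm{P}}_{(j),*}{}' - \hat{\bm{P}}_{(j),\new,k}\hat{\bm{P}}_{(j),\new,k}{}')\bm{P}_{(j),\new}\|_2$. I would decompose $\bm{P}_{(j),\new} = \hat{\bm{P}}_{(j),*}\hat{\bm{P}}_{(j),*}{}'\bm{P}_{(j),\new} + \bm{E}_{j,\new}\bm{R}_{j,\new}$ using the QR decomposition of $\bm{D}_{j,\new}$. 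The first summand is annihilated by $(\I - \hat{\bm{P}}_{(j),*}\hat{\bm{P}}_{(j),*}{}')$, and using the algorithmic fact that $\hat{\bm{P}}_{(j),\new,k} \perp \hat{\bm{P}}_{(j),*}$ together with $\bm{E}_{j,\new} \perp \hat{\bm{P}}_{(j),*}$, the expression collapses to $(\I - \hat{\bm{P}}_{(j),\new,k}\hat{\bm{P}}_{(j),\new,k}{}')\bm{E}_{j,\new}\bm{R}_{j,\new}$. Submultiplicativity and $\|\bm{R}_{j,\new}\|_2 = \|\bm{D}_{j,\new}\|_2 \leq 1$ then finish the proof.

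I do not expect significant obstacles here: this is essentially a one-shot application of Weyl plus Davis--Kahan, with the only non-routine bookkeeping being the reduction from $\zeta_{j,\new,k}$ to a principal-angle quantity. The rank hypothesis is what ensures the algorithm actually selects the top-$r_{j,\new}$ invariant subspace of $\bm{\mathcal{M}}_u$ (so that $\sin\theta$ applies with the separation computed above), and the positivity assumption on $\delta$ is precisely the ``nonzero eigengap'' condition needed by Davis--Kahan.
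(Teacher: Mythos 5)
Your proof is correct and follows exactly the route the paper indicates for this lemma (which it imports from \cite{ReProCS_IT} with the remark that it ``follows from the $\sin\theta$ theorem and Weyl's theorem''): Weyl to place the top $r_{j,\new}$ eigenvalues of $\bm{\mathcal{M}}_u$ above $\lambda_{\min}(\bm{A}_u)-\|\bm{\mathcal{H}}_u\|_2$ and the complementary spectrum of $\bm{\mathcal{A}}_u$ below $\|\bm{A}_{u,\perp}\|_2$, then $\sin\theta$ with that gap, then the reduction of $\zeta_{j,\new,k}$ to $\|(\I-\Phat_{(j),\new,k}\Phat_{(j),\new,k}{}')\bm{E}_{j,\new}\bm{R}_{j,\new}\|_2$ using $\Phat_{(j),\new,k}\perp\Phat_{(j),*}$, $\bm{E}_{j,\new}\perp\Phat_{(j),*}$, and $\|\bm{R}_{j,\new}\|_2\le 1$. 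No gaps.
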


The next three lemmas each assert a high probability bound for one of the terms in \eqref{zetakbound}.
In the following lemmas, let
\beq \label{def_eps}
\epsilon = \frac{r_\new \zeta\lammin}{100}.
\eeq

\begin{lem}\label{Ak} %[similar to \cite[claim 1 of Lemma 6.11]{ReProCS_IT}]
Let $p_{\bm{A}} :=  r_\new \exp \left(\frac{-\alpha \zeta^2 (\lammin)^2}{8 \cdot 100^2 \cdot {\gamma_{\rmnew}}^4  } \right) + r_\new \exp \left( \frac{-\alpha {r_\new}^2 \zeta^2(\lammin)^2} {8 \cdot 100^2 \cdot 4^2}\right)$
%c \digamma(\alpha, \epsilon, r_{\new}{\gamma_{\rmnew}}^2)+ 3(r+r_{\new})\digamma(\alpha,\epsilon,2\sqrt{r_{\new}r}\gamma\gamma_{\new}) + (r+r_{\new})\digamma(\alpha, \epsilon,4\sqrt{r_{\new}r}\gamma\gamma_{\new})$
and
\[
b_{\bm{A}} := (1-(\zeta_{j,*}^+)^2) \lambda_{\rmnew}^-  - 2\epsilon .
\]
For $k = 1,\dots,K$,
\begin{align*}
\Pr\left( \lambda_{\min} \left(\bm{A}_{\hat{u}_j + k} \right)\geq b_{\bm{A}} \ \big| \ X_{\hat{u}_j+k-1} \right) \geq 1 -  p_{\bm{A}}
\end{align*}
for all $X_{\hat{u}_j+k-1}\in\Gamma_{j,k-1}^{\hat{u}_j}$ with $\hat{u}_j=u_j$ or $\hat{u}_j=u_j+1$.

The same bound holds for $\lambda_{\min} ( \bm{{A}}_{u_j + 1} )$ when we condition on $X_{u_j} \in \Gamma_{j-1,\rmend}$.
%Also,
%\begin{align*}
%\Pr\left( \lambda_{\min} \left(\bm{A}_{u_{j}+1} \right)\geq b_{\bm{A}} \ \big| \ X_{u_j} \right) \geq 1 -  p_{\bm{A}}
%\end{align*}
%for all $X_{u_j} \in \Gamma_{j-1,\rmend}$.
\end{lem}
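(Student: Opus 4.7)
The plan is to combine the EVD/QR geometry built into Definition~\ref{defHk} with the matrix Hoeffding inequality applied conditionally on $X_{\hat{u}_j+k-1}\in\Gamma_{j,k-1}^{\hat{u}_j}$. Under this conditioning $\hat{\bm{P}}_{(j),*}$, and hence $\bm{\Phi}_*:=\I-\hat{\bm{P}}_{(j),*}\hat{\bm{P}}_{(j),*}{}'$, $\bm{E}_{j,\new}$ and $\bm{R}_{j,\new}$, are all deterministic, while the $\lt$'s for $t\in\J_{\hat{u}_j+k}$ remain mutually independent. First I would decompose $\lt = \bm{P}_{(j),*}\bm{a}_{t,*}+\bm{P}_{(j),\new}\bm{a}_{t,\new}$ and, using the QR factorization $\bm{\Phi}_*\bm{P}_{(j),\new}=\bm{E}_{j,\new}\bm{R}_{j,\new}$, write
\[
\bm{E}_{j,\new}{}'\bm{\Phi}_*\lt=\bm{f}_t+\bm{g}_t,\qquad \bm{f}_t:=\bm{E}_{j,\new}{}'\bm{\Phi}_*\bm{P}_{(j),*}\bm{a}_{t,*},\quad \bm{g}_t:=\bm{R}_{j,\new}\bm{a}_{t,\new}.
\]
Expanding $\bm{A}_u=\tfrac1\alpha\sum_t(\bm{f}_t+\bm{g}_t)(\bm{f}_t+\bm{g}_t)'$ and discarding the PSD term $\tfrac1\alpha\sum \bm{f}_t\bm{f}_t'$ gives
\[
\lambda_{\min}(\bm{A}_u)\ \geq\ \lambda_{\min}\Bigl(\tfrac1\alpha\sum_t\bm{g}_t\bm{g}_t'\Bigr)-\Bigl\|\tfrac1\alpha\sum_t(\bm{f}_t\bm{g}_t'+\bm{g}_t\bm{f}_t')\Bigr\|_2,
\]
and it suffices to control each piece by $\epsilon$ on a high-probability event.

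For the diagonal part I would first note that $\bm{R}_{j,\new}{}'\bm{R}_{j,\new}=\I-\bm{P}_{(j),\new}{}'\hat{\bm{P}}_{(j),*}\hat{\bm{P}}_{(j),*}{}'\bm{P}_{(j),\new}$; since $\bm{P}_{(j),\new}\perp\bm{P}_{(j),*}$ and $\hat{\bm{P}}_{(j),*}$ has the same number of columns as $\bm{P}_{(j),*}$, one has $\|\bm{P}_{(j),\new}{}'\hat{\bm{P}}_{(j),*}\|_2\leq\zeta_{j,*}\leq\zeta_{j,*}^+$, so $\sigma_{\min}^2(\bm{R}_{j,\new})\geq 1-(\zeta_{j,*}^+)^2$. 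Combined with the block-diagonality of $\bm{\Lambda}_t$ from Model~\ref{exp_model} and Fact~\ref{d_large} (which places $\J_{\hat{u}_j+k}\subseteq[t_j,t_j+d]$ and so gives $\lambda_{\min}(\bm{\Lambda}_{t,\new})\geq\lambda_\new^-$), this yields $\lambda_{\min}\bigl(\E[\bm{g}_t\bm{g}_t'\mid X]\bigr)\geq(1-(\zeta_{j,*}^+)^2)\lambda_\new^-$. Since $\|\bm{g}_t\bm{g}_t'\|_2\leq\|\bm{a}_{t,\new}\|_2^2\leq r_\new\gamma_\new^2$ deterministically, the conditional matrix Hoeffding bound (Lemma~\ref{hoeffding_rec}) produces the first exponential term in $p_{\bm{A}}$, namely $r_\new\exp(-\alpha\epsilon^2/(8r_\new^2\gamma_\new^4))$.

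For the cross term I would use block-diagonality again: $\E[\bm{a}_{t,*}\bm{a}_{t,\new}{}'\mid X]=\bm{P}_{(j),*}{}'\bm{\Sigma}_t\bm{P}_{(j),\new}=\bm 0$, so $\E[\bm{f}_t\bm{g}_t'\mid X]=\bm 0$ and I can apply the zero-mean matrix Hoeffding (Lemma~\ref{hoeffding_nonzero}). The boundedness input is $\|\bm{f}_t\bm{g}_t'+\bm{g}_t\bm{f}_t'\|_2\leq 2\|\bm{f}_t\|_2\|\bm{g}_t\|_2$; using $\|\bm{f}_t\|_2\leq\zeta_{j,*}^+\sqrt{r}\gamma\leq\sqrt{\zeta}$ (which follows from $\zeta_{j,*}^+\leq r\zeta$ together with $\zeta\leq 1/(r^3\gamma^2)$ from the theorem) and $\|\bm{g}_t\|_2\leq\sqrt{r_\new}\gamma_\new$, the $\zeta$-bounds imply $\sqrt{\zeta\,r_\new\gamma_\new^2}\leq 1$ and the norm is bounded by an absolute constant. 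This produces the second exponential term in $p_{\bm{A}}$. A union bound with $\epsilon=r_\new\zeta\lammin/100$ and Weyl's inequality give $\lambda_{\min}(\bm{A}_{\hat{u}_j+k})\geq(1-(\zeta_{j,*}^+)^2)\lambda_\new^- - 2\epsilon = b_{\bm{A}}$ on an event of probability $\geq 1-p_{\bm{A}}$.

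The argument for $\bm{A}_{u_j+1}$ conditioned on $\Gamma_{j-1,\rmend}$ is identical, because on $\Gamma_{j-1,\rmend}$ we still have $\hat{\bm{P}}_{u\alpha-1,*}=\hat{\bm{P}}_{(j),*}$ with $\zeta_{j,*}\leq\zeta_{j,*}^+$, and the new-direction $\bm{a}_{t,\new}$'s for $t\in\J_{u_j+1}$ are present (they appear at $t_j\in\J_{u_j}$) and satisfy (\ref{anew_small}) because $\J_{u_j+1}\subseteq[t_j,t_j+d]$ as well. The main delicate point, and the one I would verify most carefully, is the control of $\sigma_{\min}(\bm{R}_{j,\new})$ and the block-diagonality of the conditional covariance, because these are what allow the noise floor from $\bm{a}_{t,*}$ to be absorbed into the tiny $\epsilon$ rather than contaminating the leading term $\lambda_\new^-$; everything else is a routine application of matrix Hoeffding whose constants are engineered exactly to produce the two exponentials in $p_{\bm{A}}$.
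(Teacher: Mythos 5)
Your proposal is correct and follows essentially the same route as the paper's proof: the same split of $\bm{E}_{j,\new}{}'\tilde{\bm{\ell}}_t$ into the $\bm{R}_{j,\new}\bm{a}_{t,\new}$ part and the $\bm{E}_{j,\new}{}'\bm{D}_*\bm{a}_{t,*}$ part, discarding the PSD $*$-term, using $\sigma_{\min}^2(\bm{R}_{j,\new})\geq 1-(\zeta_{j,*}^+)^2$ (via Ostrowski in the paper) and the zero-mean cross term, and applying the conditional matrix Hoeffding bounds with the same constants. One small correction: the diagonal term also requires the Hermitian Corollary \ref{hoeffding_nonzero} (not Corollary \ref{hoeffding_rec}, which only bounds a norm and cannot give a lower bound on $\lambda_{\min}$), as your own constant $8r_\new^2\gamma_\new^4$ already reflects.
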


\begin{lem}\label{Akperp} %[similar to \cite[claim 2 of Lemma 6.11]{ReProCS_IT}]
Let $p_{\bm{A},{\perp}} := (n-r_\new) \exp \left(\frac{-\alpha {r_\new}^2 \zeta^2 (\lammin)^2}{8 \cdot 100^2}\right)$
%r\digamma\left(\alpha,\epsilon,(\zeta_{j,*}^+)^2r\gamma^2\right)$
and
\[
b_{\bm{A},\perp} := (\zeta_{j,*}^+)^2 \lambda^+ + \epsilon.
\]

For $k = 1,\dots,K$,
\begin{align*}
\Pr\left( \lambda_{\max}\left(\bm{A}_{\hat{u}_j + k,\perp} \right) \leq b_{\bm{A},\perp}  \ \big| \ X_{\hat{u}_j+k-1} \right)\geq 1 - p_{\bm{A},{\perp}}
\end{align*}
for all $X_{\hat{u}_j+k-1}\in\Gamma_{j,k-1}^{\hat{u}_j}$ with $\hat{u}_j=u_j$ or $\hat{u}_j=u_j+1$.

The same bound holds for $\lambda_{\max} ( \bm{{A}}_{u_j + 1,\perp} )$ when we condition on $X_{u_j} \in \Gamma_{j-1,\rmend}$.
\end{lem}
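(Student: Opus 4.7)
The plan is to exploit the identity $\bm{E}_{j,\new,\perp}{}'\bm{E}_{j,\new} = 0$ to annihilate the new-subspace contribution inside $\bm{A}_{u,\perp}$, reducing it to a sum of conditionally independent, bounded positive semi-definite rank-one matrices that depend only on the ``old-direction'' coefficients $\ats$, and then to apply a matrix Hoeffding bound in the same spirit as the proof of Lemma~\ref{Ak}.

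Set $\bm{\Phi}_0 := \I - \Phat_{(j),*}\Phat_{(j),*}{}'$. For $t \in \J_u$ -- which under either conditioning event of the lemma lies in $[t_j, t_{j+1})$, so that $\lt = \Pjs \ats + \Pjnew \atnew$ -- the identity $\bm{\Phi}_0 \Pjnew = \bm{D}_{j,\new} = \bm{E}_{j,\new}\bm{R}_{j,\new}$ together with $\bm{E}_{j,\new,\perp}{}'\bm{E}_{j,\new} = 0$ gives
\[
\bm{E}_{j,\new,\perp}{}'\bm{\Phi}_0 \lt \;=\; \bm{E}_{j,\new,\perp}{}'\bm{\Phi}_0 \Pjs \ats \;=:\; \bm{z}_t,
\]
so $\bm{A}_{u,\perp} = \frac{1}{\alpha}\sum_{t \in \J_u} \bm{z}_t \bm{z}_t{}'$. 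Conditioned on $X_{\hat u_j + k - 1} \in \Gamma_{j,k-1}^{\hat u_j}$ (or $X_{u_j} \in \Gamma_{j-1,\rmend}$ with $u = u_j + 1$), the basis matrices $\Phat_{(j),*}$, $\bm{E}_{j,\new,\perp}$, and $\Pjs$ are deterministic, and $\|\bm{E}_{j,\new,\perp}{}'\bm{\Phi}_0 \Pjs\|_2 \le \|\bm{\Phi}_0 \Pjs\|_2 = \zeta_{j,*} \le \zeta_{j,*}^+$. Combined with $\|\ats\|_2 \le \sqrt{r}\gamma$, this gives the deterministic per-summand bound $\|\bm{z}_t \bm{z}_t{}'\|_2 \le (\zeta_{j,*}^+)^2 r \gamma^2$. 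Because the $\lt$'s are mutually independent across time (Model~\ref{exp_model}), the $\bm{z}_t$'s are conditionally independent and $\E[\ats\ats{}' \mid \cdot] = \bm{\Lambda}_{t,*}$ with $\|\bm{\Lambda}_{t,*}\|_2 \le \lambda^+$, so
\[
\big\|\E[\bm{A}_{u,\perp}\mid X_{\hat u_j + k - 1}]\big\|_2 \;\le\; (\zeta_{j,*}^+)^2 \lambda^+.
\]

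Now use Weyl's inequality $\lambda_{\max}(\bm{A}_{u,\perp}) \le \|\E\bm{A}_{u,\perp}\|_2 + \lambda_{\max}(\bm{A}_{u,\perp} - \E\bm{A}_{u,\perp})$ together with a one-sided matrix Hoeffding bound (for the Hermitian zero-mean sum, dimension $n - r_j$) at deviation $\epsilon = r_\new \zeta \lammin/100$ to obtain
\[
\Pr\!\big(\lambda_{\max}(\bm{A}_{u,\perp}) \le (\zeta_{j,*}^+)^2\lambda^+ + \epsilon \,\big|\, \cdot\big) \;\ge\; 1 - (n - r_j)\exp\!\left(\frac{-\alpha \epsilon^2}{8\,((\zeta_{j,*}^+)^2 r \gamma^2)^2}\right).
\]
The theorem's bounds $\zeta_{j,*}^+ \le r\zeta$ and $\zeta \le 1/(r^3\gamma^2)$ force $(\zeta_{j,*}^+)^2 r \gamma^2 \le 1$, so the exponent is at least $\alpha r_\new^2 \zeta^2 (\lammin)^2 / (8 \cdot 100^2)$. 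Bounding $(n - r_j) \le (n - r_\new)$ (which holds once $r_\new \le r_j$; otherwise the dimension is at most $n$ and the constant is absorbed) recovers exactly $p_{\bm{A},\perp}$. The case $X_{u_j} \in \Gamma_{j-1,\rmend}$ with $u = u_j + 1$ is identical, since the only randomness that survives the conditioning is the still-independent family $\{\ats : t \in \J_{u_j + 1}\}$.

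The only mild obstacle is bookkeeping rather than insight: one must verify that conditioning on $\Gamma_{j,k-1}^{\hat u_j}$ or $\Gamma_{j-1,\rmend}$ -- events measurable with respect to $\lt$'s at strictly earlier times -- preserves both the conditional independence of the $\ats$'s for $t \in \J_u$ and the covariance identity $\E[\ats\ats{}' \mid \cdot] = \bm{\Lambda}_{t,*}$. Both are immediate from the mutual temporal independence of the $\lt$'s in Model~\ref{exp_model}.
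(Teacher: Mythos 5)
Your proposal is correct and follows essentially the same route as the paper: both exploit $\bm{E}_{j,\new,\perp}{}'\bm{D}_{j,\new}=\bm{0}$ to reduce $\bm{A}_{u,\perp}$ to $\frac{1}{\alpha}\sum_t \bm{Z}_t$ with $\bm{Z}_t$ depending only on $\bm{a}_{t,*}$, then bound $\|\bm{Z}_t\|_2 \le (\zeta_{j,*}^+)^2 r\gamma^2 \le \zeta$ and $\|\E[\bm{A}_{u,\perp}\mid\cdot]\|_2 \le (\zeta_{j,*}^+)^2\lambda^+$, and conclude via the conditional matrix Hoeffding inequality (Corollary \ref{hoeffding_nonzero}) with $\epsilon = r_\new\zeta\lammin/100$. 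Your Weyl-plus-centered-Hoeffding phrasing and the dimension bookkeeping $(n-r_j)$ vs. $(n-r_\new)$ are only cosmetic differences from the paper's presentation.
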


\begin{lem}\label{calHk}
Let %$p_{\bm{\mathcal{H}}}:= ??$
\begin{align*}
p_{\bm{\mathcal{H}}}:=  & \ n \exp\left(\frac{-\alpha {r_\new}^2\zeta^2 (\lammin)^2}{32\cdot 100^2 (\phi^+)^2 (\sqrt{\zeta}+\sqrt{r_\new}\gamma_{\rmnew})^4}\right) + n \exp\left(\frac{-\alpha {r_\new}^2\zeta^2 (\lammin)^2}{8 \cdot 100^2 \Big( \phi^+ (\sqrt{\zeta}+\sqrt{r_\new}\gamma_{\rmnew}) \Big)^4}\right)
+    \\
& n\exp\left(\frac{-\alpha {r_\new}^2 \zeta^2 (\lammin)^2 }{32 \cdot 100^2 ( \zeta + \sqrt{\zeta} \sqrt{r_\new} \gamma_\new )^2}\right).
%&\leq 3 n \exp\left(\frac{-\alpha r_\new^2\zeta^2 (\lammin)^2}{8 \cdot 100^2 \Big( \phi^+ (\sqrt{\zeta}+\sqrt{r_\new}\gamma_{\rmnew}) \Big)^4}\right).
\end{align*}
%:=  n\digamma\left(\alpha,\epsilon,\left[\phi^+ \left(\zeta_{j,*}^+\sqrt{r}\gamma + \sqrt{r_{\new}}\gamma_{\new}\right)\right]^2 \right)  + r\digamma\left(\alpha,\epsilon, r\gamma^2 \right) +  r_{\new}\digamma\left(\alpha,\epsilon, r_{\new}{\gamma_{\new}}^2 \right) + 3(r+r_{\new})\digamma(\alpha,\epsilon,2\sqrt{r_{\new}r}\gamma\gamma_{\new}) + (r+r_{\new})\digamma(\alpha, \epsilon,4\sqrt{r_{\new}r}\gamma\gamma_{\new}) + r_{\new}\digamma\left(\alpha,\epsilon, r{\gamma}^2 \right) +  3(r+r_{\new})\digamma(\alpha,\epsilon,2\sqrt{r_{\new}r}\gamma\gamma_{\new}) + (r+r_{\new})\digamma(\alpha, \epsilon,4\sqrt{r_{\new}r}\gamma\gamma_{\new})$.
and
\[
b_{\bm{\mathcal{H}},k} := 2b_{\l \e,k} + b_{\e \e,k} + 2b_{\bm{F}}  % b_{\e \e,k} +2b_{\l \e,k}+ 2b_{\bm{F}},
\]
where
\begin{align*}
b_{\l \e,k} := \begin{cases}
\begin{array}{l}
\phi^+ \big( \sqrt{\rho^2h^+} (\zeta_{j,*}^+)^2 \lambda^+  +
  \kappa_s^+ \lambda_{\rmnew}^+ \big)  +
 \epsilon
\end{array} & k=1
\\
\begin{array}{l}
\Big[ (\zeta_{j,*}^+)^2 \lambda^+  + \zeta_{j,\new,k-1}^+ \lambda_{\rmnew}^+
\Big] \Big(\sqrt{\rho^2h^+}\phi^+\Big) +  \epsilon
\end{array} & k\geq2
\end{cases}
\end{align*}
\begin{align*}
b_{\e \e,k} :=
\begin{cases}
\begin{array}{l}
\rho^2h^+(\phi^+)^2 \big( (\zeta_{j,*}^+)^2 \lambda^+ + (\kappa_s^+)^2 \lambda_{\new}^+  \big) +\epsilon
\end{array} & k = 1
\\
\begin{array}{l}
\rho^2h^+(\phi^+)^2 \big( (\zeta_{j,*}^+)^2(\lambda^+ ) + (\zeta_{j,\new,k-1}^+)^2(\lambda_{\new}^+ ) \big)
+\epsilon
\end{array} & k \geq 2
\end{cases}
\end{align*}
and
\[
b_{\bm{F}} := (\zeta_{j,*}^+)^2 \lambda^+  + \epsilon.
\]
For $k = 1,\dots,K$,
\begin{align}
&\Pr \left(\|\bm{\mathcal{H}}_{\hat{u}_j + k}\|_2 \leq b_{\bm{\mathcal{H}},k}  \ \big| \ X_{\hat{u}_j+k-1}\right)\geq 1 - p_{\bm{\mathcal{H}}}
\label{normHk}
\end{align}
for all $X_{\hat{u}_j+k-1}\in\Gamma_{j,k-1}^{\hat{u}_j}$ with $\hat{u}_j=u_j$ or $\hat{u}_j=u_j+1$

The same bound ($k=1$ case), i.e. $\|\bm{\mathcal{H}}_{u_j + 1}\|_2 \le b_{\bm{\mathcal{H}},1}$, also holds with the same probability when we condition on $X_{u_j} \in \Gamma_{j-1,\rmend}$.
\end{lem}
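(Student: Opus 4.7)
\smallskip
\noindent\textbf{Proof plan for Lemma \ref{calHk}.}

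The plan is to decompose $\bm{\mathcal{H}}_u = \bm{\mathcal{M}}_u - \bm{\mathcal{A}}_u$ into three structurally different pieces, bound each by a combination of matrix Hoeffding concentration and the block-banded structural Lemma \ref{blockdiag1}, and then sum the bounds. Set $\bm{\Phi}_*:=\bm{I}-\hat{\bm{P}}_{(j),*}\hat{\bm{P}}_{(j),*}{}'$ and substitute $\hat{\bm{\ell}}_t=\bm{\ell}_t-\bm{e}_t$ in the definition of $\bm{\mathcal{M}}_u$. Expanding the outer product and writing the $\bm{\ell}_t\bm{\ell}_t{}'$-term in the orthonormal basis $[\bm{E}_{j,\new}\ \bm{E}_{j,\new,\perp}]$, the diagonal blocks cancel against $\bm{\mathcal{A}}_u$, leaving only an off-diagonal cross-term $\bm{F}$. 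Thus
\[
\bm{\mathcal{H}}_u \;=\; -\bm{H}_{\bm{\ell}\bm{e}} - \bm{H}_{\bm{\ell}\bm{e}}{}' + \bm{H}_{\bm{e}\bm{e}} + \bm{F},
\]
where $\bm{H}_{\bm{\ell}\bm{e}}:=\bm{\Phi}_*(\tfrac{1}{\alpha}\sum_{t\in\J_u}\bm{\ell}_t\bm{e}_t{}')\bm{\Phi}_*$, $\bm{H}_{\bm{e}\bm{e}}:=\bm{\Phi}_*(\tfrac{1}{\alpha}\sum_t\bm{e}_t\bm{e}_t{}')\bm{\Phi}_*$, and $\bm{F}$ is the cross block. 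By the triangle inequality $\|\bm{\mathcal{H}}_u\|_2 \le 2\|\bm{H}_{\bm{\ell}\bm{e}}\|_2 + \|\bm{H}_{\bm{e}\bm{e}}\|_2 + 2\|\bm{F}\|_2$, which matches the shape of $b_{\bm{\mathcal{H}},k}$.

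For $\|\bm{F}\|_2$, split $\bm{F}$ into $\E[\bm{F}\mid X_{\hat u_j+k-1}]$ plus a fluctuation. Independence of $\bm{\ell}_t$ across $t$ makes $\E[\bm{\ell}_t\bm{\ell}_t{}']=\bm{\Sigma}_t$, and since $\bm{E}_{j,\new,\perp}$ is orthogonal to the range of $\bm{D}_{j,\new}$, the only part of $\bm{\Sigma}_t$ that contributes to the expectation lives along $\bm{P}_{(j),*}$; combined with $\|\bm{\Phi}_*\bm{P}_{(j),*}\|_2=\zeta_{j,*}\le\zeta_{j,*}^+$ this yields the deterministic bound $(\zeta_{j,*}^+)^2\lambda^+$. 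The fluctuation is bounded by the matrix Hoeffding inequality of \cite{tail_bound}, using $\|\bm{\ell}_t\|_2\le\sqrt{r}\gamma$ and independence conditional on $X_{\hat u_j+k-1}$. Together these give $\|\bm{F}\|_2\le b_{\bm{F}}$ with high probability.

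The terms $\|\bm{H}_{\bm{\ell}\bm{e}}\|_2$ and $\|\bm{H}_{\bm{e}\bm{e}}\|_2$ are the heart of the argument. Lemma \ref{cslem} lets us write $\bm{e}_t=\bm{I}_{\T_t}\bm{M}_t\bm{I}_{\T_t}{}'\bm{\Phi}_t\bm{\ell}_t$ with $\bm{M}_t:=[(\bm{\Phi}_t)_{\T_t}{}'(\bm{\Phi}_t)_{\T_t}]^{-1}$ and $\|\bm{M}_t\|_2\le\phi^+$. Plugging this in, $\bm{H}_{\bm{\ell}\bm{e}}=\tfrac{1}{\alpha}\sum_t \bm{\Phi}_*\bm{\ell}_t\bm{\ell}_t{}'\bm{\Phi}_t\bm{I}_{\T_t}\bm{M}_t{}'\bm{I}_{\T_t}{}'\bm{\Phi}_*$ and $\bm{H}_{\bm{e}\bm{e}}=\tfrac{1}{\alpha}\sum_t\bm{\Phi}_*\bm{I}_{\T_t}\bm{B}_t\bm{I}_{\T_t}{}'\bm{\Phi}_*$ for a PSD $\bm{B}_t:=\bm{M}_t\bm{I}_{\T_t}{}'\bm{\Phi}_t\bm{\ell}_t\bm{\ell}_t{}'\bm{\Phi}_t\bm{I}_{\T_t}\bm{M}_t$. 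In each case I split into $\E[\cdot\mid X_{\hat u_j+k-1}]$ plus a fluctuation. The fluctuation is controlled by matrix Hoeffding, using the $\ell_2$ bounds on $\bm{\ell}_t$ and $\bm{e}_t$ from Lemma \ref{cslem} (which give the exponents appearing in $p_{\bm{\mathcal{H}}}$ with factors $\sqrt{\zeta}+\sqrt{r_\new}\gamma_\new$). For the conditional expectation, the deterministic sum over $t$ has exactly the sparse, block-banded structure required by Lemma \ref{blockdiag1}: each summand is $\bm{\Phi}_*\bm{I}_{\T_t}\bm{A}_t\bm{I}_{\T_t}{}'\bm{\Phi}_*$ (possibly times a fixed matrix on one side), with $\|\bm{A}_t\|_2$ bounded by the quantity in square brackets in $b_{\bm{\ell}\bm{e},k}$ or $b_{\bm{e}\bm{e},k}$. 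Lemma \ref{blockdiag1} therefore pulls out the crucial $\rho^2 h^+$ (or $\sqrt{\rho^2 h^+}$ after taking a square root in the mixed term) factor that makes $b_{\bm{\mathcal{H}},k}$ small enough to beat $b_{\bm{A}}-b_{\bm{A},\perp}$ in Lemma \ref{zetakbnd}. The split between the $k=1$ and $k\ge 2$ expressions arises solely from how we bound $\|\bm{\Phi}_t\bm{P}_{(j),\new}\bm{a}_{t,\new}\|_2$ inside $\bm{A}_t$: for $k\ge 2$ we have $\bm{\Phi}_t\bm{P}_{(j),\new}$ of norm at most $\zeta_{j,\new,k-1}^+$; for $k=1$, $\bm{\Phi}_t$ is only a projection off $\hat{\bm{P}}_{(j),*}$, so $\bm{I}_{\T_t}{}'\bm{\Phi}_t\bm{P}_{(j),\new}=\bm{I}_{\T_t}{}'\bm{D}_{j,\new}$ must instead be bounded by the denseness constant $\kappa_s^+$, producing $(\kappa_s^+)^2\lambda_\new^+$ in place of $(\zeta_{j,\new,k-1}^+)^2\lambda_\new^+$.

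Summing the three high-probability bounds and taking a union bound over the three matrix Hoeffding invocations (each contributing one of the three terms in $p_{\bm{\mathcal{H}}}$, where the dimension in the tail bound is $n$ or $n-r_\new$ depending on which operator we concentrate) gives the stated inequality, and the argument is identical when we instead condition on $X_{u_j}\in\Gamma_{j-1,\rmend}$ (the $k=1$ case, $u=u_j+1$) because that event also fixes $\hat{\bm{P}}_{(j),*}$ and makes $\bm{\ell}_t$ for $t\in\J_{u_j+1}$ mutually independent. The main obstacle I expect is conditioning hygiene: I need to condition not just on $X_{\hat u_j+k-1}$ but also on the (deterministic) supports $\T_t$ so that the summands in each Hoeffding sum are (conditionally) independent matrices; I must then verify that the conditional expectations actually simplify to the quantities to which Lemma \ref{blockdiag1} applies, i.e. that $\E[\bm{\ell}_t\bm{\ell}_t{}'\mid X_{\hat u_j+k-1}]=\bm{\Sigma}_t$ still holds and that the $\bm{M}_t, \bm{\Phi}_t$ factors are measurable with respect to the conditioning so that they can be treated as deterministic coefficients. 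Once this bookkeeping is done carefully the pieces slot together and yield $b_{\bm{\mathcal{H}},k}$ as written.
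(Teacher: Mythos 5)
Your plan is correct and follows essentially the same route as the paper's proof: the same three-term decomposition of $\bm{\mathcal{H}}_u$ (into the $\bm{\ell}_t\bm{e}_t{}'$ cross term, the $\bm{e}_t\bm{e}_t{}'$ term, and the off-diagonal $\bm{F}_t$ term), matrix Hoeffding for the fluctuations, Lemma \ref{blockdiag1} (via a Cauchy--Schwarz-for-matrices step, which is your ``square root in the mixed term'') for the conditional expectations, and Lemma \ref{Dnew0_lem} to replace $\zeta_{j,\new,0}^+$ by $\kappa_s^+$ when $k=1$. The conditioning issue you flag is resolved exactly as you anticipate, since the $\T_t$ are deterministic under the support model and $\bm{\Phi}_t$ is a function of $X_{\hat{u}_j+k-1}$ on the interval in question.
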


The above lemmas are proved in the next section (Section \ref{3_pfs}). The proofs use Fact \ref{d_large}.

\subsection{Proofs of Lemmas \ref{falsedet}, \ref{det}, and \ref{pPCA}}

\begin{proof}[Proof of Lemma \ref{falsedet}]
Recall that $\Gamma_{j,\rmend}:= \Big(\Gamma_{j,K}^{u_j} \cap \mathrm{NODETS}_j^{u_j}\Big) \cup \left(\Gamma_{j,K}^{u_j+1} \cap \mathrm{NODETS}_j^{u_j+1}\right)$.
\begin{enumerate}
\item By the definition of $\Gamma_{j,K}^{\hat{u}_j}$, both for $\uhat_j = u_j$ and $\uhat_j = u_j + 1$, $\zeta_{j,*}\leq \zeta_{j,*}^+ = (r_0 +(j-1)r_{\new})\zeta$ and $\zeta_{j,K}\leq\zeta_{j,\new,K}^+$. Lemma \ref{zetadecay} and the choice of $K$ imply that $\zeta_{j,\new,K}^+\leq r_{\new}\zeta$.  Thus, $\zeta_{j+1,*}\leq \zeta_{j,*} + \zeta_{j,\new,k} \le \zeta_{j+1,*}^+ = (r_0 + jr_{\new})\zeta$.
\item
$\Pr(\mathrm{NODETS}_j^{\uhat_j} \ | \ \Gamma_{j,K}^{\uhat_j} )
= \Pr\Big(\lambda_{\max}( \frac{1}{\alpha}\bm{\mathcal{D}}_u \bm{\mathcal{D}}_u ) < \thresh  \text{ for all } u \in [\uhat_j + K +1, u_{j+1}-1] \ | \ \Gamma_{j,K}^{\uhat_j}\Big)$
for $\uhat_j = u_j$ or $\uhat_j = u_j + 1$.
\end{enumerate}

%\bm{\mathcal{M}}_u %\frac{1}{\alpha} \sum_{t \in \J_u} \bm{\mathcal{D}}_u \bm{\mathcal{D}}_u

As shown in 1), $\Gamma_{j,K}^{\uhat_j}$ implies that $\mathrm{dif}(\hat{\bm{P}}_{(j+1),*},\bm{P}_{(j+1),*})\leq \zeta_{j+1,*}^+ = (r_0 +jr_{\new})\zeta$. Recall that $\bm{P}_{(j+1),*} = \bm{P}_{(j)}$. Also, for $u\in[\hat{u}_j + K + 1,u_{j+1} -1]$, $\Phat_{u\alpha-1,*} = \Phat_{(j+1),*}$. Also, for all  $t \in \J_u$ for these $u$'s, $\lt=\bm{P}_{(j)}\at = \bm{P}_{(j+1),*} \at$.  Therefore,
\begin{align*}
\lambda_{\max} \left( \frac{1}{\alpha}  \bm{\mathcal{D}}_u \bm{\mathcal{D}}_u \right) &= \lambda_{\max}\left(\frac{1}{\alpha}\sum_{t\in\mathcal{J}_u} (\I - \hat{\bm{P}}_{u\alpha-1,*}\hat{\bm{P}}_{u\alpha-1,*}{}') \lhat_t\lhat_t{}'(\I - \hat{\bm{P}}_{u\alpha-1,*}\hat{\bm{P}}_{u\alpha-1,*}{}')\right)\\
&= \lambda_{\max}\left( \frac{1}{\alpha}\sum_{t\in\mathcal{J}_u} (\I - \hat{\bm{P}}_{(j+1),*}\hat{\bm{P}}_{(j+1),*}{}') (\bm{P}_{(j)}\at-\et)(\bm{P}_{(j)}\at-\et){}'(\I - \hat{\bm{P}}_{(j+1),*}\hat{\bm{P}}_{(j+1),*}{}') \right) \\
%\\
%&\leq \frac{1}{\alpha}\sum_{t\in\J_u} \Bigg[ \lambda_{\max}\left((\I - \hat{\bm{P}}_{(j+1),*}\hat{\bm{P}}_{(j+1),*}{}')\bm{P}_{(j)}\at\at'{\bm{P}_{(j)}}'(\I - \hat{\bm{P}}_{(j+1),*}\hat{\bm{P}}_{(j+1),*}{}')\right) + \\
%&\hspace{.8in} 2 \| (\I - \hat{\bm{P}}_{(j+1),*}\hat{\bm{P}}_{(j+1),*}{}')\bm{P}_{(j)}\at{\et}'(\I - \hat{\bm{P}}_{(j+1),*}\hat{\bm{P}}_{(j+1),*}{}')\|_2 + \\
%&\hspace{.8in} \lambda_{\max}\left((\I - \hat{\bm{P}}_{(j+1),*}\hat{\bm{P}}_{(j+1),*}{}')\et{\et}'(\I - \hat{\bm{P}}_{(j+1),*}\hat{\bm{P}}_{(j+1),*}{}')\right) \Bigg]\\
%&\leq (\zeta_{j+1,*}^+)^2r\gamma^2 + 2(\phi^+)(\zeta_{j+1,*}^+)^2r\gamma^2 + (\phi^+)^2(\zeta_{j+1,*}^+)^2r\gamma^2\\
&\leq (\zeta_{j+1,*}^+)^2r\gamma^2 + 2\phi^+(\zeta_{j+1,*}^+)^2r\gamma^2 + (\phi^+)^2(\zeta_{j+1,*}^+)^2r\gamma^2 \\
&\leq 4(\phi^+)^2\zeta\lammin  \leq \frac{\lammin}{2}.
\end{align*}
The bound on $\et$ comes from Lemma \ref{cslem}.
The penultimate inequality uses the bound $\zeta\leq\frac{\lambda_{\train}^-}{r^3\gamma^2}$ assumed in Theorem \ref{thm1}.
\end{proof}

The next two proofs follow using the following two facts and the four lemmas from the previous subsection.
\begin{fact}\label{event}
For an event $\mathcal{E}$ and random variable $X$, $\mathbb{P}(\mathcal{E}|X) \geq p$ for all $X \in \mathcal{C}$ implies that
$\mathbb{P}(\mathcal{E}|X\in \mathcal{C}) \geq p$.
\end{fact}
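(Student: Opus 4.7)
The plan is to apply the definition of conditional probability together with the law of total probability (equivalently, the tower property for conditional expectations). Assume $\Pr(X \in \mathcal{C}) > 0$, since otherwise the conditional probability on the right-hand side is undefined and the statement is vacuous. Start from the definition
\[
\Pr(\mathcal{E} \mid X \in \mathcal{C}) = \frac{\Pr(\mathcal{E} \cap \{X \in \mathcal{C}\})}{\Pr(X \in \mathcal{C})}.
\]

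Next, decompose the numerator by conditioning on $X$. Writing the joint probability as an integral against the law of $X$ gives
\[
\Pr(\mathcal{E} \cap \{X \in \mathcal{C}\}) = \int_{\mathcal{C}} \Pr(\mathcal{E} \mid X = x)\, dF_X(x) \geq p \int_{\mathcal{C}} dF_X(x) = p\,\Pr(X \in \mathcal{C}),
\]
where the inequality uses the hypothesis that $\Pr(\mathcal{E} \mid X = x) \geq p$ for every $x \in \mathcal{C}$. Dividing both sides by $\Pr(X \in \mathcal{C})$ yields $\Pr(\mathcal{E} \mid X \in \mathcal{C}) \geq p$, as claimed. In the discrete case (which is essentially how the fact gets used in this paper, where $\mathcal{C}$ is a set of possible values of the random tuple $X_u = \{\bm{a}_1,\dots,\bm{a}_{u\alpha}\}$), the integral above is just a sum $\sum_{x \in \mathcal{C}} \Pr(\mathcal{E} \mid X = x) \Pr(X = x)$ and the argument is completely elementary.

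There is no real obstacle: the only mild subtlety is a measurability/well-definedness check when $X$ is a continuous random variable, which is handled by invoking regular conditional probabilities and using the tower property $\Pr(\mathcal{E}, X \in \mathcal{C}) = \E[\mathbf{1}_{\{X \in \mathcal{C}\}}\,\Pr(\mathcal{E}\mid X)]$. This is standard and the fact is essentially immediate.
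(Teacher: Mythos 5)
Your proof is correct, and it is the standard argument: the paper states this as a Fact without proof, and your derivation via $\Pr(\mathcal{E}\cap\{X\in\mathcal{C}\})=\E[\mathbf{1}_{\{X\in\mathcal{C}\}}\Pr(\mathcal{E}\mid X)]\ge p\,\Pr(X\in\mathcal{C})$ (with the caveat that $\Pr(X\in\mathcal{C})>0$) is exactly what the authors implicitly rely on. Nothing is missing.
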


\begin{fact} \label{bounds_b_A}
Using the bounds on $\zeta$ and on $\rho^2 h^+$ and using \eqref{anew_small}, we get%, and the assumption that $\lambda_{\new}^- \geq \lammin$ we get:
\begin{align*}
b_{\bm{A}} &\geq 0.94 \lambda_{\new}^- \geq 0.94\lammin  \\
b_{\bm{A},\perp} &\leq 0.011 \lammin \\
b_{\bm{\mathcal{H}},k} &\leq 0.24\lammin.
\end{align*}
Thus, $b_{\bm{A}} - b_{\bm{\mathcal{H}},k} \geq 0.5\lammin = \thresh $ and $b_{\bm{A},\perp} + b_{\bm{\mathcal{H}},k} < 0.25 \lammin < \thresh$.
\end{fact}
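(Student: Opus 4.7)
The plan is to verify the three displayed bounds by substituting the hypotheses of Theorem~\ref{thm1} into the definitions of $b_{\bm{A}}$, $b_{\bm{A},\perp}$, and $b_{\bm{\mathcal{H}},k}$, then derive the two ``Thus'' inequalities by arithmetic. Four recurring observations drive every estimate: (i) $(\zeta_{j,*}^+)^2 \le (r\zeta)^2$ is negligible because $\zeta \le 10^{-4}/r^2$; (ii) the product $(\zeta_{j,*}^+)^2\lambda^+ \le r^2\zeta^2\lambda^+ \le 0.03\,\zeta\,\lammin$, using the hypothesis $\zeta \le 0.03\lammin/(r^2\lambda^+)$; (iii) $\epsilon = r_\new\zeta\lammin/100$ is tiny; and (iv) $\sqrt{\rho^2 h^+} \le 0.1$. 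Combined with the sandwich $\lammin \le \lambda_\new^- \le \lambda_\new^+ \le 3\lammin$ from~\eqref{anew_small}, these reduce the whole argument to bookkeeping.

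The first two bounds are immediate from their definitions. Writing $b_{\bm{A}} = (1-(\zeta_{j,*}^+)^2)\lambda_\new^- - 2\epsilon$, both subtracted pieces are small multiples of $\lammin$ by (i) and (iii); combined with $\lambda_\new^- \ge \lammin$, this yields $b_{\bm{A}} \ge 0.94\,\lambda_\new^- \ge 0.94\,\lammin$. Similarly $b_{\bm{A},\perp} = (\zeta_{j,*}^+)^2\lambda^+ + \epsilon$ is a sum of two tiny quantities by (ii) and (iii), so $b_{\bm{A},\perp} \le 0.011\,\lammin$.

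The $b_{\bm{\mathcal{H}},k}$ bound splits into $k=1$ and $k\ge 2$. For $k=1$, substitute $\kappa_s^+ = 0.0215$, $\phi^+ = 1.2$, $\sqrt{\rho^2 h^+}\le 0.1$, and $\lambda_\new^+ \le 3\lammin$ into the $k=1$ formula; the dominant contribution is $2\phi^+\kappa_s^+\lambda_\new^+ \le 2(1.2)(0.0215)(3\lammin) \approx 0.155\,\lammin$, while the remaining pieces involving $(\zeta_{j,*}^+)^2\lambda^+$, $(\kappa_s^+)^2\lambda_\new^+\rho^2 h^+$, and $2b_{\bm{F}}$ are negligible by (i)--(iv). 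This gives $b_{\bm{\mathcal{H}},1}\le 0.24\,\lammin$ with room to spare. For $k\ge 2$, invoke Lemma~\ref{zetadecay} (already proved) to bound $\zeta_{j,\new,k-1}^+ \le 0.83^{k-1} + 0.84 r_\new \zeta$. Substituting into the $k\ge 2$ formulas, the dominant term of $b_{\bm{\mathcal{H}},k}$ is $2\zeta_{j,\new,k-1}^+\lambda_\new^+\sqrt{\rho^2 h^+}\phi^+ \le 0.72\,\zeta_{j,\new,k-1}^+\,\lammin$, and the geometric decay of $\zeta_{j,\new,k-1}^+$ keeps the total below $0.24\,\lammin$ for every $k$.

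The ``Thus'' inequalities are then pure arithmetic: $b_{\bm{A}} - b_{\bm{\mathcal{H}},k} \ge (0.94 - 0.24)\lammin \ge 0.5\lammin = \thresh$, and $b_{\bm{A},\perp} + b_{\bm{\mathcal{H}},k} \le (0.011 + 0.24)\lammin < 0.25\,\lammin < \thresh$. The only nontrivial portion is the $k \ge 2$ half of the $b_{\bm{\mathcal{H}},k}$ bound: one has to check that the contraction factor $2\lambda_\new^+\sqrt{\rho^2 h^+}\phi^+/\lambda_\new^- \approx 0.72 < 1$, guaranteed by the assumption $\rho^2 h^+ \le 0.01$ together with~\eqref{anew_small}, is small enough to keep the recursion stably below $0.24\,\lammin$; this is the one place where the tightness of the hypotheses on $\zeta$ and $\rho^2 h^+$ actually bites, so it should be the main thing to watch in writing out the details carefully.
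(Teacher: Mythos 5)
Your overall strategy (substitute the theorem's hypotheses into the definitions of $b_{\bm{A}}$, $b_{\bm{A},\perp}$, $b_{\bm{\mathcal{H}},k}$ and do arithmetic) is the right one --- the paper states this Fact without a written proof, so there is nothing deeper to match --- and your $b_{\bm{A}}$, $b_{\bm{A},\perp}$ and $k=1$ computations are correct. The problem is the $k\ge 2$ case, which you yourself flag as the delicate step: it does not close as you have set it up. The dominant term is, as you say, $2\zeta_{j,\new,k-1}^+\lambda_{\new}^+\sqrt{\rho^2h^+}\phi^+ \le 0.72\,\zeta_{j,\new,k-1}^+\,\lammin$. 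But if you bound $\zeta_{j,\new,k-1}^+$ only by the \emph{statement} of Lemma \ref{zetadecay}, i.e. by $0.83^{k-1}+0.84 r_{\new}\zeta$, then for $k=2$ you get $b_{\bm{\mathcal{H}},2}\lesssim 0.72\cdot 0.83\,\lammin\approx 0.60\,\lammin$, far above $0.24\,\lammin$; the inequality $0.72\cdot 0.83^{k-1}\le 0.24$ only holds for $k\ge 7$. A contraction factor $0.72<1$ does not rescue this, because the quantity being multiplied starts near $0.83$, not near $1/3$. What is actually needed is the uniform bound $\zeta_{j,\new,k}^+\le\zeta_{j,\new,1}^+\le 0.18$ for all $k\ge 1$, which is established \emph{inside the proof} of Lemma \ref{zetadecay} (the explicit computation $\zeta_{j,\new,1}^+\le 0.18$, which exploits $\kappa_s^+=0.0215$ in place of $\sqrt{\rho^2h^+}$ via Lemma \ref{Dnew0_lem}, together with the monotonicity $\zeta_{j,\new,k}^+\le\zeta_{j,\new,k-1}^+$), but is strictly stronger than the lemma's stated conclusion for small $k$. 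With $\zeta_{j,\new,k-1}^+\le 0.18$ one gets $b_{\bm{\mathcal{H}},k}\le\left(0.72\cdot 0.18+0.0432\cdot 0.18^2\right)\lammin+o(\lammin)\approx 0.13\,\lammin$, and the claim follows.

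A second, minor point: the final assertion $b_{\bm{A},\perp}+b_{\bm{\mathcal{H}},k}<0.25\,\lammin$ cannot be deduced from the three displayed bounds alone, since $0.011+0.24=0.251>0.25$ (the paper shares this cosmetic looseness). You need to carry the actual estimates --- $b_{\bm{A},\perp}=O(\zeta)\lammin\ll 0.011\,\lammin$ and $b_{\bm{\mathcal{H}},k}\lesssim 0.16\,\lammin$ --- rather than the rounded constants.
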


\begin{proof}[Proof of Lemma \ref{det}]
We will prove that $\Pr\left(\mathrm{DET}^{u_j+1} \ |\ X_{u_j} \right) > p_{\det,1}$ for all $X_{u_j} \in \Gamma_{j-1,\rmend}$. In particular, this will imply that $\Pr(\mathrm{DET}^{u_j+1} \ |\ X_{u_j}) > p_{\det,1}$ for all $X_{u_j} \in \Gamma_{j-1,\rmend} \cap \overline{\mathrm{DET}^{u_j}}$ and so we can conclude that  $\Pr(\mathrm{DET}^{u_j+1} \ |\ \Gamma_{j-1,\rmend}, \overline{\mathrm{DET}^{u_j}}) > p_{\det,1}$.

Recall that $\bm{\mathcal{M}}_u = \frac{1}{\alpha}\bm{\mathcal{D}}_u{\bm{\mathcal{D}}_u}'$, and observe that
\begin{align*}
\Pr\left(\mathrm{DET}^{u_j+1} \ |\ X_{u_j} \right) &= \Pr\left( \lambda_{\max}(\M_{u_{j}+1}) > \thresh \ | \ X_{u_j} \right)
\end{align*}
By Weyl's Theorem
\begin{align*}
\lambda_{\max}(\bm{\mathcal{M}}_{u_{j}+1}) &\geq \lambda_{\max}(\bm{\mathcal{A}}_{u_j+1}) + \lambda_{\min}(\bm{\mathcal{H}}_{u_{j}+1})\\
%&\geq \lambda_{\max}(\bm{\mathcal{A}}_{u_{j}+1}) - \|\bm{\mathcal{H}}_{u_{j}+1}\|_2 \\
&\geq \lambda_{\max}(\bm{A}_{u_{j}+1}) - \|\bm{\mathcal{H}}_{u_{j}+1}\|_2 \\
&\geq \lambda_{\min}(\bm{A}_{u_{j}+1}) - \|\bm{\mathcal{H}}_{u_{j}+1}\|_2
%&\geq b_{\bm{A}} - b_{\bm{\mathcal{H}},1}  \geq \frac{\lammin}{2}
\end{align*}
When $X_{u_{j}} \in \Gamma_{j-1,\rmend}$, Lemmas \ref{Ak} and \ref{calHk}  applied with $\epsilon$ given by \eqref{def_eps} show that $\lambda_{\min}(\bm{A}_{u_j+1}) \ge b_{\bm{A}}$ and $\|\bm{\mathcal{H}}_{u_j+1}\|_2 \le b_{\bm{\mathcal{H}},1}$ with probability at least $1-p_{\bm{A}}-p_{\bm{\mathcal{H}}} = p_{\mathrm{det},1}$.
Using Fact \ref{bounds_b_A}, $b_{\bm{A}} - b_{\bm{\mathcal{H}},1}  \geq \thresh$ and so the lemma follows.
%So the above inequality holds with probability greater than or equal to $1-p_{\bm{A}}-p_{\bm{\mathcal{H}}} = p_{\mathrm{det},1}$.
%Using Fact \ref{bounds_b_A}, we get that this is greater than $\frac{\lammin}{2}$.
%the bounds on $\zeta$ and $b$ assumed in Theorem \ref{thm1},
\end{proof}

\begin{proof}[Proof of Lemma \ref{pPCA}]
To prove this Lemma we need to show two things.  First, conditioned on $\Gamma_{j,k-1}^{\hat{u}_j}$, the $k^{\text{th}}$ estimate of the number of new directions is correct.  That is: $\hat{r}_{j,\new,k} = r_{j,\new}$.  Second, we must show $\zeta_{j,\new,k}\leq\zeta_{j,\new,k}^+$, again conditioned on $\Gamma_{j,k-1}^{\uhat_j}$.

Notice that $\hat{r}_{j,\new,k} = \rank(\Phat_{(j),\new,k})$.  To show that $\rank(\Phat_{(j),\new,k}) = r_{j,\new}$, we need to show that for $u = \hat{u}_j+k$, $k=1,\dots,K$, $\lambda_{r_{j,\new}}(\M_u)>\thresh$ and $\lambda_{r_{j,\new}+1}(\M_u)<\thresh$.  To do this we proceed similarly to above.

Observe that, $\bm{\mathcal{M}}_u = \bm{\mathcal{A}}_u + \bm{\mathcal{H}}_u$.
By Fact \ref{bounds_b_A}, $b_{\bm{A}} > b_{\bm{A},\perp}$.  Combining this with Lemmas \ref{Ak} and \ref{Akperp} gives, $\lambda_{\min}(\bm{A}_u) > \lambda_{\max}(\bm{A}_{u,\perp})$ with probability at least $1-p_{\bm{A}}-p_{\bm{A},\perp}$ under the appropriate conditioning (conditioned on $\Gamma_{j,k-1}^{\uhat_j}$).  Since $\bm{A}_u$ is of size $r_{j,\new}\times r_{j,\new}$, this means that $\lambda_{r_{j,\new}}(\bm{\mathcal{A}}_u) = \lambda_{\min}(\bm{A}_u)$ and $\lambda_{r_{j,\new}+1}(\bm{\mathcal{A}}_u) = \lambda_{\max}(\bm{A}_{u,\perp})$. Using this and Weyl's Theorem,
\begin{align*}
\lambda_{r_{j,\new}}(\bm{\mathcal{M}}_u) &\geq \lambda_{r_{j,\new}}(\bm{\mathcal{A}}_u) + \lambda_{\min}(\bm{\mathcal{H}}_u)\\
&\geq \lambda_{r_{j,\new}}(\bm{\mathcal{A}}_u) - \|\bm{\mathcal{H}}_u\|_2 \\
&= \lambda_{\min}(\bm{A}_u) - \|\bm{\mathcal{H}}_u\|_2
\end{align*}
and
% with probability at least $1-p_{\bm{A}}-p_{\bm{A},\perp}$.
%By Fact \ref{bounds_b_A}, we have that $b_{\bm{A}} > b_{\bm{A},\perp}$.  Combining this with Lemmas \ref{Ak} and \ref{Akperp} gives, $\lambda_{\min}(\bm{A}_u) > \lambda_{\max}(\bm{A}_{u,\perp})$ with probability at least $1-p_{\bm{A}}-p_{\bm{A},\perp}$.  Since $\bm{A}_u$ is of size $r_{j,\new}\times r_{j,\new}$, we have equality in the last line.
\begin{align*}
\lambda_{r_{j,\new}+1}(\bm{\mathcal{M}}_u) &\leq \lambda_{r_{j,\new}+1}(\bm{\mathcal{A}}_u) + \lambda_{\max}(\bm{\mathcal{H}}_u)\\
&\leq \lambda_{r_{j,\new}+1}(\bm{\mathcal{A}}_u) + \|\bm{\mathcal{H}}_u\|_2 \\
&= \lambda_{\max}(\bm{A}_{u,\perp}) + \|\bm{\mathcal{H}}_u\|_2
\end{align*}
 with probability at least $1-p_{\bm{A}}-p_{\bm{A},\perp}$ under the appropriate conditioning.
Using Lemmas \ref{Ak}, \ref{Akperp}, and \ref{calHk} applied with $\epsilon$ given by \eqref{def_eps} and Fact \ref{bounds_b_A}, we can conclude that with probability greater than $p_{\mathrm{ppca}}$, $\lambda_{r_{j,\new}}(\bm{\mathcal{M}}_u) > b_{\bm{A}} - b_{\bm{\mathcal{H}},k} \ge \thresh $ and $\lambda_{r_{j,\new}+1}(\bm{\mathcal{M}}_u)< b_{\bm{A},\perp} + b_{\bm{\mathcal{H}},k} <  \thresh$. Therefore $\rank(\Phat_{(j),\new,k})=r_{j,\new}$ with probability greater than $p_{\mathrm{ppca}}$ under the appropriate conditioning.

To show that $\zeta_{j,\new,k} \leq \zeta_{j,\new,k}^+$, we also use Lemmas \ref{Ak}, \ref{Akperp}, and \ref{calHk} applied with $\epsilon$ given by \eqref{def_eps}. Using $\rank(\Phat_{(j),\new,k})=r_{j,\new}$ and applying Lemma \ref{zetakbnd} with these bounds gives the desired result.
\end{proof}

\section{Proofs of Lemmas \ref{Ak}, \ref{Akperp}, and \ref{calHk}} \label{3_pfs}

\subsection{Some definitions, remarks and facts} %Proofs of Lemmas \ref{Ak}, \ref{Akperp}, \ref{calHk}}
\begin{definition} \label{def_Dnewk}
Define the following for $k=0,1,\dots, K$. Recall that $\Phat_{(j),\rmnew,0}=[.]$.
\ben
%\item $\Phat_{(j),\rmnew,0} = [.]$ (empty matrix)
%\item $\bm{\Phi}_{(k)} := \bm{I} - \Phat_{*} {\Phat_{*}}{}' - \Phat_{\rmnew,k} \Phat_{\rmnew,k}{}' $.
%Because $\Phat_{*}\perp\Phat_{\rmnew,k}$, $\bm{\Phi}_{(k)} = ( \bm{I} - \Phat_{*} {\Phat_{*}}{}' )( \bm{I} - \Phat_{\rmnew,k} \Phat_{\rmnew,k}{}' )$ Notice (from Algorithm \ref{reprocsdet}) that for $t\in \mathcal{I}_{j,k}$, $\bm{\Phi}_t = \bm{\Phi}_{(k-1)}$.

\item $\bm{D}_{j,\rmnew,k} := (\bm{I} - \Phat_{(j),*} {\Phat_{(j),*}}{}' - \Phat_{(j),\rmnew,k} \Phat_{(j),\rmnew,k}{}')   \bm{P}_{(j),\rmnew}$. Thus $\bm{D}_{j,\rmnew} = \bm{D}_{j,\rmnew,0}$.

\item $\bm{D}_{j,*,k} := (\bm{I} - \Phat_{(j),*} {\Phat_{(j),*}}{}' - \Phat_{(j),\rmnew,k} \Phat_{(j),\rmnew,k}{}') \bm{P}_{(j),*}$ and $\bm{D}_{j,*} := \bm{D}_{j,*,0}$.
\item Recall that $\zeta_{j,\new,0} = \|\bm{D}_{j,\rmnew}\|_2$, $\zeta_{j,\new,k} = \|\bm{D}_{j,\rmnew,k}\|_2$, $\zeta_{j,*} = \|\bm{D}_{j,*}\|_2$.  Also, clearly, $\|\bm{D}_{j,*,k}\|_2 \le \|\bm{D}_{j,*}\|_2 \le \zeta_{j,*}$.
\een
\end{definition}

\begin{definition}
For ease of notation, define
\[
%\Ltil_t := \Phi_{(0)} \bm{\ell}_t \ \text{with} \ \Phi_{(0)} = (\I - \Phat_{(j),*}\Phat_{(j),*}{}')
\Ltil_t :=  (\I - \Phat_{(j),*}\Phat_{(j),*}{}') \bm{\ell}_t
\]
\end{definition}

\begin{remark}\label{notes1}
In the rest of this section, for ease of notation, we do the following.
\begin{itemize}
\item We remove the subscript $j$ from $\bm{D}_{j,\rmnew,k}$, $\bm{E}_{j,\new}$, and $\zeta_{j,\new,k}$ etc. and from everything in Definitions \ref{def_P_starnew}, \ref{def_Phat_starnew},  \ref{difzeta}, \ref{defHk} and \ref{def_Dnewk}.

\item Similarly we also let $X_k:= X_{\hat{u}_j +k}$ and $\Gamma_k:= \Gamma_{j,k}^{\uhat_j}$ for both $\uhat_j=u_j$ and $\uhat_j=u_{j+1}$. More precisely, whenever we say $\Pr\left(\mathrm{event}|X_{k-1}\in\Gamma_{k-1}\right) \ge p_0$ we mean $\Pr\left(\mathrm{event}|X_{u_j +k-1}\in\Gamma_{j,k-1}^{u_j}\right) \ge p_0$ and $\Pr\left(\mathrm{event}|X_{u_j+1 +k-1}\in\Gamma_{j,k-1}^{u_j+1}\right) \ge p_0$.

\item Finally, $\sum_t$ refers to $\sum_{t \in \J_u}$ for $u = \uhat_j+k$ %$k=1,2,\dots, K$, $j=1,2,\dots,J$.
\end{itemize}
Also, note the following.
\begin{itemize}
%\item We implicitly use Fact \ref{d_large} in the proof.
%\item Recall that $\mathcal{T}_t$ is included in the definition of $X_{k-1}$, so conditioned on $X_{k-1}$, the $\mathcal{T}_t$ are deterministic.

\item The proof for the bound on $\bm{A}_u$ for $u=u_j+1$ is the same as that for $u = \uhat_j+1$ since in both cases $\Phat_{t,*} = \Phat_{(j),*}$ and $\Phat_{t,\new}=[.]$ for all $t \in \J_u$. The same is true for the bounds on $\bm{A}_{u_j+1,\perp}$ and $\bm{\mathcal{H}}_{u_j+1}$.
\end{itemize}% = \Phat_{(j-1)}
\end{remark}

\begin{fact}\label{matbnds}
When $X_{k-1}\in\Gamma_{k-1}$,
\begin{enumerate}
\item $\|\bm{D}_{*,k-1}\|_2 \leq \zeta_{j,*}^+$ for $k = 1,\dots,K$.
\item $\|\bm{D}_{\new,k-1}\|_2 \leq \zeta_{\new,k-1}^+$ for $k = 1,\dots,K+1$ (by definition of $\Gamma_{k-1}$).
\item Recall that $\zeta_{\new,0}^+ = 1$.
\item $\|[ ({\bm{\Phi}_{t})_{\mathcal{T}_t}}'(\bm{\Phi}_{t})_{\mathcal{T}_t}]^{-1}\|_2 \leq \phi^+$ (from Lemma \ref{cslem})
\item $\lambda_{\min}(\bm{R}_{\new}{\bm{R}_{\new}}') \geq 1 -(\zeta_{*}^+)^2$ (this follows because $\|\Phat_*{}'\P_\new\|_2 = \|\Phat_*(\I - \P_*{\P_*}')'\P_\new\|_2  \le \zeta_*$)
\item $\bm{E}_\new{}' \bm{D}_\new = \bm{E}_\new{}' \bm{E}_\new \bm{R}_\new = \bm{R}_\new$ and $\bm{E}_{\new,\perp}{}' \bm{D}_\new = \bm{0}$.
\item $\Ltil_t = \bm{D}_* \ats +  \bm{D}_\new \atnew$.
\item $\et$ satisfies (\ref{etdef0}) with probability one, i.e. $\et = \bm{I}_{\mathcal{T}_t}[{(\bm{\Phi}_{t})_{\mathcal{T}_t}}'(\bm{\Phi}_{t})_{\mathcal{T}_t}]^{-1} {\bm{I}_{\mathcal{T}_t}}'(\bm{D}_{*,k-1} \bm{a}_{t,*} + \bm{D}_{\rmnew,k-1} \bm{a}_{t,\rmnew})$.
\end{enumerate}
%The same is true for $X_{{u}_j +k-1}\in\Gamma_{j,k-1}^{u_j+1}$.
%?? do these bounds hold for $X_{u_j} \in \Gamma_{j,\rmend}$: yes put $k=1$ and then they hold. ?? I think best to delete this lemma
\end{fact}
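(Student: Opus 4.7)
This fact is a bookkeeping collection of deterministic consequences of the definitions in Definitions \ref{def_Phat_starnew}, \ref{difzeta} and \ref{def_Dnewk}, together with Lemma \ref{cslem} and the $\lt$-model. The plan is to dispatch the eight items in the order $6 \to 5 \to 1,2,3 \to 7 \to 8 \to 4$, since later items reuse identities established earlier. No probabilistic argument is needed beyond invoking the conditioning events.

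First I would verify item~6, which is immediate from the reduced QR factorization $\bm{D}_{\new}=\bm{E}_{\new}\bm{R}_{\new}$ together with $\bm{E}_\new{}'\bm{E}_\new=\bm{I}$ and $\bm{E}_{\new,\perp}{}'\bm{E}_\new=\bm{0}$. For item~5, I would use item~6 to write $\bm{R}_\new\bm{R}_\new{}'$ as a matrix unitarily similar to $\bm{D}_\new{}'\bm{D}_\new=\bm{P}_\new{}'(\I-\Phat_{(j),*}\Phat_{(j),*}{}')\bm{P}_\new$, so that $\lambda_{\min}(\bm{R}_\new\bm{R}_\new{}')=1-\|\Phat_{(j),*}{}'\bm{P}_\new\|_2^2$. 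Since Model~\ref{exp_model} forces $\bm{P}_{(j),*}{}'\bm{P}_{(j),\new}=\bm{0}$, one has $\Phat_{(j),*}{}'\bm{P}_\new=\Phat_{(j),*}{}'(\I-\bm{P}_*\bm{P}_*{}')\bm{P}_\new$, whence $\|\Phat_{(j),*}{}'\bm{P}_\new\|_2\le\zeta_{j,*}\le\zeta_{j,*}^+$ on $\Gamma_{k-1}$ (the last step uses Lemma~\ref{falsedet}, which passes $\zeta_{j,*}\le\zeta_{j,*}^+$ through $\Gamma_{j-1,\rmend}\subseteq\Gamma_{k-1}$).

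For items~1--3 I would exploit the key orthogonality $\Phat_{(j),*}\perp\Phat_{(j),\new,k-1}$ recorded in the algorithm discussion, which lets one combine the two projectors: $\I-\Phat_{(j),*}\Phat_{(j),*}{}'-\Phat_{(j),\new,k-1}\Phat_{(j),\new,k-1}{}'=\I-[\Phat_{(j),*}\ \Phat_{(j),\new,k-1}][\Phat_{(j),*}\ \Phat_{(j),\new,k-1}]{}'$. For item~1, monotonicity of norms under a larger orthogonal projector gives $\|\bm{D}_{*,k-1}\|_2\le\|(\I-\Phat_{(j),*}\Phat_{(j),*}{}')\bm{P}_{(j),*}\|_2=\zeta_{j,*}\le\zeta_{j,*}^+$, again using Lemma~\ref{falsedet}. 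For item~2, the same identity shows $\|\bm{D}_{\new,k-1}\|_2=\mathrm{dif}([\Phat_{(j),*}\ \Phat_{(j),\new,k-1}],\bm{P}_{(j),\new})=\zeta_{j,\new,k-1}$, which by the definition of $\mathrm{PPCA}_{j,k-1}\subseteq\Gamma_{k-1}$ is bounded by $\zeta_{j,\new,k-1}^+$; item~3 and the $k=1$ (hence $k-1=0$) case of item~2 are both trivial from $\zeta_{j,\new,0}^+=1$ and $\|\bm{D}_{\new,0}\|_2\le 1$.

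Finally, item~7 follows by substituting the decomposition $\lt=\bm{P}_{(j),*}\ats+\bm{P}_{(j),\new}\atnew$ (valid for $t\in[t_j,t_{j+1})$) into $\Ltil_t:=(\I-\Phat_{(j),*}\Phat_{(j),*}{}')\lt$; item~8 then follows by substituting the same decomposition of $\lt$ into the formula for $\et$ supplied by Lemma~\ref{cslem} (which is in force because each of $\Gamma_{k-1}$, $\Gamma_{j-1,\rmend}$, or $\Gamma_{j,K}^{\uhat_j}$ implies one of the three conditioning events in that lemma), then using $\bm{\Phi}_t=\I-\Phat_{(j),*}\Phat_{(j),*}{}'-\Phat_{(j),\new,k-1}\Phat_{(j),\new,k-1}{}'$ for $t\in\J_{\uhat_j+k}$. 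Item~4 is just a restatement of the $\phi^+$ bound from Lemma~\ref{cslem}. There is no real obstacle here; the only place to be careful is matching the correct conditioning event (through the chain $\Gamma_k\subseteq\Gamma_{j-1,\rmend}$) when citing Lemmas~\ref{cslem} and~\ref{falsedet}, and keeping track that $\Phat_{(j),\new,0}=[.]$ collapses the $k=1$ case of items~1,~2, and~8 to the cleaner ``no-new-direction-estimate'' form.
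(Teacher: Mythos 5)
Your proposal is correct and matches the paper's (implicit) reasoning: the paper states this Fact with only inline parenthetical justifications, and every step you supply — the QR identities, the eigenvalue argument via $\bm{R}_\new{}'\bm{R}_\new=\bm{D}_\new{}'\bm{D}_\new$ together with $\bm{P}_*\perp\bm{P}_\new$, the projector-monotonicity for item 1, reading items 2 and 4 off the definitions of $\mathrm{PPCA}_{j,k-1}$ and Lemma \ref{cslem}, and substituting $\lt=\bm{P}_*\ats+\bm{P}_\new\atnew$ into \eqref{etdef0} — is exactly what the authors intend. One trivial slip: in the item-5 paragraph the inclusion should read $\Gamma_{k-1}\subseteq\Gamma_{j-1,\rmend}$ (as you correctly write in your final paragraph), since the events $\Gamma_{j,k}^a$ are nested decreasingly in $k$.
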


\subsection{Preliminaries}
First observe that the matrices $\bm{D}_{\new}$, $\bm{R}_{\new}$, $\bm{E}_{\new}$, $\bm{D}_{*}, \bm{D}_{\rmnew,k-1}$ are all functions of the random variable $X_{k-1}$.  Since $X_{k-1}$ is independent of any $\bm{a}_{t}$ for $t \in  \J_{\uhat_j+k}$, the same is true for these matrices.
All terms that we bound for Lemmas \ref{Ak} and \ref{Akperp} are of the form $\frac{1}{\alpha} \sum_{t \in  \J_{\uhat_j+k}} \bm{Z}_t$ where $\bm{Z}_t= f_1(X_{k-1}) \bm{Y}_t f_2(X_{k-1})$, $\bm{Y}_t$ is a sub-matrix of $\bm{a}_t {\bm{a}_t}'$, and $f_1(.)$ and $f_2(.)$ are functions of $X_{k-1}$. Thus, conditioned on  $X_{k-1}$, the $\bm{Z}_t$'s are mutually independent.

All the terms that we bound for Lemma \ref{calHk} contain $\bm{e}_t$. Using Lemma \ref{cslem}, conditioned on $X_{k-1}$, $\et$ satisfies (\ref{etdef0}) with probability one whenever $X_{k-1} \in \Gamma_{k-1}$. Using (\ref{etdef0}), it is easy to see that all the terms needed for this lemma are also of the above form whenever $X_{k-1} \in \Gamma_{k-1}$.  Thus, conditioned on $X_{k-1}$, the $\bm{Z}_t$'s for all the above terms are mutually independent, whenever $X_{k-1} \in \Gamma_{k-1}$.

We will use the following corollaries of the matrix Hoeffding inequality from \cite{tail_bound}. These are proved in \cite{ReProCS_IT}.
\begin{corollary}[Matrix Hoeffding conditioned on another random variable for a nonzero mean Hermitian matrix \cite{tail_bound,ReProCS_IT}]\label{hoeffding_nonzero}
Given an $\alpha$-length sequence $\{\bm{Z}_t\}$ of random Hermitian matrices of size $n\times n$, a r.v. $X$, and a set ${\cal C}$ of values that $X$ can take. Assume that, for all $X \in \calc$, (i) $\bm{Z}_t$'s are conditionally independent given $X$; (ii) $\mathbb{P}(b_1 \bm{I} \preceq \bm{Z}_t \preceq b_2 \bm{I} | X) = 1$ and (iii) $b_3 \bm{I} \preceq \frac{1}{\alpha}\sum_t \E(\bm{Z}_t | X) \preceq b_4 \bm{I} $. Then for all $\epsilon > 0$,
\begin{align*}
\mathbb{P} \left( \lambda_{\max}\left(\frac{1}{\alpha}\sum_t \bm{Z}_t \right) \leq b_4 + \epsilon \Big | X \right)
\geq 1- n \exp\left(\frac{-\alpha \epsilon^2}{8(b_2-b_1)^2}\right) \ \text{for all} \ X \in \calc
\end{align*}
\begin{align*}
\mathbb{P} \left(\lambda_{\min}\left(\frac{1}{\alpha}\sum_t \bm{Z}_t \right) \geq b_3 -\epsilon \Big| X \right)
\geq  1- n \exp\left(\frac{-\alpha \epsilon^2}{8(b_2-b_1)^2} \right)  \text{for all} \ X \in \calc
\end{align*}
\end{corollary}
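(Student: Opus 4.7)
The plan is to derive this corollary from the standard (unconditional) matrix Hoeffding inequality of \cite{tail_bound}, which states that for a finite sequence $\{\bm{S}_t\}_{t=1}^{\alpha}$ of independent, zero-mean Hermitian matrices with $-\bm{B}_t \preceq \bm{S}_t \preceq \bm{B}_t$ almost surely for deterministic $\bm{B}_t$, one has $\Pr(\lambda_{\max}(\sum_t \bm{S}_t) \ge s) \le n \exp(-s^2/(8\sigma^2))$ with $\sigma^2 = \|\sum_t \bm{B}_t^2\|_2$. The only new ingredient here is that everything is conditioned on the auxiliary random variable $X$, and the two-sided bounds $b_3, b_4$ on the (conditional) mean matrix are incorporated by a centering argument followed by Weyl's inequality.

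First I would fix an arbitrary $X \in \calc$ and work inside the conditional probability $\Pr(\cdot \mid X)$. Under assumption (i) the matrices $\bm{Z}_t$ are independent with respect to this conditional law, and assumption (ii) gives $b_1 \bm{I} \preceq \bm{Z}_t \preceq b_2 \bm{I}$ almost surely. Define the conditionally centered matrices $\bm{S}_t := \bm{Z}_t - \E[\bm{Z}_t \mid X]$. Since conditional expectation preserves the semidefinite ordering, we also have $b_1 \bm{I} \preceq \E[\bm{Z}_t \mid X] \preceq b_2 \bm{I}$, so that $-(b_2-b_1)\bm{I} \preceq \bm{S}_t \preceq (b_2-b_1)\bm{I}$ almost surely. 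Taking $\bm{B}_t = (b_2-b_1)\bm{I}$ in Tropp's inequality gives $\sigma^2 \le \alpha(b_2-b_1)^2$, and choosing $s = \alpha \epsilon$ yields
\[
\Pr\!\left(\lambda_{\max}\!\left(\tfrac{1}{\alpha}\sum_t \bm{S}_t\right) \ge \epsilon \,\Big|\, X\right) \le n \exp\!\left(\frac{-\alpha \epsilon^2}{8(b_2-b_1)^2}\right).
\]

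To finish the upper-tail statement, apply Weyl's inequality to split $\lambda_{\max}(\frac{1}{\alpha}\sum_t \bm{Z}_t) \le \lambda_{\max}(\frac{1}{\alpha}\sum_t \bm{S}_t) + \lambda_{\max}(\frac{1}{\alpha}\sum_t \E[\bm{Z}_t \mid X])$ and bound the second term by $b_4$ using assumption (iii); the complementary event then gives the claimed inequality. The lower-tail statement follows symmetrically: apply the same Hoeffding bound to $-\bm{S}_t$ (which has the identical norm bound), and combine with $\lambda_{\min}(\frac{1}{\alpha}\sum_t \E[\bm{Z}_t \mid X]) \ge b_3$ via $\lambda_{\min}(\bm{A}+\bm{B}) \ge \lambda_{\min}(\bm{A}) - \|\bm{B}\|_2$. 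Since the resulting inequalities hold uniformly for every $X \in \calc$, the conclusion follows.

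There is essentially no analytic obstacle — the argument is entirely bookkeeping. The two points that require care are verifying that conditioning on $X$ preserves the independence hypothesis needed by Tropp's theorem (which is exactly assumption (i)) and checking that the centered matrices inherit a deterministic norm bound of $b_2 - b_1$ (which uses the observation that $\E[\bm{Z}_t \mid X]$ lies in the same eigenvalue interval $[b_1,b_2]$, itself a consequence of the conditional version of monotonicity of expectation on the positive semidefinite cone).
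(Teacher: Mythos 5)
Your argument is correct and is essentially the proof the paper defers to \cite{ReProCS_IT}: fix $X\in\calc$, center the $\bm{Z}_t$'s by their conditional means (noting that conditional expectation preserves the ordering so the centered matrices satisfy $-(b_2-b_1)\I \preceq \bm{S}_t \preceq (b_2-b_1)\I$), apply Tropp's matrix Hoeffding inequality to the conditionally independent centered sequence, and transfer the bound back with Weyl's inequality. One small point: for the lower tail you should invoke Weyl in the form $\lambda_{\min}(\bm{A}+\bm{B}) \ge \lambda_{\min}(\bm{A}) + \lambda_{\min}(\bm{B})$ rather than $\lambda_{\min}(\bm{A}) - \|\bm{B}\|_2$, since the single-tail Hoeffding event you construct controls only $\lambda_{\min}\bigl(\frac{1}{\alpha}\sum_t \bm{S}_t\bigr) \ge -\epsilon$ and not the full spectral norm; using the norm version would force a two-sided union bound and a factor of $2n$.
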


\begin{corollary}[Matrix Hoeffding conditioned on another random variable for an arbitrary nonzero mean matrix]\label{hoeffding_rec}
Given an $\alpha$-length sequence $\{\bm{Z}_t\}$ of random matrices of size $n_1 \times n_2$, a r.v. $X$, and a set ${\mathcal{C}}$ of values that $X$ can take. Assume that, for all $X \in \calc$, (i) $\bm{Z}_t$'s are conditionally independent given $X$; (ii) $\mathbb{P}(\|\bm{Z}_t\|_2 \le b_1|X) = 1$ and (iii) $\|\frac{1}{\alpha}\sum_t \E( \bm{Z}_t|X)\|_2 \le b_2$. Then, for all $\epsilon >0$,
\begin{align*}
\mathbb{P} \left(\bigg\|\frac{1}{\alpha}\sum_t \bm{Z}_t \bigg\|_2 \leq b_2 + \epsilon \Big| X \right) %\\%
\geq 1-(n_1+n_2) \exp\left(\frac{-\alpha \epsilon^2}{32 {b_1}^2}\right)  \ \text{for all} \ X \in \calc
\end{align*}
\end{corollary}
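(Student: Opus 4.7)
The plan is to reduce the rectangular case to the Hermitian case already handled in Corollary \ref{hoeffding_nonzero} via the standard Hermitian dilation trick. For each $\bm{Z}_t$, introduce the $(n_1+n_2) \times (n_1+n_2)$ Hermitian dilation
\[
\tilde{\bm{Z}}_t := \begin{bmatrix} \bm{0} & \bm{Z}_t \\ \bm{Z}_t{}' & \bm{0} \end{bmatrix}.
\]
Two elementary facts drive everything. First, for any rectangular matrix $\bm{M}$, the nonzero eigenvalues of $\tilde{\bm{M}}$ are the signed singular values $\{\pm \sigma_i(\bm{M})\}$, so $\lambda_{\max}(\tilde{\bm{M}}) = \|\bm{M}\|_2$ and $-\|\bm{M}\|_2 \bm{I} \preceq \tilde{\bm{M}} \preceq \|\bm{M}\|_2 \bm{I}$. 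Second, dilation is linear and commutes with conditional expectation, so $\sum_t \tilde{\bm{Z}}_t$ is the dilation of $\sum_t \bm{Z}_t$ and $\E[\tilde{\bm{Z}}_t \mid X]$ is the dilation of $\E[\bm{Z}_t \mid X]$.

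Next I would translate the three hypotheses to the dilated sequence. Conditional on any $X \in \calc$: (a) the $\tilde{\bm{Z}}_t$ are independent because the $\bm{Z}_t$ are; (b) $\|\bm{Z}_t\|_2 \le b_1$ almost surely forces $-b_1 \bm{I} \preceq \tilde{\bm{Z}}_t \preceq b_1 \bm{I}$; and (c) applying the first fact to the average mean matrix turns $\|\frac{1}{\alpha}\sum_t \E[\bm{Z}_t \mid X]\|_2 \le b_2$ into $-b_2 \bm{I} \preceq \frac{1}{\alpha}\sum_t \E[\tilde{\bm{Z}}_t \mid X] \preceq b_2 \bm{I}$. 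These are exactly the hypotheses of Corollary \ref{hoeffding_nonzero} applied to the Hermitian sequence $\{\tilde{\bm{Z}}_t\}$ of dimension $n_1 + n_2$, with spectral bounds $(-b_1, b_1)$ and conditional mean bounds $(-b_2, b_2)$.

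Invoking Corollary \ref{hoeffding_nonzero} then yields, for every $X \in \calc$,
\[
\Pr\!\Bigl( \lambda_{\max}\bigl( \tfrac{1}{\alpha} \textstyle\sum_t \tilde{\bm{Z}}_t \bigr) \le b_2 + \epsilon \,\Big|\, X \Bigr) \ge 1 - (n_1 + n_2) \exp\!\Bigl( \tfrac{-\alpha \epsilon^2}{8(2b_1)^2} \Bigr),
\]
and applying the first fact once more to $\tfrac{1}{\alpha}\sum_t \bm{Z}_t$ rewrites the left side as $\Pr(\|\tfrac{1}{\alpha}\sum_t \bm{Z}_t\|_2 \le b_2 + \epsilon \mid X)$, while $8(2b_1)^2 = 32 b_1^2$ matches the stated denominator. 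The only point requiring any care is the bookkeeping that the effective spectral range of $\tilde{\bm{Z}}_t$ after dilation is $2b_1$ (from $[-b_1,b_1]$) rather than $b_1$, which is why the exponent carries $32 b_1^2$ and not $8 b_1^2$; beyond this, the proof is essentially immediate from dilation plus the previous corollary.
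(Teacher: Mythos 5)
Your dilation argument is correct and complete: the constants check out exactly ($8(2b_1)^2 = 32b_1^2$, dimension $n_1+n_2$), and since $\lambda_{\max}$ of the dilation equals the spectral norm, a single application of Corollary \ref{hoeffding_nonzero} suffices. This is the same Hermitian-dilation reduction used in the cited source \cite{ReProCS_IT}, to which the paper defers for this proof.
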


\subsection{Simple Lemmas Needed for the Proofs}

\begin{lem}\label{Eatat}    %\label{Ematched}
For $j=1,\dots,J$ and $k = 1,\dots,K$, for all $X_{\uhat_j+k-1}\in\Gamma_{j,k-1}^{\uhat_j}$
\begin{enumerate}

\item \label{starstar}
$\ds \bm{0}\preceq  \E\left[\bm{a}_{t,*}{\bm{a}_{t,*}}' \ \big| \ X_{\hat{u}_{j}+k-1} \right] = \bm{\Lambda}_{t,*}
\preceq  \lambda^+ \I$  %\frac{1}{\alpha}\sum_{t\in\J_{\hat{u}_j+k}}

\item \label{newnew}
$\ds  \lambda_{\new}^- \I \preceq \E\left[\bm{a}_{t,\new}{\bm{a}_{t,\new}}' \ \big| \ X_{\hat{u}_{j}+k-1} \right] = \bm{\Lambda}_{t,\new}
\preceq \lambda_{\new}^+ \I$ and $ \lammin \le \lambda_{\new}^- \le \lambda_{\new}^+ \le 3 \lammin$

\item \label{starnew}
$\ds \E \left[\bm{a}_{t,*}{\bm{a}_{t,\rmnew}}' \ \big| \ X_{\hat{u}_{j}+k-1} \right] = \bm{0}$

%and  $\ds \| \frac{1}{\alpha}\sum_{t\in\J_{\hat{u}_j+k}}\E_{t-1}\left[\bm{a}_{t,*}{\bm{a}_{t,\rmnew}}' \ \big| \ X_{\hat{u}_{j}+k-1} \right] \|_2 \le ?? $ (need this in the $H_u$ bound)
\end{enumerate}
%?? delete for all $X_{\hat{u}_j+k-1}\in\Gamma_{j-1,\rmend}$. ??
with $\hat{u}_j=u_j$ or $\hat{u}_j=u_j+1$.

The same bounds also hold for summation over $t \in \J_{u_j+1}$ when we condition on $X_{u_j}\in\Gamma_{j-1,\rmend}$.
\end{lem}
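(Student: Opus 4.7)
The plan is to show that the conditioning on $X_{\hat{u}_j+k-1}$ can be removed by independence, reducing the lemma to a direct computation using the EVD structure of $\bm{\Sigma}_t$ from Model \ref{exp_model}. The key observation is that $X_{\hat{u}_j+k-1} = \{\bm{a}_1,\ldots,\bm{a}_{(\hat{u}_j+k-1)\alpha}\}$ is a deterministic function of $\{\ell_1,\ldots,\ell_{(\hat{u}_j+k-1)\alpha}\}$, while for $t \in \mathcal{J}_{\hat{u}_j+k}$ we have $t > (\hat{u}_j+k-1)\alpha$, so the mutual independence of the $\ell_t$ over time (Model \ref{exp_model}) implies $\ell_t \perp X_{\hat{u}_j+k-1}$. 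Since $\bm{a}_{t,*} = \bm{P}_{(j),*}'\ell_t$ and $\bm{a}_{t,\rm{new}} = \bm{P}_{(j),\rm{new}}'\ell_t$ (where $\bm{P}_{(j),*}, \bm{P}_{(j),\rm{new}}$ are deterministic), each of the three conditional expectations equals the corresponding unconditional expectation.

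Next, I would compute the unconditional expectations. For $t \in [t_j, t_{j+1})$, Model \ref{exp_model} gives $\bm{\Sigma}_t = \bm{P}_t \bm{\Lambda}_t \bm{P}_t'$ with $\bm{P}_t = [\bm{P}_{(j),*} \ \bm{P}_{(j),\rm{new}}]$ and $\bm{\Lambda}_t$ diagonal, so (writing $\bm{\Lambda}_t = \diag(\bm{\Lambda}_{t,*}, \bm{\Lambda}_{t,\rm{new}})$ in the block partition induced by this split) orthonormality of the columns of $\bm{P}_t$ yields $\bm{P}_{(j),*}' \bm{\Sigma}_t \bm{P}_{(j),*} = \bm{\Lambda}_{t,*}$, $\bm{P}_{(j),\rm{new}}' \bm{\Sigma}_t \bm{P}_{(j),\rm{new}} = \bm{\Lambda}_{t,\rm{new}}$, and $\bm{P}_{(j),*}' \bm{\Sigma}_t \bm{P}_{(j),\rm{new}} = \bm{0}$. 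This establishes the three equalities in parts 1), 2), and 3).

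For the eigenvalue bounds, part 1) is immediate since $\bm{\Lambda}_{t,*}$ is a principal submatrix of the positive-semidefinite diagonal matrix $\bm{\Lambda}_t$, so $\bm{0} \preceq \bm{\Lambda}_{t,*} \preceq \lambda_{\max}(\bm{\Lambda}_t)\bm{I} \preceq \lambda^+ \bm{I}$ by definition of $\lambda^+$. For part 2), I invoke Fact \ref{d_large}: since $\hat{u}_j \in \{u_j, u_j+1\}$ and $d \ge (K+2)\alpha$, every $t \in \mathcal{J}_{\hat{u}_j+k}$ with $k \in \{1,\ldots,K\}$ lies in $[t_j, t_j+d]$, so the definition of $\lambda_\new^\pm$ combined with \eqref{anew_small} gives $\lammin \le \lambda_\new^- \bm{I} \preceq \bm{\Lambda}_{t,\rm{new}} \preceq \lambda_\new^+ \bm{I} \le 3\lammin \bm{I}$.

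For the last clause about summing over $t \in \mathcal{J}_{u_j+1}$ conditioned on $X_{u_j} \in \Gamma_{j-1,\rm{end}}$, the argument is identical: $X_{u_j}$ depends only on $\{\ell_1,\ldots,\ell_{u_j\alpha}\}$ which is independent of $\ell_t$ for $t \in \mathcal{J}_{u_j+1}$, and $\mathcal{J}_{u_j+1} \subseteq [t_j, t_j+\alpha] \subseteq [t_j, t_j+d]$. This lemma presents no real obstacle — it is essentially a bookkeeping result that isolates for later use the two structural properties (temporal independence and block-diagonal $\bm{\Lambda}_t$) baked into Model \ref{exp_model}. The only subtlety worth being careful about is verifying that each sub-case ($k \ge 1$ conditioned on $\Gamma_{j,k-1}^{\hat{u}_j}$ and $k=1$ conditioned on $\Gamma_{j-1,\rm{end}}$) respects the constraint $t \in [t_j, t_j+d]$, which is exactly what Fact \ref{d_large} is designed to provide.
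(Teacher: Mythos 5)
Your proposal is correct and takes essentially the same route as the paper, whose proof of this lemma is a one-line appeal to Model \ref{exp_model} (independence plus the block-diagonal EVD structure of $\bm{\Sigma}_t$) and Fact \ref{d_large}; you have simply written out the details that the paper leaves implicit. One tiny arithmetic slip: in the last paragraph $\J_{u_j+1} \subseteq [t_j, t_j+\alpha]$ is not quite right (the interval can extend up to $t_j + 2\alpha - 1$), but the containment that is actually needed, $\J_{u_j+1} \subseteq [t_j, t_j+d]$, still holds since $d \ge (K+2)\alpha$, so nothing is affected.
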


\begin{proof}
The proof follows from Model \ref{exp_model} and Fact \ref{d_large}. The only reason we need $X_{\hat{u}_j+k-1}\in\Gamma_{j,k-1}^{\uhat_j}$ is to apply Fact \ref{d_large} which allows us to lower and upper bound in the eigenvalues of $\bm{\Lambda}_{t,\new}$ by $\lambda_\new^-$ and $\lambda_\new^+$ and then use \eqref{anew_small}.
\end{proof}

\begin{lem} \label{Dnew0_lem}
Assume that the assumptions of Theorem \ref{thm1} hold. Recall that $\bm{D}_\new = \bm{D}_{\new,0}$.
Conditioned on $X_{k-1} \in \Gamma_{k-1}$, %for $\hat{u}_j = u_j$ or $\hat{u}_j = u_j+1$,
\begin{equation}\label{kappasplus}
\| {\I_{\mathcal{T}}}' \bm{D}_{\rmnew} \|_2 \le \kappa_s^+ := .0215
\end{equation}
for all $\mathcal{T}$ such that $|\mathcal{T}|\leq s$.
\end{lem}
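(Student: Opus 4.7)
The strategy is to decompose $\I_{\mathcal{T}}' \bm{D}_\new = \I_{\mathcal{T}}'(\I - \Phat_{(j),*}\Phat_{(j),*}') \bm{P}_{(j),\new}$ into a ``true subspace'' contribution controlled by the denseness assumption and a ``subspace error'' contribution controlled by the conditioning on $\Gamma_{k-1}$. By the triangle inequality and submultiplicativity,
\[
\|\I_{\mathcal{T}}' \bm{D}_\new\|_2 \le \|\I_{\mathcal{T}}' \bm{P}_{(j),\new}\|_2 + \|\I_{\mathcal{T}}' \Phat_{(j),*}\|_2 \, \|\Phat_{(j),*}' \bm{P}_{(j),\new}\|_2 .
\]
The first term is bounded using \eqref{kappa_dense} together with the monotonicity $\kappa_s \le \kappa_{2s}$: $\|\I_{\mathcal{T}}' \bm{P}_{(j),\new}\|_2 \le \kappa_s(\bm{P}_{(j),\new}) \le \kappa_{2s}(\bm{P}_{(j),\new}) \le \kappa_{s,\new} \le 0.02$. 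Also $\|\I_{\mathcal{T}}' \Phat_{(j),*}\|_2 \le 1$ since $\Phat_{(j),*}$ has orthonormal columns.

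For the remaining factor, the key observation is that $\bm{P}_{(j),*}' \bm{P}_{(j),\new} = \bm{0}$, which follows from Model \ref{exp_model} because $\bm{P}_{(j)} = [\bm{P}_{(j-1)}\ \bm{P}_{(j),\new}]$ is a basis matrix and $\bm{P}_{(j),*} = \bm{P}_{(j-1)}$. Subtracting zero and using the symmetry of $\I - \bm{P}_{(j),*}\bm{P}_{(j),*}'$ yields
\[
\Phat_{(j),*}' \bm{P}_{(j),\new} \;=\; \Phat_{(j),*}'(\I - \bm{P}_{(j),*}\bm{P}_{(j),*}')\bm{P}_{(j),\new} \;=\; \big((\I - \bm{P}_{(j),*}\bm{P}_{(j),*}')\Phat_{(j),*}\big)' \bm{P}_{(j),\new},
\]
so $\|\Phat_{(j),*}'\bm{P}_{(j),\new}\|_2 \le \|(\I - \bm{P}_{(j),*}\bm{P}_{(j),*}')\Phat_{(j),*}\|_2 = \mathrm{dif}(\bm{P}_{(j),*},\Phat_{(j),*})$. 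Under the conditioning $X_{k-1} \in \Gamma_{k-1}$, prior successful p-PCA updates (by Lemma \ref{pPCA} applied to earlier change times $t_1,\dots,t_{j-1}$) ensure that $\Phat_{(j),*}$ has the same number of columns as $\bm{P}_{(j),*}$; the symmetry of $\mathrm{dif}$ noted in its definition therefore gives $\mathrm{dif}(\bm{P}_{(j),*},\Phat_{(j),*}) = \mathrm{dif}(\Phat_{(j),*},\bm{P}_{(j),*}) = \zeta_{j,*}$.

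To close the argument, note that $\Gamma_{k-1} = \Gamma_{j,k-1}^{\uhat_j} \subseteq \Gamma_{j-1,\rmend}$, so Lemma \ref{falsedet} (with the index shifted by one) gives $\zeta_{j,*} \le \zeta_{j,*}^+ = (r_0 + (j-1)r_\new)\zeta \le r\zeta$. The theorem's bound $\zeta \le 10^{-4}/r^2$ then yields $r\zeta \le 10^{-4}/r \le 10^{-4}$, and combining the two pieces gives
\[
\|\I_{\mathcal{T}}' \bm{D}_\new\|_2 \;\le\; 0.02 + 10^{-4} \;<\; 0.0215 \;=\; \kappa_s^+.
\]
There is no real obstacle here: the entire argument is a short manipulation using orthogonality of the newly added directions to the old subspace, denseness, and the slack built into the choice of $\zeta$, with all the nontrivial probabilistic content already packaged into Lemmas \ref{pPCA} and \ref{falsedet}.
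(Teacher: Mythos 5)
Your proposal is correct and follows essentially the same route as the paper's proof: split $\I_{\mathcal{T}}'\bm{D}_\new$ via the triangle inequality into $\|\I_{\mathcal{T}}'\bm{P}_{(j),\new}\|_2 \le \kappa_s(\bm{P}_{(j),\new}) \le 0.02$ plus $\|\Phat_{(j),*}{}'\bm{P}_{(j),\new}\|_2$, use the orthogonality $\bm{P}_{(j),*}{}'\bm{P}_{(j),\new}=\bm{0}$ to rewrite the latter and bound it by $\zeta_{j,*}\le\zeta_{j,*}^+$ under the conditioning. Your extra care about the equal-column-count needed for the symmetry of $\mathrm{dif}$ (guaranteed because $\Gamma_{k-1}$ contains the earlier $\mathrm{PPCA}$ rank events) is a detail the paper absorbs into its Lemma \ref{lemma0}, and your slightly looser numerical bound $0.02+10^{-4}<0.0215$ still closes the argument.
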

The proof is in Appendix \ref{zeta_k_section}.

\subsection{Proofs of Lemma \ref{Ak} and \ref{Akperp}}

\begin{proof}[Proof of Lemma \ref{Ak}]
We obtain the bounds on $\bm{A}_u$ for $u = \uhat_j+k$ for $k=1,2,\dots, K$ and $\uhat_j = u_j$ or $u_j+1$.
For $u = \uhat_j+k$, recall that $\bm{A}_u :=  \frac{1}{\alpha} \sum_{t} {\bm{E}_{\rmnew}}' \Ltil_t \Ltil_t' \bm{E}_{\rmnew}$.

Notice that ${\bm{E}_{\rmnew}}' \Ltil_t = \bm{R}_{\rmnew} \bm{a}_{t,\rmnew} + {\bm{E}_{\rmnew}}' \bm{D}_* \bm{a}_{t,*}$. Let $\bm{Z}_t = \bm{R}_{\rmnew} \bm{a}_{t,\rmnew} {\bm{a}_{t,\rmnew}}' {\bm{R}_{\rmnew}}'$, and let $\bm{Y}_t = \bm{R}_{\rmnew} \bm{a}_{t,\rmnew}{\bm{a}_{t,*}}' {\bm{D}_*}' {\bm{E}_{\rmnew}} +  {\bm{E}_{\rmnew}}' \bm{D}_* \bm{a}_{t,*}{\bm{a}_{t,\rmnew}}' {\bm{R}_{\rmnew}}'$, then
\beq
\bm{A}_u  \succeq \frac{1}{\alpha} \sum_t \bm{Z}_t + \frac{1}{\alpha} \sum_t \bm{Y}_t \label{lemmabound_1}
\eeq

Consider $\frac{1}{\alpha} \sum_t \bm{Z}_t$. %= \sum_t \bm{R}_{\rmnew} \bm{a}_{t,\rmnew} {\bm{a}_{t,\rmnew}}' {\bm{R}_{\rmnew}}'$.
(1) The $\bm{Z}_t$'s are conditionally independent given $X_{k-1}$.
(2) With probability 1, $\|\bm{Z}_t\|_2\leq r_{\new}{\gamma_{\new}}^2$.
(3) Using a theorem of Ostrowoski \cite[Theorem 4.5.9]{hornjohnson}, conditioned on $X_{k-1} \in \Gamma_{k-1}$,
$\lambda_{\min}\left( \E[\frac{1}{\alpha} \sum_t \bm{Z}_t | X_{k-1}] \right) = \lambda_{\min}\left(  \bm{R}_{\rmnew} ( \frac{1}{\alpha} \sum_t \Lamtnew ) {\bm{R}_{\rmnew}}'\right) \geq \lambda_{\min} \left(\bm{R}_{\rmnew} {\bm{R}_{\rmnew}}'\right)\lambda_{\min} \left(\frac{1}{\alpha} \sum_t  \Lamtnew \right) \ge (1- (\zeta_{*}^+)^2) \lambda_\new^-$. The last inequality uses Lemma \ref{Eatat} and Fact \ref{matbnds}.

Thus, applying Corollary \ref{hoeffding_nonzero} with $\epsilon$ given by (\ref{def_eps}), we get that, for all $X_{k-1}\in\Gamma_{k-1}$,
\begin{equation}
\mathbb{P}\left(\lambda_{\min} \left(\frac{1}{\alpha} \sum_t \bm{Z}_t\right)
\geq   (1-(\zeta_*^+)^2)\lambda_{\rmnew}^-  - \epsilon \bigg| X_{k-1} \right)
\geq  1- r_\new \exp \left(\frac{-\alpha \zeta^2 (\lammin)^2}{8 \cdot 100^2 \cdot {\gamma_{\rmnew}}^4}\right).
 \label{lemma_add_A1}
\end{equation}

Consider $\bm{Y}_t = \bm{R}_{\rmnew} \bm{a}_{t,\rmnew}{\bm{a}_{t,*}}' {\bm{D}_*}' {\bm{E}_{\rmnew}} +  {\bm{E}_{\rmnew}}' \bm{D}_* \bm{a}_{t,*}{\bm{a}_{t,\rmnew}}' {\bm{R}_{\rmnew}}'$.
(1)  The $\bm{Y}_t$'s are conditionally independent given $X_{k-1}$.
(2) Using  the bound on $\zeta$ from the theorem, $\|\bm{Y}_t\| \le 2\sqrt{r_\new r} \zeta_*^+ \gamma \gamma_{\rmnew}  \leq 2\sqrt{r_\new r} \zeta_*^+ {\gamma}^2 \le  2$ holds with probability one for all $X_{k-1} \in \Gamma_{k-1}$. Thus, under the same conditioning, $-2 \bm{I} \preceq \bm{Y}_t  \preceq 2 \bm{I}$ with with probability one.
(3) By Lemma \ref{Eatat},  $\E\left(\frac{1}{\alpha}\sum_t \bm{Y}_t|X_{k-1}\right) = \bm{0}$ for all $X_{k-1} \in \Gamma_{k-1}$.

Thus, applying Corollary \ref{hoeffding_nonzero} with $\epsilon$ given by (\ref{def_eps}), we get that, for all $X_{k-1}\in\Gamma_{k-1}$
\beq
\mathbb{P}\left(\lambda_{\min} \left(\frac{1}{\alpha} \sum_t \bm{Y}_t \right) \geq - \epsilon \Big| X_{k-1} \right) \geq 1- c \exp \left( \frac{-\alpha {r_\new}^2 \zeta^2(\lammin)^2} {8 \cdot 100^2 \cdot (4)^2}\right)
\label{lemma_add_A2}
\eeq

Combining (\ref{lemmabound_1}), (\ref{lemma_add_A1}) and (\ref{lemma_add_A2}) and using the union bound, we get the lemma.
\end{proof}

\begin{proof}[Proof of Lemma \ref{Akperp}]
Remark \ref{notes1} applies.

We obtain the bounds on $\bm{A}_{u,\perp}$ for $u = \uhat_j+k$ for $k=1,2,... K$ with $\uhat_j = u_j$ or $u_j+1$. For all these $u$'s, recall that $\bm{A}_{u,\perp} := \frac{1}{\alpha} \sum_t {\bm{E}_{\rmnew,\perp}}' \Ltil_t \Ltil_t{}' \bm{E}_{\rmnew,\perp}$. Using $\bm{E}_{\rmnew,\perp}{}' \bm{D}_\new = 0$, we get that ${\bm{E}_{\rmnew,\perp}}' \Ltil_t = {\bm{E}_{\rmnew,\perp}}' \bm{D}_* \bm{a}_{t,*}$. Thus, $\bm{A}_{u,\perp} = \frac{1}{\alpha} \sum_t \bm{Z}_t$ with  $\bm{Z}_t={\bm{E}_{\rmnew,\perp}}' \bm{D}_* \bm{a}_{t,*} {\bm{a}_{t,*}}' {\bm{D}_*}' \bm{E}_{\rmnew,\perp}$.

Using the same ideas as for the previous proof we can show that $\bm{0} \preceq \bm{Z}_t \preceq r (\zeta_*^+)^2 {\gamma}^2 \bm{I} \preceq \zeta \bm{I}$ and $\E\left(\frac{1}{\alpha}\sum_t \bm{Z}_t|X_{k-1}\right) \preceq (\zeta_*^+)^2 \lambda^+ \bm{I}$.  Thus by Corollary \ref{hoeffding_nonzero} the lemma follows.
\end{proof}

\subsection{Proof of Lemma \ref{calHk}}

\begin{proof}[Proof of Lemma \ref{calHk}]\label{calHk_proof}
Remark \ref{notes1} applies.
Using the expression for $\bm{\mathcal{H}}_u$ given in Definition \ref{defHk}, and noting that for a basis matrix $\bm{E}$, $\bm{EE}' + \bm{E}_{\perp}{\bm{E}_{\perp}}' = \I$ we get that
\[
\bm{\mathcal{H}}_u = \frac{1}{\alpha} \sum_{t\in\J_u} \Big( (\I - \Phat_*\Phat_*{}') \bm{e}_t {\bm{e}_t}' (\I - \Phat_*\Phat_*{}') - (\Ltil_t {\bm{e}_t}' (\I - \Phat_*\Phat_*{}') + (\I - \Phat_*\Phat_*{}') \bm{e}_t \Ltil_t{}')  + (\bm{F}_t + {\bm{F}_t}') \Big)
\]
%\frac{1}{\alpha} \sum_{t\in\J_u} \Big( (\I - \hat{\bm{P}}_{(j),*}\hat{\bm{P}}_{(j),*}{}') \bm{e}_t {\bm{e}_t}' (\I - \hat{\bm{P}}_{(j),*}\hat{\bm{P}}_{(j),*}{}') - \big(\Ltil_t {\bm{e}_t}' (\I - \hat{\bm{P}}_{(j),*}\hat{\bm{P}}_{(j),*}{}') + (\I - \hat{\bm{P}}_{(j),*}\hat{\bm{P}}_{(j),*}{}') \bm{e}_t \Ltil_t{}'\big)  + \big(\bm{F}_t + {\bm{F}_t}'\big) \Big) \]
where
\[
\bm{F}_t = \bm{E}_{\rmnew,\perp}{\bm{E}_{\rmnew,\perp}}' \Ltil_t \Ltil_t{}' \bm{E}_{\rmnew}{\bm{E}_{\rmnew}}'.
\]
Thus,
\begin{equation}\label{add_calH1}
\| \bm{\mathcal{H}}_u \|_2 \leq  2\bigg\| \frac{1}{\alpha} \sum_t  \Ltil_t {\bm{e}_t}'  \bigg\|_2 + \bigg\| \frac{1}{\alpha} \sum_t \bm{e}_t {\bm{e}_t}' \bigg\|_2 + 2\bigg\| \frac{1}{\alpha} \sum_t \bm{F}_t \bigg\|_2
\end{equation}
Next we obtain high probability bounds on each of the three terms on the right hand side of (\ref{add_calH1}). %using the Azuma corollaries.

Consider $\big\| \frac{1}{\alpha} \sum_t  \Ltil_t {\bm{e}_t}'  \big\|_2$. Using Lemma \ref{cslem}, $\et$ satisfies \eqref{etdef0} with probability one for all $X_{k-1} \in  \Gamma_{k-1}$.
%Thus under this conditioning,
%\begin{align*}
% \Ltil_t {\bm{e}_t}'
%&=   (\bm{D}_* \bm{a}_{t,*} + \bm{D}_{\rmnew} \bm{a}_{t,\rmnew})(\bm{D}_{*,k-1} \bm{a}_{t,*} + \bm{D}_{\rmnew,k-1} \bm{a}_{t,\rmnew})' \bm{I}_{\mathcal{T}_t}[{(\bm{\Phi}_{t})_{\mathcal{T}_t}}'(\bm{\Phi}_{t})_{\mathcal{T}_t}]^{-1} {\bm{I}_{\mathcal{T}_t}}'
%\end{align*}

Let $\bm{Z}_t: = \Ltil_t {\bm{e}_t}' $.
(1) Conditioned on $X_{k-1}$, the various $\bm{Z}_t$'s used in the summation are mutually independent, for all $X_{k-1} \in \Gamma_{k-1}$.
(2) For  $X_{k-1} \in \Gamma_{k-1}$,
\[
\|\bm{Z}_t\|_2 = \|\Ltil_t {\bm{e}_t}' \|_2 \leq \Big(\zeta_*^+\sqrt{r}\gamma + \sqrt{r_\new}\gamma_{\rmnew} \Big) \Big( \phi^+ (\zeta_*^+ \sqrt{r} \gamma + \zeta_{\new,k-1}^+ \sqrt{r_\new}\gamma_{\rmnew}) \Big) := b_3
\]
holds with probability one.
(3) First consider the $k\geq2$ case. When $X_{k-1} \in \Gamma_{k-1}$,
\begin{align*}
&\bigg\| \E \bigg[   \frac{1}{\alpha}\sum_t \Ltil_t {\bm{e}_t}'  \ \big| \ X_{k-1} \bigg] \bigg\|_2  \\
 = & \bigg\| \frac{1}{\alpha}\sum_t \left[ \Big(   \bm{D}_*  \bm{\Lambda}_{t,*}  {\bm{D}_{*,k-1}}'  + \bm{D}_{\rmnew} \bm{\Lambda}_{t,\new}  {\bm{D}_{\rmnew,k-1}}' \Big) \bm{I}_{\mathcal{T}_t}[{(\bm{\Phi}_{t})_{\mathcal{T}_t}}'(\bm{\Phi}_{t})_{\mathcal{T}_t}]^{-1} {\bm{I}_{\mathcal{T}_t}}' \right] \bigg\|_2\\
\leq &  \sqrt{\lambda_{\max}\left(\frac{1}{\alpha}\sum_t \Big(   \bm{D}_*  \bm{\Lambda}_{t,*}  {\bm{D}_{*,k-1}}'  + \bm{D}_{\rmnew} \bm{\Lambda}_{t,\new}  {\bm{D}_{\rmnew,k-1}}' \Big)\Big(   \bm{D}_*  \bm{\Lambda}_{t,*}  {\bm{D}_{*,k-1}}'  + \bm{D}_{\rmnew} \bm{\Lambda}_{t,\new}  {\bm{D}_{\rmnew,k-1}}' \Big)' \right)} \\
& \sqrt{\lambda_{\max}\left( \frac{1}{\alpha}\sum_t \Big(\bm{I}_{\mathcal{T}_t}[{(\bm{\Phi}_{t})_{\mathcal{T}_t}}'(\bm{\Phi}_{t})_{\mathcal{T}_t}]^{-1} {\bm{I}_{\mathcal{T}_t}}'\Big)\Big(\bm{I}_{\mathcal{T}_t}[{(\bm{\Phi}_{t})_{\mathcal{T}_t}}'(\bm{\Phi}_{t})_{\mathcal{T}_t}]^{-1} {\bm{I}_{\mathcal{T}_t}}'\Big)' \right)  } \\
\leq& \left( (\zeta_*^+)^2 \lambda^+ + \zeta_{\new,k-1}^+\lambda_{\rmnew}^+  \right)\left(\sqrt{\rho^2 h^+}\phi^+\right).
\end{align*}
The first inequality is by Cauchy-Schwarz for a sum of matrices.  This can be found as Lemma \ref{CSmat} in Appendix \ref{prelim}.
The second inequality uses Fact \ref{matbnds} (for the first term of the product) and Lemma $\ref{blockdiag1}$ with $\sigma^+ = (\phi^+)^2$ (for the second term of the product).

Now consider the $k=1$ case. To bound $\bigg\| \frac{1}{\alpha}\sum_t \bm{D}_*  \bm{\Lambda}_{t,*}  {\bm{D}_{*,0}}' \bm{I}_{\mathcal{T}_t}[{(\bm{\Phi}_{t})_{\mathcal{T}_t}}'(\bm{\Phi}_{t})_{\mathcal{T}_t}]^{-1} {\bm{I}_{\mathcal{T}_t}}' \bigg\|_2$ we proceed exactly as we did for the $k\ge 2$ case. We can bound this by $(\zeta_*^+)^2 \lambda^+ \sqrt{\rho^2 h^+} \phi^+$.
To bound $\bigg\| \frac{1}{\alpha}\sum_t \bm{D}_{\rmnew} \bm{\Lambda}_{t,\new}  {\bm{D}_{\new,0}}' \I_{\T_t}[{(\bm{\Phi}_{t})_{\mathcal{T}_t}}'(\bm{\Phi}_{t})_{\mathcal{T}_t}]^{-1} {\bm{I}_{\mathcal{T}_t}}' \bigg\|_2$, we apply Lemma \ref{Dnew0_lem} to get\footnote{Notice that if we want to use the bound of Lemma \ref{Dnew0_lem}, we cannot also apply Lemma \ref{blockdiag1} for this term. We can get a simpler proof by not using Lemma \ref{Dnew0_lem} at all and proceeding exactly as we did for the $k\ge2$ case; but doing this will require a much tighter bound on $\rho^2 h^+$ than what we currently need.}  $\|{\bm{D}_{\new,0}}'\I_{\T_t}\|_2 \leq \kappa_s^+$ . Using this and Fact \ref{matbnds}, we can bound this by $\kappa_{s}^+\lambda_{\new}^+ \phi^+$.
Thus,  when $X_{0} \in \Gamma_{0}$,
\begin{align*}
&\bigg\| \E \bigg[   \frac{1}{\alpha}\sum_t \Ltil_t {\bm{e}_t}'  \ \big| \ X_{0} \bigg] \bigg\|_2  \\
%=& \bigg\|  \E\bigg[ \frac{1}{\alpha}\sum_t\bm{\Phi}_{(0)}(\bm{P}_{*}\bm{a}_{t,*} + \bm{P}_{\rmnew} \bm{a}_{t,\rmnew}) (\bm{P}_{*} \bm{a}_{t,*} + \bm{P}_{\rmnew} \bm{a}_{t,\rmnew})' {\bm{\Phi}_{t}}'\bm{I}_{\mathcal{T}_t}[{(\bm{\Phi}_{t})_{\mathcal{T}_t}}'(\bm{\Phi}_{t})_{\mathcal{T}_t}]^{-1} {\bm{I}_{\mathcal{T}_t}}'\bm{\Phi}_{(0)}   \ \big| \ X_{0} \bigg]\bigg\|_2 \\
%=& \bigg\| \E \bigg[   \frac{1}{\alpha} \sum_t (\bm{D}_* \bm{a}_{t,*} + \bm{D}_{\new} \bm{a}_{t,\new})(\bm{D}_{*} \bm{a}_{t,*} + \bm{D}_{\new} \bm{a}_{t,\rmnew})' \bm{I}_{\mathcal{T}_t}[{(\bm{\Phi}_{t})_{\mathcal{T}_t}}'(\bm{\Phi}_{t})_{\mathcal{T}_t}]^{-1} {\bm{I}_{\mathcal{T}_t}}'  \ \big| \ X_{0} \bigg] \bigg\|_2 \\
=& \bigg\| \frac{1}{\alpha}\sum_t \left[ \Big(   \bm{D}_*  \bm{\Lambda}_{t,*}  {\bm{D}_{*}}'  + \bm{D}_{\rmnew} \bm{\Lambda}_{t,\new}  {\bm{D}_{\new}}' \Big) \bm{I}_{\mathcal{T}_t}[{(\bm{\Phi}_{t})_{\mathcal{T}_t}}'(\bm{\Phi}_{t})_{\mathcal{T}_t}]^{-1} {\bm{I}_{\mathcal{T}_t}}' \right] \bigg\|_2\\
\leq & \left( \sqrt{\rho^2 h^+} (\zeta_{*}^+)^2\lambda^+ + \kappa_{s}^+\lambda_{\new}^+ \right)\phi^+.
\end{align*}

Thus, by Corollary \ref{hoeffding_rec} with $\epsilon$ given by (\ref{def_eps}), we get that,  for all $X_{k-1}\in\Gamma_{k-1}$,
\begin{equation}
\label{add_ltet}
\mathbb{P} \left(  \Big\| \frac{1}{\alpha} \sum_{t}  \Ltil_t {\bm{e}_t}'  \Big\|_2 \leq b_{\l \e,k}  \Bigg| X_{k-1} \right) \geq 1 - n \exp\left(  \frac{-\alpha {r_\new}^2\zeta^2(\lammin)^2}{32 \cdot 100^2 {b_3}^2} \right).
\end{equation}

Consider $\|\frac{1}{\alpha} \sum_t \bm{e}_t {\bm{e}_t}'\|_2$. Let $\bm{Z}_t = \bm{e}_t {\bm{e}_t}'$.
(1) Conditioned on $X_{k-1}$, the various $\bm{Z}_t$'s in the summation are independent, for all $X_{k-1} \in \Gamma_{k-1}$.
(2) Using Lemma \ref{cslem}, conditioned on $X_{k-1}\in\Gamma_{k-1}$,
\[
\bm{0} \preceq \bm{Z}_t \preceq \Big( \phi^+ (\zeta_*^+ \sqrt{r} \gamma + \zeta_{\new,k-1}^+ \sqrt{r_\new}\gamma_{\rmnew}) \Big)^2 \bm{I}:=  b_1 \I
\]
with probability one.
(3)  By Fact \ref{matbnds}, when $X_{k-1} \in \Gamma_{k-1}$,
\begin{align*}
 &  \frac{1}{\alpha}\sum_t \E \left[ \et{\et}' | X_{k - 1} \right] \\
%&=   \frac{1}{\alpha}\sum_t \E \left[\Big( \bm{I}_{\mathcal{T}_t} [ ({\bm{\Phi}_{t})_{\mathcal{T}_t}}'(\bm{\Phi}_{t})_{\mathcal{T}_t}]^{-1}  {\bm{I}_{\mathcal{T}_t}}' \bm{\Phi}_{t} \bm{\ell}_t \Big)\Big(  \bm{I}_{\mathcal{T}_t} [ ({\bm{\Phi}_{t})_{\mathcal{T}_t}}'(\bm{\Phi}_{t})_{\mathcal{T}_t}]^{-1}  {\bm{I}_{\mathcal{T}_t}}' \bm{\Phi}_{t} \bm{\ell}_t \Big)' | X_{k - 1} \right]\\
%&=    \frac{1}{\alpha}\sum_t \E\left[\Big(  \bm{I}_{\mathcal{T}_t} [ ({\bm{\Phi}_{t})_{\mathcal{T}_t}}'(\bm{\Phi}_{t})_{\mathcal{T}_t}]^{-1}  {\bm{I}_{\mathcal{T}_t}}' \bm{\Phi}_{t} \bm{\ell}_t {\lt}' \bm{\Phi}_{t}\bm{I}_{\mathcal{T}_t}[ ({\bm{\Phi}_{t})_{\mathcal{T}_t}}'(\bm{\Phi}_{t})_{\mathcal{T}_t}]^{-1}{\bm{I}_{\mathcal{T}_t}}' | X_{k - 1} \right] \\
%&= \frac{1}{\alpha}\sum_t   \bm{I}_{\mathcal{T}_t} [ ({\bm{\Phi}_{t})_{\mathcal{T}_t}}'(\bm{\Phi}_{t})_{\mathcal{T}_t}]^{-1}  {\bm{I}_{\mathcal{T}_t}}' \bm{\Phi}_{t} (\bm{P}_{*}(\bm{\Lambda}_{t})_{*}{\bm{P}_{*}}' + \bm{P}_{\rmnew}\bm{\Lambda}_{t,\new}{\bm{P}_{\rmnew}}' ) \bm{\Phi}_{t}\bm{I}_{\mathcal{T}_t}[ ({\bm{\Phi}_{t})_{\mathcal{T}_t}}'(\bm{\Phi}_{t})_{\mathcal{T}_t}]^{-1}{\bm{I}_{\mathcal{T}_t}}'\\
&= \frac{1}{\alpha}\sum_t   \bm{I}_{\mathcal{T}_t} [ ({\bm{\Phi}_{t})_{\mathcal{T}_t}}'(\bm{\Phi}_{t})_{\mathcal{T}_t}]^{-1}  {\bm{I}_{\mathcal{T}_t}}' \Big(\bm{D}_{*,k-1}\bm{\Lambda}_{t,*} {\bm{D}_{*,k-1}}'+ \bm{D}_{\rmnew,k-1}\bm{\Lambda}_{t,\new}{\bm{D}_{\rmnew,k-1}}' \Big) \bm{I}_{\mathcal{T}_t}[ ({\bm{\Phi}_{t})_{\mathcal{T}_t}}'(\bm{\Phi}_{t})_{\mathcal{T}_t}]^{-1}{\bm{I}_{\mathcal{T}_t}}'
\end{align*}

When $k=1$ we can apply Lemma \ref{Dnew0_lem} to get that $\|{\bm{D}_{\new,0}}'\I_{\T_t}\|_2 \leq \kappa_s^+$.  Then we apply Lemma \ref{blockdiag1} with $\sigma^+ = (\phi^+)^2 \left( (\zeta_{*}^+)^2\lambda^+ +   (\kappa_{s}^+)^2 \lambda_{\rmnew}^+ \right)$.
This gives
\[
\bm{0} \preceq \E \left[  \sum_{t} \et{\et}' \Big| X_{0} \right] \preceq  \rho^2 h^+(\phi^+)^2 \Big((\zeta_*^+)^2\lambda^+ + (\kappa_{s}^+)^2 \lambda_{\rmnew}^+ \Big)\I
\quad\text{ for all }X_{0} \in \Gamma_{0}.
\]
When $k\geq2$ we can apply Lemma \ref{blockdiag1} with $\sigma^+ = (\phi^+)^2 \left( (\zeta_{*}^+)^2\lambda^+ +  (\zeta_{\new,k-1}^+)^2 \lambda_{\rmnew}^+ \right)$ to get that,
\[
\bm{0} \preceq \E \left[  \sum_{t} \et{\et}' \Big| X_{k - 1} \right] \preceq  \rho^2 h^+(\phi^+)^2 \Big((\zeta_*^+)^2\lambda^+ + (\zeta_{\new,k-1}^+)^2 \lambda_{\rmnew}^+ \Big)\I
\quad\text{ for all }X_{k - 1} \in \Gamma_{k-1}.
\]
Thus, applying Corollary \ref{hoeffding_nonzero} with $\epsilon$ given by \eqref{def_eps}, we get that,  for all $X_{k-1}\in\Gamma_{k-1}$,
\beq
\label{add_etet}
\mathbb{P} \left( \Big\|\frac{1}{\alpha} \sum_{t}  \bm{e}_t {\bm{e}_t}' \Big\|_2 \leq b_{\e \e,k}  \Big| X_{k-1} \right) \geq 1- n \exp\left(\frac{-\alpha r_\new^2 \zeta^2 (\lammin)^2}{ 8 \cdot 100^2 {b_1}^2}\right).
\eeq

Finally, consider  $\big\| \frac{1}{\alpha} \sum_t \bm{F}_t \big\|_2$.
Since $\bm{E}_{\rmnew,\perp}{}'\bm{D}_\rmnew = 0$,
\begin{align*}
 \bm{F}_t   &  =   \bm{E}_{\rmnew,\perp} {\bm{E}_{\rmnew,\perp}}' \Ltil_t \Ltil_t{}' \bm{E}_{\rmnew}{\bm{E}_{\rmnew}}'   \\
%&=  \bm{E}_{\rmnew,\perp}{\bm{E}_{\rmnew,\perp}}'(\bm{D}_* \bm{a}_{t,*} + \bm{D}_{\rmnew} \bm{a}_{t,\rmnew})(\bm{D}_{*} \bm{a}_{t,*} + \bm{D}_{\rmnew} \bm{a}_{t,\rmnew})' \bm{E}_{\rmnew}{\bm{E}_{\rmnew}}'  \\
&=  \bm{E}_{\rmnew,\perp}{\bm{E}_{\rmnew,\perp}}'(\bm{D}_* \bm{a}_{t,*} )(\bm{D}_{*} \bm{a}_{t,*} + \bm{D}_{\rmnew} \bm{a}_{t,\rmnew})' \bm{E}_{\rmnew}{\bm{E}_{\rmnew}}'
%\\ &= \bm{E}_{\rmnew,\perp}{\bm{E}_{\rmnew,\perp}}'\big(\bm{D}_* \bm{a}_{t,*}{\bm{a}_{t,*}}'{\bm{D}_*}' +  \bm{D}_* \bm{a}_{t,*}{\bm{a}_{t,\rmnew}}'{\bm{D}_{\rmnew}}' \big)' \bm{E}_{\rmnew}{\bm{E}_{\rmnew}}'
\end{align*}
(1) Conditioned on $X_{k-1}$, the $\bm{F}_t$'s  are mutually independent, for all $X_{k-1} \in  \Gamma_{k-1}$.
(2) For $X_{k-1} \in  \Gamma_{k-1}$,
\[
\|\bm{F}_t\|_2 \leq (\zeta_*^+)^2 r\gamma^2 + \zeta_*^+ \sqrt{r r_\new} \gamma \gamma_\new := b_5
\]
 holds with probability 1.
(3) For $X_{k-1} \in \Gamma_{k-1}$,
\begin{align*}
\bigg\| \E \Big[ \frac{1}{\alpha} \sum_t \bm{F}_t  \ \big| \ X_{k-1} \Big] \bigg\|_2 %&  = \bigg\| \E \Big[ \frac{1}{\alpha} \sum_t \bm{E}_{\rmnew,\perp} {\bm{E}_{\rmnew,\perp}}' \Ltil_t \Ltil_t{}' \bm{E}_{\rmnew}{\bm{E}_{\rmnew}}' \ \big| \ X_{k-1} \Big] \bigg\|_2 \\
%&\leq \bigg\| \E \Big[ \frac{1}{\alpha} \sum_t \bm{E}_{\rmnew,\perp}{\bm{E}_{\rmnew,\perp}}'\Ltil_t \Ltil_t{}' \ \big| \ X_{k-1} \Big] \bigg\|_2  \\
%&= \bigg\|\frac{1}{\alpha}\sum_t  \bm{E}_{\rmnew,\perp} {\bm{E}_{\rmnew,\perp}}' (\bm{D}_*\bm{\Lambda}_t {\bm{D}_*}' + \bm{D}_{\rmnew}\bm{\Lambda}_{(j),\rmnew}\bm{D}_{\rmnew}') \bigg\|_2 \\
%&= \bigg\|\frac{1}{\alpha}\sum_t  \bm{E}_{\rmnew,\perp} {\bm{E}_{\rmnew,\perp}}' (\bm{D}_*\bm{\Lambda}_t {\bm{D}_*}' ) \bigg\|_2 \\
&\leq \bigg\|\frac{1}{\alpha}\sum_t  (\bm{D}_*\bm{\Lambda}_{t,*} {\bm{D}_*}' ) \bigg\|_2 \leq (\zeta_*^+)^2\lambda^+ = b_{\bm{F}}
\end{align*}
Applying Corollary \ref{hoeffding_rec} with $\epsilon$ given by (\ref{def_eps}), we get that,  for all $X_{k-1}\in\Gamma_{k-1}$,
\begin{equation}\label{add_Ft}
\mathbb{P} \left( \Big\|  \frac{1}{\alpha}\sum_t \bm{F}_t  \Big\|_2 \leq b_{\bm{F}} \Bigg| X_{k-1} \right) \geq 1 - n \exp\left(  \frac{-\alpha {r_\new}^2\zeta^2(\lammin)^2}{32 \cdot 100^2 {b_5}^2} \right)
\end{equation}
Combining \eqref{add_calH1} with \eqref{add_ltet}, \eqref{add_etet} and \eqref{add_Ft} and using the union bound, we get the lemma. The expression for $p_{\mathcal{H}}$ given in the lemma uses the bounds on $\zeta$ from the theorem and uses the loose bound $\zeta_{j,\new,k-1}^+ \le 1$ (to get a simpler expression for the probabilities).
\end{proof}

\section{Simulation Experiments} \label{sims}

In this section we provide some simulations that demonstrate the robust PCA result we have proven above.
More detailed simulations using real data can be found in \cite{han_tsp}.

The data for Figure \ref{fig:cor} was generated as follows.
We chose $n = 256$ and $t_{\max} = 15,000$.  Each measurement had $s=20$ missing or corrupted entries, i.e. $|\T_t|=20$. Each non-zero entry of $\xt$ was drawn uniformly at random between 2 and 6 independent of other entries and other times $t$.  In Figure \ref{fig:cor} the support of $\xt$ changes as assumed in Model \ref{sbyrho} with $\rrho=2$ and $\beta=18$.
So the support of $\xt$ changes by $\frac{s}{2} = 10$ indices every 18 time instants.
When the support of $\xt$ reaches the bottom of the vector, it starts over again at the top.
This pattern can be seen in the bottom half of the figure which shows the sparsity pattern of the matrix $\bm{S} = [\bm{x}_1, \dots, \bm{x}_{t_{\max}}]$.

To form the low dimensional vectors $\lt$,  we started with an $n \times r$ matrix of i.i.d. Gaussian entries and orthonormalized the columns using Gram-Schmidt.
The first $r_0 = 10$ columns of this matrix formed $\bm{P}_{(0)}$, the next 2 columns formed $\bm{P}_{(1),\rmnew}$, and the last 2 columns formed $\bm{P}_{(2),\rmnew}$
We show two subspace changes which occur at $t_1 = 600$ and $t_2 = 8000$.
The entries of  $\bm{a}_{t,*}$ were drawn uniformly at random between -5 and 5,
and the entries of $\bm{a}_{t,\rmnew}$ were drawn uniformly at random between $-\sqrt{3v_i^{t-t_j}\lammin}$ and $\sqrt{3v_i^{t-t_j}\lammin}$ with $v_i = 1.00017$ and $\lammin = 1$ (and $q_i=1$). Thus $(\bm{\Lambda}_{t,\new})_{i,i} = v_i^{t-t_j}\lammin$ as assumed in Model \ref{exp_model}.
Entries of $\bm{a}_t$ were independent of each other and of the other $\bm{a}_t$'s.

For this simulated data we compare the performance of ReProCS and PCP.  The plots show the relative error in recovering $\lt$, that is $\|\lt-\lhatt\|_2/\|\lt\|_2$.
For the initial subspace estimate $\hat{\bm{P}}_{0}$, we used $\bm{P}_{0}$ plus some small Gaussian noise and then obtained orthonormal columns.
We set $\alpha = 800$ and $K = 6$.
For the PCP algorithm,  we perform the optimization every $\alpha$ time instants using all of the data up to that point.  So the first time PCP is performed on $[ \bm{m}_1 , \dots , \bm{m}_{\alpha}]$ and the second time it is performed on $[ \bm{m}_1 , \dots , \bm{m}_{2\alpha}]$ and so on.

Figure \ref{fig:cor} illustrates the result we have proven.  That is ReProCS takes advantage of the initial subspace estimate and slow subspace change (including the bound on $\gamma_{\rmnew}$) to handle the case when the supports of $\xt$ are correlated in time.
Notice how the ReProCS error increases after a subspace change, but decays exponentially with each projection PCA step.
For this data, the PCP program fails to give a meaningful estimate for all but a few times.
The average time taken by the ReProCS algorithm was 52 seconds, while PCP averaged over 5 minutes.  Simulations were coded in MATLAB${}^\circledR$ and run on a desktop computer with a 3.2 GHz processor.

%??? Is this figure needed - does this seem to indicate that reprocs does not work for random support changes - many people on see the figure, not read the simulation section description ??? Compare this to Figure \ref{fig:rand} where the only change in the data is that the support of $\bm{X}$ is chosen uniformly at random from all sets of size $\frac{s t_{\max}}{n}$ (as assumed in \cite{rpca}).  Thus the total sparsity of the matrix $\bm{X}$ is the same for both figures.  In Figure \ref{fig:rand}, ReProCS performs almost the same, while PCP does substantially better than in the case of correlated supports.

\begin{figure}[t]
\begin{centering}
\includegraphics[scale=.7]{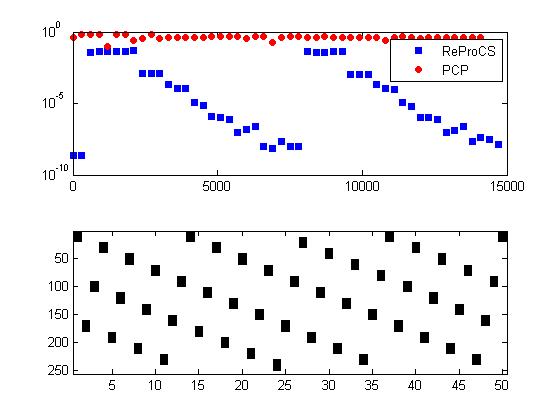}
 \caption{Comparison of ReProCS and PCP for the RPCA problem.  The top plot is the relative error $\|\lt - \lhatt\|_2 / \|\lt\|_2$.  The bottom plot shows the sparsity pattern of $\bm{S}$ (black represents a non-zero entry).  Results are averaged over 100 simulations and plotted every 300 time instants.
 \label{fig:cor}}
 \end{centering}
\end{figure}

\section{Extensions} \label{extensions}
In this section, we first give other models on changes in $\T_t$ that are special cases of the general model Model \ref{general_model} and hence can also be used in Theorem \ref{thm1_mc} or \ref{thm1}. The next three subsections discuss various other results that can also be proved using the proof techniques developed in this work.

\subsection{Other Models on Changes in $\T_t$}
We give here other models on changes in $\T_t$ that are special cases of Model \ref{general_model}.

\begin{sigmodel} \label{randsup}
Suppose that $\T_t$ consists of consecutive indices and is of size $s$ or less, i.e. $|\T_t| \le s$.
When $\T_t$ is not empty, let $\tilde{o}_t$ denote its smallest (topmost) index.
Let $\rho_1$ be an integer. We assume that $\tilde{o}_t$ satisfies the following Bernoulli-Gaussian model:%
\begin{align*}\label{probsupp}
\tilde{o}_t = \lceil o_t \mod n \rceil \text{ where }
o_t = o_{t-1} + \theta_t \left( 1.1\frac{s}{\rrho} + \varpi_t \right)
\end{align*}
where $\varpi_t \sim \mathcal{N}(0,\sigma^2)$ (Gaussian) and $\theta_t \sim Bernoulli(q)$. %Here $a \mod b$ returns the remainder when $b$ is divided by $a$.
Assume that $\{ \varpi_t \}$, $\{ \theta_t \}$ are mutually independent and independent of $\lt$'s. % for $t = 1,\dots,t_{\max}$.
Taking the mod with respect to $n$ describes the process of the set $\T_t$ starting over at $1$ when its topmost index exceeds $n$ (this models a new object appearing after the old one has disappeared; notice that at any $t$ $\T_t$ could be empty as well, i.e. there may be no object).

Assume that $s \le \frac{1.2\rrho n}{\alpha}$, $ q \geq 1 - (  \frac{n^{-10}}{2t_{\max}} )^{\frac{1}{\beta}}$ for a $\beta$ that satisfies $\rrho^2 \frac{\beta}{\alpha} \leq  0.01$, and $\sigma^2  \le \frac{s^2}{4000 \rrho^2\log(n)}$.
\end{sigmodel}

\begin{sigmodel}\label{everyframe}
Suppose that  $\T_t$ consists of $s$ consecutive indices and suppose that it moves down the vector by between 1 and $m$ indices at every time $t$. When it reaches the bottom of the vector, we  assume that it starts over at $1$. Assume that $s\leq 0.0025 \alpha$ and $m \leq \frac{n- s}{\alpha} $.
\end{sigmodel}

\begin{figure}
\centering
\includegraphics[height=3in]{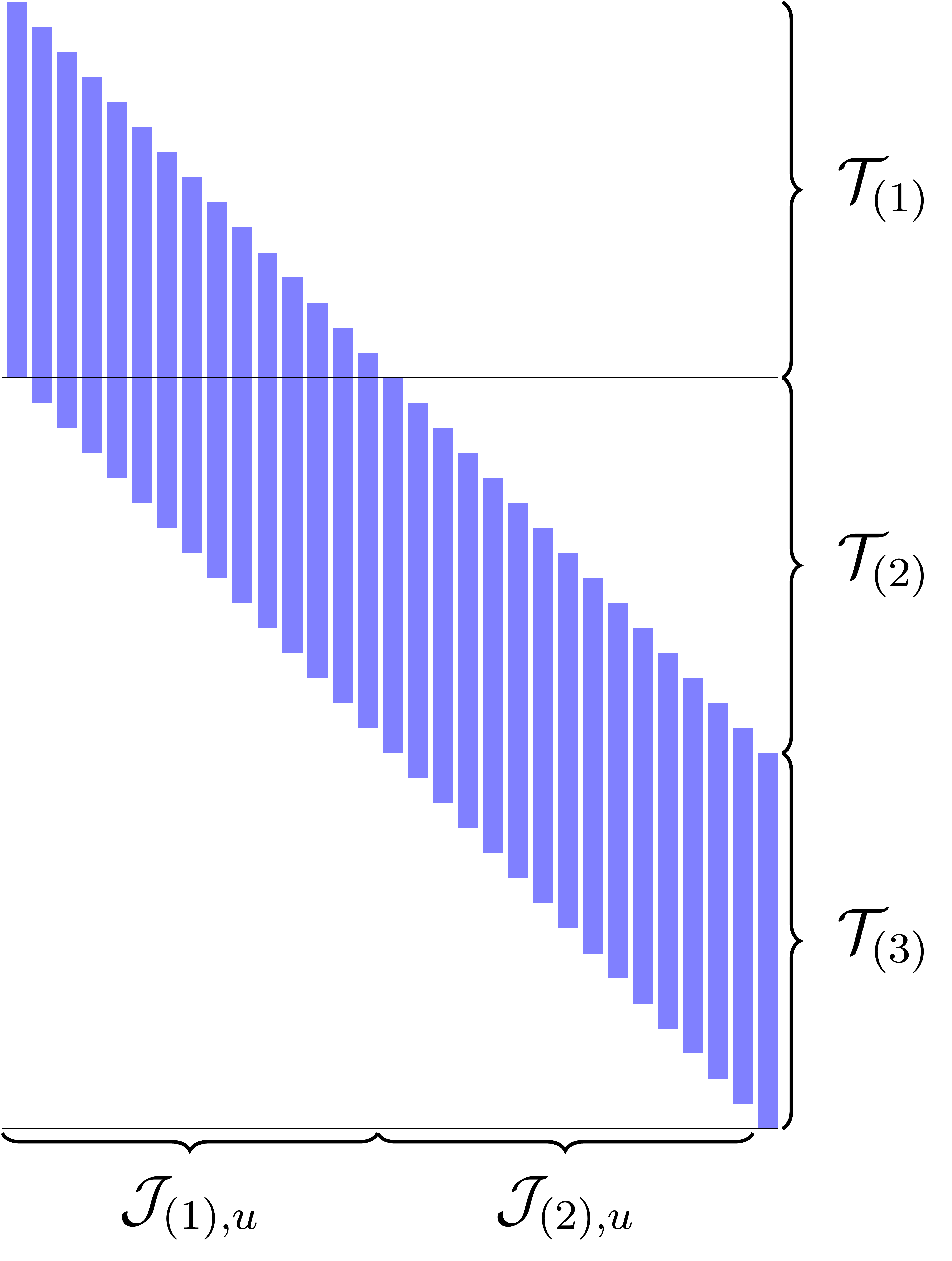}
\caption{Model \ref{everyframe} \label{fig:everyframe}}
\end{figure}

\begin{sigmodel}
In both models above we let $\T_t$ contain consecutive indices. This models a moving 1D object of length s or less that enters the scene and eventually walks out, and then another object of length $s$ or less may come in. However notice that nothing in our general model, Model \ref{general_model}, requires the indices to be consecutive or contiguous in any way. Thus in both of Models \ref{randsup} and \ref{everyframe} above, instead of one moving object, we can also have multiple moving objects as long as the union of their supports is of size at most $s$ and satisfies one of these models.  Also, with minor changes, the object(s) instead of leaving the scene can reflect back up and start moving in the other direction as well.
\end{sigmodel}

%\begin{remark}
%For both of the above models, when the support reaches the bottom of the vector, we  assume that it starts over at $1$. This models a moving 1D object of length $s$ or less that enters the scene and eventually walks out, and then another object of length $s$ or less may come in.  The requirement of consecutive indices and downward (as opposed to upward) motion are done for simplicity and ease of understanding.  Our results still hold under permutations (relabeling) of the indices.  We could also make a small modification and assume that the object is reflected back up (down) when it reaches the bottom (top).  %See Remark \ref{reflection}.
%\end{remark}
%
%\begin{remark}
%??? In all the special cases of Model \ref{general_model}, we have assumed consecutive indices because this models a moving object in video sequences. First of all, we should point out that nothing in Model \ref{general_model} requires the indices to be consecutive or contiguous in any way.  For example, we could have more than one moving object as long as the total support size is $s$ or less and the union of their individual supports is such that either Model \ref{sbyrho} or Model \ref{everyframe} hold.
%\end{remark}

\begin{lem}
If $t_{\max} \le n^{10}$, then Model \ref{randsup} is a special case of Model \ref{sbyrho} (and hence a special case of Model \ref{general_model}) with probability at least $1-n^{-10}$.
\end{lem}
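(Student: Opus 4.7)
My plan is to exhibit an explicit coupling which shows that, on a high-probability event, the trajectory $\{o_t\}$ generated by the Bernoulli-Gaussian random walk produces a support sequence $\{\T_t\}$ satisfying all three requirements of Model \ref{sbyrho}, with the same $\rho$ appearing in Model \ref{randsup} and with the $\beta$ chosen in Model \ref{randsup}. I will split the failure event into three pieces, bound each by an elementary tail estimate, and conclude by a union bound.

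The first piece is the event $E_1$ that in every window of $\beta$ consecutive time instants at least one $\theta_t=1$; under $E_1$ the set $\T_t$ changes at least once every $\beta$ steps, which is condition (1) of Model \ref{sbyrho}. Since $\Pr(\theta_t=0 \text{ for all } t \text{ in a fixed window of length } \beta)=(1-q)^\beta$, the lower bound $q \ge 1-(n^{-10}/(2t_{\max}))^{1/\beta}$ gives $(1-q)^\beta \le n^{-10}/(2t_{\max})$; a union bound over the at most $t_{\max}$ starting positions yields $\Pr(E_1^c)\le n^{-10}/2$.

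The second piece is the event $E_2$ that at every change time $t$ (i.e.\ every $t$ with $\theta_t=1$), the Gaussian noise satisfies $|\varpi_t|\le 0.1\,s/\rho$, so that every shift $1.1s/\rho+\varpi_t$ lies in the interval $[s/\rho,\, 1.2s/\rho]$. Since $\T_t$ is a contiguous block of size $\le s$ moving downward by this amount, a per-change shift of at least $s/\rho$ immediately gives $\T^{[k]}\cap\T^{[k+\rho]}=\emptyset$ (condition (2) of Model \ref{sbyrho}), because after $\rho$ changes the block has moved by at least $s$ positions. The standard Gaussian tail bound with $\sigma^2\le s^2/(4000\rho^2\log n)$ gives $\Pr(|\varpi_t|>0.1s/\rho)\le 2\exp(-20\log n)\le 2n^{-20}$; union-bounding over $t\le t_{\max}\le n^{10}$ gives $\Pr(E_2^c)\le 2n^{-10}$ (absorbed into the final $n^{-10}$ with a trivial constant adjustment).

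For condition (3) of Model \ref{sbyrho}, I will argue on $E_2$ that each set difference $\T^{[i]}\setminus\T^{[i+1]}$ equals the top strip of $\T^{[i]}$ of width equal to the shift at the $i$-th change, hence of size at most $1.2s/\rho$. Under the hypothesis $s\le 1.2\rho n/\alpha$, this makes each term at most a fixed fraction of $n/\alpha$, and one checks that for any $\alpha$ consecutive changes the trajectory does not complete a full cycle around the vector (because the cumulative shift is a concentration of sums of bounded increments around their mean, again controllable by a Hoeffding bound on the $\varpi_t$'s), so the strips are disjoint and sum to at most $n$. The main obstacle here is the bookkeeping for wrap-around: I must be careful to show that within any window of $\alpha$ consecutive changes the block does not revisit any portion of the vector, because once it wraps around, the ``fresh'' top-strip differences coincide with earlier differences and both the disjointness and the summation bound fail. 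Once this no-wrap property is established on the intersection of $E_1$, $E_2$, and an auxiliary high-probability event controlling the sum of shifts, a final union bound combines $\Pr(E_1^c)+\Pr(E_2^c)+\Pr(E_3^c)\le n^{-10}$ and proves the lemma.
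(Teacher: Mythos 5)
Your proposal is correct and follows essentially the same three-step route as the paper: a Bernoulli union bound for at least one change per $\beta$-window, a Gaussian tail bound putting every shift in $[s/\rho,\,1.2s/\rho]$, and a deterministic check of the three conditions of Model \ref{sbyrho} on the intersection of these events. The only superfluous piece is your auxiliary event $E_3$ for the no-wrap-around property: on $E_2$ every shift is already deterministically at most $1.2s/\rho$, so the cumulative displacement over $\alpha$ changes is bounded by $1.2s\alpha/\rho\le n$ without any further concentration argument, which is exactly how the paper closes the proof.
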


\begin{proof}
The proof has three steps. (a) We first use standard arguments about a Bernoulli sequence \cite{longest_run} to prove that the object moves at least once every $\beta$ time instants with probability at least $1 - 0.5n^{-10}$. The choice of $q$ ensures that this holds. (b) Next we use a standard Gaussian tail bound argument to show that, with probability at least $1 - 0.5n^{-10}$, when it moves, it moves by at least $s/\rrho$ indices and at most $1.2s/\rrho$ indices. The bound on $\sigma^2$ ensures this.
(c) The above two claims ensure that, w.h.p., the object remains static for at most $\beta$ frames at a time and when it moves it moves by at least $s/\rrho$ indices and at most $1.2s/\rrho$ indices. Notice that all the motion is in one direction. Motion by at least $s/\rrho$ in one direction ensures that after the object moves $\rrho$ times, i.e. after $\rrho$ changes of $\T_t$, the sets are disjoint, i.e. $\T^{[k]} \cap \T^{[k + \rrho]}=\emptyset$. Motion by at most $1.2 s /\rho$ in one direction and $1.2\frac{s}{\rrho} \alpha \le n$ ensures the third condition of Model \ref{sbyrho} holds even when the object moves at every frame.
\end{proof}

\begin{lem}\label{everyframelem}
Model \ref{everyframe} is a special case of Model \ref{general_model} with $\rho = 2$ and $h^+ = s/\alpha$.
\end{lem}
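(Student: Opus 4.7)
The plan is to exhibit an explicit witness choice for Model~\ref{general_model} with $\rho=2$ and $h^+=s/\alpha$. For each $u$, I partition $\{1,\ldots,n\}$ into consecutive blocks $\mathcal{T}_{(i),u}:=\{(i-1)s+1,\ldots,\min(is,n)\}$ (pairwise disjoint, each of size at most $s$), and declare $\mathcal{J}_{(i),u}:=\{t\in\mathcal{J}_u:o_t\in\mathcal{T}_{(i),u}\}$, where $o_t:=\min\mathcal{T}_t$ is the top index of the length-$s$ support window. The $\mathcal{J}_{(i),u}$ automatically partition $\mathcal{J}_u$. Since $\mathcal{T}_t=\{o_t,\ldots,o_t+s-1\}$ is a length-$s$ window, the inclusion $o_t\in\mathcal{T}_{(i),u}$ immediately forces $\mathcal{T}_t\subseteq\mathcal{T}_{(i),u}\cup\mathcal{T}_{(i+1),u}$, so \eqref{union} holds with $\rho=2$. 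Moreover, $\rho^2 h^+=4s/\alpha\le 4(0.0025)=0.01$ from the hypothesis $s\le 0.0025\alpha$, so the entire lemma reduces to showing $|\mathcal{J}_{(i),u}|\le s$.

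To bound $|\mathcal{J}_{(i),u}|$, I introduce an unwrapped top-index $U_t$ on $\mathcal{J}_u$ by $U_{t_0}:=o_{t_0}$ (with $t_0$ the first element of $\mathcal{J}_u$) and $U_{t+1}:=U_t+\delta_t$ at every step, where $\delta_t\in[1,m]$ is the attempted downward shift (defined also at restart steps). Because $\delta_t\ge 1$ and $\sum_{t'=t_0}^{t_0+\alpha-2}\delta_{t'}\le(\alpha-1)m\le(\alpha-1)(n-s)/\alpha<n-s+1$, the span $\max_tU_t-\min_tU_t$ over $\mathcal{J}_u$ is strictly less than $n-s+1$. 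Since each complete restart cycle requires at least $n-s+1$ units of unwrapped distance (traverse past $n-s+1$ and return via $1$ to the same wrapped position), at most one restart can occur during $\mathcal{J}_u$. If no restart occurs, $o_t=U_t$ is strictly increasing on $\mathcal{J}_u$, so at most $s$ values of $o_t$ can land in the size-$\le s$ block $\mathcal{T}_{(i),u}$, giving $|\mathcal{J}_{(i),u}|\le s$ immediately.

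Suppose one restart occurs at step $t^*$. For $t<t^*$ (``pass 1'') we have $o_t=U_t$, while for $t\ge t^*$ (``pass 2'') we have $o_t=U_t-C$ with $C:=U_{t^*}-1\ge n-s+1$, since $U_{t^*}=U_{t^*-1}+\delta_{t^*-1}>n-s+1$ is precisely what triggers the restart. In $U$-coordinates the event ``$o_t\in\mathcal{T}_{(i),u}$'' therefore lies in the union of the two intervals $I_1:=[(i-1)s+1,is]$ and $I_2:=I_1+C$, separated by at least $C-s+1\ge n-2s+2$. Let $v_1,v_2$ count the hits in $I_1,I_2$ respectively, and let $t_1^-,t_2^+$ denote the earliest and latest contributing times. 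Using $\delta_t\ge 1$, I obtain $U_{t_2^+}-U_{t_1^-}\ge (v_1-1)+(n-2s+2)+(v_2-1)=v_1+v_2+n-2s$, whereas the global span bound gives $U_{t_2^+}-U_{t_1^-}\le n-s-1$. Combining, $v_1+v_2\le s-1\le s$, completing the argument.

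The main obstacle, I expect, is precisely this cross-pass case: the naive ``$\le s$ hits per pass times $2$ passes $\Rightarrow \le 2s$ total'' count would give only $h^+=2s/\alpha$, twice the stated bound. The unwrapped-position device linearizes the wrap-around so that the single budget inequality $\alpha m\le n-s$ from Model~\ref{everyframe} simultaneously constrains both sides of the restart; the mandatory gap $n-2s+2$ between the two $U$-copies of $\mathcal{T}_{(i),u}$ consumes all but $s$ units of the budget, which is exactly what is left to share between $v_1$ and $v_2$.
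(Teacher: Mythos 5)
Your core counting argument is sound and matches the paper's in spirit: the topmost index advances by at least one position per frame, so it can spend at most $s$ time instants with its top inside any block of $s$ consecutive positions, and $\rho^2h^+=4s/\alpha\le 0.01$ follows from $s\le 0.0025\alpha$. Your unwrapping analysis of the restart (at most one restart per $\J_u$ because the total advance is less than $n-s+1$, and the combined hit count $v_1+v_2\le s$ because the gap $\ge n-2s+2$ between the two $U$-copies of a block consumes almost the entire motion budget) is correct and is the most careful part of the write-up.

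There is, however, one genuine mismatch with Model \ref{general_model} as stated: the footnote to that model requires the $\J_{(i),u}$ to be mutually disjoint \emph{intervals}, whereas your $\J_{(i),u}:=\{t\in\J_u: o_t\in\T_{(i),u}\}$ fails to be an interval exactly in the wrap-around case you analyze. When one restart occurs, the block containing the object's starting position $o_{t_u}$ can be hit both at the beginning of $\J_u$ (pass 1) and again near the end (pass 2), so its preimage is a union of two separated time intervals; moreover the time-order of your $\J_{(i),u}$'s then disagrees with the index order, which is what \eqref{union} is implicitly keyed to. The paper sidesteps all of this with one move you did not make: it relabels the spatial indices (equivalently, chooses the $\T_{(i),u}$ anchored at $o_{t_u}$ rather than at index $1$; the $\T_{(i),u}$ need only be disjoint subsets, not contiguous ones). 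After relabeling, the bound $m\le(n-s)/\alpha$ guarantees $\T_{t_u}\cap\T_{t_u+\alpha}=\emptyset$, i.e.\ the support never returns to where it started within $\J_u$, so no wrap ever intersects a block twice, the preimages are genuine intervals of length at most $s$, and your entire two-pass analysis becomes unnecessary. Your own span bound already contains the fact needed to justify that relabeling, so the fix is short --- but as written your witness does not literally satisfy the model, and the extra machinery you built is solving a problem the right choice of witness dissolves.
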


See Figure \ref{fig:everyframe} for a diagram of the model and the idea behind its proof.

\begin{proof}

%BL  See if this is more clear?

For the sake of clarity, we will prove the case when the object moves exactly 1 index at every time $t$.  The only difference in the general case is the construction of the $\J_{(i),u}$.

Consider an interval $\J_u$.
Let $t_u:= (u-1)\alpha+1$ denote the first time in $\J_u$.
Without loss of generality (because we can re-label the indices) let the object start at the top of the vector.
That is $\T_{t_u} = [1,s]$.
Let $l_u = \llceil \frac{n}{s}\rrceil$.
Let $\T_{(i),u} = [(i-1)s+1, i s]$ for $i=1,2, \dots, \llfloor \frac{n}{s} \rrfloor$.  If $\frac{n}{s}$ is not an integer, also define
$\T_{\left(\llceil \frac{n}{s}\rrceil\right),u} = \left[\llfloor \frac{n}{s} \rrfloor s +1 , n \right]$.
Define $\J_{(i),u} := [t_u+(i-1)s,t_u+is-1]$ for $i=1,2, \dots, \llfloor \frac{\alpha}{s} \rrfloor$. If $\frac{\alpha}{s}$ is not an integer, also define $\J_{\left(\llceil \frac{\alpha}{s}\rrceil\right),u} = [t_u+\llfloor \frac{\alpha}{s} \rrfloor s,t_u+\alpha-1]$.

Clearly $\J_{(i),u}$ as defined above are a partition of $\J_u$. Also, by construction,  for all
$t\in\J_{(i),u}$, $\T_t \subseteq \T_{(i),u} \cup \T_{(i+1),u}$.  This follows from three facts  1) the assumption that $\T_{t_u}=[1,s]$ (which is just a renumbering of the indices to make the numbers clearer) 2) the object moves down by exactly one index at each time $t$ and 3) $m \leq \frac{n-s}{\alpha}$, so that once an index leaves $\T_t$, it will not return in the next $\alpha$ time instants.  A simpler way of stating fact 3) is that the total motion is such that $\T_t$ does not return to where it started i.e. $\T_{t_u}\cap\T_{t_u+\alpha}=\emptyset$.

Notice that $|\J_{(i),u}|\leq s$ for all $i$. (With the possible exception of the last set, they all have size exactly $s$.)  So under the assumptions of Model \ref{everyframe} $h_u^*(\alpha)\leq s$, which satisfies Model \ref{general_model} with $h^+ = \frac{s}{\alpha}\leq 0.0025\alpha = \frac{0.01\alpha}{2^2} = \frac{0.01\alpha}{\rho^2}$.
\end{proof}

%
%Proof (move to Appendix):
%Suppose that it is of the maximum size $s$.  Without loss of generality suppose that the object starts at index 1, i.e., we have $\T_1 = [1, \dots s]$. Then $\T_2$ contains consecutive indices starting at index 2 or 3 or ... $1+m$ and so on.
%Define $T_{(i),u}:= [(i-1)s+1, i s]$ for $i=1,2 \dots \alpha/s$ and

\subsection{Analyze the ReProCS algorithm that also removes the deleted directions from the subspace estimate}  \label{relax_denseness}
The tools introduced in this paper -- (a) Lemma \ref{blockdiag1} and the way it is applied to bound $\bm{\mathcal{H}}_u$ in Lemma \ref{calHk}; and (b) the detection lemma (Lemma \ref{det}), the no false detection lemma (Lemma \ref{falsedet}) and the p-PCA lemma (Lemma \ref{pPCA}) -- can also be used to get a correctness result for a practical modification of ReProCS with cluster-PCA (ReProCS-cPCA) which is  Algorithm 2 of \cite{ReProCS_IT}. This algorithm was introduced to also  remove the deleted directions from the subspace estimate. It does this by re-estimating the previous subspace at a time after the newly added subspace has been accurately estimated (i.e. at a time after $\that_j+ K \alpha$). A partial result for this algorithm was proved in \cite{ReProCS_IT}.

This result will need one extra assumption -- it will need the eigenvalues of the covariance matrix of $\lt$ to be clustered for a period of time after the subspace change has stabilized, i.e. for a period of $d_2$ frames in the interval $[t_j+d+1, t_{j+1}-1]$ -- but it will have a key advantage. It will need a much weaker denseness assumption and hence a much weaker bound on $r$ or $r_{\text{mat}}$. In particular, with this result we expect to be able to allow $r = r_{\text{mat}} \in  \bigo(n)$ with the same assumptions on $s$ and $s_{\text{mat}}$ that we currently allow. This requirement is almost as weak as that of PCP.

%(a) It will allow the deleted directions to be any direction from $\Span(\bm{P}_{t_j-1})$ for the period $[t_j, t_{j+1})$. (b) Moreover,

\subsection{Relax the independence assumption on $\ell_t$'s} \label{relax_indep}
The results in this work assume that the $\lt$'s are independent over time and zero mean; this is a valid model when background images have independent random variations about a fixed mean. Using the tools developed in this paper, a similar result can also be obtained for the more general case of $\lt$'s following an autoregressive model. This will allow the $\lt$'s to be correlated over time. A partial result for this case was obtained in \cite{zhan_reprocs}. The main change in this case will be that we will need to apply the matrix Azuma inequality from \cite{tail_bound} instead of matrix Hoeffding. This is will also require algebraic manipulation of sums and some other important modifications, as explained in \cite{zhan_reprocs}, so that the constant term after conditioning on past values of the matrix is small.

\subsection{Noisy and Undersampled Online Matrix Completion or Online Robust PCA} \label{relax_undersamp}
We expect that the tools introduced in this paper can also be used to analyze the noisy case, i.e. the case of $\mt = \xt + \lt + \bm{w}_t$ where $\bm{w}_t$ is small bounded noise. In most practical video applications, while the foreground is truly sparse, the background is only approximately low-rank. The modeling error can be handled as $\bm{w}_t$. The proposed algorithms already apply without modification to this case (see \cite{han_tsp} for results on real videos). The reason that our tools will directly extend to the noisy case is this: the sparse recovery step is already a noisy sparse recovery one, its analysis will not change if we also add in more noise due to $\bm{w}_t$. If $\lt$ and $\bm{w}_t$ are assumed independent, then there should be few simple modifications to the analysis of the p-PCA step as well.

%The proof of online robust PCA in this paper relies on the support of $\xt$ being exactly recovered. This is why it needs a fairly large lower bound on $x_{\min}$. This can be relaxed as follows. Include all entries of $\xt$ with magnitude below $x_{\min}$ into the noise $\bm{w}_t$. This way we will not need any lower bound on $x_{\min}$ but will require a stronger slow subspace change assumption (the bounds on $\gamma_\new$ and $\lambda_\new^+$ will be tighter).

Finally, we expect both the algorithm and the proof techniques to apply with simple changes to the undersampled case $\mt = \bm{A}_t \xt + \bm{B}_t \lt + \bm{w}_t$ as long as $\bm{B}_t$ is {\em not} time-varying, i.e. $\bm{B}_t = \bm{B}_0$. A partial result for this case was obtained in \cite{rrpcp_globalsip} and experiments were shown in \cite{han_tsp}.

\section{Conclusions} \label{conclusions}
In this work, we obtained correctness results for online robust PCA and for online matrix completion. Both results needed four key assumptions: (a) accurate initial subspace knowledge; (b) slow subspace change and mutual independence of the $\lt$'s according to Model \ref{exp_model}; (c) {\em some} changes in the set of missing entries (or in the set of outlier-corrupted entries) over time, one way to quantify what is needed is given in Model \ref{sbyrho}; (d) a denseness assumption on the columns of the subspace basis matrices of $\lt$; and (e) algorithm parameters are appropriately set.

Ongoing work includes obtaining the results mentioned in Sections \ref{relax_denseness}, \ref{relax_indep} and \ref{relax_undersamp}. Besides these, we expect the proof techniques developed here to apply to various other problems involving PCA with data and noise terms being correlated.

%\newpage
\appendices
\renewcommand{\thetheorem}{\thesection.\arabic{theorem}}

%%%%%%%%%%%%%%%%%%%%%%%%%%%%%%%%%%%%%%%%%%%%%%%%%%%%%%%%%%%

\section{Proof that Model \ref{sbyrho} on $\T_t$ satisfies the general Model \ref{general_model}} \label{pf_supch}
%Recall that Model \ref{sbyrho} assumes that
%\ben
%\item $\T_t = \T^{[k]}$ for all times $t \in [t^k, t^{k+1})$ with $t^{k+1} - t^k < \beta$ and $|\T^{[k]}| \le s$;
%\item let $\rrho$ be the smallest integer so that for any $k$, $\T^{[k]} \cap \T^{[k+\rrho ]} = \emptyset$; we assume that ${\rrho}^2 \beta \leq  0.01 \alpha$;
%\item for any $k$ and $i$ such that $k < i \le k+\alpha$, $(\T^{[k]} \setminus \T^{[k+1]}) \cap (\T^{[i]} \setminus \T^{[i+1]}) = \emptyset$
%\\ (an implicit requirement for this to hold is $\sum_{i=k+1}^{k+\alpha}|\T^{[i]} \setminus \T^{[i+1]}| \le n$, which will be the case if $|\T^{[i]} \setminus \T^{[i+1]}| \le \frac{s}{\rho_2}$ and $\frac{s}{\rho_2}\alpha \le n$).
%\een

\begin{proof}[Proof of Lemma \ref{spc_case}]
Consider an interval $\J_u$.
%Set $\rho=\rrho$.
We will construct one set of mutually disjoints sets $\{\mathcal{T}_{(i),u}\}_{i=1,2,\dots l_u}$ that are subsets of $\{1,2,\dots n\}$ and a partition $\{\J_{(i),u}\}_{i=1,2,\dots l_u}$ of $\J_u$ so that for all $t \in \J_{(i),u}$, \eqref{union} holds and so that $h_u(\alpha;\{\T_{(i),u}\} , \{\J_{(i),u} \}) \le \beta$ for this choice. Since $h_u^*(\alpha)$ takes the minimum over all such sets, this will imply $h_u^*(\alpha) \le \beta$. By setting $h^+ = \beta/\alpha$ and using the Model \ref{sbyrho} assumption ${\rrho}^2 \beta \le 0.01 \alpha$, we will be done.

Recall from Model \ref{sbyrho} that $\T_t = \T^{[k]}$ for all $t \in [t^k, t^{k+1})$ with $t^{k+1} - t^k < \beta$ and $|\T^{[k]}| \le s$.

Let $t_u:= (u-1)\alpha+1$ denote the first time index of $\J_u$.
Let $k_u$ be the index $k$ for which $t_u \in  [t^k, t^{k+1})$.  In other words, $\T_{t_u} = \T^{[k_u]}$.
Define $l_u$ to be the number of intervals $[t^k, t^{k+1})$ that have non-empty intersection with $\J_u$. So $l_u$ is one plus the number of times $\T_t$ changes in the interval $\J_u$.  For $i=1,2, \dots l_u-1$, define
\[
\T_{(i),u} := \T^{[k_u+i-1]} \setminus \T^{[k_u+i]},
\]
and set $\T_{(l_u),u} = \T^{[k_u+ l_u  -1]}$.
Clearly $l_u \le \alpha$. Thus, by the Model \ref{sbyrho} assumption (for any $k$ and $i$ such that $k < i \le k+\alpha$, $(\T^{[k]} \setminus \T^{[k+1]}) \cap (\T^{[i]} \setminus \T^{[i+1]}) = \emptyset$), the $\T_{(i),u}$'s are mutually disjoint.

Next, define a partition of $\J_u$ as
$$\J_{(i),u}:= [t^{k_u+i-1}, t^{k_u+i}) \cap \J_u \ \text{for} \ i=1,2,\dots l_u$$
%The above sets form a partition of $\J_u$ since (a) by definition of $l_u$, $\cup_{i=1}^{l_u} \J_{(i),u} = \J_u$; and (b) the intervals $[t^k, t^{k+1})$ are mutually disjoint.
By Model \ref{sbyrho} $1 \le t_{k+1} - t_k < \beta$ for all $k$. Since $\J_{(i),u} \subseteq [t^{k_u+i-1}, t^{k_u+i})$, $|\J_{(i),u}| < \beta$ for all $i=1,2,\dots l_u$. %and thus $h_u(\alpha;.) \le \beta$ for this choice of $\T_{(i),u}$'s and $\J_{(i),u}$'s.

Notice that for all $t \in \J_{(i),u}$, $\T_t = \T^{[k_u+i-1]}$. So if we can show that $\T^{[k_u+i-1]} \subseteq \T_{(i),u} \cup \T_{(i+1),u} \dots \cup \T_{(i+\rho-1),u}$ for all $i=1,2, \dots l_u$, we will be done since this will imply $h_u^*(\alpha) \le \beta$. %  This will then imply that $h_u^*(\alpha) \le \beta$.
% we use the Model \ref{sbyrho} assumption that $\T^{[k]}\cap \T^{[k+\rho]}=\emptyset$.
To show this, set $k = k_u+i-1$. Then,
\begin{align*}
\T^{[k]} &= \T_{(i),u} \cup [\T^{[k]} \cap \T^{[k+1]}] \\
& = \T_{(i),u} \cup [\T^{[k]} \cap\T^{[k+1]} \setminus \T^{[k+2]}] \cup [\T^{[k]} \cap \T^{[k+1]} \cap \T^{[k+2]} ] \\
&\subseteq \T_{(i),u} \cup \T_{(i+1),u} \cup [\T^{[k]} \cap \T^{[k+1]} \cap \T^{[k+2]} ]  \\
&= \T_{(i),u}  \cup \T_{(i+1),u} \cup [\T^{[k]} \cap \T^{[k+1]} \cap \T^{[k+2]} \setminus \T^{[k+3]}) \cup [\T^{[k]} \cap \T^{[k+1]} \cap \T^{[k+2]} \cap \T^{[k+3]} ]\\
& \subseteq \T_{(i),u}  \cup \T_{(i+1),u} \cup \T_{(i+2)} \cup [\T^{[k]} \cap \T^{[k+1]} \cap \T^{[k+2]} \cap \T^{[k+3]}].
\end{align*}
Continuing in the same manner as above, we get,
\begin{eqnarray} \label{union_holds}
\T^{[k]} & \subseteq & \T_{(i),u}  \cup \T_{(i+1),u} \cup \dots \cup \T_{(i+\rho-1),u} \cup [\T^{[k]} \cap \T^{[k+1]} \cap \dots \cap \T^{[k+\rho]}]  \nonumber \\
&=& \T_{(i),u}\cup \T_{(i+1),u}\cup\dots\cup\mathcal{T}_{(i+\rho-1),u}
\end{eqnarray}
The last line is because $\T^{[k]}\cap \T^{[k+\rho]}=\emptyset$ by Model \ref{sbyrho}.
%
%By Model \ref{sbyrho}, $t_{k+1} - t_k < \beta$ and thus $|\J_{(i),u}| \le \beta$ and so $h_u(\alpha;.) \le \beta$. Since $h_u^*(\alpha)$ is obtained by taking the min over all choices of the sets $\T_{(i),u}$ and all choices of the valid partitions $\J_{(i),u}$, we have
%\[
%h_u^*(\alpha) \le \beta.
%\]
\end{proof}

%\section{Proof of Lemma \ref{Dnew0_lem} and Proof of Lemma \ref{zetadecay}}% and on \texorpdfstring{$\zeta_{j,\new,k}^+$}{zeta j new k plus})}
\section{Proof of Lemma \ref{zetadecay} (bound on \texorpdfstring{$\zeta_{j,\new,k}^+$}{zeta j new k plus}) and of  Lemma \ref{Dnew0_lem}}
\label{zeta_k_section}

\begin{proof}[Proof of Lemma \ref{zetadecay}]
This proof's approach is similar to that of \cite[Lemma 6.1]{ReProCS_IT}. The details have some differences because our main result now uses different assumptions.

This lemma uses Model \ref{general_model}. As shown in Lemma \ref{spc_case}, Model \ref{sbyrho} is a special case of this general model.

Recall that  $\ds\zeta_{j,\new,k}^+ := \frac{b_{\bm{\mathcal{H}},k}}{b_{\bm{A}} - b_{\bm{A},\perp} - b_{\bm{\mathcal{H}},k}} $ with the terms on the RHS defined in Lemmas \ref{Ak}, \ref{Akperp}, \ref{calHk}.

Recall that $\epsilon = 0.01r_{\new}\zeta\lammin$.
Divide the numerator and denominator by $\lammin$.
Define
%NV - removed 2 from the et et' term both in C_k and in B_k
\begin{align*}
B_{k} := \begin{cases}
\begin{array}{l}
 \left[ \rho^2h^+(\phi^+)^2(\kappa_s^+)^2(\zeta_{j,\new,k-1}^+) + 2\kappa_s^+\phi^+  \right]\left(\frac{\lambda_{\new}^+}{\lammin}\right)
\end{array} & k = 1 \\
\vspace{-.2in}\\
\begin{array}{l}
\left[ \rho^2h^+(\phi^+)^2\zeta_{j,\new,k-1}^+ + 2\sqrt{\rho^2h^+}\phi^+  \right]\left(\frac{\lambda_{\new}^+}{\lammin}\right)
\end{array} & k\geq2
\end{cases}
\end{align*}

%NV - added \sqrt{\rho^2h^+} where we discussed; removed 2 from the et et' term both in C_k and in B_k
\begin{align*}
C_{k} := \left[ \rho^2h^+(\phi^+)^2(\zeta_{j,*}^+)r + 2\sqrt{\rho^2h^+}\phi^+(\zeta_{j,*}^+)r + 2(\zeta_{j,*}^+)r  \right]\left(\frac{\lambda^+}{\lammin}\right)   + 0.05
\end{align*}

%\begin{align*}
%C_{k} := \begin{cases}
%\begin{array}{l}
% \left[ \rho^2h^+(\phi^+)^2(\zeta_{j,*}^+)r + 2\sqrt{\rho^2h^+}\phi^+(\zeta_{j,*}^+)r + 2(\zeta_{j,*}^+)r  \right]\left(\frac{\lambda^+}{\lammin}\right)   + 0.08
% \end{array} & k=1 \\
% \vspace{-.2in}\\
% \begin{array}{l}
%\left[ \rho^2h^+(\phi^+)^2(\zeta_{j,*}^+)r + 2\sqrt{\rho^2h^+}\phi^+(\zeta_{j,*}^+)r + 2(\zeta_{j,*}^+)r  \right]\left(\frac{\lambda^+}{\lammin}\right)  + 0.08
% \end{array} & k\geq2
% \end{cases}
%\end{align*}

\begin{align*}
D_{k} := & 1  - (\zeta_{j,*}^+)^2   -
(\zeta_{j,*}^+)^2 \left(\frac{\lambda^+}{\lammin}\right)  -  \zeta_{j,\new,k-1}^+ B_k - r_{\new}\zeta (C_k+.02)
\end{align*}
Then,
\[
\zeta_{j,\new,k}^+ \leq \zeta_{j,\new,k-1}^+\frac{B_k}{D_k} + r_{\new}\zeta\frac{C_k}{D_k}.
\]

Recall that $\kappa_s^+=0.0215$ and $\phi^+ = 1.2$.
It is not difficult to see that $\zeta_{j,\new,k}^+$ is an increasing function of $\rho^2 h^+$, $r$, $\zeta$, $\zeta \frac{\lambda^+}{\lammin}$, and $\frac{\lambda_{\new}^+}{\lammin}$ and $\zeta_{j,\new,k-1}^+$.
Consider $k=1$. Using $\zeta_{j,\new,0}^+=1$ and the upper bounds assumed in Theorem \ref{thm1} on the above quantities, we get that $\zeta_{j,\new,1}^+ \le 0.18$.

Thus, $\zeta_{j,\new,1}^+ \le \zeta_{j,\new,0}^+=1$. Using this and the fact that $\zeta_{j,\new,k}^+$ is an increasing function of $\zeta_{j,\new,k-1}^+$, we can show by induction that $\zeta_{j,\new,k}^+ \le \zeta_{j,\new,k-1}^+$. Thus, $\zeta_{j,\new,k}^+ \le \zeta_{j,\new,1}^+ \le 0.18$ for all $k=1,2 \dots K$.

Using $\zeta_{j,\new,k}^+ \le 0.18$ and the bounds assumed in Theorem \ref{thm1} on the other quantities we get that
\[
\zeta_{j,\new,k}^+ \leq 0.83 \zeta_{j,\new,k-1}^+ + 0.14 r_{\new}\zeta
\]
Using this, we get
\begin{align*}
\zeta_{j,\new,k}^+  \leq 0.83\zeta_{j,\new,k-1}^{+} + 0.14 r_{\new}\zeta & \leq \zeta_{j,\new,0}^{+} (0.83)^{k} + \sum_{i = 0}^{k-1}(0.83)^i(0.14) r_{\new}\zeta \\
&\leq \zeta_{j,\new,0}^{+} (0.83)^{k} + \sum_{i = 0}^{\infty}(0.83)^i(0.14) r_{\new}\zeta \\
&\leq 0.83^k + 0.84 r_{\new}\zeta.
\end{align*}
\end{proof}

\begin{proof}[Proof of Lemma \ref{Dnew0_lem}]
Recall that $\bm{D}_{j,\new} = (\bm{I} - \Phat_{(j),*}\Phat_{(j),*}{}') \bm{P}_{(j),\new}$.
Then $\|{\I_{\mathcal{T}}}' \bm{D}_{j,\rmnew} \|_2 = \|{\I_{\mathcal{T}}}'  (\bm{I} - \Phat_{(j),*}\Phat_{(j),*}{}') \bm{P}_{(j),\new} \|_2 \leq \| {\I_{\mathcal{T}}}' \bm{P}_{(j),\new} \|_2 + \|\Phat_{(j),*}{}' \bm{P}_{(j),\new} \|_2 \leq  \kappa_s(\bm{P}_{(j),\rmnew}) + \|\Phat_{(j),*}{}'(\I - \bm{P}_{(j),*}{\bm{P}_{(j),*}}')\bm{P}_{(j),\new}\|_2 \leq \kappa_s(\bm{P}_{(j),\rmnew}) +  \zeta_{j,*}$.
The event $X_{\hat{u}_j + k-1} \in \Gamma_{j,k-1}^{\hat{u}_j}$ implies that $\zeta_{j,*} \le \zeta_{j,*}^+ \le 0.0015$. Thus, the lemma follows.
\end{proof}

\section{Proof of the Compressed Sensing (CS) Lemma (Lemma \ref{cslem})} \label{pf_cslem}
This proof's approach is similar to that of \cite[Lemma 6.4]{ReProCS_IT}. The details have some differences because our main result now uses different assumptions.
The proof uses the denseness assumption and subspace error bounds $\zeta_{j,*} \leq \zeta_{j,*}^{+}$ and $\zeta_{j,\new,k-1} \leq \zeta_{j,\new,k-1}^{+}$, that hold when $X_{\hat{u}_j+k-1}\in\Gamma_{j,k-1}^{\hat{u}_j}$ for $\hat{u}_j=u_j$ or $\hat{u}_j = u_j+1$, to obtain bounds on the restricted isometry constant (RIC) of the sparse recovery matrix $\bm{\Phi}_t$ and the sparse recovery error $\| \bm{b}_t \|_2$.
Applying the noisy compressed sensing (CS) result from \cite{candes_rip} and the assumed bounds on $\zeta$ and $\gamma$, the lemma follows.

\begin{lem} \cite[Lemma 2.10]{ReProCS_IT}\label{lemma0}\label{hatswitch}
Suppose that $\bm{P}$, $\Phat$ and $\bm{Q}$ are three basis matrices. Also, $\bm{P}$ and $\Phat$ are of the same size, $\bm{Q}' \bm{P} = \bm{0}$ and $\|(\I-\Phat \Phat{}' ) \bm{P} \|_2 = \zeta_*$. Then,
\begin{enumerate}
  \item $\|(\I-\Phat\Phat{}')\bm{P}\bm{P}'\|_2 =\|( \I - \bm{P}\bm{P}' ) \Phat \Phat{}'\|_2 =  \|( \I - \bm{P} \bm{P}' ) \Phat \|_2 = \| ( \I - \Phat \Phat{}' ) \bm{P}\|_2 =  \zeta_*$
  \item $\|\bm{P} \bm{P}' - \Phat \Phat{}'\|_2 \leq 2 \|(\I-\Phat \Phat{}')\bm{P}\|_2 = 2 \zeta_*$
  \item $\|\Phat{}' \bm{Q}\|_2 \leq \zeta_*$ \label{lem_cross}
  \item $ \sqrt{1-\zeta_*^2} \leq \sigma_i\left((\I-\Phat \Phat{}')\bm{Q}\right)\leq 1 $
\end{enumerate}
\end{lem}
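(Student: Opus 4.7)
The proof plan is to handle the four items in order, with Part 1 providing the workhorse identities that Parts 2--4 then exploit directly.

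For Part 1, my plan is to prove the two basic identities $\|M\bm{P}\bm{P}'\|_2 = \|M\bm{P}\|_2$ (for any matrix $M$ and any basis matrix $\bm{P}$) and $\|(\I-\bm{P}\bm{P}')\Phat\|_2 = \|(\I-\Phat\Phat{}')\bm{P}\|_2$ (symmetry between subspaces of the same dimension). The first follows from $\bm{P}'\bm{P}=\I$: the inequality $\|M\bm{P}\bm{P}'\|_2 \le \|M\bm{P}\|_2$ is immediate from $\|\bm{P}'\|_2=1$, and the reverse follows by writing $M\bm{P} = (M\bm{P}\bm{P}')\bm{P}$. Applying this identity on the right of $(\I-\Phat\Phat{}')\bm{P}\bm{P}'$ gives the first equality in the chain, and applying it on the right of $(\I-\bm{P}\bm{P}')\Phat\Phat{}'$ gives $\|(\I-\bm{P}\bm{P}')\Phat\Phat{}'\|_2 = \|(\I-\bm{P}\bm{P}')\Phat\|_2$. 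The symmetry identity I will obtain by direct calculation: using $\Phat{}'\Phat=\I$ and $\bm{P}'\bm{P}=\I$,
\[
\|(\I-\Phat\Phat{}')\bm{P}\|_2^2 = \lambda_{\max}(\bm{P}'(\I-\Phat\Phat{}')\bm{P}) = 1-\sigma_{\min}^2(\Phat{}'\bm{P}),
\]
and the analogous calculation for $\|(\I-\bm{P}\bm{P}')\Phat\|_2^2$ yields $1-\sigma_{\min}^2(\bm{P}'\Phat)$; these coincide because $\bm{P}'\Phat$ and $\Phat{}'\bm{P}$ have the same singular values.

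For Part 2, the clean algebraic decomposition
\[
\bm{P}\bm{P}' - \Phat\Phat{}' = (\I-\Phat\Phat{}')\bm{P}\bm{P}' - \Phat\Phat{}'(\I-\bm{P}\bm{P}')
\]
together with the triangle inequality and the identities from Part 1 gives the bound $\|\bm{P}\bm{P}' - \Phat\Phat{}'\|_2 \le \|(\I-\Phat\Phat{}')\bm{P}\bm{P}'\|_2 + \|(\I-\bm{P}\bm{P}')\Phat\Phat{}'\|_2 = 2\zeta_*$.

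For Part 3, since $\bm{Q}'\bm{P}=\bm{0}$ we can insert a free factor: $\Phat{}'\bm{Q} = \Phat{}'(\I-\bm{P}\bm{P}')\bm{Q}$. Submultiplicativity then gives $\|\Phat{}'\bm{Q}\|_2 \le \|(\I-\bm{P}\bm{P}')\Phat\|_2\cdot \|\bm{Q}\|_2$, and by Part 1 the first factor equals $\zeta_*$ while the second equals $1$ (basis matrix). Finally for Part 4, I will compute the Gram matrix directly: $((\I-\Phat\Phat{}')\bm{Q})'((\I-\Phat\Phat{}')\bm{Q}) = \I - \bm{Q}'\Phat\Phat{}'\bm{Q}$, so the squared singular values of $(\I-\Phat\Phat{}')\bm{Q}$ are $1-\sigma_j^2(\Phat{}'\bm{Q})$. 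The upper bound $\le 1$ is immediate, and the lower bound $\ge \sqrt{1-\zeta_*^2}$ follows by combining this with Part 3 which gives $\sigma_j(\Phat{}'\bm{Q}) \le \zeta_*$ for all $j$.

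There is no substantive obstacle here: the lemma is essentially a collection of standard identities relating two equal-dimensional subspaces via their orthogonal projectors, and the proof is a sequence of short manipulations. The only place where one needs to be careful is in Part 1, ensuring that the symmetry $\|(\I-\bm{P}\bm{P}')\Phat\|_2 = \|(\I-\Phat\Phat{}')\bm{P}\|_2$ relies on $\bm{P}$ and $\Phat$ having the same number of columns (otherwise the minimum singular values of $\bm{P}'\Phat$ and $\Phat{}'\bm{P}$ need not coincide, since one of the matrices would be rectangular in a way that forces a trivial zero singular value).
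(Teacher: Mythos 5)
Your proof is correct. The paper itself gives no proof of this lemma (it is imported verbatim as Lemma 2.10 of \cite{ReProCS_IT}), and the argument there is the same standard sequence of manipulations you use: the projector/Gram-matrix computations reducing everything to the singular values of $\Phat{}'\bm{P}$ and $\Phat{}'\bm{Q}$, the decomposition $\bm{P}\bm{P}'-\Phat\Phat{}'=(\I-\Phat\Phat{}')\bm{P}\bm{P}'-\Phat\Phat{}'(\I-\bm{P}\bm{P}')$ for Part 2, and the insertion of $(\I-\bm{P}\bm{P}')$ via $\bm{Q}'\bm{P}=\bm{0}$ for Part 3. Your closing remark correctly identifies the one place where the equal-size hypothesis is actually used (the symmetry $\|(\I-\bm{P}\bm{P}')\Phat\|_2=\|(\I-\Phat\Phat{}')\bm{P}\|_2$ in Part 1).
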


%The proof of Lemma \ref{hatswitch} is straightforward and is given in \cite{ReProCS_IT}.

%Recall that $\kappa_s(\P)$ is defined in Definition \ref{def_kappa}.
We begin by first bounding the RIC of the CS matrix $\bm{\Phi}_{t}$.  We will use the notation $\kappa_s^2(\bm{P})$ to mean $\left(\kappa_s(\bm{P})\right)^2$.

\begin{lem}[{Bounding the RIC of $\bm{\Phi}_{t}$ \cite[Lemma 6.6]{ReProCS_IT}}] \label{RIC_bnd}
Recall that $\zeta_{j,*}:= \|(\I-\Phat_{(j),*}\Phat_{(j),*}{}')\bm{P}_{(j),*}\|_2$.  The following hold.
\begin{enumerate}
\item Suppose that a basis matrix $\bm{P}$ can be split as $\bm{P} = [\bm{P}_1 \ \bm{P}_2]$ where $\bm{P}_1$ and $\bm{P}_2$ are also basis matrices. Then $\kappa_s^2 (\bm{P}) = \max_{\mathcal{T}: |\mathcal{T}| \leq s} \|{\I_{\mathcal{T}}}'\bm{P}\|_2^2 \le \kappa_s^2 (\bm{P}_1) + \kappa_s^2 (\bm{P}_2)$.
\item $\kappa_s^2(\Phat_{(j),*}) \leq (\kappa_{s,*})^2 + 2\zeta_*$ for all $j$
\item $\kappa_s (\Phat_{(j),\rmnew,k}) \leq \kappa_{s,\rmnew} + \zeta_{j,\new,k} + \zeta_{j,*}$ for all $j$ and $k$.
\item For $t\in[(u_{j-1}+K)\alpha+1,(\hat{u}_j + 1)\alpha)$, $\delta_{s} (\bm{\Phi}_{t}) = \kappa_s^2 (\Phat_{(j),*}) \leq  (\kappa_{s,*})^2 + 2 \zeta_{j,*}$.
\item For $k=1,\dots,K-1$, for $t\in[(\hat{u}_j+k)\alpha+1,(\hat{u}_{j}+k+1)\alpha] $ $\delta_{s}(\bm{\Phi}_{t})  = \kappa_s^2 ([\Phat_{(j),*} \ \Phat_{(j),\rmnew,k}]) \leq \kappa_s^2 (\Phat_{(j),*}) + \kappa_s^2 (\Phat_{(j),\rmnew,k}) \leq (\kappa_{s,*})^2 + 2\zeta_{j,*} + (\kappa_{s,\rmnew} +  \zeta_{j,\new,k} + \zeta_{j,*})^2$.
\end{enumerate}
\end{lem}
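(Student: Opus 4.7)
The plan is to prove the five parts in order, with parts (4) and (5) following as immediate consequences of parts (1)--(3) combined with Lemma \ref{kappadelta}.

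For part (1), I would start from the identity $\|\bm{I}_{\mathcal{T}}' \bm{P}\|_2^2 = \|\bm{I}_{\mathcal{T}}' \bm{P}\bm{P}' \bm{I}_{\mathcal{T}}\|_2$ and use the block-orthogonality of $[\bm{P}_1 \ \bm{P}_2]$ to write $\bm{P}\bm{P}' = \bm{P}_1\bm{P}_1' + \bm{P}_2\bm{P}_2'$. The triangle inequality for the spectral norm of PSD matrices then yields $\|\bm{I}_{\mathcal{T}}' \bm{P}\bm{P}'\bm{I}_{\mathcal{T}}\|_2 \le \|\bm{I}_{\mathcal{T}}' \bm{P}_1\bm{P}_1'\bm{I}_{\mathcal{T}}\|_2 + \|\bm{I}_{\mathcal{T}}' \bm{P}_2\bm{P}_2'\bm{I}_{\mathcal{T}}\|_2 = \|\bm{I}_{\mathcal{T}}' \bm{P}_1\|_2^2 + \|\bm{I}_{\mathcal{T}}' \bm{P}_2\|_2^2$, and taking max over $|\mathcal{T}|\le s$ gives the claim.

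For part (2), I would use Lemma \ref{hatswitch}(2) which gives $\|\Phat_{(j),*}\Phat_{(j),*}' - \bm{P}_{(j),*}\bm{P}_{(j),*}'\|_2 \le 2\zeta_{j,*}$. Then for any $\mathcal{T}$ with $|\mathcal{T}|\le s$, $\|\bm{I}_{\mathcal{T}}' \Phat_{(j),*}\Phat_{(j),*}' \bm{I}_{\mathcal{T}}\|_2 \le \|\bm{I}_{\mathcal{T}}' \bm{P}_{(j),*}\bm{P}_{(j),*}' \bm{I}_{\mathcal{T}}\|_2 + 2\zeta_{j,*}$, so $\kappa_s^2(\Phat_{(j),*}) \le \kappa_s^2(\bm{P}_{(j),*}) + 2\zeta_{j,*}$. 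In the no-deletion case $\bm{P}_{(j),*}'\bm{I}_{\mathcal{T}}$ is a row-block of $\bm{P}_{(J)}'\bm{I}_{\mathcal{T}}$, so $\kappa_s(\bm{P}_{(j),*}) \le \kappa_s(\bm{P}_{(J)}) = \kappa_{s,*}$; the general case follows by repeated application of part (1) to the decomposition of $\bm{P}_{(J)}$ into its constituent $\bm{P}_{t_0}, \bm{P}_{t_i,\new}$ blocks.

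For part (3), since $\Phat_{(j),\rmnew,k}$ and $\bm{P}_{(j),\rmnew}$ have the same number of columns (conditional on $\rank(\Phat_{(j),\rmnew,k}) = r_{j,\new}$), I would split using the identity $\bm{I} = \bm{P}_{(j),\rmnew}\bm{P}_{(j),\rmnew}' + (\bm{I}-\bm{P}_{(j),\rmnew}\bm{P}_{(j),\rmnew}')$ to get
\[
\|\bm{I}_{\mathcal{T}}' \Phat_{(j),\rmnew,k}\|_2 \le \|\bm{I}_{\mathcal{T}}' \bm{P}_{(j),\rmnew}\|_2 + \|(\bm{I} - \bm{P}_{(j),\rmnew}\bm{P}_{(j),\rmnew}')\Phat_{(j),\rmnew,k}\|_2 \le \kappa_{s,\new} + \|(\bm{I} - \Phat_{(j),\rmnew,k}\Phat_{(j),\rmnew,k}')\bm{P}_{(j),\rmnew}\|_2,
\]
where the second term equality is Lemma \ref{hatswitch}(1). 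To bound this last factor, I would add and subtract $\Phat_{(j),*}\Phat_{(j),*}'\bm{P}_{(j),\rmnew}$ inside the norm and use orthonormality of $\Phat_{(j),*}$ together with Lemma \ref{hatswitch}(3), exploiting $\bm{P}_{(j),*}'\bm{P}_{(j),\rmnew} = \bm{0}$, to obtain $\|(\bm{I} - \Phat_{(j),*}\Phat_{(j),*}' - \Phat_{(j),\rmnew,k}\Phat_{(j),\rmnew,k}')\bm{P}_{(j),\rmnew}\|_2 + \|\Phat_{(j),*}'\bm{P}_{(j),\rmnew}\|_2 \le \zeta_{j,\new,k} + \zeta_{j,*}$.

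Finally, for parts (4) and (5), I would read off from the algorithm that during the indicated intervals $\bm{\Phi}_t = \bm{I} - \Phat_{(j),*}\Phat_{(j),*}'$ (resp.\ $\bm{I} - [\Phat_{(j),*} \ \Phat_{(j),\rmnew,k}][\Phat_{(j),*} \ \Phat_{(j),\rmnew,k}]'$), since $\Phat_{(j),*}\perp\Phat_{(j),\rmnew,k}$; Lemma \ref{kappadelta} then gives $\delta_s(\bm{\Phi}_t) = \kappa_s^2(\cdot)$, and combining with parts (1)--(3) finishes both bounds. The only step requiring a little care is the transfer from $\kappa_s(\bm{P}_{(j),*})$ to $\kappa_{s,*}$ in part (2) when column deletions occur, but that is routine using part (1). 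No step is a real obstacle; the lemma is essentially a bookkeeping exercise assembling Lemma \ref{hatswitch} and Lemma \ref{kappadelta} via the splitting in part (1).
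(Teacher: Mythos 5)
Your proposal is correct and follows essentially the same route as the paper's proof: part (1) via $\bm{P}\bm{P}' = \bm{P}_1{\bm{P}_1}' + \bm{P}_2{\bm{P}_2}'$ and the triangle inequality, part (2) by adding and subtracting $\bm{P}_{(j),*}{\bm{P}_{(j),*}}'$ and invoking Lemma \ref{hatswitch}, part (3) via the split through $\bm{P}_{(j),\new}{\bm{P}_{(j),\new}}'$ plus the hat-switch identity, and parts (4)--(5) from Lemma \ref{kappadelta}. Your extra care about transferring $\kappa_s(\bm{P}_{(j),*})$ to $\kappa_{s,*}$ under deletions is a minor refinement the paper leaves implicit, not a different argument.
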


\begin{proof}
\begin{enumerate}
\item Recall that $\kappa_s^2 (\bm{P}) = \max_{|\mathcal{T}| \leq s} \|{\I_\mathcal{T}}' \bm{P}\|_2^2$. Also, $\|{\I_\mathcal{T}}' \bm{P}\|_2^2 = \|{\I_\mathcal{T}}' [\bm{P}_1 \ \bm{P}_2] [\bm{P}_1 \ \bm{P}_2]' \I_\mathcal{T} \|_2 =  \|{\I_\mathcal{T}}' (\bm{P}_1 {\bm{P}_1}' + \bm{P}_2 {\bm{P}_2}') \I_\mathcal{T} \|_2 \le \|{\I_\mathcal{T}}' \bm{P}_1 {\bm{P}_1}' \I_\mathcal{T}\|_2 + \|{\I_\mathcal{T}}' \bm{P}_2 {\bm{P}_2}' \I_\mathcal{T} \|_2$. Thus, the inequality follows.

\item For any set $\mathcal{T}$ with $|\mathcal{T}| \le s$, $\|{\I_\mathcal{T}}' \Phat_{(j),*}\|_2^2  = \|{\I_\mathcal{T}}' \Phat_{(j),*} \Phat_{(j),*}{}' \I_\mathcal{T}\|_2 =\|{\I_\mathcal{T}}'( \Phat_{(j),*} \Phat_{(j),*}{}' - \bm{P}_{(j),*} {\bm{P}_{(j),*}}' + \bm{P}_{(j),*} {\bm{P}_{(j),*}}') \I_\mathcal{T}\|_2 \leq \|{\I_\mathcal{T}}'( \Phat_{(j),*} \Phat_{(j),*}{}' - \bm{P}_{(j),*} {\bm{P}_{(j),*}}' ) \I_\mathcal{T} \|_2 + \|{\I_\mathcal{T}}' \bm{P}_{(j),*} {\bm{P}_{(j),*}}' \I_\mathcal{T}\|_2 \leq 2\zeta_{j,*} + (\kappa_{s,*})^2$. The last inequality follows using Lemma \ref{lemma0} with $\bm{P} = \bm{P}_{(j),*}$ and $\hat{\bm{P}} = \hat{\bm{P}}_{(j),*}$.

\item By Lemma \ref{lemma0} with $\bm{P} = \bm{P}_{(j),*}$, $\Phat = \Phat_{(j),*}$ and $\bm{Q} = \bm{P}_{(j),\rmnew}$, $\|{\bm{P}_{(j),\new}}' \Phat_{(j),*} \|_2 \leq \zeta_{j,*}$.
By Lemma \ref{lemma0} with $\bm{P} = \bm{P}_{(j),\new}$ and $\hat{\bm{P}} = \hat{\bm{P}}_{(j),\new,k}$, $\|( \I - \bm{P}_{(j),\new} {\bm{P}_{(j),\new}}' ) \Phat_{(j),\new,k} \|_2  = \|( \I - \Phat_{(j),\new,k}\Phat_{(j),\new,k}{}') \bm{P}_{(j),\new}\|_2$.

For any set $\mathcal{T}$ with $|\mathcal{T}| \leq s$, $\|{\I_\mathcal{T}}' \Phat_{(j),\new,k}\|_2 \leq \|{\I_\mathcal{T}} '( \I - \bm{P}_{(j),\new} {\bm{P}_{(j),\new}}') \Phat_{(j),\new,k} \|_2 + \|{\I_\mathcal{T}}' \bm{P}_{(j),\new} {\bm{P}_{(j),\new}}' \Phat_{(j),\new,k}\|_2 \leq \|( \I - \bm{P}_{(j),\new}{\bm{P}_{(j),\new}}') \Phat_{(j),\new,k} \|_2 + \|{\I_\mathcal{T}}' \bm{P}_{(j),\new}\|_2 =  \|(\I - \Phat_{(j),\new,k}\Phat_{(j),\new,k}{}') \bm{P}_{(j),\new}\|_2 + \|{\I_\mathcal{T}}' \bm{P}_{(j),\new}\|_2 \leq  \|\bm{D}_{(j),\new,k}\|_2 + \| \Phat_{(j),*} \Phat_{(j),*}{}' \bm{P}_{(j),\new}\|_2 + \|{\I_\mathcal{T}}' \bm{P}_{(j),\new}\|_2 $. Taking $\max$ over $|\mathcal{T}| \le s$ the claim follows.

\item This follows using Lemma \ref{kappadelta} and the second claim of this lemma.

\item This follows using Lemma \ref{kappadelta} and the first three claims of this lemma.
\end{enumerate}

\end{proof}

\begin{corollary}\label{RICnumbnd}
\
\begin{enumerate}
\item Conditioned on $\Gamma_{j-1,\rmend}$, for $t\in[t_j,(\uhat_j + 1)\alpha]$, $\delta_s(\bm{\Phi}_{t}) \leq \delta_{2s}(\bm{\Phi}_{t})  \leq (\kappa_{2s,*})^2 + 2\zeta_{j,*}^+ < 0.1 < 0.1479$, and $\|  [ ({\bm{\Phi}_{t})_{\mathcal{T}_t}}'(\bm{\Phi}_{t})_{\mathcal{T}_t}]^{-1} \|_2 \le \frac{1}{1-\delta_s(\bm{\Phi}_{t})} < 1.2 := \phi^+$.

\item For $k=2,\dots,K$ and $\hat{u}_j = u_j$ or $\hat{u}_j = u_j + 1$, conditioned on $\Gamma_{j,k-1}^{\hat{u}_j}$, for $t\in[(\hat{u}_{j}+k-1)\alpha+1,(\hat{u}_j + k)\alpha]$, $\delta_s(\bm{\Phi}_{t}) \leq \delta_{2s}(\bm{\Phi}_{t}) \leq (\kappa_{2s,*})^2 + 2\zeta_{j,*}^+ + (\kappa_{2s,\rmnew} +  \zeta_{j,\new,k-1}^+ + \zeta_{j,*}^+)^2 < 0.1479$, and $\|  [ ({\bm{\Phi}_{t})_{\mathcal{T}_t}}'(\bm{\Phi}_{t})_{\mathcal{T}_t}]^{-1} \|_2 \le \frac{1}{1-\delta_s(\bm{\Phi}_{t})} < 1.2 := \phi^+$.

\item For $\hat{u}_j = u_j$ or $\hat{u}_j = u_j + 1$, conditioned on $\Gamma_{j,K}^{\hat{u}_j}$, for $t\in[(\hat{u}_j+K)\alpha+1,t_{j+1}-1]$,  $\delta_s(\bm{\Phi}_{t}) \leq \delta_{2s}(\bm{\Phi}_{t})  \leq (\kappa_{2s,*})^2 + 2\zeta_{j,*}^+ < 0.1 < 0.1479$, and $\|  [ ({\bm{\Phi}_{t})_{\mathcal{T}_t}}'(\bm{\Phi}_{t})_{\mathcal{T}_t}]^{-1} \|_2 \le \frac{1}{1-\delta_s(\bm{\Phi}_{t})} < 1.2 := \phi^+$.
\end{enumerate}
\end{corollary}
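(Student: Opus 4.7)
The plan is to derive this corollary as a direct quantitative consequence of the general bounds in Lemma \ref{RIC_bnd} together with the denseness bounds \eqref{kappa_dense} and the subspace-error bounds afforded by the conditioning events. In each of the three regimes, $\bm{\Phi}_t$ has a specific form dictated by the update rules summarized in Definition~\ref{def_Phat_starnew}, so first I would identify which case of Lemma \ref{RIC_bnd} applies; then I would substitute the numerical values and check the final arithmetic. Throughout, the bound $\|[({\bm{\Phi}_t})_{\T_t}{}'(\bm{\Phi}_t)_{\T_t}]^{-1}\|_2 \leq (1-\delta_s(\bm{\Phi}_t))^{-1}$ is the standard consequence of the RIP (since $\bm{I}-({\bm{\Phi}_t})_{\T_t}{}'(\bm{\Phi}_t)_{\T_t}$ has operator norm at most $\delta_s(\bm{\Phi}_t)$), so once we have $\delta_s(\bm{\Phi}_t)<0.1479$ in each case, the bound $(1-0.1479)^{-1}<1.2=\phi^+$ yields the LS-matrix inequality immediately. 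I would use $\delta_{2s}(\bm{\Phi}_t)$ as the upper bound on $\delta_s(\bm{\Phi}_t)$ so that Lemma \ref{RIC_bnd} and \eqref{kappa_dense} can be applied with $2s$.

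For Case 1, on $\Gamma_{j-1,\rmend}$ we have $\Phat_{t,\new}=[.]$ and $\Phat_{t,*}=\Phat_{(j),*}$ throughout $t\in[t_j,(\uhat_j+1)\alpha]$, so part~4 of Lemma \ref{RIC_bnd} applies and gives $\delta_{2s}(\bm{\Phi}_t)\leq \kappa_{2s,*}^2 + 2\zeta_{j,*}^+$. Using $\kappa_{2s,*}\leq 0.3$ from \eqref{kappa_dense} and $\zeta_{j,*}^+=(r_0+(j-1)r_\new)\zeta\leq r\zeta$ together with the theorem hypothesis $\zeta\leq 10^{-4}/r^2$, the bound is at most $0.09 + 2r\zeta < 0.1<0.1479$. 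Case 3 is identical, since on $\Gamma_{j,K}^{\uhat_j}$ Lemma \ref{falsedet} guarantees $\zeta_{j+1,*}\leq \zeta_{j+1,*}^+\leq r\zeta$ and for $t\in[(\uhat_j+K)\alpha+1,t_{j+1}-1]$ we have $\Phat_{t,\new}=[.]$ with $\Phat_{t,*}=\Phat_{(j+1),*}$, so part~4 of Lemma \ref{RIC_bnd} again applies.

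For Case 2, the relevant time interval is one on which $\Phat_t=[\Phat_{(j),*}\ \Phat_{(j),\new,k-1}]$, so part~5 of Lemma \ref{RIC_bnd} yields $\delta_{2s}(\bm{\Phi}_t)\leq \kappa_{2s,*}^2 + 2\zeta_{j,*}^+ + (\kappa_{2s,\new}+\zeta_{j,\new,k-1}^++\zeta_{j,*}^+)^2$. Here the decisive input is the bound $\zeta_{j,\new,k-1}^+\leq 0.18$ for $k\geq 2$ proved in Lemma \ref{zetadecay}; combined with $\kappa_{2s,\new}\leq 0.02$ from \eqref{kappa_dense} and the tiny $\zeta_{j,*}^+\leq r\zeta$, the trailing square is at most $(0.2+r\zeta)^2\leq 0.04+\epsilon$, and the total is bounded by $0.09+0.04+\epsilon<0.1479$ as required.

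The only substantive obstacle is verifying that the regime boundaries in Lemma \ref{RIC_bnd} line up with those in the corollary and that the event $\Gamma_{j,k-1}^{\uhat_j}$ (or $\Gamma_{j-1,\rmend}$, $\Gamma_{j,K}^{\uhat_j}$) actually forces the claimed numerical bounds on $\zeta_{j,*}$ and $\zeta_{j,\new,k-1}$ on the full interval being considered; this is handled by unrolling the definitions of these events and invoking Lemma \ref{falsedet}(1) to propagate the $\zeta_{j+1,*}^+$ bound into Case~3. Once this bookkeeping is in place, the arithmetic above is routine.
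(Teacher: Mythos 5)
Your proposal is correct and takes essentially the same route as the paper's (one-sentence) proof: apply parts 4 and 5 of Lemma \ref{RIC_bnd}, use the conditioning events to force $\zeta_{j,*}\le\zeta_{j,*}^+$ and $\zeta_{j,\new,k-1}\le\zeta_{j,\new,k-1}^+$, and invoke the bound from Lemma \ref{zetadecay} before doing the arithmetic with \eqref{kappa_dense}. You also correctly spotted that the decisive input for Case 2 is the uniform bound $\zeta_{j,\new,k-1}^+\le 0.18$ established inside the proof of Lemma \ref{zetadecay} (the stated bound $0.83^{k-1}+0.84 r_\new\zeta$ alone would not suffice for small $k$), which is exactly what the paper relies on.
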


\begin{proof}
This follows using Lemma \ref{RIC_bnd}, the definitions of $\Gamma_{j-1,\rmend}$ and $\Gamma_{j,k}^{\hat{u}_j}$, and the bound on $\zeta_{j,\new,k-1}^+$ from Lemma \ref{zetadecay}.
\end{proof}

The following are straightforward bounds that will be useful for the proof of Lemma \ref{cslem}.

\begin{fact}\label{constants}
Under the assumptions of Theorem \ref{thm1}:
\begin{itemize}
\item $ \zeta_{j,*}^+ \gamma \le \frac{\sqrt{\zeta}}{\sqrt{r_0 + (J-1)c}} \le \sqrt{\zeta}$
\item $\zeta_{j,\new,k-1}^+ \leq 0.83^{k-1} + 0.84 r_\new \zeta$ (from Lemma \ref{zetadecay})
\item $\zeta_{j,\new,k-1}^+ \gamma_{\rmnew} \leq 0.83^{k-1} \gamma_{\rmnew} + 0.84 r_\new \zeta \gamma_{\rmnew} \leq 0.83^{k-1}\gamma_{\rmnew} + 0.3\sqrt{\zeta} $

%\item $\zeta_{k-1}^+ {\gamma_{\rmnew}}^2 \leq (0.6 \cdot 1.2^2)^{k-1} \gamma_{\rmnew}^2 + 0.4 r_\new \zeta \gamma^2 \le 0.864^{k-1}\gamma_{\rmnew}^2 + \frac{0.4}{{(r_0 + (J-1)c})^2} \le 0.864^{k-1}\gamma_{\rmnew}^2 + 0.4$
\end{itemize}
\end{fact}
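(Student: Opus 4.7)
The plan is to verify each of the three bullet items separately, using only algebraic manipulation together with the bounds on $\zeta$ imposed by Theorem~\ref{thm1}, in particular
\[
\zeta \leq \min\!\Big\{ \tfrac{10^{-4}}{r^2},\ \tfrac{0.03\lammin}{r^2 \lambda^+},\ \tfrac{1}{r^3 \gamma^2},\ \tfrac{\lammin}{r^3 \gamma^2} \Big\},
\]
combined with $r = r_0 + J r_\new$, the monotonicity $r_0 + (j-1) r_\new \le r$ and $r_\new \le r$, and $\gamma_\new \le \gamma$. No probabilistic or structural arguments are needed; this is purely a bookkeeping lemma collecting inequalities that will be invoked in the proof of Lemma~\ref{cslem}.

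For the first bullet, I will unpack the definition $\zeta_{j,*}^+ = (r_0 + (j-1) r_\new)\zeta$ and analyze $(\zeta_{j,*}^+ \gamma)^2 = (r_0 + (j-1) r_\new)^2 \zeta^2 \gamma^2$. Substituting one factor of $\zeta$ by its upper bound $1/(r^3 \gamma^2)$ cancels the $\gamma^2$ and leaves $\zeta (r_0 + (j-1) r_\new)^2 / r^3$; since $(r_0 + (j-1)r_\new)^2 / r^3 \le 1/(r_0 + (J-1) r_\new)$, taking square roots gives the first (tighter) inequality, and the second follows by further discarding the denominator.

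The second bullet is the conclusion of Lemma~\ref{zetadecay} with index $k$ replaced by $k-1$, so it requires no new calculation.

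For the third bullet I will multiply the inequality of the second bullet through by $\gamma_\new$ to obtain the middle expression; it then suffices to verify the crude estimate $0.84\, r_\new \zeta \gamma_\new \le 0.3 \sqrt{\zeta}$. Dividing by $\sqrt{\zeta}$ and squaring reduces this to $r_\new^2 \zeta \gamma_\new^2 \le (0.3/0.84)^2$. Using $\gamma_\new \le \gamma$ and $r_\new \le r$ together with $\zeta \le 1/(r^3 \gamma^2)$ yields $r_\new^2 \zeta \gamma_\new^2 \le 1/r$, which is comfortably below the target once $r$ is at least a small absolute constant (and which can be further tightened by invoking $\zeta \le 10^{-4}/r^2$ if needed). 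The only mild obstacle is making sure the numerical slack is consistent across the three items, but no genuine difficulty is expected.
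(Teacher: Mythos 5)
Your proposal is correct and matches the paper's treatment: the paper states this Fact without proof, regarding it as pure bookkeeping from $\zeta_{j,*}^+ = (r_0+(j-1)r_\new)\zeta$, Lemma \ref{zetadecay}, $\gamma_\new \le \gamma$, $r_\new \le r_0+(J-1)r_\new \le r$, and the assumed bound $\zeta \le 1/(r^3\gamma^2)$, which is exactly the calculation you carry out (reading the undefined symbol $c$ in the first bullet as $r_\new$, as intended). One small caveat you already half-note: your chain for the third bullet gives $0.84\, r_\new\sqrt{\zeta}\,\gamma_\new \le 0.84/\sqrt{r}$, so the constant $0.3$ requires $r\ge 8$ (and the parenthetical fallback to $\zeta \le 10^{-4}/r^2$ alone does not rescue this, since it leaves $\gamma_\new$ uncontrolled); this implicit requirement is inherent to the paper's statement rather than a defect of your argument.
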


\begin{proof}[Proof of Lemma \ref{cslem}]
We will prove claim 2).  The others are done in the same way.

Recall that $\Gamma_{j,k-1}^{\hat{u}_j}$ implies that $\zeta_{j,*} \leq \zeta_{j,*}^+$ and $\zeta_{j,\new,k-1}\leq \zeta_{j,\new,k-1}^+$.

\begin{enumerate}[a)]
\item For $t \in [(\uhat_j+k-1)\alpha+1, (\hat{u}_j+k)\alpha]$, $\bm{b}_t := ( \I - \Phat_{t-1} \Phat_{t-1} {}') \lt = \bm{D}_{j,*,k-1} \bm{a}_{t,*} + \bm{D}_{j,\rmnew,k-1} \bm{a}_{t,\rmnew} $. Thus, using Fact \ref{constants}
\begin{align*}
\|\bm{b}_t\|_2 & \leq \zeta_{j,*} \sqrt{r} \gamma + \zeta_{j,\new,k-1} \sqrt{r_\new} \gamma_{\rmnew} \\
& \leq \sqrt{\zeta}\sqrt{r} + (0.83^{k-1}\gamma_{\rmnew} + 0.84\sqrt{\zeta})\sqrt{r_\new} \\
& = \sqrt{r_\new} 0.83^{k-1} \gamma_{\rmnew} + \sqrt{\zeta} (\sqrt{r} + 0.84\sqrt{r_\new}) \leq \xi.
\end{align*}

\item By Corollary \ref{RICnumbnd}, $\delta_{2s} (\bm{\Phi}_{t}) < 0.15 < \sqrt{2}-1$. Given $|\mathcal{T}_t| \leq s$, $\|\bm{b}_t\|_2 \leq \xi$, by the theorem in \cite{candes_rip}, the CS error satisfies
\[
\|\hat{\bm{x}}_{t,\cs} - \xt \|_2 \leq  \frac{4\sqrt{1+\delta_{2s}(\bm{\Phi}_{t})}}{1-(\sqrt{2}+1)\delta_{2s}(\bm{\Phi}_{t})} \xi < 7 \xi.
\]

\item Using the above, $\|\hat{\bm{x}}_{t,\cs} - \xt\|_{\infty} \leq 7  \xi$. Since $\min_{i\in \mathcal{T}_t} |(\xt)_{i}| \geq x_{\min}$ and $(\xt)_{\mathcal{T}_t^c} = 0$, $\min_{i\in \mathcal{T}_t} |(\hat{\bm{x}}_{t,\cs})_i| \geq x_{\min} - 7 \xi$ and $\max_{i \in \bar{\mathcal{T}_t}} |(\hat{\bm{x}}_{t,\cs})_i| \leq 7 \xi$. If $\omega < x_{\min} - 7 \xi$, then $\hat{\mathcal{T}}_t \supseteq \mathcal{T}_t$. On the other hand, if $\omega > 7 \xi$, then $\hat{\mathcal{T}}_t \subseteq \mathcal{T}_t$. Since $\omega$ satisfies $7 \xi \leq \omega \leq x_{\min} -7 \xi$, the support of $\xt$ is exactly recovered, i.e. $\hat{\mathcal{T}}_t = \mathcal{T}_t$.

\item Given $\hat{\mathcal{T}}_t = \mathcal{T}_t$, the least squares estimate of $\xt$ satisfies $(\hat{\bm{x}}_t)_{\mathcal{T}_t} = [(\bm{\Phi}_{t})_{\mathcal{T}_t}]^{\dag} \bm{y}_t = [ (\bm{\Phi}_{t})_{\mathcal{T}_t}]^{\dag} (\bm{\Phi}_{t} \xt + \bm{\Phi}_{t} \lt)$ and $(\hat{\bm{x}}_t)_{\bar{\mathcal{T}_t}} = \bm{0}$.
Also,  ${(\bm{\Phi}_{t})_{\mathcal{T}_t}}' \bm{\Phi}_{t} = {\I_{\mathcal{T}_t}}' \bm{\Phi}_{t}$ (this follows since $(\bm{\Phi}_{t})_{\mathcal{T}_t} = \bm{\Phi}_{t} \I_{\mathcal{T}_t}$ and ${\bm{\Phi}_{t}}'\bm{\Phi}_{t} = \bm{\Phi}_{t}$).
Using this, the error $\et := \hat{\bm{x}}_t - \xt$ satisfies (\ref{etdef0}).
Thus, using Fact \ref{constants} and the bounds on $\|\bm{a}_t\|_{\infty}$ and $\|\bm{a}_{t,\rmnew}\|_{\infty}$, for $t \in [(\uhat_j+k-1)\alpha+1, (\hat{u}_j+k)\alpha]$,
\begin{align*}
\|\bm{e}_t\|_2 &\leq
 \phi^+ (\zeta_{j,*}^+ \sqrt{r}\gamma +  \zeta_{j,\new,k-1}^+ \sqrt{r_\new}\gamma_{\rmnew}) \le 1.2 \left(1.06\sqrt{\zeta} +  (0.83)^{k-1}\sqrt{r_\new}\gamma_{\rmnew}  \right)
\end{align*}
The last inequality follows from Lemma \ref{zetadecay}.
\end{enumerate}

\end{proof}

\section{Proof of Cauchy-Schwarz inequality for matrices}\label{prelim}

\begin{lem}[Cauchy-Schwarz for a sum of vectors]\label{CSsum}
For vectors $\bm{x}_t$ and $\bm{y}_t$,
\[
\left(\sum_{t=1}^{\alpha} {\bm{x}_t}'\bm{y}_t\right)^2 \leq \left( \sum_t \|\bm{x}_t\|_2^2 \right) \left( \sum_t \|\bm{y}_t\|_2^2 \right)
\]
\end{lem}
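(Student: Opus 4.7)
The plan is to reduce this statement to the classical Cauchy--Schwarz inequality applied to a single pair of vectors, by stacking the sequence $\{\bm{x}_t\}_{t=1}^\alpha$ (and similarly $\{\bm{y}_t\}_{t=1}^\alpha$) into one long concatenated vector. Assuming each $\bm{x}_t, \bm{y}_t \in \mathbb{R}^n$, I would define
\[
\bm{X} := \begin{bmatrix} \bm{x}_1 \\ \bm{x}_2 \\ \vdots \\ \bm{x}_\alpha \end{bmatrix} \in \mathbb{R}^{n\alpha}, \qquad \bm{Y} := \begin{bmatrix} \bm{y}_1 \\ \bm{y}_2 \\ \vdots \\ \bm{y}_\alpha \end{bmatrix} \in \mathbb{R}^{n\alpha}.
\]
The key observation is that the block structure decouples additively under both the Euclidean inner product and the Euclidean norm: one directly reads off $\bm{X}' \bm{Y} = \sum_{t=1}^{\alpha} \bm{x}_t' \bm{y}_t$, $\|\bm{X}\|_2^2 = \sum_t \|\bm{x}_t\|_2^2$, and $\|\bm{Y}\|_2^2 = \sum_t \|\bm{y}_t\|_2^2$.

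Next, I would apply the standard vector Cauchy--Schwarz inequality $(\bm{X}' \bm{Y})^2 \le \|\bm{X}\|_2^2 \, \|\bm{Y}\|_2^2$ to the stacked vectors, and then substitute the three identities above to obtain the claimed bound. No additional hypotheses on $\{\bm{x}_t\}$ or $\{\bm{y}_t\}$ are needed, so this is essentially a one-line derivation once the stacking is set up.

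There is no substantive obstacle in this proof; the only thing to be careful about is to state the reduction cleanly so that the reader sees why the block-stacked inner product and squared norm decompose into the per-$t$ sums. An alternative, if one wishes to avoid introducing $\bm{X}, \bm{Y}$ explicitly, is to argue directly from the nonnegativity of $\sum_t \|\lambda \bm{x}_t - \bm{y}_t\|_2^2 \ge 0$ for all real $\lambda$ and then optimize over $\lambda$ (i.e.\ the discriminant trick), which yields the same inequality; but the stacking argument is shorter and more transparent given what is needed later for the matrix version (Lemma~\ref{CSmat}) used in the proof of Lemma~\ref{calHk}.
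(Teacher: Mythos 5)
Your proposal is correct and uses exactly the same argument as the paper: stack the $\bm{x}_t$'s and $\bm{y}_t$'s into single concatenated vectors and apply the classical Cauchy--Schwarz inequality to the pair. No issues.
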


\begin{proof}
\begin{align*}
\left(\sum_{t=1}^{\alpha} {\bm{x}_t}'\bm{y}_t\right)^2 = \left( [ {\bm{x}_1}', \dots, {\bm{x}_{\alpha}}'] \left[\begin{array}{c}
\bm{y}_{1}\\ \vdots \\ \bm{y}_{\alpha}
\end{array} \right] \right)^2 \leq \left\|\left[\begin{array}{c}
\bm{x}_{1}\\ \vdots \\ \bm{x}_{\alpha}
\end{array} \right] \right\|_2^2 \left\| \left[\begin{array}{c}
\bm{y}_{1}\\ \vdots \\ \bm{y}_{\alpha}
\end{array} \right] \right\|_2^2 = \left(\sum_{t=1}^{\alpha}\|\bm{x}_t\|_2^2\right)\left(\sum_{t=1}^{\alpha}\|\bm{y}_t\|_2^2\right)
\end{align*}
The inequality is by Cauchy-Schwarz for a single vector.
\end{proof}

\begin{lem}[Cauchy-Schwarz for a sum of matrices]\label{CSmat}
For matrices $\bm{X}_t$ and $\bm{Y}_t$,
\[
\left\|\frac{1}{\alpha} \sum_{t=1}^{\alpha} \bm{X}_t {\bm{Y}_t}'\right\|_2^2 \leq \lambda_{\max}\left(\frac{1}{\alpha} \sum_{t=1}^{\alpha} \bm{X}_t {\bm{X}_t}'\right)
\lambda_{\max}\left(\frac{1}{\alpha} \sum_{t=1}^{\alpha} \bm{Y}_t {\bm{Y}_t}'\right)
\]
\end{lem}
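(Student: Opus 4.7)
The plan is to deduce the matrix inequality from the vector Cauchy-Schwarz inequality in Lemma \ref{CSsum} by invoking the variational characterization of the operator (spectral) norm. Specifically, for any matrix $\bm{M}$,
\[
\|\bm{M}\|_2 = \max_{\|\bm{u}\|_2 = \|\bm{v}\|_2 = 1} \bm{u}' \bm{M} \bm{v},
\]
so I would first write
\[
\left\|\frac{1}{\alpha}\sum_t \bm{X}_t \bm{Y}_t'\right\|_2 = \max_{\|\bm{u}\|_2=\|\bm{v}\|_2=1} \frac{1}{\alpha}\sum_t (\bm{u}'\bm{X}_t)(\bm{Y}_t'\bm{v}),
\]
which puts each summand in the inner-product form that Lemma \ref{CSsum} is tailored for.

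Next, I would fix an arbitrary pair of unit vectors $\bm{u}, \bm{v}$ and apply Lemma \ref{CSsum} to the vector sequences $\bm{x}_t := \bm{X}_t'\bm{u}$ and $\bm{y}_t := \bm{Y}_t'\bm{v}$. This gives
\[
\left(\sum_t (\bm{u}'\bm{X}_t)(\bm{Y}_t'\bm{v})\right)^2 \le \left(\sum_t \bm{u}'\bm{X}_t\bm{X}_t'\bm{u}\right)\left(\sum_t \bm{v}'\bm{Y}_t\bm{Y}_t'\bm{v}\right).
\]
Factoring $\alpha$ out of each sum, the right-hand side becomes
\[
\alpha^2 \,\bm{u}'\!\left(\tfrac{1}{\alpha}\sum_t \bm{X}_t \bm{X}_t'\right)\!\bm{u} \cdot \bm{v}'\!\left(\tfrac{1}{\alpha}\sum_t \bm{Y}_t \bm{Y}_t'\right)\!\bm{v}.
\]
Since these two averaged matrices are positive semidefinite, their quadratic forms over unit vectors are bounded by their largest eigenvalues, so each factor is at most $\lambda_{\max}$ of the corresponding average.

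Finally, I would take the maximum over unit $\bm{u}$ and $\bm{v}$ on both sides, divide by $\alpha^2$, and take square roots to conclude the claimed bound. The entire argument is routine; the only subtlety worth noting is that one must apply the $\lambda_{\max}$ bound on the two Hermitian factors separately (after Cauchy-Schwarz has already decoupled them), rather than trying to bound $\bm{u}'\bm{M}\bm{v}$ directly for the cross term. There is no real obstacle here — the lemma is essentially a direct repackaging of the vector inequality through the variational definition of $\|\cdot\|_2$.
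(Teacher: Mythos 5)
Your proposal is correct and follows essentially the same route as the paper's proof: both express the spectral norm via the variational characterization $\max_{\|\bm{u}\|_2=\|\bm{v}\|_2=1}|\bm{u}'\bm{M}\bm{v}|$, apply Lemma \ref{CSsum} to the sequences $\bm{X}_t'\bm{u}$ and $\bm{Y}_t'\bm{v}$, and then bound the two decoupled quadratic forms by the respective largest eigenvalues before maximizing and rescaling by $1/\alpha^2$. No gaps.
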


\begin{proof}[Proof of Lemma \ref{CSmat}]
\begin{align*}
\left\| \sum_{t=1}^{\alpha} \bm{X}_t {\bm{Y}_t}'\right\|_2^2 &=
\max_{\substack{\|\bm{x}\|=1\\\|\bm{y}\|=1}} \left| \bm{x}'\left(\sum_{t}\bm{X}_t {\bm{Y}_t}'\right)\bm{y} \right|^2 \\
&= \max_{\substack{\|\bm{x}\|=1\\\|\bm{y}\|=1}} \left| \sum_{t=1}^{\alpha} ({\bm{X}_t}'\bm{x})'({\bm{Y}_t}'\bm{y})  \right|^2 \\
&\leq \max_{\substack{\|\bm{x}\|=1\\\|\bm{y}\|=1}} \left( \sum_{t=1}^{\alpha} \left\| {\bm{X}_t}'\bm{x} \right\|_2^2\right) \left( \sum_{t=1}^{\alpha} \left\| {\bm{Y}_t}'\bm{y} \right\|_2^2\right) \\
&= \max_{\|\bm{x}\|=1}  \bm{x}' \sum_{t=1}^{\alpha} \bm{X}_t{\bm{X}_t}' \ \bm{x}  \ \cdot \ \max_{\|\bm{y}\|=1} \bm{y}' \sum_{t=1}^{\alpha} \bm{Y}_t{\bm{Y}_t}' \ \bm{y}\\
&= \lambda_{\max}\left( \sum_{t=1}^{\alpha} \bm{X}_t{\bm{X}_t}' \right)\lambda_{\max}\left( \sum_{t=1}^{\alpha} \bm{Y}_t{\bm{Y}_t}' \right)
\end{align*}
The inequality is by Lemma \ref{CSsum}. The penultimate line is because $\|\bm{x}\|_2^2 = {\bm{x}'\bm{x}}$.
Multiplying both sides by $\left(\frac{1}{\alpha}\right)^2$ gives the desired result.
\end{proof}

\bibliographystyle{IEEEtran}
\bibliography{tipnewpfmt_kfcsfullpap,tipnewpfmtNIPS}
\end{document}